\def\llncs{0}
\def\fullpage{1}
\def\anonymous{0}
\def\authnote{0}
\def\notxfont{0}
\def\submission{0}
\def\reply{0}
\def\cameraready{0}
\def\noaux{1}
\def\anonymous{1}
\def\llncs{1}
\def\llncs{1}
\def\anonymous{0}
\def\authnote{0}
\def\authnote{0}
\def\mac{0}
\renewcommand{\emph}{\textit}
\definecolor{darkblue}{rgb}{0,0,0.6}
\definecolor{darkgreen}{rgb}{0,0.5,0}
\definecolor{maroon}{rgb}{0.5,0.1,0.1}
\definecolor{dpurple}{rgb}{0.2,0,0.65}
\definecolor{chocolate}{rgb}{0.8,0.4,0.1}
\DeclareMathAlphabet{\mathpzc}{OT1}{pzc}{m}{it}
\renewcommand*{\backref}[1]{}
\def\notxfont{1}
\renewcommand{\subparagraph}{\paragraph}
\newtheoremstyle{thicktheorem}%
{\topsep}
{\topsep}
{\itshape}{}%
{\bfseries}%
{.}
{ }%
{\thmname{#1}\thmnumber{ #2}%
		\thmnote{ (#3)}%
}
\newtheoremstyle{remark}
{\topsep}
{\topsep}
	{}
	{}
	{}
	{.}
	{ }
	{\textit{\thmname{#1}}\thmnumber{ #2}
			\thmnote{ (#3)}%
	}
	\theoremstyle{thicktheorem}
	\newtheorem{theorem}{Theorem}[section]
	\newtheorem{lemma}[theorem]{Lemma}
	\newtheorem{definition}[theorem]{Definition}
	\theoremstyle{remark}
	\newtheorem{remark}[theorem]{Remark}
	\crefname{theorem}{Theorem}{Theorems}
	\crefname{assumption}{Assumption}{Assumptions}
	\crefname{construction}{Construction}{Constructions}
	\crefname{corollary}{Corollary}{Corollaries}
	\crefname{conjecture}{Conjecture}{Conjectures}
	\crefname{definition}{Definition}{Definitions}
	\crefname{exmaple}{Example}{Examples}
	\crefname{experiment}{Experiment}{Experiments}
	\crefname{counterexample}{Counterexample}{Counterexamples}
	\crefname{lemma}{Lemma}{Lemmata}
	\crefname{observation}{Observation}{Observations}
	\crefname{proposition}{Proposition}{Propositions}
	\crefname{remark}{Remark}{Remarks}
	\crefname{claim}{Claim}{Claims}
	\crefname{fact}{Fact}{Facts}
	\crefname{note}{Note}{Notes}
 \crefname{appendix}{App.}{Appendices}
 \crefname{section}{Sec.}{Sections}
\renewcommand*{\backref}[1]{}
	\renewcommand*{\backref}[1]{(Cited on page~#1.)}
\newcommand*{\keys}[1]{\mathsf{#1}}
\newcommand{\Oracle}[1]{O_{\mathtt{#1}}}
\newcommand*{\algo}[1]{\ensuremath{\mathsf{#1}}}
\newcommand*{\qalgo}[1]{\ensuremath{\mathpzc{#1}}}
\newcommand*{\qstate}[1]{\mathpzc{#1}}
\newcommand*{\qreg}[1]{{\color{gray}{\mathsf{#1}}}}
\newcounter{expitem}
\newcommand{\chosen}{\leftarrow}
\newcommand{\lrun}{\leftarrow}
\newcommand{\la}{\leftarrow}
\newcommand{\ra}{\rightarrow}
\renewcommand{\gets}{\leftarrow}
\newcommand{\seteq}{\coloneqq}
\newcommand{\tensor}{\otimes}
\newcommand{\concat}{\|}
\newcommand{\Ch}{\mathsf{Ch}}
\newcommand{\qCh}{\qalgo{Ch}}
\newcommand{\cM}{\mathcal{M}}
\newcommand{\cO}{\mathcal{O}}
\newcommand{\cQ}{\mathcal{Q}}
\newcommand{\cU}{\mathcal{U}}
\newcommand{\cX}{\mathcal{X}}
\newcommand{\cY}{\mathcal{Y}}
\newcommand{\cZ}{\mathcal{Z}}
\newcommand{\qA}{\qalgo{A}}
\newcommand{\qB}{\qalgo{B}}
\newcommand{\qD}{\qalgo{D}}
\def\makeuppercase#1{
\expandafter\newcommand\csname sf#1\endcsname{\mathsf{#1}}
\expandafter\newcommand\csname frak#1\endcsname{\mathfrak{#1}}
\expandafter\newcommand\csname bb#1\endcsname{\mathbb{#1}}
\expandafter\newcommand\csname bf#1\endcsname{\textbf{#1}}
}
\def\makelowercase#1{
\expandafter\newcommand\csname frak#1\endcsname{\mathfrak{#1}}
\expandafter\newcommand\csname bf#1\endcsname{\textbf{#1}}
}
\newcounter{char}
   \edef\letter{\alph{char}}
   \edef\Letter{\Alph{char}}
\def\makeuppercase#1{
\expandafter\newcommand\csname tl#1\endcsname{\widetilde{#1}}
}
\def\makelowercase#1{
\expandafter\newcommand\csname tl#1\endcsname{\widetilde{#1}}
}
\newcommand{\bit}{\{0,1\}}
\newcommand{\Ms}{\mathcal{M}}
\newcommand{\Ks}{\mathcal{K}}
\newcommand{\secp}{\lambda}
\newcommand{\coin}{\keys{coin}}
\newcommand{\cert}{\keys{cert}}
\newcommand{\aux}{\mathsf{aux}}
\newcommand{\advt}[2]{\mathsf{Adv}_{#1}^{\mathsf{#2}}}
\newcommand{\advb}[3]{\mathsf{Adv}_{#1}^{\mathsf{#2} \mbox{-} \mathsf{#3}}}
\newcommand{\advc}[4]{\mathsf{Adv}_{#1}^{\mathsf{#2} \mbox{-} \mathsf{#3} \mbox{-} \mathsf{#4}}}
\newcommand{\expa}[2]{\mathsf{Expt}_{#1}^{\mathsf{#2}}}
\newcommand{\expb}[3]{\mathsf{Exp}_{#1}^{ \mathsf{#2} \mbox{-} \mathsf{#3}}}
\newcommand{\expc}[4]{\mathsf{Exp}_{#1}^{ \mathsf{#2} \mbox{-} \mathsf{#3} \mbox{-} \mathsf{#4}}}
\newcommand{\Hyb}{\mathsf{Hyb}}
\newcommand{\hybi}[1]{\mathsf{Hyb}_{#1}}
\newcommand*{\pk}{\keys{pk}}
\newcommand*{\sk}{\keys{sk}}
\newcommand*{\dk}{\keys{dk}}
\newcommand*{\ek}{\keys{ek}}
\newcommand*{\vk}{\keys{vk}}
\newcommand*{\key}{\keys{k}}
\newcommand*{\msk}{\keys{msk}}
\newcommand*{\pp}{\keys{pp}}
\newcommand*{\tk}{\keys{tk}}
\newcommand*{\ct}{\keys{ct}}
\newcommand*{\msg}{\keys{m}}
\newcommand{\qct}{\qstate{ct}}
\newcommand{\qsk}{\qstate{sk}}
\newcommand{\qdk}{\qstate{dk}}
\newcommand{\SD}{\mathsf{SD}}
\newcommand{\oh}{\mathsf{o2h}}
\newenvironment{boxfig}[2]{\begin{figure}[#1]\fbox{\begin{minipage}{0.97\linewidth}
                        \vspace{0.2em}
                        \makebox[0.025\linewidth]{}
                        \begin{minipage}{0.95\linewidth}
            {{
                        #2 }}
                        \end{minipage}
                        \vspace{0.2em}
                        \end{minipage}}
                        }
                        {\end{figure}}
\newcommand{\Setup}{\algo{Setup}}
\newcommand{\KeyGen}{\algo{KeyGen}}
\newcommand{\KG}{\algo{KG}}
\newcommand{\Enc}{\algo{Enc}}
\newcommand{\Dec}{\algo{Dec}}
\newcommand{\Vrfy}{\algo{Vrfy}}
\newcommand{\qKG}{\qalgo{KG}}
\newcommand{\qEnc}{\qalgo{Enc}}
\newcommand{\qDec}{\qalgo{Dec}}
\newcommand{\qVrfy}{\qalgo{Vrfy}}
\newcommand{\qKGt}{\qalgo{\widetilde{KG}}}
\newcommand{\qR}{\qalgo{R}}
\newcommand\ABE{\algo{ABE}}
\newcommand\SKFE{\algo{SKFE}}
\newcommand{\ske}{\algo{ske}}
\newcommand{\Sim}{\algo{Sim}}
\newcommand{\PRF}{\algo{PRF}}
\newcommand{\negl}{{\mathsf{negl}}}
\newcommand{\poly}{{\mathrm{poly}}}
\newcommand{\zo}[1]{\{0,1\}^{#1}}
\newcommand{\bin}{\{0,1\}}
\newcommand{\xor}{\oplus}
\newcommand{\class}[1]{\mathsf{#1}}
\newcommand{\Ppoly}{\class{P/poly}}
\newcommand{\NCone}{\class{NC}^1}
\newcommand{\calO}{\mathcal{O}}
\newcommand{\calY}{\mathcal{Y}}
\newcommand{\SKFESKL}{\algo{SKFE}\textrm{-}\algo{CR}\textrm{-}\algo{SKL}}
\newcommand{\ABESKL}{\algo{ABE}\textrm{-}\algo{CR}\textrm{-}\algo{SKL}}
\newcommand{\abe}{\mathsf{abe}}
\newcommand{\tlC}{\widetilde{C}}
\newcommand{\ABECRCRSKL}{\mathsf{ABE}\textrm{-}\mathsf{CR^2}\textrm{-}\mathsf{SKL}}
\newcommand{\qaux}{\qstate{aux}}
\newcommand{\skfe}{\algo{skfe}}
\newcommand{\PKFESKL}{\mathsf{PKFE\textrm{-}SKL}}
\newcommand{\List}[1]{L_{\mathtt{#1}}}
\newcommand{\MIABE}{\algo{MI}\textrm{-}\algo{ABE}}
\newcommand{\decision}{d}
\newcommand{\authornote}[3]{\textcolor{#3}{[\textsc{#1:} {#2}]}}
\newcommand{\fuyuki}[1]{\authornote{Fuyuki}{#1}{chocolate}}
\newcommand{\ryo}[1]{\authornote{Ryo}{#1}{darkblue}}
\newcommand{\nikhil}[1]{\authornote{Nikhil}{#1}{red}}
\newcommand{\fuyuki}[1]{}
\newcommand{\ryo}[1]{}
\newcommand{\nikhil}[1]{}
\let\oldvec\vec
\let\vec\oldvec
\renewcommand*\l@author[2]{}
\renewcommand*\l@title[2]{}
\theoremstyle{remark}
\title{
\textbf{PKE and ABE with \\ Collusion-Resistant Secure Key Leasing}
}
\title{
\textbf{PKE and ABE with \\ Collusion-Resistant Secure Key Leasing}
}
\begin{document}

\ifnum\anonymous=1 
\ifnum\llncs=1
\author{\empty}\institute{\empty}
\else
\author{}
\fi
\else
%
%
\ifnum\llncs=1
\author{
	Fuyuki Kitagawa\inst{1,2} \and Ryo Nishimaki\inst{1,2} \and Nikhil Pappu\inst{3}
}
\institute{
	NTT Social Informatics Laboratories, Tokyo, Japan \and NTT Research Center for Theoretical Quantum Information, Atsugi, Japan \and Portland State University, USA
}
\else
%
%
\author[$\dagger$ $\diamondsuit$]{\hskip 1em Fuyuki Kitagawa}
\author[$\dagger$ $\diamondsuit$]{\hskip 1em Ryo Nishimaki}
\author[$\star$]{\hskip 1em Nikhil Pappu\thanks{Part of this work was done while visiting NTT Social Informatics Laboratories as an internship.}
\thanks{Supported by the US National Science Foundation (NSF) via Fang Song's Career Award (CCF-2054758). }}
\affil[$\dagger$]{{\small NTT Social Informatics Laboratories, Tokyo, Japan}\authorcr{\small \{fuyuki.kitagawa,ryo.nishimaki\}@ntt.com}}
\affil[$\diamondsuit$]{{\small NTT Research Center for Theoretical Quantum Information, Atsugi, Japan}}
\affil[$\star$]{{\small Portland State University, USA}\authorcr{\small nikpappu@pdx.edu}}
\renewcommand\Authands{, }
\fi 
\fi

\ifnum\llncs=1
\date{}
\else
\ifnum\anonymous=0
\date{\today}
\else
\date{}
\fi
\fi

\maketitle

\begin{abstract}
Secure key leasing (SKL) is an advanced encryption functionality that allows a secret key holder to generate a quantum decryption key and securely lease it to a user.
Once the user returns the quantum decryption key (or provides a classical certificate confirming its deletion), they lose their decryption capability.
Previous works on public key encryption with SKL (PKE-SKL) have only considered the single-key security model, where the adversary receives at most one quantum decryption key. However, this model does not accurately reflect real-world applications of PKE-SKL.
To address this limitation, we introduce \emph{collusion-resistant security} for PKE-SKL (denoted as PKE-CR-SKL). In this model, the adversary can adaptively obtain multiple quantum decryption keys and access a verification oracle which validates the correctness of queried quantum decryption keys. Importantly, the size of the public key and ciphertexts must remain independent of the total number of generated quantum decryption keys.
We present the following constructions:
\begin{itemize}
	\item A PKE-CR-SKL scheme based on the learning with errors (LWE) assumption.
	\item An attribute-based encryption scheme with collusion-resistant SKL (ABE-CR-SKL), also based on the LWE assumption.
	\item An ABE-CR-SKL scheme with classical certificates, relying on multi-input ABE with polynomial arity.
\end{itemize}
\end{abstract}

\ifnum\llncs=1
\else
\newpage
\setcounter{tocdepth}{2}
\tableofcontents

\newpage
\fi


\section{Introduction}\label{sec:intro}

\paragraph{Secure key leasing.}
Encryption with secure key leasing (SKL) enables a secret key holder to generate a quantum decryption key and lease it securely to another party.
Once the lessee returns the quantum decryption key, they lose their ability to decrypt ciphertexts.
Since its introduction by Kitagawa and Nishimaki for secret key functional encryption (SKFE)~\cite{AC:KitNis22}, SKL has been extensively studied~\cite{EC:AKNYY23,TCC:AnaPorVai23,EPRINT:CGJL23,EPRINT:MorPorYam23,TCC:AnaHuHua24,EC:BGKMRR24,myEPRINT:KitMorYam24} due to its strong security guarantee and practical applications.

\paragraph{Collusion-resistant SKL.}
Most prior works~\cite{EC:AKNYY23,TCC:AnaPorVai23,EPRINT:CGJL23,EPRINT:MorPorYam23,TCC:AnaHuHua24} study how to achieve public-key encryption (PKE) with SKL schemes from standard cryptographic assumptions. All prior works on PKE-SKL have explored the setting where an adversary can obtain \emph{only one} quantum decryption key.
However, this single-key security model does not accurately reflect real-world scenarios. In practice, once a lessee returns their decryption key and it is verified (i.e., after revocation), the lessor may lease another decryption key, even to the same lessee. Moreover, in realistic settings, a single secret key holder may need to generate and lease \emph{multiple} quantum decryption keys to various entities. To accurately capture this setting, we consider adversaries capable of obtaining \emph{an unbounded number of quantum decryption keys}, even in the standard PKE setting.
We define this model as \emph{collusion-resistant} SKL.

Previous works~\cite{TCC:AnaPorVai23,TCC:AnaHuHua24} presented delegation tasks as a compelling application like the following:
Consider a scenario where a system administrator unexpectedly needs to take leave and must temporarily assign their responsibilities---including access to encrypted data---to a colleague by providing decryption keys.
In such cases, the single-key security model is inadequate, as the administrator may need to take leave multiple times or assign their responsibilities to different colleagues.
Another potential application of collusion-resistant PKE-SKL is in streaming services.
Encrypted videos are made accessible to subscribers through quantum decryption keys.
When a user unsubscribes, they return their leased keys, losing access to the content, and their subscription fees are canceled. By utilizing an attribute-based encryption~\cite{EC:SahWat05} variant of collusion-resistant PKE-SKL, it becomes possible to precisely control video access based on user attributes, such as location-restrictions or premium and basic subscription plans.

\paragraph{Our goal: collusion-resistant SKL from weaker assumptions.}
The notion of collusion-resistant SKL is both natural and compelling. However, it has yet to be achieved from well-established assumptions.  
All known PKE with SKL schemes based on standard assumptions~\cite{EC:AKNYY23,TCC:AnaPorVai23,EPRINT:CGJL23,TCC:AnaHuHua24,myEPRINT:KitMorYam24} do not remain secure in the collusion-resistant setting.  
While some existing constructions seem to imply collusion-resistant SKL, such as public-key functional encryption (PKFE) schemes with SKL~\cite{EC:AKNYY23,EC:BGKMRR24} and collusion-resistant single decryptor encryption (SDE)\footnote{In short, SDE is PKE where the decryption keys are copy-protected (i.e., unclonable). In general, SDE implies encryption with SKL (see the discussion by Agrawal et al.~\cite{EC:AKNYY23} for details).}~\cite{TCC:CakGoy24}, they rely on strong assumptions. These include post-quantum secure indistinguishability obfuscation (IO) or collusion-resistant PKFE, which in turn implies IO, albeit with a sub-exponential security loss~\cite{JACM:BitVai18,C:AnaJai15,EPRINT:AnaJaiSah15a}. Achieving them from well-established assumptions still remains elusive.
In this work, we aim to construct the first collusion-resistant SKL schemes based on weaker assumptions.


%

\subsection{Our Results}
Our main contributions are summarized as follows:

\begin{enumerate}
    \item \emph{Definition of collusion-resistant PKE-SKL (PKE-CR-SKL):} We formally define PKE-CR-SKL, ensuring that if all quantum decryption keys are returned and successfully verified, users lose decryption capabilities. We extend the indistinguishability against key leasing attacks (IND-KLA) security definition~\cite{EC:AKNYY23} to the collusion-resistant setting. One notable feature of this definition is that the adversary can send multiple queries to the verification oracle, which confirms the validity of returned decryption keys. Another important feature is that the public key and ciphertext size are independent of the number of leased decryption keys (up to logarithmic factors).

    \item \emph{Construction of IND-KLA secure PKE with collusion-resistant SKL:} We propose an IND-KLA secure PKE-CR-SKL scheme based on the learning with errors (LWE) assumption.

     \item \emph{Attribute-based encryption with collusion-resistant SKL:} We construct an ABE-CR-SKL scheme, also based on the LWE assumption. Since PKE is a special case of ABE, ABE-CR-SKL trivially implies PKE-CR-SKL. We first present the PKE-CR-SKL scheme separately to provide a clearer foundation for understanding the ABE-CR-SKL construction.

     \item \emph{PKE and ABE with collusion-resistant SKL and
         classical certificates:} We also propose an IND-KLA secure
         ABE-CR-SKL scheme that utilizes \emph{classical} certificates, whereas the constructions above rely on quantum
         certificates.\footnote{Quantum decryption keys function as quantum certificates.} In this model, a classical certificate can be derived from a leased quantum decryption key, and successful verification guarantees security. Our scheme is based on multi-input ABE (MI-ABE), which is a potentially weaker assumption than collusion-resistant PKFE. We specify required properties for MI-ABE and discuss the relationship between MI-ABE and other primitives in \cref{sec:def-mi-abe}. ABE-CR-SKL with classical certificates trivially implies PKE-CR-SKL with classical certificates.
\end{enumerate}

We introduce fascinating techniques to achieve our results. These
include the classical decryption property and the notion of key-testability for secret key encryption with collusion-resistant SKL (SKE-CR-SKL) and secret key functional encryption with collusion-resistant SKL (SKFE-CR-SKL). Then, we transform SKE-CR-SKL (resp. SKFE-CR-SKL) into PKE-CR-SKL (resp. ABE-CR-SKL) by using classical ABE and compute-and-compare obfuscation~\cite{FOCS:WicZir17,FOCS:GoyKopWat17}. In these transformations, we use decryption keys of SKE-CR-SKL (resp. SKFE-CR-SKL) as key attributes of ABE in a superposition way.
See~\cref{sec:technical_overview} for the details.

\begin{table*}[!t]
\setlength\tabcolsep{0.5eM}
\begin{center}
\begin{minipage}[c]{\textwidth} \scriptsize
\begin{center}
 \begin{threeparttable}

\caption{{\scriptsize Comparison of SKL. VO means security in the presence of the verification oracle for certificates.
}
}
\label{tbl:comparison_PKE-SKL}
\begin{tabular}{rccccc}\toprule
 & Primitive & Collusion-resistant SKL & VO & Certificate & Assumption\\
\midrule
\cite{EC:AKNYY23} & PKE & $-$ & $\checkmark$ & quantum  & PKE\\
\cite{EC:AKNYY23} & ABE & bounded & $\checkmark$ & quantum  & ABE\\
 \cite{EC:AKNYY23} & bCR PKFE\tnote{a} & bounded & $\checkmark$ & quantum  & PKE \\
  \cite{EC:AKNYY23} & PKFE & $\checkmark$ & $\checkmark$ & quantum & PKFE\tnote{b} \\
\cite{EC:BGKMRR24} & PKFE\tnote{c} & $\checkmark$ & $\checkmark$\tnote{d} & classical  & IO \\
 \cite{TCC:AnaPorVai23,TCC:AnaHuHua24} & PKE (FHE) & $-$ &  $-$ & classical & LWE\\
 \cite{EPRINT:CGJL23} & PKE (FHE) & $-$ & $-$ & classical  & LWE\\
 \cite{myEPRINT:KitMorYam24} & PKE & $-$ & $\checkmark$\tnote{e}  & classical  & PKE\\
 \midrule
  Ours1 & PKE & \colorbox[rgb]{1,0.9,0}{$\checkmark$}&  \colorbox[rgb]{1,0.9,0}{$\checkmark$} & quantum  &\colorbox[rgb]{1,0.9,0}{LWE}\\
 Ours2 & ABE\tnote{c} & \colorbox[rgb]{1,0.9,0}{$\checkmark$}  & \colorbox[rgb]{1,0.9,0}{$\checkmark$} & quantum & \colorbox[rgb]{1,0.9,0}{LWE}\\
  Ours3 & PKE & \colorbox[rgb]{1,0.9,0}{$\checkmark$} & \colorbox[rgb]{1,0.9,0}{$\checkmark$} &  \colorbox[rgb]{1,0.9,0}{classical}  & MI-ABE\\
  Ours4 & ABE\tnote{c} & \colorbox[rgb]{1,0.9,0}{$\checkmark$}  & \colorbox[rgb]{1,0.9,0}{$\checkmark$} & \colorbox[rgb]{1,0.9,0}{classical} & MI-ABE\\  
\bottomrule
\end{tabular}
 \begin{tablenotes}[flushleft,online,normal] 
 \item[a] bCR denotes bounded collusion-resistant. This scheme only satisfies the bounded collusion resistance of standard PKFE.
 \item[b] Collusion-resistant PKFE implies IO up to sub-exponential security loss.
 \item[c] These schemes are selectively secure, where adversaries must
     declare the target plaintexts (PKFE case) or attributes (ABE case) at the beginning.\nikhil{Same 'c' is used for ABE-CR-SKL and Bartusek et al. Our selective-security means ciphertext attribute is declared right?}\ryo{Yes. \cite{EC:BGKMRR24} considers FE, so plaintexts are declared. Ours considers ABE, so I added ``attributes''.}
 \item[d] This scheme has public verifiability (the verification key for certificates is public).
 \item[e] This scheme is secure even if the verification key is revealed to the adversary \emph{after} the adversary outputs a valid certificate.
 \end{tablenotes}
 \end{threeparttable}
\end{center}
\end{minipage}
\end{center}
\end{table*}

\subsection{Related Work}\label{sec:related_work}
\paragraph{Encryption with SKL.}
Agrawal et al.~\cite{EC:AKNYY23} presented PKE-SKL, ABE-SKL, and PKFE-SKL schemes.
Their PKE-SKL scheme is based on standard PKE, and its certificate is quantum. This scheme is not collusion-resistant.
Their ABE-SKL scheme is based on PKE-SKL and standard (collusion-resistant) ABE, and its certificates are ones of the underlying PKE-SKL.
This scheme is bounded collusion-resistant, that is, the adversary can
obtain an a-priori bounded number of quantum decryption keys that can decrypt target ciphertexts.
Their PKFE-SKL scheme is based on PKE-SKL and standard (collusion-resistant) PKFE, and its certificates are ones of the underlying PKE-SKL.
This scheme is collusion-resistant. (If we instantiate the PKFE-SKL scheme with bounded collusion-resistant PKFE, the scheme is bounded collusion-resistant with respect to both standard PKFE and SKL.)
All schemes mentioned above are secure in the presence of the verification oracle for certificates.

Bartusek et al.~\cite{EC:BGKMRR24} also presented a PKFE-SKL scheme.
Their scheme is collusion-resistant in the selective model, where the target plaintext is declared at the beginning of the game. In addition, their certificates are classical and publicly verifiable. However, their scheme relies on IO.

Ananth, Poremba, and Vaikuntanathan~\cite{TCC:AnaPorVai23} presented a PKE-SKL scheme with classical certificates based on the LWE assumption and an unproven conjecture.
Since the encryption algorithm of this scheme is essentially the same as the Dual-Regev PKE scheme~\cite{STOC:GenPeiVai08}\nikhil{Needs citation?}\ryo{done.}, their scheme achieves fully homomorphic encryption (FHE) with SKL. Later, Ananth, Hu, and Huang~\cite{TCC:AnaHuHua24} present a new security analysis for Ananth et al.'s scheme and removed the conjecture.
Chardouvelis, Goyal, Jain, and Liu~\cite{EPRINT:CGJL23} presented a PKE-SKL scheme with classical certificates based on the LWE assumption. Their scheme is based on the Regev PKE scheme~\cite{JACM:Regev09}\nikhil{Needs citation?}\ryo{done.}, so it also achieves FHE with SKL. In addition, their scheme also achieves classical communication, where all messages between senders and receivers are classical.
Kitagawa, Morimae, and Yamakawa~\cite{myEPRINT:KitMorYam24} presented
a PKE-SKL scheme with classical certificates based on PKE.
These three works do not achieve collusion resistance.
Ananth et al.~\cite{TCC:AnaPorVai23,TCC:AnaHuHua24} and Chardouvelis et al.~\cite{EPRINT:CGJL23} do not consider the verification oracle in their security definitions, while Kitagawa et al.~\cite{myEPRINT:KitMorYam24} does.\footnote{More precisely, the work considers adversaries that receive a verification key after they output a valid certificate. Kitagawa et al. show that their scheme can be converted to satisfy IND-KLA by Agrawal et al.~\cite{EC:AKNYY23}.}
We summarize the works on encryption with SKL in~\cref{tbl:comparison_PKE-SKL}.

\paragraph{Secure software leasing.}
Ananth and La Placa~\cite{EC:AnaLaP21} introduced secure software
leasing, which encodes classical programs into quantum programs that
we can securely lease. We can view SKL as secure software leasing for
decryption functions. However, previous works on secure software
leasing consider a sub-class of evasive
functions~\cite{EC:AnaLaP21,ARXIV:ColMajPor20,TCC:KitNisYam21,TCC:BJLPS21}
or PRFs~\cite{TCC:KitNisYam21}, which do not support decryption
functions. Moreover, they consider a weak security model in which
pirated programs use \emph{honest} evaluation
algorithms~\cite{EC:AnaLaP21,TCC:KitNis23,TCC:BJLPS21} or rely on the
quantum random oracle model~\cite{ARXIV:ColMajPor20}. Bartusek et
al.~\cite{EC:BGKMRR24} achieve secure software leasing supporting all
differing inputs circuits\footnote{Roughly speaking, a pair of
circuits $(C_0,C_1)$ is differing input if it is hard to find an input
$y$ such that $C_0(y)\ne C_1(y)$.} in a strong security model where
pirated programs can use arbitrary evaluation algorithms. However, their scheme relies on IO.

\paragraph{Multi-input ABE.}
Roughly speaking, MI-ABE is ABE that can support multiple ciphertext-attributes (or multiple key-attributes).
To achieve our classical certificates scheme, we need MI-ABE for polynomial-size circuits where the number of slots is polynomial, and we can generate ciphertexts for one slot using a public key. See~\cref{def:miabe} for the definition.
However, as we review previous works on MI-ABE for general circuits\footnote{Francati, Fior, Malavolta, and Venturi~\cite{EC:FFMV23} and Agrawal, Tomida, and Yadav~\cite{C:AgrTomYad23} proposed MI-ABE for restricted functionalities.} below, none of them achieves what we need without using IO.\footnote{IO implies multi-input functional encryption~\cite{EC:GGGJKL14}, which implies MI-ABE.}

Agrawal, Yadav, and Yamada~\cite{C:AgrYadYam22} proposed \emph{two-input} ABE for polynomial-size circuits based on lattices. However, the scheme is heuristic (no reduction-based security proof) and needs a master \emph{secret key to generate ciphertexts for all slots}.
Agrawal, Rossi, Yadav, and Yamada~\cite{C:ARYY23} proposed MI-ABE for
$\NCone$ from the evasive LWE assumption and MI-ABE for
polynomial-size circuits from the evasive LWE and tensor LWE
assumptions. Their scheme allows to generate ciphertexts for one slot using a public key. However, the number of slots is \emph{constant}.
Agrawal, Kumari, and Yamada~\cite{myEPRINT:AgrKumYam24a} proposed MI-ABE\footnote{Precisely speaking, their scheme is predicate encryption, which satisfies privacy for attributes.} for polynomial-size circuits based on the evasive LWE assumption.
In their scheme, the number of slots is polynomial. However, we need a master \emph{secret key to generate ciphertexts for all input-attributes}.

\paragraph{Broadcast encryption.}
Broadcast encryption~\cite{C:FiaNao93} enables a sender to generate ciphertexts intended for a specific subset of users.
Only the designated users can decrypt the ciphertexts, while even if all other users collude, they cannot recover the message. A key performance metric for broadcast encryption is the size of the public key and ciphertexts.
Several works have proposed optimal broadcast encryption schemes, where these sizes are $\poly(\log{N})$ and $N$ is the total number of users.
The constructions by Agrawal and Yamada~\cite{EC:AgrYam20} and Agrawal, Wichs, and Yamada~\cite{TCC:AgrWicYam20} rely on the LWE assumption and \emph{pairings}, which are not post-quantum secure. The construction by Wee~\cite{EC:Wee22} relies on the \emph{evasive LWE assumption}, which is a non-falsifiable assumption and has known counterexamples~\cite{AC:BrzUnaWoo24}. The construction by Brakerski and Vaikuntanathan~\cite{ITCS:BraVai22} relies on \emph{heuristics} and lacks a reduction-based proof.

While broadcast encryption is particularly well-suited for streaming services, PKE-CR-SKL can also be applied in this domain. Each approach has its own advantages and limitations. One advantage of broadcast encryption is that it allows the sender to specify recipients at the encryption phase, whereas PKE-CR-SKL does not.
However, PKE-CR-SKL offers three notable advantages over optimal broadcast encryption.
\begin{enumerate}
    \item \textbf{Efficient revocation:} Since decryption keys in broadcast encryption are classical, the system requires maintaining a revocation list, which senders use to revoke users, whereas PKE-CR-SKL eliminates the need for such lists.
    \item \textbf{Seamless user expansion:} In broadcast encryption, the user set is fixed during the setup phase, meaning that adding new users requires updating the encryption key. In contrast, PKE-CR-SKL allows new users to be added without requiring any key updates.
    \item \textbf{Weaker cryptographic assumptions:} Our PKE-CR-SKL scheme is based on the standard LWE assumption, whereas post-quantum secure optimal broadcast encryption relies on the evasive LWE assumption, which has counterexamples~\cite{AC:BrzUnaWoo24}.
\end{enumerate}
In terms of  asymptotic efficiency, both optimal broadcast encryption and PKE-CR-SKL achieve the same public key and ciphertext sizes.

\ifnum\submission=1
We discuss more related works in~\cref{sec:other-rel}.
\else

\ifnum\submission=1
\section{More on Related Work} 
\label{sec:other-rel}
\else
\fi

\paragraph{Certified deletion.}
Broadbent and Islam~\cite{TCC:BroIsl20} introduced encryption with certified deletion, where we can generate classical certificates to guarantee that \emph{ciphertexts} were deleted. Subsequent works improved Broadbent and Islam's work and achieved advanced encryption with certified deletion~\cite{AC:HMNY21,ITCS:Poremba23,C:BarKhu23,EC:HKMNPY24,EC:BGKMRR24} and publicly verifiable deletion~\cite{AC:HMNY21,EC:BGKMRR24,TCC:KitNisYam23,TCC:BKMPW23}.
Compute-and-compare obfuscation with certified deletion introduced by Hiroka et al.~\cite{EC:HKMNPY24} is essentially the same as secure software leasing in the strong security model.

\paragraph{Single decryptor encryption.}
Georgiou and Zhandry~\cite{EPRINT:GeoZha20} introduced the notion of SDE. They constructed a public-key SDE scheme from one-shot signatures~\cite{STOC:AGKZ20} and extractable witness encryption with quantum auxiliary information~\cite{STOC:GGSW13,C:GKPVZ13}. Coladangelo, Liu, Liu, and Zhandry~\cite{C:CLLZ21} constructed a public-key SDE scheme from IO~\cite{JACM:BGIRSVY12} and extractable witness encryption or from subexponentially secure IO, subexponentially secure OWF, and LWE by combining the results by Culf and Vidick~\cite{Quantum:CulVid22}. Kitagawa and Nishimaki~\cite{AC:KitNis22} introduced the notion of single-decryptor functional encryption (SDFE), where each functional decryption key is copy-protected and constructed single decryptor PKFE for $\Ppoly$ from the subexponential hardness of IO and LWE.
These works consider the setting where the adversary receives only one copy-protected decryption key.
Liu, Liu, Qian, and Zhandry~\cite{TCC:LLQZ22} study SDE in the collusion-resistant setting, where the adversary receives multiple copy-protected decryption keys.
They constructed a public-key SDE scheme with bounded collusion-resistant copy-protected keys from subexponentially secure IO and subexponentially secure LWE.

\paragraph{Multi-copy revocable encryption.}
Ananth, Mutreja, and Poremba~\cite{myEPRINT:AnaMutPor24} introduced multi-copy revocable encryption.
This notion considers the setting where we can revoke \emph{ciphertexts} (not decryption keys), and the adversary receives multiple copies of the target \emph{ciphertext} (they are pure states). Hence, this notion is different from secure key leasing (or key-revocable cryptography).

\fi



\ifnum\submission=1
\subsection{Organization of the paper}
In~\cref{sec:technical_overview}, we present a technical overview.
In~\cref{sec:prelim}, we define the notation and preliminaries that we require in this work.  In~\cref{sec:cr-skl-defs}, we define the notion of secret key and public key encryption with collusion-resistant secure key leasing (SKE-CR-SKL and PKE-CR-SKL) and its security notions. In~\cref{sec:SKECRSKL-KT}, we construct an SKE-CR-SKL scheme with key testability. In~\cref{sec:PKE-CR-SKL}, we provide our PKE-CR-SKL scheme. Due to the page limitation, we present our ABE-CR-SKL and classical certificates ABE-CR-SKL schemes in~\cref{sec:ABE-SKL,sec:ABR-CR2-SKL}, respectively.
\else
\fi


\section{Technical Overview}\label{sec:technical_overview}

\newcommand{\qDel}{\qalgo{Del}}
\newcommand{\cnc}[2]{\mathbf{CC}[#1,#2]}
\newcommand{\lock}{\mathsf{lock}}
\newcommand{\CCObf}{\mathsf{CC}.\mathsf{Obf}}
\newcommand{\CCSim}{\mathsf{CC}.\mathsf{Sim}}
\newcommand{\tlP}{\widetilde{P}}
\newcommand{\CDSKE}{\mathsf{SKECD}}
\newcommand{\SKECD}{\mathsf{SKECD}}
\newcommand{\CDec}{\mathsf{CDec}}
\newcommand{\ctlen}{\ell_{\ct}}
\newcommand{\NMSKECD}{\mathsf{NMSKECD}}
\newcommand{\sig}{\mathsf{sig}}
\newcommand{\sgn}{\mathsf{sgn}}
\newcommand{\skecd}{\mathsf{skecd}}
\renewcommand{\Check}{\mathsf{Check}}
\newcommand{\msglen}{\ell_{\msg}}
\newcommand{\PKECRSKL}{\mathsf{PKE}\textrm{-}\mathsf{CR}\textrm{-}\mathsf{SKL}}
\newcommand{\SKECRSKL}{\mathsf{SKE}\textrm{-}\mathsf{CR}\textrm{-}\mathsf{SKL}}
\newcommand{\KeyTest}{\mathsf{KeyTest}}

We now provide a technical overview of this work.  
Our primary focus here is on constructing PKE-CR-SKL and ABE-CR-SKL based on the LWE assumption.  
For an overview of our variants with classical certificates based on MI-ABE, please refer to \cref{sec-ABE-CR-SKL-classical-certificate}. 

\subsection{Defining PKE with Collusion-Resistant SKL}

We will begin by describing the definition of PKE with Secure
Key-Leasing (PKE-SKL) \cite{EC:AKNYY23,TCC:AnaPorVai23}, and then get into our collusion-resistant generalization
of it. PKE-SKL is a cryptographic primitive consisting of five
algorithms: $\Setup, \qKG, \Enc, \qDec$ and $\qVrfy$. The algorithm
$\Setup$ samples a public encryption-key $\ek$ and a ``master''
secret-key $\msk$. The encryption algorithm $\Enc$ takes a message
$\msg$ and $\ek$ as inputs and produces a corresponding ciphertext
$\ct$. Both these algorithms are classical and similar to their
counterparts in standard PKE. On the other hand, the key-generation
algorithm $\qKG$ is quantum, and produces a pair of keys $(\qdk, \vk)$
given $\msk$ as input. Here, $\qdk$ is a quantum decryption-key using
which $\qDec$ can decrypt arbitrarily many ciphertexts encrypted under
$\ek$. The other key $\vk$ is a classical verification-key, the
purpose of which will be clear in a moment.

The setting to consider is one where an adversary is given a
decryption-key $\qdk$, and is later asked to return it back.
Intuitively, we wish to guarantee that if $\qdk$ is returned, the
adversary can no longer decrypt ciphertexts
encrypted under $\ek$. Since a malicious adversary may
even send a malformed key $\widetilde{\qdk}$, we need a way to tell
whether the decryption-key has been correctly returned.  This
is the purpose of the algorithm $\qVrfy$, which performs such a check
using the corresponding (private) verification-key $\vk$.  It is
required that if the state $\qdk$ is sent back undisturbed, then
$\qVrfy$ must accept. On the other hand, if $\qVrfy$ accepts (even for
a possibly malformed state), then the adversary must lose the ability
to decrypt. This loss in the ability to decrypt is captured formally
by a cryptographic game, where an adversary is asked to distinguish
between ciphertexts of different messages, after passing the
verification check.

So far, we have described the notion of PKE-SKL. In this work, we
study the notion of PKE with Collusion-Resistant SKL (PKE-CR-SKL).
This primitive has the same syntax as PKE-SKL, but the aforementioned
security requirement is now stronger. Specifically, an adversary that
obtains polynomially-many decryption keys and (verifiably) returns
them all, should no longer be able to decrypt. This is defined formally
in our notion of IND-KLA (Indistinguishability under Key-Leasing
Attacks) security (Definition \ref{def:IND-CPA_PKECRSKL}). As standard
in cryptography, it is characterized by a game between a challenger
$\qCh$ and a QPT adversary $\qA$. An informal description of the game
follows:

\begin{description}
\item \textbf{IND-KLA Game in the Collusion-Resistant Setting}
\begin{enumerate}
\item $\qCh$ samples $(\ek, \msk) \gets \Setup(1^\secp)$ and sends
$\ek$ to $\qA$.
\item Then, $\qA$ requests $q$ decryption keys corresponding to $\ek$,
where $q$ is some arbitrary polynomial in $\secp$.\footnote{Without loss of generality, we can assume that $\qA$ asks for sufficiently many keys at the beginning itself. Hence, even adversaries that can request additional keys after accessing the verification oracle are covered by this definition.}
\item $\qCh$ generates $(\qdk_i, \vk_i)_{i \in [q]}$ using $q$
independent
invocations of $\qKG(\msk)$. It sends $\{\qdk_i\}_{i\in[q]}$ to $\qA$.

\item Corresponding to each index $i \in [q]$, $\qA$ is allowed
oracle access to the algorithm $\qVrfy(\vk_i, \cdot)$ in the following
sense: For a quantum state $\widetilde{\qdk}$ queried by $\qA$, the
oracle evaluates $\qVrfy(\vk_i, \widetilde{\qdk})$ and measures the
verification result. It then returns the obtained classical outcome
(which indicates accept/reject). We emphasize that $\qA$ is allowed to
interleave queries corresponding to indices $i \in [q]$, and can also
make its queries adaptively.

\item If at-least one query of $\qA$ to $\qVrfy(\vk_i, \cdot)$
produces an accept output for every $i \in [q]$, the game proceeds to
the challenge phase. Otherwise, the game aborts.
\item In the challenge phase, $\qA$ specifies a pair of messages
$(\msg_0, \msg_1)$. $\qCh$ sends $\msg_\coin$ to $\qA$ for a random
bit $\coin$.
\item $\qA$ outputs $\coin'$.
\end{enumerate}
\end{description}

$\qA$ wins the game whenever $\coin = \coin'$. The security
requirement is that for every QPT adversary, the winning probability
conditioned on the game not aborting is negligibly close to $1/2$.
Observe that the
adversary is allowed to make polynomially-many attempts in order to
verifiably return a decryption-key $\qdk_i$, and unsuccessful attempts
are not penalized.  We also emphasize that in the definition, $q$ is
an unbounded polynomial, i.e., the construction is not allowed to
depend on $q$ in any way.

\subsection{Insecurity of Direct Extensions of Prior Work}

Next, we will provide some intuition regarding the insecurity of
direct extensions of prior works to this stronger setting.

Let us consider the PKE-SKL scheme due to Agrawal et al.
\cite{EC:AKNYY23} for demonstration. The decryption-keys in their
scheme are of the following form\footnote{In actuality, they use a
parallel repetition to amplify security.}:

$$\frac{1}{\sqrt{2}}\big(\ket{0}\ket{\sk_0} + \ket{1}\ket{\sk_1}\big)$$

Here, $\sk_0, \sk_1$ are secret-keys corresponding to public-keys
$\pk_0, \pk_1$ respectively of a standard PKE scheme. Specifically,
the pairs $(\pk_0, \sk_0)$ and $(\pk_1, \sk_1)$ are generated using
independent invocations of the setup algorithm of PKE. The PKE-SKL
public-key consists of the pair $(\pk_0, \pk_1)$. The encryption
algorithm outputs ciphertexts of the form $(\ct_0, \ct_1)$, where for
each $i \in \bit$, $\ct_i$ encrypts the plaintext under $\pk_i$.  The
verification procedure requires the adversary to return the
decryption-key, and checks if it is the same as the above state. The intuition is that if the adversary retains the ability to decrypt, then it cannot pass the verification check with probability close to 1. For instance, measuring the state provides the adversary with $\sk_0$ or
$\sk_1$ which is sufficient for decryption, but it clearly destroys
the above quantum state.

Consider now the scenario where the public-key $(\pk_0, \pk_1)$ is
fixed, and $n = \poly(\secp)$ copies of the above decryption-key are
given to an adversary. In this case, it is easy to see that the
adversary can simply measure the states to obtain both $\sk_0$ and
$\sk_1$. Even though this process is destructive, since $\sk_0$ and
$\sk_1$ completely describe the state, the adversary can just recreate
many copies of it and pass the deletion checks. Consequently, one
might be tempted to encode other secret information in the Hadamard
basis by introducing random phases. However, we observe that such
approaches also fail due to simple gentle-measurement attacks.  For
example, consider the combined state of the $n$ decryption-keys in
such a case:

$$\frac{1}{2^{n/2}}\Big(\ket{0\ldots0}\ket{\sk_0 \ldots \sk_0} -
\ket{00\ldots1}\ket{\sk_0\sk_0\ldots\sk_1} + \cdots -
\ket{1\ldots1}\ket{\sk_1\ldots\sk_1}\Big)$$

Clearly, only the last term of the superposition does not contain
$\sk_0$, and the term only has negligible amplitude.
Hence, one can compute $\sk_0$ on another register and measure
it, without disturbing the state more than a negligible amount. As a result, the verification checks can be passed while retaining the ability to decrypt. We note that all existing constructions of encryption with SKL can be broken with similar collusion attacks. 

The reason our scheme does not run into such an attack is because we
rely on the notion of Attribute-Based Encryption (ABE), which
enables exponentially-many secret-keys for every public-key.
Consequently, different decryption-keys can be generated as
superpositions of different ABE secret-keys. However, it is not
clear how this intuition alone can be used to establish security in
a provable manner, and we require additional ideas. We now describe
these ideas at a high-level.

\subsection{Idea Behind the PKE-CR-SKL Scheme}

In order to explain the basic idea behind our PKE-CR-SKL scheme, we
will first introduce a key-building block, a primitive called
SKE-CR-SKL. This is basically a secret-key variant of PKE-CR-SKL,
i.e., the setup algorithm only samples a master secret-key $\ske.\msk$, and
the encryption algorithm encrypts plaintexts under $\ske.\msk$. The
security requirement is similar. An adversary that receives polynomially
many decryption-keys and returns them all, should not be able to
distinguish between ciphertexts of different messages. We refer to
this security notion as one-time IND-KLA (OT-IND-KLA) security
(Definition \ref{def:OT-IND-CPA_SKECRSKL}).
The ``one-time'' prefix refers to the fact that unlike in PKE-CR-SKL,
the adversary does not have the ability to perform chosen plaintext
attacks. In other words, the adversary does not see any ciphertexts
before it is required to return its decryption-keys. Although this is
a weak security guarantee, it suffices for our PKE-CR-SKL scheme. The
description of the PKE-CR-SKL scheme now follows.

The key-generation procedure involves first generating an SKE-CR-SKL
decryption-key, which is represented in the computational basis as
$\ske.\qdk = \sum_u \alpha_u \ket{u}$.
Actually, our SKE-CR-SKL scheme needs to
satisfy another crucial property, which we call the classical
decryption property. This property requires the existence of a
classical deterministic algorithm $\CDec$ with the following
guarantee.  For any SKE-CR-SKL ciphertext $\ske.\ct$, $\CDec(u, \ske.\ct)$
correctly decrypts $\ske.\ct$ for every string $u$ in the superposition of
$\ske.\qdk$. Our construction exploits this fact with the help of an
ABE scheme as follows.

The actual decryption key is generated as $\qdk \seteq \sum_u
\alpha_u \ket{u} \otimes \ket{\abe.\sk_u}$ where $\abe.\sk_u$ is an ABE secret-key
corresponding to the key-attribute $u$. The idea is to now have the
encryption algorithm encrypt the plaintext using the ABE scheme,
under a carefully chosen ciphertext-policy. Specifically, we wish to
embed an SKE-CR-SKL ciphertext $\ske.\ct^*$ as part of the
policy-circuit, such that the outcome of $\CDec(u, \ske.\ct^*)$
determines whether the key $\abe.\sk_u$ can decrypt the ABE
ciphertext or not. This allows us to consider two ciphertexts
$\ske.\ct_0^*, \ske.\ct_1^*$ of different plaintexts, such that when
$\ske.\ct_0^*$
is embedded in the policy, then every ABE key $\abe.\sk_u$ satisfies
the ABE relation. On the other hand, no ABE key satisfies the relation
when $\ske.\ct_1^*$ is
embedded. Observe that in the former case, $\qdk$ allows for
decryption without disturbing the state (by gentle measurement) while
in the latter case, the security of ABE ensures that no adversary can
distinguish ciphertexts of different messages\footnote{This requires
that the ABE scheme is secure even given superposition access to the
key-generation oracle.}. Hence, if we were to undetectably switch the
policy-circuit from one corresponding to $\ske.\ct_0^*$ to one with
$\ske.\ct_1^*$, we are done.

While we cannot argue this directly, our main idea is that this switch
is undetectable using OT-IND-KLA security, given that all the
SKE-CR-SKL keys $\ske.\qdk$ are returned. This can be enforced because
all the leased decryption keys $\qdk$ are required to be returned.
Consequently, the verification procedure uncomputes the ABE secret-key
register of the returned keys, followed by verifying all the obtained
SKE-CR-SKL keys $\ske.\qdk'$. However, there is one problem with the
template as we have described it so far. This is that both
$\ske.\ct_0^*, \ske.\ct_1^*$ are SKE-CR-SKL ciphertexts, which
inherently depend on the master secret-key. In order to remove this
dependence and achieve public-key encryption, the actual encryption
algorithm uses a dummy ciphertext-policy $\tlC \gets \CCSim(1^\secp)$
where $\CCSim$ is the simulator of a compute-and-compare obfuscation
scheme, a notion we will explain shortly.  We will then rely on the
security of this obfuscation scheme to switch $\tlC$ in the security proof, to
an appropriate obfuscated circuit with $\ske.\ct_0^*$ embedded in
it.

In more detail, the ABE scheme allows a key with attribute $u$ to
decrypt if and only if the ciphertext policy-circuit $\tlC$ satisfies
$\tlC(u) = \bot$. Observe that in the construction, $\tlC$ is
generated as $\tlC \gets
\CCSim(1^\secp)$, which outputs $\bot$ on every input $u$. Consider
now a circuit $\tlC^*$ that is an obfuscation of the circuit
$\cnc{D[\ske.\ct^*_0]}{\lock, 0}$ which is described as follows:

\begin{description}
\item {\bf Description of $\cnc{D[\ske.\ct^*_0]}{\lock, 0}:$}
\begin{itemize}
\item $\ske.\ct_0^*$ is an SKE-CR-SKL encryption of the (dummy) message $0^\secp$.
\item $D[\ske.\ct_0^*]$ is a circuit with $\ske.\ct^*_0$ hardwired. It
is defined as $D[\ske.\ct_0^*](x) = \CDec(x, \ske.\ct_0^*)$.
\item $\lock$ is a value chosen uniformly at random, independently of all
other values.
\item On input $x$, the circuit outputs $0$ if $D[\ske.\ct_0^*](x) = \lock$. Otherwise, it outputs $\bot$.
\end{itemize}
\end{description}

The above circuit belongs to a sub-class of circuits known as
compute-and-compare circuits. Recall that our goal was to avoid the use
of IO. We can get away with using IO for this sub-class of
circuits, as these so-called compute-and-compare obfuscation schemes are known
from LWE \cite{FOCS:GoyKopWat17,FOCS:WicZir17}.

The security of the obfuscation can now be used to argue that the
switch from $\tlC$ to $\tlC^*$ is indistinguishable. Note that to
invoke this security guarantee, $\lock$ must be a
uniform value that is independent of all other values. Next, we
can rely on the OT-IND-KLA security of SKE-CR-SKL to switch the ciphertext
$\ske.\ct_0^*$ embedded in the circuit $D$ to some other ciphertext
$\ske.\ct_1^*$. The switch would be indistinguishable given that the
SKE-CR-SKL decryption-keys are revoked, which we can enforce as
mentioned previously. Crucially, we will generate $\ske.\ct_1^*$ as an
encryption of the value $\lock$ corresponding to the above
compute-and-compare circuit.\footnote{Although the compute-and-compare
circuit depends on $\lock$ in this hybrid, the switch is still
justified by OT-IND-KLA security.} This ensures that for every attribute $u$
in the superposition of an SKE-CR-SKL decryption-key $\ske.\qdk =
\sum_u \alpha_u \ket{u}$, the algorithm $\CDec(u, \ske.\ct_1^*)$
outputs the value $\lock$. As a consequence, the circuit $\tlC^*$
will output $0$ instead of $\bot$, meaning the key $\abe.\sk_u$ does
not satisfy the relation in this hybrid, as desired.

It will be clear in the next subsection that our SKE-CR-SKL scheme is implied
by OWFs. Since compute-and-compare obfuscation is known from LWE
\cite{FOCS:GoyKopWat17,FOCS:WicZir17}, and so is Attribute-Based Encryption for
polynomial-size circuits
\cite{EC:BGGHNS14}\footnote{We show that their ABE scheme is secure
even with superposition access to the key-generation oracle.}, we have
the following theorem:

\begin{theorem}
There exists a PKE-CR-SKL scheme satisfying IND-KLA security, assuming
the polynomial hardness of the LWE assumption.
\end{theorem}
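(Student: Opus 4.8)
The plan is to exhibit a generic construction of PKE-CR-SKL from three ingredients, all of which follow from the polynomial hardness of LWE, and then to prove IND-KLA security by a short chain of hybrids. The ingredients are: (i) an SKE-CR-SKL scheme $\SKECRSKL$ satisfying OT-IND-KLA security together with the classical decryption property --- a classical deterministic $\CDec$ with the guarantee that $\CDec(u,\ske.\ct)$ correctly decrypts $\ske.\ct$ for every $u$ in the computational-basis support of an honestly generated $\ske.\qdk$ --- and key-testability; this is obtained from OWFs in the next subsection, hence from LWE. (ii) An ABE scheme $\ABE$ for polynomial-size circuits, used so that a key with attribute $u$ decrypts a ciphertext with policy circuit $\tlC$ iff $\tlC(u)=\bot$, and which is IND-CPA secure even when the key-generation oracle is queried in \emph{superposition}; I would obtain this by showing the \cite{EC:BGGHNS14} scheme has this property, after derandomizing $\abe.\KeyGen$ with a PRF. (iii) A compute-and-compare obfuscator $(\CCObf,\CCSim)$, known from LWE \cite{FOCS:GoyKopWat17,FOCS:WicZir17}. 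In the construction, $\msk$ holds $\ske.\msk$, $\abe.\msk$, and a PRF key $K$, while $\ek=\abe.\mpk$; $\qKG$ samples $(\ske.\qdk,\ske.\vk)$ with $\ske.\qdk=\sum_u \alpha_u\ket{u}$, coherently applies the isometry $\ket{u}\mapsto\ket{u}\ket{\abe.\KeyGen(\abe.\msk,u;\prf_K(u))}$ to obtain $\qdk=\sum_u \alpha_u\ket{u}\ket{\abe.\sk_u}$, and outputs $(\qdk,\vk\seteq\ske.\vk)$; $\Enc(\ek,\msg)$ samples $\tlC\gets\CCSim(1^\secp,\dots)$ and outputs an ABE ciphertext for policy $\tlC$ and message $\msg$ (since $\tlC(u)=\bot$ for all $u$, $\qDec$ recovers $\msg$ coherently, by gentle measurement, without disturbing $\qdk$); and $\qVrfy(\vk,\widetilde{\qdk})$ inverts the isometry using $K$ and then runs $\ske.\qVrfy(\ske.\vk,\cdot)$ on the resulting state.

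\textbf{Hybrid argument.} Let $\hybi{0}$ be the real IND-KLA game with challenge bit $\coin$. In $\hybi{1}$ the challenge ciphertext uses the policy $\tlC^*\gets\CCObf(\cnc{D[\ske.\ct_0^*]}{\lock,0})$, where $\lock$ is fresh and uniform, $\ske.\ct_0^*\gets\ske.\Enc(\ske.\msk,0^\secp)$, and $D[\ske.\ct_0^*](x)\seteq\CDec(x,\ske.\ct_0^*)$; since $\lock$ is uniform and independent of $(D[\ske.\ct_0^*],0)$, compute-and-compare obfuscation security gives $\hybi{0}\cind\hybi{1}$, and in $\hybi{1}$ we still have $\tlC^*(u)=\bot$ on every relevant $u$, so functionality is unchanged. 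In $\hybi{2}$ the embedded ciphertext becomes $\ske.\ct_1^*\gets\ske.\Enc(\ske.\msk,\lock)$; this is justified by OT-IND-KLA security of $\SKECRSKL$, which applies because $\qVrfy$ uncomputes and verifies all the underlying SKE-CR-SKL keys and because the embedded ciphertext is only produced at the challenge phase, i.e., strictly after all leased keys are returned. In $\hybi{3}$ the challenge ciphertext encrypts $0^{\msglen}$ rather than $\msg_\coin$: by the classical decryption property, $D[\ske.\ct_1^*](u)=\CDec(u,\ske.\ct_1^*)=\lock$ for every $u$ in the support of every $\ske.\qdk_i$, hence $\tlC^*(u)=0\ne\bot$ and no ABE key occurring in the (superposition) key queries decrypts the challenge ABE ciphertext, so $\hybi{2}\cind\hybi{3}$ by the superposition-secure IND-CPA security of $\ABE$. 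Since $\hybi{3}$ is independent of $\coin$, the conditional winning probability there is exactly $1/2$, and chaining the four hybrids yields IND-KLA security; all reductions are polynomial time, so polynomial LWE suffices.

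\textbf{Main obstacles.} I expect two steps to require the most care. The first is the superposition-security lemma for the \cite{EC:BGGHNS14} ABE scheme: one must check that the (derandomized) key-generation oracle can be answered coherently and that the reduction to LWE still goes through when each key query is a superposition over exponentially many attributes --- this is exactly where the ``exponentially many secret keys per public key'' feature is used, and it is the reason ordinary PKE-based schemes succumb to the gentle-measurement collusion attack sketched earlier. The second is the OT-IND-KLA reduction for $\hybi{1}\to\hybi{2}$: the reduction must sample $\lock$ and $K$ itself, build each $\qdk_i$ from the received $\ske.\qdk_i$ by coherently appending ABE keys (which forces $\abe.\KeyGen$ to be derandomized, so that the append map is an efficiently computable isometry that can later be inverted), answer every verification query by inverting that isometry and relaying the resulting state to the SKE-CR-SKL verification oracle, and submit $(0^\secp,\lock)$ as its challenge messages only once the PKE game enters the challenge phase; one then has to verify that the PKE game's abort/proceed decision is mirrored exactly by the SKE game's verification outcomes. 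A minor subtlety worth noting is that in $\hybi{2}$ the compute-and-compare circuit depends on $\lock$ (since $\ske.\ct_1^*$ encrypts it), but this is harmless because the $\hybi{1}\to\hybi{2}$ step invokes only OT-IND-KLA security, never obfuscation security.
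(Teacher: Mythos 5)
Your construction and the first two hybrid steps match the paper's proof almost exactly (CCSim $\to$ CCObf with an encryption of $0^\secp$, then switching the embedded SKE-CR-SKL ciphertext to an encryption of $\lock$ via OT-IND-KLA), but the final step, $\hybi{2}\cind\hybi{3}$ "by the superposition-secure IND-CPA security of $\ABE$," has a genuine gap: you never explain how the reduction to ABE security answers the adversary's \emph{verification queries}. Your justification ("no ABE key occurring in the key queries decrypts the challenge") only covers the honestly leased keys $\qdk_i$, where indeed every $u$ in the support satisfies $\CDec(u,\ske.\ct_1^*)=\lock$. But $\qVrfy$ must be simulated on \emph{adversarially chosen} states $\widetilde{\qdk}$, and simulating it requires uncomputing the ABE secret-key register, i.e., coherently evaluating $\ABE.\KG(\abe.\msk,\cdot)$ on whatever attributes $\widetilde{u}$ appear in the adversary's superposition. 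The reduction does not hold $\abe.\msk$; it only has the quantum key-generation oracle, which by definition refuses (measures and returns $\bot$ on) any attribute with $R(\tlC^*,\widetilde{u})=0$, i.e., any $\widetilde{u}$ with $\CDec(\widetilde{u},\ske.\ct_1^*)\ne\lock$. A malformed returned key can place weight on exactly such $\widetilde{u}$, so the reduction as you describe it cannot run.

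This is precisely where the paper spends most of its effort, and where the key-testability you list as ingredient (i) — but never invoke — is needed. In the paper's construction, $\qVrfy$ first applies $\SKECRSKL.\KeyTest$ in superposition and measures the outcome bit before uncomputing the ABE key register; your $\qVrfy$ omits this step, so even the construction would have to be amended. The paper then inserts two extra hybrids: one adding a superposed check of $R(\tlC^*,u)$ into key generation (harmless by the classical decryption property), and one adding the same check into the verification oracle. The closeness of these last two hybrids is \emph{not} immediate: it is proved via the one-way-to-hiding lemma with an auxiliary quantum oracle (Lemma \ref{lem:O2H}), where the extractor's measured query yields a classical $\dk$ that passes $\KeyTest$ yet satisfies $\CDec(\dk,\ske.\ct_1^*)\ne\lock$, contradicting the key-testability security of $\SKECRSKL$. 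Only after these hybrids is the adversary's entire view — including all verification queries — simulatable from the ABE quantum key-generation oracle, so that quantum selective security of the \cite{EC:BGGHNS14} scheme can be applied to replace $\msg_\coin^*$ by $0^{\msglen}$. Without this mechanism (or an alternative argument handling adversarial verification queries), your hybrid chain does not close.
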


We also observe that using similar ideas as in PKE-CR-SKL, one can
obtain an analogous notion of selectively-secure ABE with
collusion-resistant secure-key
leasing (ABE-CR-SKL) for arbitrary polynomial-time computable circuits. Intuitively, this
primitive allows the adversary to declare a target ciphertext-attribute and then make arbitrarily many key-queries
adaptively, even ones which satisfy the ABE relation. Then, as long as
the adversary verifiably returns all the keys that satisfy the
relation, it must lose the ability to decrypt. This is captured
formally in the notion of selective IND-KLA security (Definition
\ref{def:sel_ind_ABE_SKL}). To realize this primitive, we introduce a
notion called secret-key functional-encryption with
collusion-resistant secure key-leasing
(SKFE-CR-SKL), which is a functional encryption analogue of SKE-CR-SKL.
From this primitive, we require a notion called selective
single-ciphertext security (Definition \ref{def:sel-1ct-SKFE}), that
is similar to the OT-IND-KLA security of SKE-CR-SKL. Like SKE-CR-SKL,
we observe that SKFE-CR-SKL is also implied by OWFs.
The other elements of the ABE-CR-SKL construction are the
same as in PKE-CR-SKL, namely compute-and-compare obfuscation and an
ABE scheme. Consequently, we have the following theorem:

\begin{theorem}
There exists an ABE-CR-SKL scheme satisfying selective IND-KLA security, assuming
the polynomial hardness of the LWE assumption.
\end{theorem}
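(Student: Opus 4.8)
The plan is to prove the theorem by exhibiting an ABE-CR-SKL scheme and following the same template already used above for the PKE-CR-SKL case, replacing the SKE-CR-SKL building block by its functional-encryption analogue. I would use three ingredients: (i) an SKFE-CR-SKL scheme enjoying the classical decryption property, key-testability, and selective single-ciphertext security (\cref{def:sel-1ct-SKFE}), which is implied by one-way functions and hence by LWE; (ii) a compute-and-compare obfuscation scheme, which exists under LWE~\cite{FOCS:GoyKopWat17,FOCS:WicZir17}; and (iii) an ABE scheme for polynomial-size circuits~\cite{EC:BGGHNS14}, together with a proof that it stays secure when the adversary has \emph{superposition} access to the key-generation oracle (needed because the leased keys are superpositions over ABE secret keys; this uses that ABE key generation can be derandomized, e.g.\ via a PRF, and hence run coherently). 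The target is selective IND-KLA security of ABE-CR-SKL (\cref{def:sel_ind_ABE_SKL}).

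For the construction, $\Setup$ samples an ABE master key pair and an SKFE-CR-SKL master secret key, publishing only the ABE public key as $\ek$. A leased key for a policy $f$ is produced by running the SKFE-CR-SKL key generator on the function $F_f$ that, on a message parsed as $(x,v)$, outputs $v$ if $f(x)=1$ and $0^\secp$ otherwise, obtaining $\skfe.\qdk=\sum_u\alpha_u\ket{u}$ and a classical $\vk\seteq\skfe.\vk$; one then coherently attaches an ABE secret key for attribute $(u,f)$ to each branch, giving $\qdk\seteq\sum_u\alpha_u\ket{u}\otimes\ket{\abe.\sk_{(u,f)}}$. Encryption of $(\msg,x)$ samples a dummy policy $\tlC\gets\CCSim(1^\secp)$ and outputs an ABE ciphertext of $\msg$ under the policy circuit $Q_{x,\tlC}$ defined by $Q_{x,\tlC}(u,f)=1\iff f(x)=1\wedge\tlC(u)=\bot$; since $\tlC(u)=\bot$ on all inputs in the real scheme, this reduces to $f(x)=1$, so correctness follows from ABE correctness, while the classical decryption property lets $\qDec$ decrypt by a gentle (classical-outcome) measurement that leaves $\qdk$ essentially undisturbed. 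Verification uncomputes the ABE-secret-key register of a returned state (possible because that register is a deterministic function of $u$) and then runs the SKFE-CR-SKL verifier; key-testability ensures this is consistent and that an accepted state is supported on branches $u$ for which $\CDec(u,\cdot)$ behaves correctly.

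For selective IND-KLA security, after $\qA$ commits to $x^*$ I would argue through the following hybrids. (1) Replace the challenge ciphertext's policy $\tlC\gets\CCSim(1^\secp)$ by $\tlC^*\gets\CCObf(\cnc{D[\skfe.\ct^*_0]}{\lock,0})$, where $D[\skfe.\ct^*_0](u)=\CDec(u,\skfe.\ct^*_0)$, $\skfe.\ct^*_0$ is an SKFE encryption of $(x^*,0^\secp)$, and $\lock$ is uniform and independent; since $\CDec(u,\skfe.\ct^*_0)=F_{f_i}(x^*,0^\secp)=0^\secp$ on every branch of every leased key, $D[\skfe.\ct^*_0]$ almost never hits $\lock$, so compute-and-compare obfuscation security makes this switch indistinguishable. (2) Replace $\skfe.\ct^*_0$ by $\skfe.\ct^*_1$, an SKFE encryption of $(x^*,\lock)$; this is indistinguishable by the selective single-ciphertext security of SKFE-CR-SKL, with the reduction forwarding each key query $f_i$ to the SKFE key-generation oracle as a query for $F_{f_i}$, answering each (adaptively interleaved) verification query by uncomputing the ABE register, relaying the state to the SKFE verification oracle, and recomputing the register, and using that the game proceeds only if every key that had to be returned was verifiably returned. (3) In the final hybrid, for every queried $f_i$ with $f_i(x^*)=1$ one has $\CDec(u,\skfe.\ct^*_1)=F_{f_i}(x^*,\lock)=\lock$ on every branch of the corresponding verified key by key-testability, so $\tlC^*(u)=0\neq\bot$ and $Q_{x^*,\tlC^*}(u,f_i)=0$; and for $f_i(x^*)=0$ one gets $Q_{x^*,\tlC^*}(u,f_i)=0$ from the $f(x)=1$ conjunct. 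Hence no ABE secret key the adversary holds satisfies the challenge policy $Q_{x^*,\tlC^*}$, and a reduction to the ABE scheme with superposition key-generation queries shows $\qA$ cannot distinguish $\msg_0$ from $\msg_1$; thus the winning probability conditioned on no abort is negligibly close to $1/2$.

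I expect the main obstacle to be step (2): the reduction to SKFE security must simulate the verification oracles of all $q$ leased keys, with queries adaptively interleaved, while being able only to relay quantum states to the single SKFE verification oracle. Making this sound requires the ABE-secret-key register to be coherently (un)computable and requires the key-testability of SKFE-CR-SKL, so that wrapping the SKFE verifier inside the ABE-register uncompute/recompute does not change the verdict and so that the residual state of an accepted key remains supported on ``good'' branches $u$ --- the latter also being what makes the step-(3) ABE reduction go through, since all attributes $(u,f_i)$ queried to the ABE key-generation oracle must fail $Q_{x^*,\tlC^*}$. A secondary burden is proving superposition-query security of the ABE of~\cite{EC:BGGHNS14} and keeping all circuits ($F_f$, $Q_{x,\tlC}$, $D$) polynomial-size so that ABE for polynomial-size circuits suffices. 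Selectivity of the final scheme is inherited for free from the selective single-ciphertext security of SKFE-CR-SKL, since $x^*$ must be fixed before $\skfe.\ct^*$ is formed.
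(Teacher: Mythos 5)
Your proposal matches the paper's proof essentially step for step: the same ingredients (an OWF-based SKFE-CR-SKL scheme with the classical decryption property and key-testability, compute-and-compare obfuscation, and the ABE of Boneh et al.\ shown quantum selectively secure by PRF-derandomizing key generation), the same construction (superpositions of ABE keys for attributes of the form $y\|u$, encryption under attribute $x\|\tlC$ with $\tlC\gets\CCSim$, verification by key-testing and uncomputing the ABE register before SKFE verification), and the same hybrid sequence (switch $\CCSim$ to $\CCObf$ around an SKFE ciphertext of $x^*\|0^\secp$, switch that ciphertext to $x^*\|\lock$ via selective single-ciphertext KLA security, then finish with quantum selective ABE security since no held key satisfies the challenge policy). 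The one point where the paper is more explicit than you is exactly the obstacle you flag: it inserts the relation check $R'(x^*\|\tlC^*,y\|u)$ as an extra measured step in both the key-generation and verification oracles (two intermediate hybrids) and justifies the verification-oracle change with the one-way-to-hiding lemma, whose extractor produces a classical branch violating key-testability — this is the formal mechanism behind your claim that verified keys remain supported on ``good'' branches.
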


\subsection{Constructing SKE-CR-SKL}

Next, we will describe how we realize the aforementioned
building-block of SKE-CR-SKL satisfying the classical decryption
property (Definition \ref{def:CDEC_SKE-CR-SKL}). Our construction will
make use of a BB84-based secret-key encryption with certified-deletion
(SKE-CD) scheme, a brief description of which is as follows. This is
an encryption scheme where the ciphertexts are quantum BB84 states
\cite{wiesner1983conjugate,BB84}.  Given such a ciphertext, an
adversary is later asked to provide a certificate of deletion. If a
certificate is provided and verified to be correct, it is guaranteed
that the adversary learns nothing about the plaintext even if the
secret-key is later revealed.  Crucially, we require that
the SKE-CD scheme also satisfies a classical decryption property
(Definition \ref{def:bb84}). Intuitively, this requires that if the ciphertext
is of the form $\skecd.\qct = \sum_u \alpha_u \ket{u}$, then every string $u$ in the
superposition can be used to decrypt correctly. Specifically, there
exists a classical deterministic algorithm $\SKECD.\CDec$ such that
$\SKECD.\CDec(\skecd.\sk, u)$ correctly decrypts $\skecd.\qct$, where
$\skecd.\sk$ is the secret-key.

Let us now recall the functionality offered by SKE-CR-SKL.
The encryption algorithm $\Enc$ takes as input a master secret-key
$\ske.\msk$
and a plaintext $\msg$ and outputs a classical ciphertext
$\ske.\ct$. The
key-generation algorithm $\qKG$ takes as input $\ske.\msk$ and produces a
quantum decryption key $\ske.\qdk$ along with a corresponding
verification-key $\ske.\vk$. Decryption of $\ske.\ct$ can be performed by $\qDec$ using
$\ske.\qdk$ without disturbing the state by more than a negligible amount. Furthermore, an
adversary that receives $q$ (unbounded polynomially many)
decryption-keys can be asked to return all of them, before which it
does not get to see any ciphertext encrypted under $\ske.\msk$. Each
returned key can be verified using the verification algorithm and
the corresponding verification key. If all $q$ keys are
verifiably returned, then it is guaranteed that the adversary cannot
distinguish a pair of ciphertexts (of different messages) encrypted
under $\ske.\msk$. This requirement, termed as OT-IND-KLA security, is
captured formally in Definition \ref{def:OT-IND-CPA_SKECRSKL}. The
intuition behind the construction is now described as follows:

Let us begin with the simple encryption algorithm. $\Enc(\ske.\msk, \msg)$
produces a classical output $\ske.\ct= (\skecd.\sk, z \seteq \msg \xor r)$, where
$\skecd.\sk$ and $r$ are values specified by $\ske.\msk$. The value
$\skecd.\sk$ is a secret key of an SKE-CD scheme $\SKECD$, while $r$
is a string chosen uniformly at random. Clearly, one can retrieve
$\msg$ from $\ske.\ct$ given $r$. Consequently, each decryption key
$\ske.\qdk$
is essentially an $\SKECD$ encryption of $r$.  The idea is that the
secret-key $\skecd.\sk$ can be obtained from $\ske.\ct$, which can then be
used to retrieve $r$ from $\ske.\qdk$. As a consequence,
collusion-resistance (OT-IND-KLA security) can be argued based on the
security of $\SKECD$ by utilizing the following observations:

\begin{itemize}
\item Each decryption-key contains an $\SKECD$ encryption of $r$ using
independent randomness.
\item The adversary must return all the decryption keys (containing
$\SKECD$ ciphertexts) before it receives the challenge ciphertext
(containing the $\SKECD$ secret-key).
\end{itemize}
Furthermore, since $\ske.\qdk$ is essentially an
$\SKECD$ ciphertext, the classical decryption property of SKE-CR-SKL
follows easily from the analogous classical decryption property of
BB84-based SKE-CD (Definition \ref{def:bb84}).

\subsection{Handling Verification Queries}

In our previous discussion about the idea behind the PKE-CR-SKL
scheme, we left out some details regarding the following. We did not discuss how the key-generation oracle of the ABE scheme can be used to
simulate the adversary's view, in the hybrid where $\ske.\ct_1^*$ is
embedded in the circuit $\tlC$. Firstly, we note that the ABE scheme
can handle superposition key-queries, which we establish by a
straightforward argument about the LWE-based ABE scheme of Boneh et
al.
\cite{EC:BGGHNS14}.
Recall now that in this hybrid, for every leased
decryption-key $\qdk = \sum_u \alpha_u \ket{u} \otimes
\ket{\abe.\sk_u}$, each
$\abe.\sk_u$ can be obtained (in superposition) by querying $\ske.\qdk
= \sum_u
\alpha_u \ket{u}$ to the oracle of ABE. However, we observe that responses to
verification queries made by the adversary are not so straightforward
to simulate. This is because for each verification query, the ABE
secret-key register needs to be uncomputed, for which we will again
rely on the ABE key-generation oracle. Specifically, the problem is
that for a malformed key $\widetilde{\qdk}$, there may exist some
$\widetilde{u}$ in the superposition which actually satisfies the ABE
relation. Recall that by definition of this relation, it follows that
$\CDec(\widetilde{u}, \ske.\ct_1^*)$ incorrectly decrypts the 
ciphertext $\ske.\ct_1^*$.
To fix this issue, we upgrade our SKE-CR-SKL
scheme to satisfy another property called key-testability. This
involves the existence of a classical algorithm $\KeyTest$ which
accepts or rejects. The property requires that when $\KeyTest$ is
applied in superposition followed by post-selecting on it accepting,
every $\widetilde{u}$ in the superposition decrypts correctly. As a
result, we are able to apply this key-testing procedure to the
register of the SKE-CR-SKL key, followed by simulating the adversary's
view using the ABE oracle. Note that we will now consider
$\qKG(\ske.\msk)$ to output an additional testing-key $\ske.\tk$ along
with $\ske.\qdk$ and $\ske.\vk$. The key-testability requirements are
specified in more detail as follows:


\begin{itemize}

\item \textbf{Security:} There exists an algorithm $\KeyTest$ such
that no QPT adversary can produce a classical value $\dk$ and a message
$\msg$ such that:
\begin{itemize}
\item $\CDec(\dk, \ske.\ct) \neq \msg$, where $\ske.\ct \gets
\Enc(\ske.\msk, \msg)$.
\item $\KeyTest(\ske.\tk, \dk) = 1$.
\end{itemize}

\item \textbf{Correctness:} 
If $\KeyTest$ is applied in superposition to $\ske.\qdk$ and the output is
measured to obtain outcome $1$, $\ske.\qdk$ should be almost
undisturbed.
\end{itemize}

\subsection{Upgrading SKE-CR-SKL with Key-Testability}

We will now explain how the SKE-CR-SKL construction is modified to
satisfy the aforementioned key-testability property. First, we mention
a classical decryption property of $\SKECD$ (Definition
\ref{def:bb84}) that we crucially rely on. A ciphertext $\skecd.\qct$
(corresponding to some message $\msg$) of $\SKECD$ is essentially a
BB84 state $\ket{x}_\theta$. The property guarantees the existence of
an algorithm $\SKECD.\CDec$ such that for any string $u$ that matches
$x$ at all computational basis positions specified by $\theta$,
$\SKECD.\CDec(\skecd.\sk, u) = \msg$ with overwhelming probability.
Recall now that in our SKE-CR-SKL construction, ciphertexts are of the
form $\ske.\ct = (\skecd.\sk, z = r \xor \msg)$ and decryption-keys
are essentially $\SKECD$ encryptions of the value $r$. Consequently,
the algorithm $\CDec$ of $\SKECRSKL$ works as follows:

\begin{description}
\item $\underline{\CDec(u, \ske.\ct)}:$
\begin{itemize}
\item Parse $\ske.\ct = (\skecd.\sk, z)$.
\item Output $z \xor \SKECD.\CDec(\skecd.\sk, u)$.
\end{itemize}
\end{description}

As a result, it is sufficient for us to ensure that no QPT adversary
can output a value $\dk$ such that
$\SKECD.\CDec(\skecd.\sk,\allowbreak \dk)$
produces a value different from $r$. By the aforementioned
classical-decryption property of $\SKECD$, it suffices to bind the
adversary to the computational basis values of a ciphertext
$\skecd.\qct = \ket{x}_\theta$. For this, we employ a technique
reminiscent of the Lamport-signature scheme \cite{Lamport79}. Thereby, an additional ``signature'' register that is entangled with
the $\SKECD$ ciphertext $\skecd.\qct = \ket{x}_\theta$ is utilized. We
note that similar
techniques for signing BB84 states were employed in previous works on
certified deletion \cite{EC:HKMNPY24,TCC:KitNisYam23}.
Specifically, let $\qreg{SKECD.CT_i}$ denote the
register holding the $i$-th qubit of $\skecd.\qct$ and $s_{i,0}, s_{i,1}$ be
randomly chosen pre-images from the domain of an OWF $f$.  Then, the
following map is performed on registers $\qreg{SKECD.CT_i}$ and
$\qreg{S_i}$ where the latter is initialized to $\ket{0 \ldots 0}$:

$$\ket{u_i}_{\qreg{SKECD.CT_i}} \otimes \ket{v_i}_{\qreg{S_i}}
\ra \ket{u_i}_{\qreg{SKECD.CT_i}} \otimes \ket{v_i \xor
s_{i,u_i}}_\qreg{S_i}$$

Let $\rho_i$ be the resulting state. The SKE-CR-SKL decryption-key
$\ske.\qdk$ is
set to be the state $\rho_1 \otimes \ldots \otimes \rho_{\ctlen}$
where $\ctlen$ is the length of $\skecd.\qct$. Thereby, the testing-key
$\ske.\tk$ will consist of the values $f(s_{i, 0}), f(s_{i, 1})$ for each
$i \in [\ctlen]$. Observe now that for a returned (possibly altered)
decryption-key $\widetilde{\qdk}$ (or $\ske.\widetilde{\qdk}$), it is possible to check for each
qubit whether the superposition term $u_i$ is associated with the
correct pre-image $s_{i,u_i}$. This can be done by forward
evaluating the pre-image register and comparing it with the value
$f(s_{i, u_i})$ that is specified in $\ske.\tk$. This is essentially the
$\KeyTest$ algorithm. It is easy to see that this procedure does
not disturb the state when applied to the unaltered
decryption key $\qdk$ (or $\ske.\qdk$). Moreover, observe that the adversary does not
receive the pre-image $s_{i, 1-x[i]}$ for any computational basis
position $i$. Consequently, we show that the adversary cannot
produce a value $\dk$ whose computational basis values are
inconsistent with those of $x$, unless it can invert outputs of $f$.
From the previous discussion, it follows that if the computational
basis values of $\dk$ are consistent with $x$, then $\dk$ cannot
result in the incorrect decryption of $\ske.\ct$.


\section{Preliminaries}\label{sec:prelim}
\ifnum\submission=1
\else
\paragraph{Notations and conventions.}
In this paper, standard math or sans serif font stands for classical algorithms (e.g., $C$ or $\algo{Gen}$) and classical variables (e.g., $x$ or $\keys{pk}$).
Calligraphic font stands for quantum algorithms (e.g., $\qalgo{Gen}$) and calligraphic font and/or the bracket notation for (mixed) quantum states (e.g., $\qstate{q}$ or $\ket{\psi}$).

Let $[\ell]$ denote the set of integers $\{1, \cdots, \ell \}$, $\secp$ denote a security parameter, and $y \seteq z$ denote that $y$ is set, defined, or substituted by $z$.
For a finite set $X$ and a distribution $D$, $x \chosen X$ denotes selecting an element from $X$ uniformly at random, and $x \chosen D$ denotes sampling an element $x$ according to $D$. Let $y \gets \algo{A}(x)$ and $y \gets \qalgo{A}(\qstate{x})$ denote assigning to $y$ the output of a probabilistic or deterministic algorithm $\algo{A}$ and a quantum algorithm $\qalgo{A}$ on an input $x$ and $\qstate{x}$, respectively.
PPT and QPT algorithms stand for probabilistic polynomial-time algorithms and polynomial-time quantum algorithms, respectively.
Let $\negl$ denote a negligible function.
For strings $x,y\in \bit^n$, $x\cdot y$ denotes $\bigoplus_{i\in[n]} x_i y_i$ where $x_i$ and $y_i$ denote the $i$th bit of $x$ and $y$, respectively.
For random variables $X$ and $Y$, we use the notation $X \approx Y$ to denote that these are computationally indistinguishable. Likewise, $X \approx_s Y$ denotes that they are statistically indistinguishable.
\fi

\subsection{One Way to Hiding Lemmas}

\begin{lemma}[O2H Lemma~\cite{C:AmbHamUnr19}]\label{lem:O2Hprev}
Let $G,H:X\ra Y$ be any functions, $z$ be a random value, and $S\subseteq X$ be a random set such that $G(x)=H(x)$ holds for every $x\notin S$.
The tuple $(G,H,S,z)$ may have arbitrary joint distribution.
Furthermore, let $\qA$ be a quantum oracle algorithm that makes at most $q$ quantum queries.
Let $\qB$ be an algorithm such that $\qB^H$ on input $z$ chooses $i\gets[q]$, runs $\qA^H(z)$, measures $\qA$'s $i$-th query, and outputs the measurement outcome.
Then, we have:
\begin{align}
\abs{\Pr[\qA^{H}(z)=1]-\Pr[\qA^{G}(z)=1]} \leq 2q\cdot\sqrt{\Pr[\qB^{H}(z)\in S]}
\enspace.
\end{align}
\end{lemma}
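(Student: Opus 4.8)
The plan is to reprove the Ambainis--Hamburg--Unruh O2H lemma by the standard hybrid argument over the query positions of $\qA$. First I would condition on a fixed realization of the tuple $(G,H,S,z)$: the bound to be derived has the shape $\abs{p_H-p_G}\le 2q\sqrt{\delta}$ with $\delta=\Pr[\qB^H(z)\in S\mid G,H,S,z]$, and since $x\mapsto\sqrt{x}$ is concave, the general statement follows at the end by taking expectations over $(G,H,S,z)$ and pulling the expectation inside the square root via Jensen's inequality. So assume $(G,H,S,z)$ is fixed, and write $O_H,O_G$ for the oracle unitaries implementing $H,G$.

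Model $\qA$ as fixed unitaries $U_1,\dots,U_{q+1}$ interleaved with $q$ oracle calls, so the pre-measurement state of $\qA^H(z)$ is $\ket{\Psi_H}=U_{q+1}O_H U_q\cdots O_H U_1\ket{0}$, and $\ket{\Psi_G}$ is the same with every $O_H$ replaced by $O_G$. For $j=0,\dots,q$ let $\ket{\Psi_j}$ be the hybrid state in which the first $j$ oracle calls use $O_H$ and the remaining $q-j$ use $O_G$, so $\ket{\Psi_q}=\ket{\Psi_H}$ and $\ket{\Psi_0}=\ket{\Psi_G}$. Let $\ket{\phi_j}=U_j O_H\cdots O_H U_1\ket{0}$ be the state of $\qA^H(z)$ just before its $j$-th query, and let $P_S$ be the projector onto the span of the computational-basis states $\ket{x}$ of the query-input register with $x\in S$. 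The two estimates I would establish are: (i) since $O_H$ and $O_G$ agree on inputs outside $S$, one has $(O_H-O_G)=(O_H-O_G)P_S$ on the query register, hence $\norm{(O_H-O_G)\ket{\phi_j}}\le\norm{O_H-O_G}\cdot\norm{P_S\ket{\phi_j}}\le 2\norm{P_S\ket{\phi_j}}$; and (ii) $\ket{\Psi_j}$ and $\ket{\Psi_{j-1}}$ share the first $j-1$ oracle calls (all $O_H$, producing the common state $\ket{\phi_j}$) and share the unitary applied after the $j$-th call, differing only in whether that call is $O_H$ or $O_G$, so $\norm{\ket{\Psi_j}-\ket{\Psi_{j-1}}}=\norm{(O_H-O_G)\ket{\phi_j}}\le 2\norm{P_S\ket{\phi_j}}$. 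The triangle inequality then gives $\norm{\ket{\Psi_H}-\ket{\Psi_G}}\le 2\sum_{j=1}^q\norm{P_S\ket{\phi_j}}$, and since the output-bit probabilities of two pure states differ by at most their Euclidean distance, $\abs{\Pr[\qA^H(z)=1]-\Pr[\qA^G(z)=1]}\le 2\sum_{j=1}^q\norm{P_S\ket{\phi_j}}$. By Cauchy--Schwarz, $\sum_{j=1}^q\norm{P_S\ket{\phi_j}}\le\sqrt{q}\cdot\sqrt{\sum_{j=1}^q\norm{P_S\ket{\phi_j}}^2}$, and by the definition of $\qB$ (pick $i\gets[q]$, run $\qA^H(z)$, measure the $i$-th query input) one has $\Pr[\qB^H(z)\in S]=\frac{1}{q}\sum_{j=1}^q\norm{P_S\ket{\phi_j}}^2$; substituting yields $\abs{\Pr[\qA^H(z)=1]-\Pr[\qA^G(z)=1]}\le 2q\sqrt{\Pr[\qB^H(z)\in S]}$. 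Averaging over $(G,H,S,z)$ and applying Jensen to $\sqrt{\cdot}$ finishes the proof.

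The one genuinely delicate point is the orientation of the hybrids: they must be arranged so that each consecutive pair differs at a query position \emph{all of whose preceding calls are $O_H$}, ensuring that the state ``probed'' by the difference $(O_H-O_G)\ket{\phi_j}$ is exactly the state that $\qB$ measures when it runs $\qA^H$; telescoping in the other direction would leave $\qB^G$ on the right-hand side. Everything else is routine: the operator-norm bound $\norm{O_H-O_G}\le 2$, the fact that measurement statistics of pure states are $1$-Lipschitz in Euclidean distance, Cauchy--Schwarz, and Jensen. I would also note the argument is insensitive to whether $\qA$ outputs a single bit or additional auxiliary data, since only the distinguishing advantage of a final measurement is used.
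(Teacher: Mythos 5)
Your proof is correct. Note, however, that the paper does not actually prove this statement: Lemma~\ref{lem:O2Hprev} is imported as a black box from Ambainis--Hamburg--Unruh, and the only proof the paper supplies in this vicinity is the reduction of the auxiliary-oracle variant (Lemma~\ref{lem:O2H}) to the cited lemma. What you have written is a self-contained reconstruction of the standard one-way-to-hiding argument, and all the load-bearing steps are handled properly: conditioning on a realization of $(G,H,S,z)$ and recovering the general statement via Jensen applied to the concave square root; the hybrid states oriented so that $\hybi{j}$ and $\hybi{j-1}$ differ only at the $j$-th query with all \emph{earlier} queries answered by $H$, which is exactly what guarantees the probe states $\ket{\phi_j}$ are the states $\qB$ measures when it runs $\qA^{H}$ (telescoping the other way would indeed yield a bound in terms of $\qB^{G}$); the identity $(O_H-O_G)=(O_H-O_G)P_S$ together with $\norm{O_H-O_G}\le 2$; the fact that acceptance probabilities of pure states are $1$-Lipschitz in Euclidean distance (via trace distance $\le$ Euclidean distance); Cauchy--Schwarz to trade the sum of norms for $\sqrt{q}$ times the root-mean-square; and the identification $\Pr[\qB^{H}(z)\in S]=\frac1q\sum_{j}\norm{P_S\ket{\phi_j}}^2$. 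The only cosmetic omissions are the standard normalizations---pad $\qA$ to make exactly $q$ queries and purify any internal randomness so the $U_j$ are unitaries---neither of which affects the argument. So relative to the paper your route is ``more'' rather than ``different'': you prove from first principles what the paper only cites, which buys a self-contained treatment at the cost of reproving a known theorem.
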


We require a generalization of this lemma, where $\qA$ receives an
additional quantum oracle $\cQ$ in both worlds. Consequently, we
consider $\qB$ to be given oracle access to $\cQ$, which it uses to
simulate the oracle calls of $\qA$ to $\cQ$. Notice that if the
outputs of $\cQ$ were classical, we could have simply defined
augmented oracles $G'$ (likewise $H'$) based on $G$ (likewise $H$) and
$\cQ$.  However, the oracles $\cQ$ we consider will have classical
inputs and quantum state outputs. Consequently, the lemma we require
is stated as follows:

\begin{lemma}[O2H Lemma with Auxiliary Quantum Oracle]\label{lem:O2H}
Let $G, H: X \ra Y$ be any functions, $z$ be a random value, and $S
\subseteq X$ be a random set such that $G(x) = H(x)$ holds for every
$x \notin S$. The tuple $(G, H, S, z)$ may have arbitrary joint
distribution. Furthermore, let $\cQ$ be a quantum oracle that is
arbitrarily correlated with the tuple $(G, H, S, z)$, takes
classical input and produces a (possibly mixed) quantum state as
output. Let $\qA$ be a quantum oracle algorithm that makes at most $q$
quantum queries to the oracles $H$ or $G$, and arbitrarily many
queries to the oracle $\cQ$. Let $\qB$ be an algorithm such that
$\qB^{\cQ, H}$ on input $z$, chooses $i \gets [q]$, runs $\qA^{\cQ,
H}(z)$, measures $\qA$'s $i$-th query to $H$, and outputs the measurement
outcome. Then, we have:

\begin{align}
\abs{\Pr[\qA^{\cQ, H}(z)=1]-\Pr[\qA^{\cQ, G}(z)=1]} \leq
2q\cdot\sqrt{\Pr[\qB^{\cQ, H}(z)\in S]}
\enspace.
\end{align}

\end{lemma}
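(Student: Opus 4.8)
The plan is to reduce to the standard O2H Lemma (Lemma \ref{lem:O2Hprev}) by purifying and absorbing the auxiliary oracle $\cQ$ into the ``random value'' $z$ and into the state of the adversary, so that the only genuine oracle access that remains is to $H$ (resp. $G$). First I would observe that, because $\cQ$ takes a \emph{classical} input, each call to $\cQ$ can be modeled as follows: the algorithm measures the query register in the computational basis to obtain a classical string $w$, and then receives a fresh copy of the mixed state $\cQ(w)$. Since $\cQ$ is fixed (though correlated with $(G,H,S,z)$) and does not depend on $H$ versus $G$, I can give a \emph{purification} of the whole family $\{\cQ(w)\}_w$ as part of an auxiliary quantum state $\qaux$, sampled jointly with $(G,H,S,z)$. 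Concretely, let $\qaux$ contain, for (a polynomial bound on) each possible query string $w$ and each of the at-most-polynomially-many calls, a purifying register for $\cQ(w)$; an algorithm holding $\qaux$ can simulate a call to $\cQ$ on classical input $w$ by swapping in the appropriate purifying register (tracing out the purifying half reproduces exactly $\cQ(w)$). The subtle point is that $\qA$ may make ``arbitrarily many'' queries to $\cQ$; I would handle this by noting $\qA$ is a fixed QPT algorithm, hence makes at most $\poly(\secp)$ calls, so a polynomial-size purifying register suffices.

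Next I would define the reduction adversary. Let $\qA'$ be the oracle algorithm that, on input the pair $z' \seteq (z, \qaux)$, runs $\qA$, forwarding its $H$/$G$-queries to its own oracle and answering each $\cQ$-query of $\qA$ internally using $\qaux$ as just described; $\qA'$ outputs whatever $\qA$ outputs. Then $\qA'$ makes at most $q$ queries to $H$ (resp. $G$), and by construction
\begin{align}
\Pr[\qA'^{H}(z')=1] = \Pr[\qA^{\cQ,H}(z)=1], \qquad \Pr[\qA'^{G}(z')=1] = \Pr[\qA^{\cQ,G}(z)=1].
\end{align}
The tuple $(G,H,S,z')$ has \emph{some} joint distribution (arbitrary is allowed), so Lemma \ref{lem:O2Hprev} applies to $\qA'$: there is an extractor $\qB'$ that picks $i\gets[q]$, runs $\qA'^{H}(z')$, measures the $i$-th $H$-query, and outputs the outcome, with
\begin{align}
\abs{\Pr[\qA'^{H}(z')=1]-\Pr[\qA'^{G}(z')=1]} \leq 2q\cdot\sqrt{\Pr[\qB'^{H}(z')\in S]}.
\end{align}

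Finally I would unwind $\qB'$ back into the desired $\qB$. Since $\qA'$ on input $(z,\qaux)$ is exactly ``run $\qA$ while simulating $\cQ$ via $\qaux$,'' and since an algorithm given oracle access to $\cQ$ can itself sample $\qaux$ (it knows the description of $\cQ$, being in the same joint distribution, or equivalently can prepare the purification lazily on each $\cQ$-call), the extractor $\qB^{\cQ,H}$ described in the statement — pick $i\gets[q]$, run $\qA^{\cQ,H}(z)$ answering $\cQ$-calls with its own $\cQ$-oracle, measure $\qA$'s $i$-th $H$-query — produces exactly the same distribution on the measurement outcome as $\qB'^{H}(z')$. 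Hence $\Pr[\qB^{\cQ,H}(z)\in S] = \Pr[\qB'^{H}(z')\in S]$, and combining the displayed equalities gives the claim. I expect the main obstacle to be the bookkeeping around ``arbitrarily many'' $\cQ$-queries and making the purification-of-a-mixed-state argument rigorous (in particular, arguing that lazily purifying $\cQ$ on the fly is perfectly indistinguishable from sampling $\qaux$ up front and that this does not interfere with the single measured $H$-query that the O2H extractor performs); everything else is a syntactic repackaging of the hypotheses of Lemma \ref{lem:O2Hprev}.
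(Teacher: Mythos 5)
Your high-level strategy is the right one and matches the paper's in spirit: absorb the auxiliary oracle $\cQ$ into the input of a wrapper adversary so that only $H$/$G$-queries remain, apply \cref{lem:O2Hprev}, and then observe that the resulting extractor only uses the absorbed information to simulate $\cQ$-calls, so it can be re-expressed as the $\qB^{\cQ,H}$ of the statement. However, the specific mechanism you choose --- handing the wrapper a quantum purification $\qaux$ of the family $\{\cQ(w)\}_w$ --- does not go through as a black-box application of \cref{lem:O2Hprev}. That lemma, as stated, takes a \emph{classical} random value $z$; your $z'=(z,\qaux)$ is a quantum state correlated with $(G,H,S)$, so you would first need a strengthened O2H variant with quantum side information (true, but neither stated nor proved here). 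Moreover, the purification bookkeeping has a real hole: the domain of $\cQ$ can be exponentially large and the classical inputs $w$ that $\qA$ will query are not known in advance, so a sampler preparing $\qaux$ up front must include fresh purifying registers for \emph{every} $w$ (and every call), which is exponentially large --- your ``polynomial-size purifying register suffices'' remark does not hold. The alternative you mention, lazily purifying on each call, cannot be carried out by the wrapper $\qA'$ at all, since $\qA'$ has no access to $\cQ$ or its description; that is precisely the resource you were trying to eliminate.

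The paper avoids all of this with a simpler device: it gives the wrapper the \emph{classical description} $\langle\cQ\rangle$ as part of the auxiliary input, i.e.\ $z'=(z,\langle\cQ\rangle)$, noting (in a footnote) that \cref{lem:O2Hprev} places no bound on the length of $z$, so the description need not be concise. The wrapper $\widetilde{\qA}$ simulates $\cQ$-queries from $\langle\cQ\rangle$, \cref{lem:O2Hprev} yields an extractor $\widetilde{\qB}^{H}(z,\langle\cQ\rangle)$, and since $\widetilde{\qA}$ and $\widetilde{\qB}$ use $\langle\cQ\rangle$ only to answer $\qA$'s $\cQ$-calls, one obtains the equivalent $\qB^{\cQ,H}(z)$ that answers those calls with its own oracle. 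If you replace your purification step by this classical-description step, the rest of your argument (the two probability identities and the final unwinding of the extractor) is correct and coincides with the paper's proof.
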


\ifnum\submission=1
We prove this lemma in~\cref{appsec:O2H_oracle}.
\else

\ifnum\submission=1
\section{Proof of O2H Lemma with Auxiliary Quantum Oracle}\label{appsec:O2H_oracle}
\else
\fi
\begin{proof}[Proof of~\cref{lem:O2H}]
Let us consider an adversary $\widetilde{\qA}$ that receives as input
the description $\langle \cQ\rangle$ of $\cQ$\footnote{The O2H Lemma (Lemma \ref{lem:O2Hprev}) holds even if $z$ is exponentially large, so the description of $\cQ$ need not be concise.}, along with the input
$z$ used by $\qA$. Given oracle access to $H$ (likewise $G$) and $(z,
\langle \cQ\rangle)$ as input, $\widetilde{\qA}$ simply runs
$\qA^{H, \cQ}(z)$ (likewise $\qA^{G, \cQ}(z)$) by simulating its queries to $\cQ$
using the description $\langle \cQ\rangle$. Then, the O2H Lemma (Lemma
\ref{lem:O2Hprev}) implies the existence of an algorithm
$\widetilde{\qB}^{H}(z, \langle \cQ\rangle)$ that chooses $i \gets
[q]$, runs $\widetilde{\qA}^H(z, \langle \cQ\rangle)$, measures its
$i$-th query to $H$ and outputs the measurement outcome. Observe that
the algorithms $\widetilde{A}$ and $\widetilde{B}$ do not make use of
the description $\langle \cQ\rangle$ except for simulating the queries
made by $\qA$.  Consequently, there exists an algorithm
$\qB^{\cQ, H}(z)$ equivalent to $\widetilde{\qB}^{H}(z, \langle \cQ\rangle)$ that directly runs $\qA$ (instead of $\widetilde{\qA}$)
and simulates its oracle queries to $\cQ$ using its own access to
$\cQ$.
\end{proof}

\fi

\begin{remark}
We assume that $\qB$ also outputs the measured index $i$. However,
this output is not taken into account for notation such as $\qB^{\cQ,
H}(z) \in S$ for the sake of simplicity.
\end{remark}

\subsection{Standard Cryptographic Tools}\label{sec:standard_crypto}

\subsubsection*{Attribute-Based Encryption.}

\begin{definition}[Attribute-Based Encryption]\label{def:ABE}
An ABE scheme $\ABE$ is a tuple of four PPT algorithms $(\Setup, \KG,
\Enc,\allowbreak \Dec)$. 
Below, let $\cX=\{\cX_\secp\}_\secp$, $\cY=\{\cY_\secp\}_\secp$, and $R=\{R_\secp:\cX_\secp \times \cY_\secp \ra \bin \}_\secp$ be the ciphertext attribute space, key attribute space, and the relation associated with $\ABE$, respectively.
We note that we will abuse the notation and occasionally drop the subscript for these spaces for notational simplicity.
We also note that the message space is set to be $\bin^\ell$ below. 
\begin{description}
\item[$\Setup(1^\secp)\ra(\pk,\msk)$:] The setup algorithm takes a security parameter $1^\secp$ and outputs a public key $\pk$ and master secret key $\msk$.

\item[$\KG(\msk,y, r)\ra\sk_y$:] The key generation algorithm $\KG$
takes a master secret key $\msk$, a key attribute $y \in
\cY$, and explicit randomness $r$. It outputs a decryption key
$\sk_y$.
Note that $\KG$ is deterministic.\footnote{In the standard syntax, $\KG$ does not take explicit randomness, and is probabilistic. This change is just for notational convention in our schemes.}

\item[$\Enc(\pk,x,\msg)\ra\ct$:] The encryption algorithm takes a
public key $\pk$, a ciphertext attribute $x \in \cX$, and a
message $\msg$, and outputs a ciphertext $\ct$.

\item[$\Dec(\sk_y,\ct)\ra z$:] The decryption algorithm takes a
secret key $\sk_y$ and a ciphertext $\ct$ and outputs
$z \in \{ \bot \} \cup \bin^\ell$.

\item[Correctness:] We require that
\[
\Pr\left[
\Dec(\sk_y, \ct) = \msg
 \ :
\begin{array}{rl}
 &(\pk,\msk) \la \Setup(1^\secp),\\
 &r \leftarrow \bit^{\poly(\secp)},\\
 & \sk_y \gets \KG(\msk,y, r), \\
 &\ct \gets \Enc(\pk,x,\msg)
\end{array}
\right] \ge 1 -\negl(\secp).
\]
holds for all $x\in \cX$ and $y\in \cY$ such that $R(x,y)=0$ and $m\in \bin^\ell$.
\end{description}
\end{definition}

By setting $\cX$, $\cY$, and $R$ appropriately, we can recover
important classes of ABE. In particular, if we set
$\cX_\secp=\cY_\secp=\bin^*$ and define $R_\secp$ so that
$R_\secp(x,y)=0$ if $x=y$ and $R_\secp(x,y)=1$ otherwise, we recover
the definition of identity-based encryption (IBE). 
If we set $\cX_\secp=\bin^{n(\secp)}$ and $\cY_\secp$ to be the set of all circuits with input space $\bin^{n(\secp)}$ and size at most $s(\secp)$, where $n$ and $s$ are some polynomials, and define $R$ so that $R(x,y)=y(x)$, we recover the definition of (key policy) ABE for circuits. 

We introduce a new security notion for ABE that we call quantum selective-security for ABE where the adversary is allowed to get access to the key generation oracle in super-position.

\begin{definition}[Quantum Selective-Security for ABE]\label{def:qsel_ind_ABE}
We say that $\ABE$ is a \emph{selective-secure} ABE scheme for
relation $R:\cX\times \cY \to \bin$, if it satisfies the following
requirement, formalized by the experiment
$\expc{\ABE, \qA}{q}{sel}{ind}(1^\secp,\coin)$ between an adversary
$\qA$ and a challenger $\qCh$:
        \begin{enumerate}
            \item $\qA$ declares the challenge ciphertext attribute
                $x^*$. $\qCh$ runs $(\pk,\msk)\gets\Setup(1^\secp)$ and sends $\pk$ to $\qA$.
            \item $\qA$ can get access to the following quantum key generation oracle.
            \begin{description}

    \item[$\Oracle{qkg}(\qreg{Y},\qreg{Z})$:] Given two registers
        $\qreg{Y}$ and $\qreg{Z}$, it first applies the map
        $\ket{y}_{\qreg{Y}}\ket{b}_{\qreg{B}}\ra\ket{y}_{\qreg{Y}}\ket{b\oplus
        R(x^*,y)}_{\qreg{B}}$ and measures the register $\qreg{B}$, where $\qreg{B}$ is initialized to $\ket{0}_\qreg{B}$.
    If the result is $0$, it returns $\bot$. Otherwise, it
    chooses $r \leftarrow \bit^{\poly(\secp)}$, applies
    the map
    $\ket{y}_{\qreg{Y}}\ket{z}_{\qreg{Z}}\ra\ket{y}_{\qreg{Y}}\ket{z\oplus
    \KG(\msk,y,r)}_{\qreg{Z}}$ and returns the registers $\qreg{Y}$
    and $\qreg{Z}$.
    \end{description}

\item At some point, $\qA$ sends $(\msg_0,\msg_1)$ to $\qCh$. Then, $\qCh$
generates $\ct^*\gets\Enc(\pk,x^*,\allowbreak\msg_\coin)$ and sends $\ct^*$ to
$\qA$.

\item Again, $\qA$ can get access to the oracle $\Oracle{qkg}$.
\item $\qA$ outputs a guess $\coin^\prime$ for $\coin$ and the
experiment outputs $\coin'$.
\end{enumerate}
We say that $\ABE$ satisfies quantum selective security if, for all
QPT $\qA$, it holds that

\begin{align}
\advc{\ABE,\qA}{q}{sel}{ind}(1^\secp) \seteq \abs{\Pr[\expc{\ABE,\qA}{q}{sel}{ind} (1^\secp,0) \ra 1] - \Pr[\expc{\ABE,\qA}{q}{sel}{ind} (1^\secp,1) \ra 1] }\\\leq \negl(\secp).
\end{align}
\end{definition}

Boneh and Zhandry~\cite{C:BonZha13} introduced a similar quantum security notion for IBE and argued that it is straightforward to prove the quantum security of the IBE scheme by \cite{EC:AgrBonBoy10}, by leveraging the lattice trapdoor based proof technique.
It is easy to prove the quantum selective security of the ABE scheme for circuits by Boneh et al.~\cite{EC:BGGHNS14}, which relies on the lattice trapdoor based proof technique as well.
Formally, we have the following theorem.
\begin{theorem}
Assuming the hardness of the LWE problem, there exists a quantum selectively secure ABE scheme for all relations computable in polynomial time.
\end{theorem}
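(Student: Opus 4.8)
The plan is to instantiate the desired scheme with the key-policy ABE for polynomial-size circuits of Boneh et al.~\cite{EC:BGGHNS14} and to show that its selective-security reduction --- which is purely lattice-trapdoor based, in the style of Agrawal, Boneh and Boyen~\cite{EC:AgrBonBoy10} --- goes through unchanged when the key-generation oracle is accessed in superposition as in \cref{def:qsel_ind_ABE}. Recall the relevant structure: the master secret key is a lattice trapdoor $T_{\mathbf{A}}$ for a matrix $\mathbf{A}$; a secret key for a circuit $y$ is a short preimage $\mathbf{e}_y$ for $[\mathbf{A}\,\|\,\mathbf{B}_y]$, where $\mathbf{B}_y$ is obtained by homomorphically evaluating $y$ on the public matrices $\mathbf{B}_1,\dots$, and $\mathbf{e}_y$ is produced by Gaussian preimage sampling ($\SampleLeft$) using $T_{\mathbf{A}}$ and the explicit randomness $r$; a ciphertext for attribute $x^*$ is a noisy LWE encoding relative to $\mathbf{A}$ and the shifted matrices $\mathbf{B}_i + x^*_i\mathbf{G}$. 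The classical selective proof has the challenger receive $x^*$, re-program the $\mathbf{B}_i$ so that $\mathbf{A}$'s own trapdoor is no longer needed while, for every $y$ with $R(x^*,y)=y(x^*)=1$, the evaluated matrix $\mathbf{B}_y$ admits a publicly computable $\mathbf{G}$-trapdoor, answer such key queries by running a $\SampleRight$-type sampler, and finally embed an LWE challenge in the challenge ciphertext so that a real sample yields the honest game while a uniform one statistically hides $\msg_\coin$.

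First I would observe that the pre-processing performed by $\Oracle{qkg}$ --- coherently XOR-ing $R(x^*,\cdot)$ into a fresh register and measuring it --- is reproducible verbatim by the reduction, since $x^*$ is known to it, and that conditioned on the measurement outcome being $1$ the register $\qreg{Y}$ is supported entirely on circuits $y$ with $y(x^*)=1$, i.e., exactly the circuits for which $\mathbf{B}_y$ carries a $\mathbf{G}$-trapdoor in the reduction. On this subspace both the honest challenger and the reduction must apply a map of the form $\ket{y}_{\qreg{Y}}\ket{z}_{\qreg{Z}}\mapsto\ket{y}_{\qreg{Y}}\ket{z\oplus g(y)}_{\qreg{Z}}$ for a fixed efficiently computable deterministic function $g$ (the honest $g=\KG(\msk,\cdot,r)$, the reduction's a $\SampleRight$-based analogue using the derived $\mathbf{G}$-trapdoor), both with freshly drawn classical $r$. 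Since each such map is realizable coherently as a unitary, the whole reduction is QPT, and the only thing left to check is that replacing one of these key-generation oracles by the other is undetectable.

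The crux is therefore to show that the honest key-generation oracle and the reduction's simulated key-generation oracle are indistinguishable as quantum-accessible oracles. The argument rests on the statistical guarantees of lattice preimage sampling --- with the Gaussian width set, as in~\cite{EC:BGGHNS14}, to accommodate the weaker $\mathbf{G}$-trapdoor, both the honest sampler (using $T_{\mathbf{A}}$) and the reduction's sampler output a preimage within negligible statistical distance of the canonical discrete Gaussian over the relevant coset --- combined with a hybrid over the polynomially many $\Oracle{qkg}$ queries: consecutive hybrids differ only in which sampler answers one query (with fresh $r$), so the trace distance of the two resulting global states is bounded by the statistical distance, over the choice of $r$, between the distributions of the two coherently applied functions $y\mapsto g(y)$. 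This is precisely Boneh and Zhandry's observation~\cite{C:BonZha13} that lattice-trapdoor selective proofs lift to superposition key queries. I expect this oracle-indistinguishability step to be the main point requiring care: one must confirm that the closeness of preimage distributions is robust enough to make that per-query statistical distance negligible even though a single poly-length $r$ is shared across the entire superposition of circuits $y$. Everything else --- notably the final switch of the challenge ciphertext between a real and a random LWE sample --- involves only classical objects and is imported unchanged from~\cite{EC:BGGHNS14}. Hence any QPT $\qA$ with non-negligible $\advc{\ABE,\qA}{q}{sel}{ind}(1^\secp)$ yields an LWE distinguisher, proving the theorem; an arbitrary polynomial-time relation $R$ is then handled by the same scheme, issuing a key for attribute $y\in\cY$ as a key for the polynomial-size circuit $R(\cdot,y)$, which reduces the general case to the circuit-ABE case discussed after \cref{def:ABE}.
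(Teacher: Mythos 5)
There is a genuine gap, and it is exactly the point you flag but leave unresolved: the shared per-query randomness. In the oracle $\Oracle{qkg}$ of \cref{def:qsel_ind_ABE}, a \emph{single} classical string $r$ is drawn per query and the map $\ket{y}\ket{z}\mapsto\ket{y}\ket{z\oplus\KG(\msk,y,r)}$ is applied coherently, so all (exponentially many) circuits $y$ in the superposition are answered with the same coins. Your key step bounds the per-query trace distance by ``the statistical distance, over the choice of $r$, between the distributions of the two coherently applied functions $y\mapsto g(y)$,'' but the statistical guarantee from lattice preimage sampling is only \emph{per point with fresh coins}: for each fixed $y$, $\SampleLeft$ with $T_{\mathbf{A}}$ and $\SampleRight$ with the $\mathbf{G}$-trapdoor produce outputs that are statistically close as distributions over fresh randomness. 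Once one coin string is reused across all $y$, the two samplers correlate the outputs at different inputs in entirely incompatible ways, and the induced distributions over whole functions $y\mapsto g(y)$ need not be close at all (they can be distinguishable from the values at just a couple of points). So the bound you invoke is unsubstantiated, and the Boneh--Zhandry lemma you appeal to does not apply directly, since it is stated for oracles whose outputs are sampled independently at each point from pointwise-close distributions $D_y,D'_y$.

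The paper closes this gap by \emph{modifying the scheme} before proving the theorem: key generation is derandomized attribute-wise via a quantum-accessible PRF, $\KG(\msk,y,k)=\KG'(\msk,y,\PRF_k(y))$, so each term of the superposition effectively receives its own (pseudorandom) coins. The proof then inserts intermediate hybrids $\widetilde{\Hyb}_1,\widetilde{\Hyb}_2$ in which superposition key queries are answered with truly independent randomness in every term; the Boneh--Zhandry lemma (per-point distance $\epsilon$ lifts to $\sqrt{8C_0q^3\epsilon}$ under $q$ superposition queries) applies to these idealized oracles, and quantum PRF security bridges back to the real scheme on both sides. The remaining steps of your outline match the paper: the statistical public-key reprogramming, the final LWE-based switch of the challenge ciphertext (purely classical objects), and the reduction from an arbitrary polynomial-time relation $R$ to circuit key-policy ABE by coherently mapping $y$ to the circuit $R(\cdot,y)$ inside the oracle simulation. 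Without the QPRF modification (or some other mechanism giving per-attribute independent randomness), however, the central oracle-switching step of your proposal does not go through as written.
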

We elaborate on this in \cref{sec-quantum-secure-ABE}.

\ifnum\submission=1
\else
\paragraph{Compute-and-Compare Obfuscation.}
We define a class of circuits called compute-and-compare circuits as
follows:

\begin{definition}[Compute-and-Compare Circuits]\label{def:cc_circuits_searchability}
A compute-and-compare circuit $\cnc{P}{\lock,\msg}$ is of the form
\[
\cnc{P}{\lock,\msg}(x)\left\{
\begin{array}{ll}
    \msg&\textrm{if}\; P(x)=\lock\\
\bot&\text{otherwise}~
\end{array}
\right.
\]
where $P$ is a circuit, $\lock$ is a string called the lock value,
and $\msg$ is a message.
\end{definition}

We now introduce the definition of compute-and-compare obfuscation.
We assume that a program $P$ has an associated set of parameters $\pp_P$ (input size, output size, circuit size) which we do not need to hide.
\begin{definition}[Compute-and-Compare Obfuscation]\label{def:CCObf}
A PPT algorithm $\CCObf$ is an obfuscator for the family of distributions $D=\{D_\secp\}$ if the following holds:
\begin{description}
\item[Functionality Preserving:] There exists a negligible function
$\negl$ such that for all programs $P$, all lock values $\lock$, and
all messages $\msg$, it holds that

\begin{align}
\Pr[\forall x, \tlP(x)=\cnc{P}{\lock,\msg}(x) :
\tlP\la\CCObf(1^\secp,P,\lock,\msg)] \ge 1-\negl(\secp).
\end{align}
\item[Distributional Indistinguishability:] There exists an
efficient simulator $\Sim$ such that for all messages $\msg$, we have
\begin{align}
(\CCObf(1^\secp,P,\lock,\msg),\qaux)\approx(\CCSim(1^\secp,\pp_P,\abs{\msg}),\qaux),
\end{align}
where $(P,\lock,\qaux)\la D_\secp$.
\end{description}
\end{definition}

\begin{theorem}[\cite{FOCS:GoyKopWat17,FOCS:WicZir17}]
If the LWE assumption holds, there exists compute-and-compare obfuscation for all families of distributions $D=\{D_\secp\}$, where each $D_\secp$ outputs uniformly random lock value $\lock$ independent of $P$ and $\qaux$.
\end{theorem}
\fi
 
We present the definitions for SKE with certified deletion.
\begin{definition}[SKE-CD (Syntax)]\label{def:SKE-CD}
An SKE-CD scheme is a tuple of algorithms $(\KG,\qEnc,\qDec,\qDel,\Vrfy)$ with plaintext space $\Ms$ and key space $\Ks$.
\begin{description}
    \item[$\KG (1^\secp) \ra \sk$:] The key generation algorithm takes as input the security parameter $1^\secp$ and outputs a secret key $\sk \in \Ks$.
    \item[$\qEnc(\sk,\msg) \ra (\qct,\vk)$:] The encryption algorithm
    takes as input $\sk$ and a plaintext $\msg\in\Ms$ and outputs a
    ciphertext $\qct$ and a verification key $\vk$.

    \item[$\qDec(\sk,\qct) \ra \msg^\prime$:] The decryption algorithm
        takes as input $\sk$ and $\qct$ and outputs a plaintext $\msg^\prime \in \Ms$ or $\bot$.
    \item[$\qDel(\qct) \ra \cert$:] The deletion algorithm takes as
        input $\qct$ and outputs a certificate $\cert$.
    \item[$\Vrfy(\vk,\cert)\ra \top/\bot$:] The verification
        algorithm takes $\vk$ and $\cert$ as input and outputs
        $\top$ or $\bot$.

\item[Decryption correctness:] There exists a negligible function
    $\negl$ such that for any $\msg\in\Ms$, 
\begin{align}
\Pr\left[
\qDec(\sk,\qct)= \msg
\ :
\begin{array}{ll}
\sk\lrun \KG(1^\secp)\\
(\qct, \vk) \lrun \qEnc(\sk,\msg)
\end{array}
\right] 
\ge 1-\negl(\secp).
\end{align}

\item[Verification correctness:] There exists a negligible function
    $\negl$ such that for any $\msg\in\Ms$, 
\begin{align}
\Pr\left[
\Vrfy(\vk,\cert)=\top
\ :
\begin{array}{ll}
\sk\lrun \KG(1^\secp)\\
(\qct, \vk) \lrun \qEnc(\sk,\msg)\\
\cert \lrun \qDel(\qct)
\end{array}
\right] 
\ge 1-\negl(\secp).
\end{align}
\end{description}
\end{definition}

We introduce indistinguishability against Chosen Verification Attacks (CVA).
\begin{definition}[IND-CVA-CD Security]\label{def:reusable_sk-vo_certified_del}
We consider the following security experiment
$\expc{\CDSKE,\qA}{ind}{cva}{cd}(1^\secp,\coin)$.

\begin{enumerate}
    \item The challenger computes $\sk \la \KG(1^\secp)$.
    \item Thoughout the experiment, $\qA$ can get access to the following oracle.
    \begin{description}
    \item[$\Oracle{\qEnc}(\msg)$:] On input $\msg$, it generates
    $(\qct, \vk)\gets\qEnc(\sk,\msg)$ and returns $(\vk,\qct)$.  
    \end{description}
    \item $\qA$ sends $(\msg_0,\msg_1)\in\cM^2$ to the challenger. 
    \item The challenger computes $(\qct^*,\vk^*) \la
        \qEnc(\sk,\msg_\coin)$ and sends $\qct^*$ to $\qA$.
    \item Hereafter, $\qA$ can get access to the following oracle, where $V$ is initialized to $\bot$.
    \begin{description}
        \item[$\Oracle{\Vrfy}(\cert)$:] On input $\cert$, it returns $\sk$ and updates $V$ to $\top$ if $\Vrfy(\vk^*,\cert)=\top$. Otherwise, it returns $\bot$.
    \end{description}
    \item When $\qA$ outputs $\coin'\in \bit$, the experiment outputs $\coin^\prime$ if $V=\top$ and otherwise outputs $0$.
\end{enumerate}
We say that $\CDSKE$ is IND-CVA-CD secure if for any QPT $\qA$, it holds that
\begin{align}
\advc{\CDSKE,\qA}{ind}{cva}{cd}(1^\secp)\seteq \abs{\Pr[
\expc{\CDSKE,\qA}{ind}{cva}{cd}(1^\secp, 0)=1] - \Pr[
\expc{\CDSKE,\qA}{ind}{cva}{cd}(1^\secp, 1)=1] }\\\le \negl(\secp).
\end{align}
\end{definition}

\begin{definition}[BB84-Based SKE-CD]\label{def:bb84}
We say that an SKE-CD scheme $\CDSKE=(\KG,\qEnc,\qDec,\qDel,\Vrfy)$
is a BB84-based SKE-CD scheme if it satisfies the following
conditions.

\begin{itemize}
   \item Let $(\qct,\vk)\gets\qEnc(\sk,\msg)$. $\vk$ is of the form
    $(x,\theta)\in\bit^{\ctlen}\times\bit^{\ctlen}$,
    and $\qct$ is of the form $\ket{\psi_1}\tensor
    \cdots \tensor\ket{\psi_{\ctlen}}$, where
    \begin{align}
    \ket{\psi_i}=
    \begin{cases}
        \ket{x[i]} & if~~ \theta[i]=0\\
        \ket{0}+(-1)^{x[i]}\ket{1} & if~~ \theta[i]=1.
    \end{cases}
    \end{align}
    Moreover, there exists $n<\ctlen$ such that $\theta[i]=0$ for every $i\in[n+1,\ctlen]$. We call $x[n+1]\|\cdots\|x[\ctlen]$ a classical part of $\qct$. The parameter $n$ is specified by a construction. The classical part has information of $\theta$, and we can compute $\theta$ from it and $\sk$.

        \item $\qDel(\qct)$ measures each qubit of $\qct$ in the
    Hadamard basis and outputs the measurement result
    $\cert\in\bit^{\ctlen}$.

    \item $\Vrfy(\vk,\cert)$ outputs $\top$ if $\cert[i]=x[i]$ holds
    for every $i\in[n]$ such that $\theta[i]=1$, and $0$
    otherwise.

\item \textbf{Classical Decryption Property:} There exists an
additional deterministic polynomial time algorithm $\CDec$ with the
following property. Let $(\qct,\vk)\gets\qEnc(\sk,\msg)$, where $\vk=(x,\theta)$. Let $u\in\bit^{\ctlen}$ be any string such that $u[i] =x[i]$ for all $i : \theta[i] = 0$. Then, the following holds:

$$\Pr\Big[\CDec\big(\sk, u\big)= \msg\Big] \ge 1 - \negl(\secp)$$ 

\end{itemize}
\end{definition}

\begin{theorem}\label{thm:SKECD-BB84}
There exists a BB84-based SKE-CD scheme satisfying IND-CVA-CD
security, assuming the existence of a CPA-secure Secret-Key
Encryption scheme.
\end{theorem}

Kitagawa and Nishimaki~\cite{AC:KitNis22} claimed the same statement as~\cref{thm:SKECD-BB84}.
However, their proof has a gap because known BB84-based SKE-CD schemes do not satisfy the unique certificate property, which they introduced. Hence, we prove \cref{thm:SKECD-BB84} in \cref{sec:SKECD-BB84}.

%


\section{Encryption with Collusion-Resistant SKL} 
\label{sec:cr-skl-defs}
In this section, we define the notions of public-key and secret-key
encryption with collusion-resistant secure key-leasing.
\subsection{Definitions of PKE-CR-SKL}
The syntax of PKE-CR-SKL is defined as follows. 
\begin{definition}[PKE-CR-SKL]
    A PKE-CR-SKL scheme $\PKECRSKL$ is a tuple of five algorithms
    $(\Setup,\qKG, \Enc, \allowbreak \qDec,\qVrfy)$. 
Below, let $\cM$  be the message space of $\PKECRSKL$. 
\begin{description}
\item[$\Setup(1^\secp)\ra(\ek,\msk)$:] The setup algorithm takes a
    security parameter $1^\lambda$, and outputs an encryption key
    $\ek$ and a master secret-key $\msk$.

\item[$\qKG(\msk)\ra(\qdk,\vk)$:] The key generation algorithm takes
the master secret-key $\msk$ as input, and outputs a decryption key
$\qdk$ and a verification key $\vk$.

\item[$\Enc(\ek,\msg)\ra\ct$:] The encryption algorithm takes an
    encryption key $\ek$ and a message $\msg \in \cM$, and outputs a ciphertext $\ct$.

\item[$\qDec(\qdk,\ct)\ra\widetilde{\msg}/\bot$:] The decryption
    algorithm takes a decryption key $\qdk$ and a ciphertext $\ct$,
    and outputs a value $\widetilde{\msg}$ or $\bot$.


\item[$\qVrfy(\vk,\widetilde{\qdk})\ra\top/\bot$:] The verification algorithm takes a verification key $\vk$ and a (possibly malformed) decryption key $\widetilde{\qdk}$, and outputs $\top$ or $\bot$.

\item[Decryption correctness:]For every $\msg \in \cM$, we have
\begin{align}
\Pr\left[
\qDec(\qdk, \ct) \allowbreak = \msg
\ :
\begin{array}{ll}
(\ek,\msk)\gets\Setup(1^\secp)\\
(\qdk,\vk)\gets\qKG(\msk)\\
\ct\gets\Enc(\ek,\msg)
\end{array}
\right] 
\ge 1-\negl(\secp).
\end{align}

\item[Verification correctness:] We have 
\begin{align}
\Pr\left[
\qVrfy(\vk,\qdk)=\top
\ :
\begin{array}{ll}
(\ek,\msk)\gets\Setup(1^\secp)\\
(\qdk,\vk)\gets\qKG(\msk)
\end{array}
\right] 
\ge 1-\negl(\secp).
\end{align}
\end{description}
\end{definition}
\begin{remark}\label{rem:reusability}
We can assume without loss of generality that a decryption key of a PKE-CR-SKL scheme is reusable, i.e., it can be reused to decrypt (polynomially) many ciphertexts. In particular, we can assume that 
for honestly generated $\ct$ and $\qdk$, if we decrypt $\ct$ by using $\qdk$, the state of the decryption key after the decryption is negligibly close to that before the decryption in terms of trace distance. 
This is because the output of the decryption is almost deterministic by decryption correctness, and thus such an operation can be done without almost disturbing the input state by the gentle measurement lemma~\cite{Winter99}.    
\end{remark}

\begin{definition}[IND-KLA Security]\label{def:IND-CPA_PKECRSKL}
We say that a PKE-CR-SKL scheme $\PKECRSKL$  with the message space
$\cM$ is IND-KLA secure, if it satisfies the following requirement,
formalized by the experiment
$\expb{\PKECRSKL,\qA}{ind}{kla}(1^\secp,\coin)$ between an adversary
$\qA$ and a challenger $\qCh$:
\begin{enumerate}
\item $\qCh$ runs $(\ek,\msk)\gets\Setup(1^\secp)$ and sends $\ek$
to $\qA$. 

\item $\qA$ requests $q$ decryption keys for some polynomial $q$.
$\qCh$ generates $(\qdk_i,\vk_i)\gets\qKG(\msk)$ for
every $i\in[q]$ and sends $\qdk_1,\ldots,\qdk_q$ to $\qA$.

\item $\qA$ can get access to the following (stateful) verification
oracle $\Oracle{\qVrfy}$ where $V_i$ is initialized to $\bot$ for all $i\in [q]$:

\begin{description}
\item[ $\Oracle{\qVrfy}(i,\widetilde{\qdk})$:] It runs $d \gets \qVrfy(\vk_i,\widetilde{\qdk})$ and returns $d$.  

If $V_i=\bot$ and $d=\top$, it updates $V_i\seteq \top$. 

\end{description}

\item $\qA$ sends $(\msg_0^*,\msg_1^*)\in \cM^2$ to the challenger.
If $V_i=\bot$ for some $i\in[q]$, the challenger outputs $0$ as the
final output of this experiment. Otherwise, the challenger generates
$\ct^*\la\Enc(\ek,\msg_\coin^*)$ and sends $\ct^*$ to $\qA$.

\item $\qA$ outputs a guess $\coin^\prime$ for $\coin$. $\qCh$
outputs $\coin'$ as the final output of the experiment.

\end{enumerate}

For any QPT $\qA$, it holds that
\begin{align}
&\advb{\PKECRSKL,\qA}{ind}{kla}(1^\secp) \seteq \\
&\abs{\Pr[\expb{\PKECRSKL,\qA}{ind}{kla} (1^\secp,0) \ra 1] - \Pr[\expb{\PKECRSKL,\qA}{ind}{kla} (1^\secp,1) \ra 1] }\leq \negl(\secp).
\end{align}
\end{definition}

\subsection{Definitions of SKE-CR-SKL}\label{def:ske_cr_skl}
The syntax of SKE-CR-SKL is defined as follows. 
\begin{definition}[SKE-CR-SKL]
    An SKE-CR-SKL scheme $\SKECRSKL$ is a tuple of five algorithms
$(\Setup,\qKG, \Enc, \allowbreak \qDec, \qVrfy)$. 
Below, let $\cM$  be the message space of $\SKECRSKL$. 
\begin{description}
\item[$\Setup(1^\secp)\ra\msk$:] The setup algorithm takes a
security parameter $1^\lambda$ and outputs a master secret-key
$\msk$.

\item[$\qKG(\msk)\ra(\qdk,\vk,\tk)$:] The key generation algorithm
takes the master secret-key $\msk$ as input. It outputs a decryption
key $\qdk$, a verification key $\vk$, and a testing key
$\tk$.

\item[$\Enc(\msk,\msg)\ra\ct$:] The encryption algorithm takes the
master secret-key $\msk$ and a message $\msg \in \cM$, and outputs a
ciphertext $\ct$.

\item[$\qDec(\qdk,\ct)\ra\widetilde{\msg}$:] The decryption
algorithm takes a decryption key $\qdk$ and a ciphertext $\ct$,
and outputs a value $\widetilde{\msg}$.


\item[$\qVrfy(\vk, \widetilde{\qdk})\ra\top/\bot$:] The verification
    algorithm takes a verification key $\vk$ and a (possibly
    malformed) decryption key $\widetilde{\qdk}$,
and outputs $\top$ or $\bot$.


\item[Decryption correctness:] For all $\msg \in \cM$, we have
\begin{align}
\Pr\left[
\qDec(\qdk, \ct) \allowbreak = \msg
\ :
\begin{array}{ll}
\msk\gets\Setup(1^\secp)\\
(\qdk,\vk,\tk)\gets\qKG(\msk)\\
\ct\gets\Enc(\msk,\msg)
\end{array}
\right] 
\ge 1-\negl(\secp).
\end{align}

\item[Verification correctness:] We have 
\begin{align}
\Pr\left[
\qVrfy(\vk,\qdk)=\top
\ :
\begin{array}{ll}
\msk\gets\Setup(1^\secp)\\
(\qdk,\vk,\tk)\gets\qKG(\msk)\\
\end{array}
\right] 
\ge 1-\negl(\secp).
\end{align}


\end{description}
\end{definition}

\begin{definition}[Classical Decryption
    Property]\label{def:CDEC_SKE-CR-SKL}
We say that $\SKECRSKL\allowbreak=(\Setup,\qKG,\Enc,\qDec,\qVrfy)$ has
the classical decryption property if there exists a deterministic
polynomial time algorithm $\CDec$ such that given $\qdk$ in the
register $\qreg{DK}$ and ciphertext $\ct$, $\qDec$ applies the map
$\ket{u}_{\qreg{DK}}\ket{v}_{\qreg{MSG}}\ra\ket{u}_{\qreg{DK}}\allowbreak\ket{v\oplus\CDec(u,\ct)}_{\qreg{MSG}}$
and outputs the measurement result of the register $\qreg{MSG}$ in
the computational basis, where $\qreg{MSG}$ is initialized to
$\ket{0\cdots0}_{\qreg{MSG}}$.  \end{definition}

\begin{definition}[Key Testability]\label{def:key-testability-SKE}
We say that an SKE-CR-SKL scheme $\SKECRSKL$ with the classical
decryption property satisfies key testability, if there exists a
classical deterministic algorithm $\KeyTest$ that satisfies the
following conditions:

\begin{itemize}
\item \textbf{Syntax:} $\KeyTest$ takes as input a testing key $\tk$
and a classical string $\dk$ as input. It outputs $0$ or $1$.

\item \textbf{Correctness:} Let $\msk\gets\Setup(1^\secp)$ and
$(\qdk,\vk,\tk)\gets\qKG(\msk)$. We denote the register holding
$\qdk$ as $\qreg{DK}$. Let $\qreg{KT}$ be a register that is
initialized to $\ket{0}_{\qreg{KT}}$. If we apply the map
$\ket{u}_{\qreg{DK}}\ket{\beta}_{\qreg{KT}}\ra\ket{u}_{\qreg{DK}}\ket{\beta\oplus\KeyTest(\tk,u)}_{\qreg{KT}}$
to the registers $\qreg{DK}$ and $\qreg{KT}$ and then measure
$\qreg{KT}$ in the computational basis, we obtain $1$ with
overwhelming probability.

\item \textbf{Security:} Consider the following experiment
$\expb{\SKECRSKL,\qA}{key}{test}(1^\secp)$.

\begin{enumerate}
\item The challenger $\qCh$ runs $\msk\gets\Setup(1^\secp)$ and
initializes $\qA$ with input $\msk$. 

\item $\qA$ requests $q$ decryption keys for some polynomial $q$.
$\qCh$ generates $(\qdk_i,\vk_i,\allowbreak\tk_i)\gets\qKG(\msk)$
for every $i\in[q]$ and sends $(\qdk_i,\vk_i,\tk_i)_{i\in[q]}$ to
$\qA$.

\item $\qA$ sends $(k, \dk, \msg)$ to $\qCh$, where $k$ is
an index, $\dk$ is a classical string and $\msg$ is a message.
$\qCh$ generates $\ct\gets\Enc(\msk,\msg)$. $\qCh$ outputs $\top$ if
$\KeyTest(\tk_k,\dk)=1$ and $\CDec(\dk,\ct)\ne\msg$.
Otherwise, $\qCh$ outputs $\bot$.
\end{enumerate}

For all QPT $\qA$, the following must hold:

\begin{align}
\advb{\SKECRSKL,\qA}{key}{test}(1^\secp) \seteq
\Pr[\expb{\SKECRSKL,\qA}{key}{test}(1^\secp) \ra \top] \le
\negl(\secp).
\end{align} 
\end{itemize}
\end{definition}

\begin{definition}[OT-IND-KLA Security]\label{def:OT-IND-CPA_SKECRSKL}
We say that an SKE-CR-SKL scheme with key testability $\SKECRSKL$
is OT-IND-KLA secure, if it satisfies
the following requirement, formalized by the experiment
$\expc{\SKECRSKL,\qA}{ot}{ind}{kla}(1^\secp,\coin)$ between an
adversary $\qA$ and a challenger $\qCh$:

\begin{enumerate}
\item $\qCh$ runs $\msk\gets\Setup(1^\secp)$ and initializes
$\qA$ with the security parameter $1^\secp$.

\item $\qA$ requests $q$ decryption keys for some polynomial $q$.
The challenger generates $(\qdk_i,\vk_i,\tk_i)\gets\qKG(\msk)$
for every $i\in[q]$ and sends $(\qdk_i,\tk_i)_{i\in[q]}$ to $\qA$.

\item $\qA$ can get access to the following (stateful) verification
oracle $\Oracle{\qVrfy}$ where $V_i$ is initialized to be $\bot$:

\begin{description}
\item[ $\Oracle{\qVrfy}(i, \widetilde{\qdk})$:] It runs $d \gets
\qVrfy(\vk_i, \widetilde{\qdk})$ and returns $d$.  

If $V_i=\bot$ and $d=\top$, it updates $V_i\seteq \top$. 
\end{description}
\item $\qA$ sends $(\msg_0^*,\msg_1^*)\in \cM^2$ to the challenger.
If $V_i=\bot$ for some $i\in[q]$, $\qCh$ outputs
$0$ as the final output of this experiment. Otherwise, $\qCh$
generates $\ct^*\la\Enc(\msk,\msg_\coin^*)$ and sends
$\ct^*$ to $\qA$.

\item $\qA$ outputs a guess $\coin^\prime$ for $\coin$.
$\qCh$ outputs $\coin'$ as the final output of the
experiment.
\end{enumerate}

For all QPT $\qA$, it holds that:
\begin{align}
&\advc{\SKECRSKL,\qA}{ot}{ind}{kla}(1^\secp) \seteq\\
&\abs{\Pr[\expc{\SKECRSKL,\qA}{ot}{ind}{kla} (1^\secp,0) \ra 1] - \Pr[\expc{\SKECRSKL,\qA}{ot}{ind}{kla} (1^\secp,1) \ra 1] }\leq \negl(\secp).
\end{align} 
\end{definition}

\section{SKE-CR-SKL with Key Testability}\label{sec:SKECRSKL-KT}
In this section, we show how to achieve SKE-CR-SKL introduced in~\cref{def:ske_cr_skl}.
\subsection{Construction}\label{sec:SKECRSKL-KT-construction}
We construct an SKE-CR-SKL scheme with key testability $\SKECRSKL=
\SKECRSKL.(\Setup,\qKG,\Enc,\qDec,\allowbreak\qVrfy)$ having the
additional algorithms $\CDec$ and $\KeyTest$, using the
following building blocks.

\begin{itemize}
\item BB84-based SKE-CD scheme (Definition \ref{def:bb84}) $\SKECD =
\SKECD.(\KG,\qEnc,\qDec,\allowbreak\qDel,\Vrfy)$ having the classical
decryption algorithm $\SKECD.\CDec$.

\item OWF $f:\bit^\secp\ra\bit^{p(\secp)}$ for some polynomial $p$.
\end{itemize}

Let $\cM \seteq \bit^{\msglen}$ be the plaintext space. The construction is as follows:

\begin{description}

\item[$\SKECRSKL.\Setup(1^\secp)$:] $ $
\begin{enumerate}
    \item Generate $r\gets\bit^{\msglen}$.
    \item Generate $\skecd.\sk\gets\SKECD.\KG(1^\secp)$.
    \item Output $\msk\seteq(\skecd.\sk, r)$.
\end{enumerate}

\item[$\SKECRSKL.\qKG(\msk)$:] $ $
\begin{enumerate}
    \item Parse $\msk=(\skecd.\sk, r)$.
    \item Generate
        $(\skecd.\qct,\skecd.\vk)\gets\SKECD.\qEnc(\skecd.\sk,r)$.
        $\skecd.\vk$ is of the form
        $(x,\theta)\in\bit^{\ctlen}\times\bit^{\ctlen}$, and
        $\skecd.\qct$ is of the form
        $\ket{\psi_1}_{\qreg{SKECD.CT_1}}\tensor\cdots\tensor\ket{\psi_{\ctlen}}_{\qreg{SKECD.CT_{\ctlen}}}$.

    \item Generate $s_{i,b}\la\bit^\secp$ and compute $t_{i,b}\la f(s_{i,b})$ for every $i\in[\ctlen]$ and $b\in\bit$. 
    Set $T\seteq
    t_{1,0}\|t_{1,1}\|\cdots\|t_{\ctlen,0}\|t_{\ctlen,1}$ and $S =
    \{s_{i,0} \xor s_{i, 1}\}_{i \in [\ctlen] \; : \;\theta[i] =
    1}$.
    \item Prepare a register $\qreg{S_i}$ that is initialized to
    $\ket{0^\secp}_{\qreg{S_i}}$ for every $i\in[\ctlen]$. 
    \item For every $i\in[\ctlen]$, apply the map
    \begin{align}
    \ket{u_i}_{\qreg{SKECD.CT_i}}\tensor\ket{v_i}_{\qreg{S_i}}
    \ra
    \ket{u_i}_{\qreg{SKECD.CT_i}}\tensor\ket{v_i\oplus s_{i,u_i}}_{\qreg{S_i}}
    \end{align}
    to the registers $\qreg{SKECD.CT_i}$ and $\qreg{S_i}$ and obtain the resulting state $\rho_i$.
    \item Output $\qdk = (\rho_i)_{i\in{[\ctlen]}}$,
    $\vk=(x,\theta,S)$, and $\tk=T$.
\end{enumerate}

\item[$\SKECRSKL.\Enc(\msk, \msg)$:] $ $
\begin{enumerate}
    \item Parse $\msk = (\skecd.\sk, r)$.
    \item Output $\ct\seteq(\skecd.\sk, r\oplus\msg)$.
    
\end{enumerate}

\item[$\SKECRSKL.\CDec(\dk, \ct)$:] $ $
\begin{enumerate}
\item Parse $\ct = (\skecd.\sk, z)$. Let
$\widetilde{\dk}$ be the sub-string of $\dk$ on register
$\qreg{SKECD.CT} = \qreg{SKECD.CT_1} \otimes \cdots \otimes \qreg{SKECD.CT_{\ctlen}}$.
\item Output $z \xor \SKECD.\CDec(\skecd.\sk, \widetilde{\dk})$.
\end{enumerate}

\item[$\SKECRSKL.\qDec(\qdk, \ct)$:] $ $
\begin{enumerate}
\item Parse $(\rho_i)_{i \in [\ctlen]}$. We denote the register
holding $\rho_i$ as $\qreg{SKECD.CT_i}\tensor\qreg{S_i}$ for
every $i\in[\ctlen]$.

\item Prepare a register $\qreg{MSG}$ of $\msglen$ qubits that is
initialized to $\ket{0\cdots0}_{\qreg{MSG}}$.

\item Apply the map
\begin{align}
&\ket{u}_{\bigotimes_{i\in[\ctlen]}\qreg{SKECD.
CT_i}} \tensor\ket{w}_{\qreg{MSG}} \ra\\
&\ket{u}_{\bigotimes_{i\in[\ctlen]}\qreg{SKECD.CT_i}}\tensor\ket{w\oplus
\SKECRSKL.\CDec(u, \ct)}_{\qreg{MSG}}
\end{align}

to the registers
$\bigotimes_{i\in[\ctlen]}\qreg{SKECD.CT_i}$ and
$\qreg{MSG}$.
\item Measure $\qreg{MSG}$ in the computational basis and output the result $\msg^\prime$.
\end{enumerate}

\item[$\SKECRSKL.\qVrfy(\vk,\widetilde{\qdk})$:] $ $
\begin{enumerate}
\item Parse $\vk = (x,\theta,S=\{s_{i,0} \xor
    s_{i,1}\}_{i\in[\ctlen]\; : \; \theta[i]=1})$ and
    $\qdk = (\rho_i)_{i\in[\ctlen]}$ where $\rho_i$ is a state on
    the registers $\qreg{SKECD.CT_i}$ and $\qreg{S_i}$.
\item For every $i \in [\ctlen]$, measure $\rho_i$ in the Hadamard
basis to get outcomes $c_i, d_i$ corresponding to 
the registers $\qreg{SKECD.CT_i}$ and $\qreg{S_i}$ respectively.

\item Output $\top$ if $x[i]=c_i \oplus d_i\cdot(s_{i,0}\oplus
        s_{i,1})$ holds for every $i\in[\ctlen]$ such that $\theta[i]=1$.
    Otherwise, output $\bot$.
\end{enumerate}

\item[$\SKECRSKL.\KeyTest(\tk,\dk)$:] $ $
\begin{enumerate}
\item Parse $\dk$ as a string over the registers
$\qreg{SKECD.CT} = \qreg{SKECD.CT_1} \otimes \cdots \otimes
\qreg{SKECD.CT_{\ctlen}}$ and $\qreg{S} = \qreg{S_1} \otimes
\cdots \otimes \qreg{S_{\ctlen}}$.
Let $u_i$ denote the value on
$\qreg{SKECD.CT_i}$ and $v_i$ the value on $\qreg{S_i}$. Parse $\tk$
as $T=t_{1,0}\|t_{1,1}\|\cdots \|t_{\ctlen,0}\|t_{\ctlen,1}$.

\item Let $\Check[t_{i,0},t_{i_1}](u_i,v_i)$ be the deterministic
algorithm that outputs $1$ if $f(v_i)=t_{i,u_i}$ holds and $0$
otherwise.

\item Output $\Check[t_{1,0},t_{1,1}](u_1,v_1) \land
\cdots \land
\Check[t_{\ctlen,0},t_{\ctlen,1}](u_{\ctlen},v_{\ctlen})$.
\end{enumerate}

\end{description}

\paragraph{Decryption correctness:} For a ciphertext $\ct =
(\skecd.\sk, r \xor \msg)$, the decryption algorithm $\qDec$ performs
the following computation:

\begin{align}
&\ket{u}_{\bigotimes_{i\in[\ctlen]}\qreg{SKECD.
CT_i}} \tensor\ket{w}_{\qreg{MSG}} \ra\\
&\ket{u}_{\bigotimes_{i\in[\ctlen]}\qreg{SKECD.CT_i}}\tensor\ket{w
\xor r \xor \msg \xor \SKECD.\CDec(\skecd.\sk, u)}_{\qreg{MSG}}
\end{align}

Recall that $\qreg{SKECD.CT} = \qreg{SKECD.CT_1} \otimes \cdots
\otimes \qreg{SKECD.CT_{\ctlen}}$ is a register corresponding to the
ciphertext of a BB84-based SKE-CD scheme $\SKECD$. Hence, from the Classical
Decryption property of $\SKECD$ (Definition \ref{def:bb84}), it must be that
$\SKECD.\CDec(\skecd.\sk, u) = r$ for every $u$ in the superposition
of $\skecd.\qct$. Consequently, $\msg$ is written onto the
$\qreg{MSG}$ register in each term of the superposition and decryption
correctness follows.

\paragraph{Verification correctness:}

Observe that for the Hadamard basis positions ($i \in [\ctlen]$ such that
$\theta[i] = 1$), $\rho_i$ is of the form:

$$\rho[i] = \ket{0}_{\qreg{SKECD.CT_i}}\ket{s_{i,0}}_{\qreg{S_i}}+
(-1)^{x[i]}\ket{1}_{\qreg{SKECD.CT_i}}\ket{s_{i,1}}_{\qreg{S_i}}$$

It is easy to see that measuring the state in the Hadamard basis gives
outcomes $c_i, d_i$ (on registers $\qreg{SKECD.CT_i}$ and
$\qreg{S_i}$ respectively) satisfying $x[i] = c_i \xor d_i \cdot
(s_{i,0} \xor s_{i,1})$. Hence, the verification correctness follows.

\begin{theorem} There exists an SKE-CR-SKL scheme satisfying
OT-IND-KLA security and Key-Testability, assuming the existence of a
BB84-based SKE-CD scheme and the existence of an OWF.
\end{theorem}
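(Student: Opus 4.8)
The construction is already specified, so the proof must establish four things: the classical decryption property (\cref{def:CDEC_SKE-CR-SKL}), key-testability correctness and security (\cref{def:key-testability-SKE}), and OT-IND-KLA security (\cref{def:OT-IND-CPA_SKECRSKL}); decryption and verification correctness are argued immediately after the construction. The classical decryption property holds by inspection: $\SKECRSKL.\qDec$ applies exactly the coherent map $\ket{u}\ket{w}\ra\ket{u}\ket{w\oplus\SKECRSKL.\CDec(u,\ct)}$ followed by a computational-basis measurement, and $\SKECRSKL.\CDec(u,\ct)$ depends only on the $\qreg{SKECD.CT}$-part of $u$. Key-testability correctness is also immediate: in every computational-basis term $\ket{u}\ket{v}$ in the superposition of $\qdk=(\rho_i)_{i\in[\ctlen]}$ we have $v_i=s_{i,u_i}$, hence $f(v_i)=t_{i,u_i}$ and $\Check[t_{i,0},t_{i,1}](u_i,v_i)=1$ for all $i$, so $\KeyTest(\tk,\cdot)$ is constant $1$ on the support of $\qdk$; applying it coherently and measuring therefore yields $1$ with certainty and leaves $\qdk$ undisturbed.

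For key-testability security I would reduce to the one-wayness of $f$. If $\qA$ outputs a winning tuple $(k,\dk,\msg)$ with $\ct=(\skecd.\sk,r\oplus\msg)$, then $\SKECRSKL.\CDec(\dk,\ct)\ne\msg$ is equivalent to $\SKECD.\CDec(\skecd.\sk,\widetilde{\dk})\ne r$ for $\widetilde{\dk}$ the $\qreg{SKECD.CT}$-part of $\dk$; by the classical decryption property of the BB84-based $\SKECD$ scheme (\cref{def:bb84}) applied to the $k$-th ciphertext $\skecd.\qct_k$ with key $(x_k,\theta_k)$, this forces, up to negligible error, a position $i^*$ with $\theta_k[i^*]=0$ and $\widetilde{\dk}[i^*]=u_{i^*}=1-x_k[i^*]$. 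Together with $\KeyTest(\tk_k,\dk)=1$ this gives $f(v_{i^*})=t^{(k)}_{i^*,u_{i^*}}=t^{(k)}_{i^*,1-x_k[i^*]}=f(s^{(k)}_{i^*,1-x_k[i^*]})$. The crucial observation is that $s^{(k)}_{i^*,1-x_k[i^*]}$ is never revealed to $\qA$: at a computational position the $\qKG$ map only touches $s^{(k)}_{i^*,x_k[i^*]}$, and the verification key $\vk_k$ exposes $s$-XORs only for Hadamard positions. Hence the reduction $\cR$ receives an OWF image $y^*$, guesses $k\gets[q]$ and (after running $\qKG$ far enough to learn $\theta_k$) a position $i^*$ with $\theta_k[i^*]=0$, embeds $t^{(k)}_{i^*,1-x_k[i^*]}\seteq y^*$ while generating everything else honestly, runs $\qA$, and outputs the $\qreg{S_{i^*}}$-component of $\dk$. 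The simulation is perfect, so $\cR$ inverts $f$ with probability at least $\frac{1}{q\ctlen}\bigl(\advb{\SKECRSKL,\qA}{key}{test}(1^\secp)-\negl(\secp)\bigr)$, forcing the latter to be negligible.

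For OT-IND-KLA security I would run a hybrid argument over the $q$ leased keys combined with IND-CVA-CD security of $\SKECD$ (\cref{def:reusable_sk-vo_certified_del}). Let $\hybi{b}$ be $\expc{\SKECRSKL,\qA}{ot}{ind}{kla}(1^\secp,b)$ and let $\hybi{b,j}$ ($0\le j\le q$) be the same game except the first $j$ decryption keys wrap $\SKECD$ encryptions of $0^{\msglen}$ instead of $r$, so $\hybi{b,0}=\hybi{b}$. A reduction $\cB$ for $\abs{\Pr[\hybi{b,j-1}\ra 1]-\Pr[\hybi{b,j}\ra 1]}$ obtains keys $i<j$ via $\Oracle{\qEnc}(0^{\msglen})$ and keys $i>j$ via $\Oracle{\qEnc}(r)$ (learning their $\SKECD$ verification keys, so it answers $\Oracle{\qVrfy}(i,\cdot)$ itself for $i\ne j$), and uses the IND-CVA-CD challenge ciphertext with $(\msg_0,\msg_1)=(r,0^{\msglen})$ as the $\SKECD$ payload of the $j$-th key. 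Since $\cB$ lacks $(x_j,\theta_j)$, on a query $\Oracle{\qVrfy}(j,\widetilde{\qdk})$ it Hadamard-measures the registers of $\widetilde{\qdk}$ to obtain $(c_i,d_i)_i$, forms the effective certificate $\cert$ with $\cert[i]\seteq c_i\oplus d_i\cdot(s^{(j)}_{i,0}\oplus s^{(j)}_{i,1})$, and queries $\Oracle{\Vrfy}(\cert)$: this accepts iff $\cert[i]=x_j[i]$ for all $i\in[n]$ with $\theta_j[i]=1$, which (using $\theta_j[i]=0$ for $i>n$) is exactly the acceptance condition of $\SKECRSKL.\qVrfy(\vk_j,\widetilde{\qdk})$, and a successful query hands $\cB$ the value $\skecd.\sk$ needed to build $\ct^*=(\skecd.\sk,r\oplus\msg^*_b)$ once all $V_i=\top$ (if some $V_i=\bot$ the output is $0$ in both games, so $\cB$ outputs $0$). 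Thus $\abs{\Pr[\hybi{b,j-1}\ra 1]-\Pr[\hybi{b,j}\ra 1]}\le\advc{\CDSKE,\cB}{ind}{cva}{cd}(1^\secp)=\negl(\secp)$, and telescoping gives $\hybi{b}\approx\hybi{b,q}$. Finally, in $\hybi{b,q}$ the value $r$ is information-theoretically independent of $\qA$'s view before the challenge phase, so $r\oplus\msg^*_0$ and $r\oplus\msg^*_1$ are identically distributed, giving $\Pr[\hybi{0,q}\ra 1]=\Pr[\hybi{1,q}\ra 1]$; chaining all bounds yields $\abs{\Pr[\hybi{0}\ra 1]-\Pr[\hybi{1}\ra 1]}\le\negl(\secp)$.

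The main obstacle is the OT-IND-KLA reduction to IND-CVA-CD: the IND-CVA-CD interface never exposes the challenge ciphertext's verification key $(x_j,\theta_j)$, yet $\cB$ must faithfully answer arbitrarily many adaptively interleaved verification queries on the $j$-th leased key and must also recover $\skecd.\sk$ in time to produce the final challenge ciphertext. Getting the effective-certificate transformation $\cert[i]=c_i\oplus d_i\cdot(s^{(j)}_{i,0}\oplus s^{(j)}_{i,1})$ exactly right---so that routing it through $\Oracle{\Vrfy}$ reproduces $\SKECRSKL.\qVrfy$'s behavior on every query while simultaneously extracting $\skecd.\sk$---is the delicate point, and it is precisely what motivates the Lamport-style signature registers (and the accompanying key-testability machinery) in the construction; without them, the verification step of the PKE-CR-SKL scheme cannot be simulated using the ABE key-generation oracle either.
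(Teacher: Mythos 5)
Your proposal is correct and follows essentially the same route as the paper: a per-key hybrid argument reducing to IND-CVA-CD of the BB84-based SKE-CD scheme (with the same effective-certificate trick $\cert[i]=c_i\oplus d_i\cdot(s_{i,0}\oplus s_{i,1})$ to route verification queries on the challenge key through $\Oracle{\Vrfy}$ and recover $\skecd.\sk$), a final hybrid where $r$ information-theoretically masks the challenge, and a key-testability reduction to OWF security that guesses $(k^*,i^*)$ with $\theta[i^*]=0$ and embeds the image $y^*$ as $t_{i^*,1-x[i^*]}$. The only (immaterial) difference is that your hybrids replace the leased keys' payload by $0^{\msglen}$ whereas the paper uses fresh random values $\widetilde{r}$.
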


\ifnum\submission=1
We prove this theorem in~\cref{appsec:security_proofs_for_ske_cr_skl}.
\else
We prove this theorem in the subsequent sections.
\ifnum\submission=1
\section{Security Proofs for SKE-CR-SKL}\label{appsec:security_proofs_for_ske_cr_skl}
\fi
\subsection{Proof of OT-IND-KLA Security}\label{proof:ot-ind}\nikhil{Move to Appendix? Commented out to check Pg Limit.}
\ryo{I set the submission flag. If we switch the flag to 1, these proofs are moved to Appendix. See the main.tex.}

Let $\qA$ be an adversary for the OT-IND-KLA security of the
construction $\SKECRSKL$ that makes use of a BB84-based
SKE-CD scheme $\SKECD$. Consider the hybrid $\Hyb_j^\coin$ defined
as  follows:

\begin{description}
\item[$\hybi{j}^\coin$:] $ $
\begin{enumerate}
\item The challenger $\qCh$ runs $\msk \gets
\SKECRSKL.\Setup(1^\secp)$ 
and initializes $\qA$ with input $1^\secp$.

\item $\qA$ requests $q$ decryption keys for some polynomial 
$q$. For each $k \in [j]$, $\qCh$ generates
$(\qdk_i,\vk_i,\tk_i) \gets \qKGt(\msk)$ where $\qKGt$ is defined 
as follows (the difference from $\SKECRSKL.\qKG$ is colored 
in red):

\item[$\qKGt(\msk)$:] $ $
\begin{enumerate}
    \item Parse $\msk=(\skecd.\sk, r)$.
    \item \textcolor{red}{Sample $\widetilde{r} \gets
        \bit^{\msglen}$.}
    \item Generate
$(\skecd.\qct,\skecd.\vk)\gets\SKECD.\qEnc(\skecd.\sk,
\textcolor{red}{\widetilde{r}})$.
$\skecd.\vk$ is of the form
$(x,\theta)\in\bit^{\ctlen}\times\bit^{\ctlen}$, and $\skecd.\qct$
can be described as
$\ket{\psi_1}_{\qreg{SKECD.CT_1}}\tensor\cdots\tensor\ket{\psi_{\ctlen}}_{\qreg{SKECD.CT_{\ctlen}}}$.

    \item Generate $s_{i,b}\la\bit^\secp$ and compute $t_{i,b}\la
        f(s_{i,b})$ for every $i\in[\ctlen]$ and $b\in\bit$. 
    Set $T\seteq t_{1,0}\|t_{1,1}\|\cdots\|t_{\ctlen,0}\|t_{\ctlen,1}$ and $S =
    \{s_{i,0} \xor s_{i, 1}\}_{i \in [\ctlen] \; : \;\theta[i] = 1}$.
    \item Prepare a register $\qreg{S_i}$ that is initialized to $\ket{0\cdots0}_{\qreg{S_i}}$ for every $i\in[\ctlen]$. 
    \item For every $i\in[\ctlen]$, apply the map
    \begin{align}
    \ket{u_i}_{\qreg{SKECD.CT_i}}\tensor\ket{v_i}_{\qreg{S_i}}
    \ra
    \ket{u_i}_{\qreg{SKECD.CT_i}}\tensor\ket{v_i\oplus s_{i,u_i}}_{\qreg{S_1}}
    \end{align}
    to the registers $\qreg{SKECD.CT_i}$ and $\qreg{S_i}$ and obtain the resulting state $\rho_i$.
\item Output $\qdk=(\rho_i)_{i\in[\ctlen]}$, $\vk=(x,\theta,S)$, and $\tk=T$.
\end{enumerate}

\item On the other hand, for $k = j+1, \ldots, q$, $\qCh$ generates
$(\qdk_k, \vk_k,\tk_k) \gets \SKECRSKL.\qKG(\msk)$. Then,
$\qCh$ sends $(\qdk_k,\tk_k)_{k\in[q]}$ to $\qA$.

\item $\qA$ can
get access to the following (stateful) verification oracle
$\Oracle{\qVrfy}$ where $V_i$ is initialized to $\bot$:

\begin{description} \item[
        $\Oracle{\qVrfy}(i, \widetilde{\qdk})$:] It runs $d
        \gets \SKECRSKL.\qVrfy(\vk_i, \widetilde{\qdk})$ and returns $d$.
        If $V_i=\bot$ and $d=\top$, it updates $V_i\seteq
        \top$. 
\end{description}

\item $\qA$ sends $(\msg_0^*,\msg_1^*)\in 
\bit^{\msglen} \times \bit^{\msglen}$ to $\qCh$. If $V_i=\bot$ for
some $i\in[q]$, $\qCh$ outputs $0$ as the final output of this
experiment. Otherwise, $\qCh$ generates
$\ct^*\la\SKECRSKL.\Enc(\msk,\msg_\coin^*)$ and sends $\ct^*$ to
$\qA$.

\item $\qA$ outputs a guess $\coin^\prime$ for $\coin$. The 
challenger outputs $\coin'$ as the final output of the 
experiment.
\end{enumerate}
\end{description}

We will now prove the following lemma:

\begin{lemma}
$\forall j \in \{0, \ldots, q-1\}$ and $\coin \in \bit: \Hyb_j^\coin \approx \Hyb_{j+1}^\coin$.
\end{lemma}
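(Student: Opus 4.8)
The plan is to reduce to the IND-CVA-CD security of the underlying BB84-based SKE-CD scheme $\CDSKE$ (\cref{def:reusable_sk-vo_certified_del}). Fix $j\in\{0,\ldots,q-1\}$ and $\coin\in\bit$. The sole difference between $\Hyb_j^\coin$ and $\Hyb_{j+1}^\coin$ is the $(j+1)$-th decryption key: in $\Hyb_j^\coin$ it is produced by $\SKECRSKL.\qKG(\msk)$, which embeds an $\SKECD$ encryption of the value $r$ fixed in $\msk$, whereas in $\Hyb_{j+1}^\coin$ it is produced by $\qKGt(\msk)$, which embeds an $\SKECD$ encryption of a fresh uniform string $\widetilde{r}$. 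I will build a QPT reduction $\qB$ against IND-CVA-CD which, using its oracles $\Oracle{\qEnc}$ and $\Oracle{\Vrfy}$, perfectly simulates $\Hyb_j^\coin$ for $\qA$ when its own challenge bit is $0$ and $\Hyb_{j+1}^\coin$ when it is $1$, and outputs exactly the bit that the simulated hybrid outputs.

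First, $\qB$ samples $r\gets\bit^{\msglen}$ together with fresh strings $\widetilde{r}_1,\ldots,\widetilde{r}_{j+1}\gets\bit^{\msglen}$ and runs $\qA$. When $\qA$ requests $q$ keys, $\qB$ constructs each one by running the code of $\SKECRSKL.\qKG$ while obtaining the internal $\SKECD$ ciphertext from its oracle: for $k\in[j]$ it queries $\Oracle{\qEnc}(\widetilde{r}_k)$; for $k\in[j+2,q]$ it queries $\Oracle{\qEnc}(r)$ (each call returns a fresh ciphertext together with its $\SKECD$ verification key $(x_k,\theta_k)$); and for $k=j+1$ it submits $(r,\widetilde{r}_{j+1})$ as its IND-CVA-CD challenge pair and uses the returned challenge ciphertext $\skecd.\qct^*$. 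In every case $\qB$ samples the pre-images $s^{(k)}_{i,b}$ itself, sets $t^{(k)}_{i,b}=f(s^{(k)}_{i,b})$, applies the entangling ``signing'' map of $\SKECRSKL.\qKG$, and thereby forms $\qdk_k$ and $\tk_k$; for $k\ne j+1$ it additionally stores $\vk_k=(x_k,\theta_k,S_k)$. Crucially, $\qB$ never learns $(x^*,\theta^*)$ and hence cannot compute $\vk_{j+1}$, but as explained below it will not need it. $\qB$ then sends $(\qdk_k,\tk_k)_{k\in[q]}$ to $\qA$. The ordering is consistent with the IND-CVA-CD game: $\Oracle{\qEnc}$ is available throughout, and $\qB$ commits to its challenge pair exactly when it builds $\qdk_{j+1}$, so $\Oracle{\Vrfy}$ becomes available right before $\qA$'s verification queries begin.

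The main point is the simulation of the verification oracle. For a query $\Oracle{\qVrfy}(i,\widetilde{\qdk})$ with $i\ne j+1$, $\qB$ simply runs $\SKECRSKL.\qVrfy(\vk_i,\widetilde{\qdk})$ with the stored $\vk_i$, returns the outcome, and updates $V_i$. For $i=j+1$, $\qB$ measures each of the $\ctlen$ blocks of $\widetilde{\qdk}$ (the $i$-th SKE-CD-qubit register together with the $i$-th signing register) in the Hadamard basis to get outcomes $(c_i,d_i)$, forms $e=e_1\|\cdots\|e_{\ctlen}$ with $e_i\seteq c_i\xor d_i\cdot(s^{(j+1)}_{i,0}\xor s^{(j+1)}_{i,1})$, and queries $\Oracle{\Vrfy}(e)$; it returns $\top$ and updates $V_{j+1}$ (recording the returned $\skecd.\sk$) if $\Oracle{\Vrfy}$ accepts, and returns $\bot$ otherwise. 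This faithfully reproduces $\Oracle{\qVrfy}(j+1,\cdot)$: $\SKECRSKL.\qVrfy(\vk_{j+1},\widetilde{\qdk})$ performs the same Hadamard-basis measurement and accepts iff $x^*[i]=c_i\xor d_i\cdot(s^{(j+1)}_{i,0}\xor s^{(j+1)}_{i,1})=e_i$ for every $i$ with $\theta^*[i]=1$; and since a BB84-based $\SKECD$ (\cref{def:bb84}) has $\theta^*[i]=0$ for all $i>n$, this is exactly the predicate checked by $\SKECD.\Vrfy((x^*,\theta^*),e)$, i.e.\ by $\Oracle{\Vrfy}(e)$. I expect this correspondence, and in particular the fact that $\qB$ need not know $(x^*,\theta^*)$, to be the step requiring the most care.

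Finally, in the challenge phase, if $V_i=\bot$ for some $i$ then $\qB$ outputs $0$; otherwise $V_{j+1}=\top$, so $\qB$ has obtained $\skecd.\sk$ from a successful $\Oracle{\Vrfy}$ call and can send $\ct^*\seteq(\skecd.\sk,r\xor\msg_\coin^*)$ to $\qA$, after which it outputs $\qA$'s guess $\coin'$. Because $\qB$ invokes $\Oracle{\Vrfy}$ only in response to index-$(j+1)$ verification queries, and such a call accepts iff the corresponding $\SKECRSKL.\qVrfy$ call would, the event $V^{\mathrm{cva}}=\top$ in $\qB$'s game coincides with $V_{j+1}=\top$ in the simulated hybrid; hence the ``output $0$ unless verification passed'' conventions of the two experiments align, and $\expc{\CDSKE,\qB}{ind}{cva}{cd}(1^\secp,b)$ has the same output distribution as $\Hyb_j^\coin$ when $b=0$ and as $\Hyb_{j+1}^\coin$ when $b=1$. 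Therefore $\abs{\Pr[\Hyb_j^\coin\ra1]-\Pr[\Hyb_{j+1}^\coin\ra1]}=\advc{\CDSKE,\qB}{ind}{cva}{cd}(1^\secp)\le\negl(\secp)$, as claimed.
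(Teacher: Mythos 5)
Your proposal is correct and follows essentially the same route as the paper's proof: a reduction to IND-CVA-CD of the BB84-based SKE-CD scheme in which the first $j$ keys are built from encryption-oracle queries on fresh random strings, keys $j+2,\ldots,q$ from encryptions of $r$, the $(j+1)$-th key from the challenge ciphertext with challenge pair $(r,\widetilde{r})$, verification queries for indices $\neq j+1$ answered locally with the known verification keys, and index-$(j+1)$ verification queries answered by computing the certificate $c_i \oplus d_i\cdot(s_{i,0}\oplus s_{i,1})$ and forwarding it to $\Oracle{\Vrfy}$, which also supplies $\skecd.\sk$ for the challenge ciphertext. The only differences from the paper (when the challenge pair is committed, the labeling of the bit, and your more explicit remark that $\theta[i]=0$ for $i>n$ makes the two verification predicates coincide) are cosmetic.
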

\begin{proof}
Suppose $\Hyb_j^\coin \not \approx \Hyb_{j+1}^\coin$. Let $\qD$ be
a corresponding distinguisher. We will
construct a reduction $\qR$ that breaks the IND-CVA-CD security of
the BB84-based SKE-CD scheme $\SKECD$. The execution of $\qR^\qD$ in
the experiment $\expc{\SKECD,\qR}{ind}{cva}{cd}(1^\secp,b)$ proceeds
as follows:

\begin{description}
\item Execution of $\qR^\qD$ in
$\expc{\SKECD,\qR}{ind}{cva}{cd}(1^\secp,b)$:

\begin{enumerate}
\item The challenger $\qCh$ computes $\skecd.\sk \leftarrow
\SKECD.\KG(1^\lambda)$.
\item $\qR$ samples $(r_0, r_1) \leftarrow \bit^{\msglen} \times
    \bit^{\msglen}$ and
sends it to $\qCh$.
\item $\qCh$ computes $(\skecd.\qct^\star, \skecd.\vk^\star)
\leftarrow
\SKECD.\qEnc(\skecd.\sk, r_b)$ and sends $\skecd.\qct^\star$ to $\qR$.
\item $\qR$ initializes $\qD$ with $1^\secp$. $\qD$ requests $q$ keys for some polynomial $q$.

\item For each $k \in [j]$, $\qR$ computes $\qdk_k$ as follows:
\begin{itemize}
\item Sample a random value $\widetilde{r} \gets
    \bit^{\msglen}$.
\item Compute $(\skecd.\widetilde{\qct}, \skecd.\widetilde{\vk}) \leftarrow
    \Oracle{\qEnc}(\widetilde{r})$.
\item Compute $\qdk_k$ by executing Steps 2.(c)-2.(g) as in
$\Hyb_j^0$, but using $\skecd.\widetilde{\qct}$ in place of
$\skecd.\qct$.
\end{itemize}

\item $\qR$ computes $\qdk_{j+1}$ by executing Steps 2.(c)-2.(g) as
in $\Hyb_j^0$, but using $\skecd.\qct^\star$ in place of
$\skecd.\qct$.

\item For each $k \in [j+2,q]$, $\qR$ computes $\qdk_k$ as follows:
\begin{itemize}
\item Compute $(\skecd.\widetilde{\qct}, \skecd.\widetilde{\vk}) \leftarrow
\Oracle{\qEnc}(r_1)$.
\item Compute $\qdk_k$ by executing Steps 2.(c)-2.(g) as in
$\Hyb_j^0$, but using $\skecd.\widetilde{\qct}$ in place of
$\skecd.\qct$.
\end{itemize}
\item $\qR$ sends $\qdk_1, \ldots, \qdk_q$ to $\qD$ and initializes
$V_k = \bot$ for every $k \in [q]$.
\item If $k \neq j+1$, $\qR$ simulates
the response  of oracle $\Oracle{\qVrfy}(k, \widetilde{\qdk})$ as follows:

\begin{itemize}
\item Parse $\widetilde{\vk} =
(x,\theta,S=\{s_{i,0} \xor
s_{i,1}\}_{i\in[\ctlen]\;:\;\theta[i]=1})$ and $\widetilde{\qdk} =
(\rho_i)_{i\in[\ctlen]}$.

\item For every $i \in [\ctlen]$, measure $\rho_i$ in the Hadamard
basis to get outcomes $c_i, d_i$ corresponding to the registers
$\qreg{SKECD.CT_i}$ and $\qreg{S_i}$ respectively.

\item Compute $\cert[i] = c_i \xor d_i \cdot (s_{i,0} \xor
s_{i,1})$ for every $i \in [\ctlen]$.

\item If $x[i] = \cert[i]$ holds
for every $i \in [\ctlen] : \theta[i] = 1$, then update $V_k = \top$
and send $\top$ to $\qD$. Else, send $\bot$.
\end{itemize}

\item If $k = j+1$, $\qR$ simulates
the response  of oracle $\Oracle{\qVrfy}(k, \widetilde{\qdk})$ as
follows:

\begin{itemize}
\item Compute $\cert = \cert[1] \| \ldots \| \cert[\ctlen]$,
where each $\cert[i]$ is computed as in Step 9.

\item Send $\cert$ to $\qCh$. If $\qCh$ returns $\skecd.\sk$, send $\top$ to $\qD$ and update $V_{j+1}= \top$. Else if $\qCh$ returns $\bot$, send $\bot$ to $\qD$.
\end{itemize}

\item $\qD$ sends $(\msg_0^\star, \msg_1^\star) \in \bit^{\msglen} \times
\bit^{\msglen}$ to $\qR$. If $V_i = \bot$ for any $i \in [q]$, $\qR$
sends $0$ to $\qCh$.
$\qR$ computes $\ct^\star = (\skecd.\sk, r_1 \xor \msg^\star_\coin)$,
where $\skecd.\sk$ is obtained from $\qCh$ in Step 10. $\qR$ sends $\ct^*$ to $\qD$.

\item $\qD$ outputs a guess $b'$ which $\qR$ forwards to $\qCh$.
$\qCh$ outputs $b'$ as the final output of the experiment.
\end{enumerate}
\end{description}

We will first argue that when $b=1$, the view of $\qD$ is exactly
the same as its view in the hybrid $\Hyb_j^\coin$. Notice that the
reduction computes the first $j$ decryption keys by querying the
encryption oracle on random plaintexts. $\Hyb_j$ on the other hand,
directly computes them but there is no difference in the
output ciphertexts. A similar argument holds for the keys
$\qdk_{j+2}, \ldots, \qdk_q$, which contain encryptions of the same
random value $r_1$. Moreover, if $b=1$, the value encrypted as part
of the key $\qdk_{j+1}$ is also $r_1$. This is the same as in
$\Hyb_j^\coin$. As for the verification oracle queries, notice that
they are answered similarly by the reduction and $\Hyb_j^\coin$ for
all but the $j+1$-th key. For the $j+1$-th key, the reduction works
differently in that it forwards the certificate $\cert$ to the
verification oracle.  However, the verification procedure of the
BB84-based SKE-CD scheme checks the validity of the value $\cert$ in
the same way as the reduction, so there is no difference.

Finally, notice that when $b=0$, the encrypted value is random and
independent of $r_1$, similar to the hybrid $\Hyb_{j+1}^\coin$.
Consequently, $\qR$ breaks the IND-CVA-CD security of $\SKECD$ with
non-negligible probability, a contradiction.
\end{proof}

Notice now that the hybrid $\Hyb_0^\coin$ is the same as the
experiment $\expc{\SKECRSKL,\qA}{ot}{ind}{kla}\allowbreak(1^\secp,\coin)$. From
the previous lemma, we have that $\Hyb_0^\coin \approx
\Hyb_q^\coin$. However, we have that $\Hyb_q^0 \approx \Hyb_q^1$
because $\Hyb_q^0$ and $\Hyb_q^1$ do not encrypt $r$ at all as part
of the decryption keys, but they mask the plaintext with $r$.
Consequently, we have that $\Hyb_0^0 \approx \Hyb_0^1$, which
completes the proof. \qed

\subsection{Proof of Key-Testability}\label{proof:kt}\nikhil{Move to Appendix? Commented out to check Pg Limit.}

First, we will argue the correctness requirement. Recall that
$\SKECRSKL.\qKG$ applies the following map to a BB84 state
$\ket{x}_\theta$, where $(x, \theta) \in
\bit^{\ctlen}\times\bit^{\ctlen}$, for every $i \in [\ctlen]$:

\begin{align}
\ket{u_i}_{\qreg{SKECD.CT_i}}\tensor\ket{v_i}_{\qreg{S_i}}
\ra
\ket{u_i}_{\qreg{SKECD.CT_i}}\tensor\ket{v_i\oplus
s_{i,u_i}}_{\qreg{S_i}}
\end{align}
where $\qreg{SKECD.CT_i}$ denotes the register holding the $i$-th
qubit of $\ket{x}_\theta$ and $\qreg{S_i}$ is a register initialized
to $\ket{0\ldots0}_{\qreg{S_i}}$.

Consider applying the algorithm $\SKECRSKL.\KeyTest$ in
superposition to the resulting state, i.e., performing the following
map, where $\qreg{\SKECD.CT} = \qreg{\SKECD.CT_1} \otimes \cdots
\otimes \qreg{\SKECD.CT_{\ctlen}}$ and $\qreg{S} =
\qreg{S_1} \otimes \cdots \otimes \qreg{S_{\ctlen}}$, and
$\qreg{KT}$ is initialized to $\ket{0}$:

\begin{align}
\ket{u}_{\qreg{SKECD.
CT}}\tensor\ket{v}_{\qreg{S}}\tensor\ket{\beta}_{\qreg{KT}} \ra
\ket{u}_{\qreg{SKECD.
CT}}\tensor\ket{v}_{\qreg{S}}\tensor\ket{\beta\oplus\SKECRSKL.
\KeyTest(\tk, u\|v)}_{\qreg{KT}}
\end{align}

where $\tk = T = t_{1, 0}\|t_{1, 1} \| \cdots \| t_{\ctlen,
0}\|t_{\ctlen, 1}$. Recall that $\SKECRSKL.\KeyTest$
outputs 1 if
and only if $\Check[t_{i,0}, t_{i, 1}](u_i,
v_i) = 1$ for every $i
\in [\ctlen]$, where $u_i, v_i$ denote the states of the registers
$\qreg{SKECD.CT_i}$ and $\qreg{S_i}$ respectively.
Recall that $\Check[t_{i,0},t_{i,1}](u_i, v_i)$ computes $f(v_i)$
and checks if it equals $t_{i, u_i}$. Since the construction chooses
$t_{i, u_i}$ such that $f(s_{i, u_i}) = t_{i, u_i}$, this check
always passes. Consequently, measuring register $\qreg{KT}$ always
produces outcome $1$.

We will now argue that the security requirement holds by showing the
following reduction to the security of the OWF $f$. Let $\qA$ be an
adversary that breaks the key-testability of $\SKECRSKL$. Consider
a QPT reduction $\qR$ that works as follows in the OWF security
experiment:

\begin{description}
\item Execution of $\qR^\qA$ in
$\expa{f,\qR}{owf}(1^\secp)$:

\begin{enumerate}
\item The challenger chooses $s \leftarrow \bit^\lambda$ and sends
$y \seteq f(s)$ to $\qR$.

\item $\qR$ runs $\SKECRSKL.\Setup(1^\secp)$ and initializes $\qA$
with input $\msk$.

\item $\qA$ requests $q$ decryption keys for some polynomial $q$.
$\qR$ picks a random index $k^\star \in [q]$. For every $k \neq
k^\star$, $\qR$ generates $(\qdk_k, \vk_k, \tk_k)$ by computing the
function $f$ as needed. For the index $k^\star$, $\qR$ computes
$(\qdk_{k^\star}, \vk_{k^\star}, \tk_{k^\star})$ as follows (the
difference is colored in \textcolor{red}{red}):

\begin{enumerate}
\item Parse $\msk=(\skecd.\sk, r)$.
\item Generate
$(\skecd.\qct,\skecd.\vk)\gets\SKECD.\qEnc(\skecd.\sk,r)$.
$\skecd.\vk$ is of the form
$(x,\theta)\in\bit^{\ctlen}\times\bit^{\ctlen}$, 
$\skecd.\qct$ is of the form
$\ket{\psi_1}_{\qreg{SKECD.
CT_1}}\tensor\cdots\tensor\ket{\psi_{\ctlen}}_{\qreg{SKECD.
CT_{\ctlen}}}$.
\item \textcolor{red}{Choose an index $i^\star \in [\ctlen]$ such
that $\theta[i^\star] = 0$. For every $i \in [\ctlen]$ such
that $i \neq i^\star$, generate $s_{i,b}\la\bit^\secp$ and compute
$t_{i,b}\la f(s_{i,b})$ for every $b\in\bit$. For $i = i^\star$,
set $t_{i^\star, 1 - x[i^\star]} \seteq y$. Then, generate $s_{i^\star,
x[i^\star]} \leftarrow \bit^\lambda$ and compute $t_{i^\star,
x[i^\star]} = f(s_{i^\star, x[i^\star]})$.} Set $T\seteq
t_{1,0}\|t_{1,1}\|\cdots\|t_{\ctlen,0}\|t_{\ctlen,1}$ and $S =
\{s_{i,0} \xor s_{i, 1}\}_{i \in [\ctlen] \; : \;\theta[i] = 1}$.

\item Prepare a register $\qreg{S_i}$ that is initialized to
$\ket{0\cdots0}_{\qreg{S_i}}$ for every $i\in[\ctlen]$. 

\item For every $i\in[\ctlen]$, apply the map
\begin{align}
    \ket{u_i}_{\qreg{SKECD.CT_i}}\tensor\ket{v_i}_{\qreg{S_i}}
    \ra
    \ket{u_i}_{\qreg{SKECD.CT_i}}\tensor\ket{v_i\oplus
    s_{i,u_i}}_{\qreg{S_i}}
\end{align}

    to the registers $\qreg{SKECD.CT_i}$ and $\qreg{S_i}$ and obtain the resulting state $\rho_i$.
\item Compute $\qdk_{k^\star}=(\rho_i)_{i\in{[\ctlen]}}$,
$\vk_{k^\star}=(x,\theta,S)$, and $\tk_{k^\star}=T$.
\end{enumerate}

\item $\qR$ sends $(\qdk_i, \vk_i, \tk_i)$ to $\qA$ for every $i \in
[q]$.

\item $\qA$ sends $(k, \dk, \msg)$ to $\qR$. If $k \neq k^\star$, $\qR$
aborts. Otherwise, $\qR$ parses $\dk$ 
as a string over the registers $\qreg{SKECD.CT} = \qreg{SKECD.CT_1}
\otimes \cdots \otimes
\qreg{SKECD.CT_{\ctlen}}$ and $\qreg{S} = \qreg{S_1} \otimes
\cdots \otimes \qreg{S_{\ctlen}}$ and measures the register
$\qreg{S_{i^\star}}$ to obtain an outcome $s_{i^\star}$. $\qR$
then sends $s_{i^\star}$ to the challenger.
\end{enumerate}
\end{description}

Notice that the view of $\qA$ is the same as its view in the
key-testability experiment, as only the value $t_{i^\star,
1-x[i^\star]}$ is generated differently by forwarding the value $y$,
but this value is distributed identically to the original value.
Note that in both cases, $\qA$ receives no information about a
pre-image of $t_{i^\star, 1-x[i^\star]}$.
Now, $\qR$ guesses the index $k$ that $\qA$ targets with probability
$\frac1q$. By assumption, we have that $\CDec(\dk, \ct)
\neq \msg$ where $\ct = \Enc(\msk, \msg)$. The value $\dk$ can be
parsed as a string over the registers $\qreg{SKECD.CT}$ and
$\qreg{S}$. Let $\widetilde{\dk}$ be the sub-string of $\dk$ on the
register $\qreg{SKECD.CT}$. Recall that $\CDec$ invokes the
algorithm $\SKECD.\CDec$ on input $\widetilde{\dk}$. We will
now recall a property of $\SKECD.\CDec$ that was specified in
Definition \ref{def:bb84}:

Let $(\qct, \vk = (x, \theta)) \gets \SKECD.\Enc(\skecd.\sk, r)$
where $\skecd.\sk \gets \SKECD.\KG(1^\secp)$. Now, let $u$ be any
arbitrary value such that $u[i] = x[i]$ for all $i : \theta[i] = 0$.
Then, the following holds:

$$\Pr\Big[\SKECD.\CDec(\skecd.\sk, u) = r\Big] \ge 1 -
\negl(\secp)$$

Consequently, if $\widetilde{\dk}$ is such that $\widetilde{\dk}[i]
= x[i]$ for all $i: \theta[i] = 0$, where $(x, \theta)$ are
specified by $\vk_{k^\star}$, then $\SKECD.\CDec(\skecd.\sk,
\widetilde{\dk})$ outputs the value $r$ with high probability. Since
$\CDec(\dk, \ct = (\skecd.\sk, r \xor \msg))$ outputs $r \xor \msg \xor
\SKECD.\CDec(\skecd.\sk, \widetilde{\dk})$, we have that
$\CDec(\dk, \ct = (\skecd.\sk, r \xor \msg)) = \msg$. Therefore, it must
be the case that there exists some index $i$ for which
$\widetilde{\dk}[i] \neq x[i]$. With probability $\frac{1}{\ctlen}$,
this happens to be the guessed value $i^\star$.  In this case, $\qA$
must output $s_{i^\star}$ on register $\qreg{S_i}$ such that
$f(s_{i^\star}) = t_{i^\star, 1 - x[i^\star]} = y$. This concludes
the proof. \qed

Since we have proved OT-IND-KLA security (Section
\ref{proof:ot-ind}) and Key-Testability (Section
\ref{proof:kt}), we can now state the following theorem:

\fi


\section{PKE-CR-SKL from LWE}\label{sec:PKE-CR-SKL}

In this section, we show how to achieve PKE-CR-SKL from SKE-CR-SKL, standard ABE, and compute-and-compare obfuscation.

\subsection{Construction}
We construct a PKE-CR-SKL scheme
$\PKECRSKL=\PKECRSKL.(\Setup,\qKG,\allowbreak\Enc,\qDec,\qVrfy)$ with
message space $\cM = \bit^{\ell}$ using the following building blocks.

\begin{itemize}
\item ABE scheme $\ABE=\ABE.(\Setup,\KG,\Enc,\Dec)$ for the following relation $R$.
\begin{description}
\item[$R(x,y)$:] Interpret $x$ as a circuit. Then, output $0$ (decryptable) if $\bot=x(y)$ and otherwise $1$.
\end{description}
\item Compute-and-Compare Obfuscation $\CCObf$ with the simulator $\CCSim$.


\item SKE-CR-SKL scheme with Key Testability
$\SKECRSKL=\SKECRSKL.(\Setup,\allowbreak\qKG,\Enc,\qDec,\qVrfy,\KeyTest)$. It also has the classical decryption algorithm $\SKECRSKL.\CDec$.
\end{itemize} 

The construction is as follows.

\begin{description}
\item[$\PKECRSKL.\Setup(1^\secp)$:] $ $
\begin{itemize}
    \item Generate $(\abe.\pk,\abe.\msk)\gets\ABE.\Setup(1^\secp)$.
    \item Generate $\ske.\msk\gets\SKECRSKL.\Setup(1^\secp)$.
    \item Output $\ek\seteq\abe.\pk$ and $\msk\seteq(\abe.\msk,\ske.\msk)$.
\end{itemize}

\item[$\PKECRSKL.\qKG(\msk)$:] $ $
\begin{itemize}
    \item Parse $\msk=(\abe.\msk,\ske.\msk)$.
    \item Generate $(\ske.\qdk,\ske.\vk,\ske.\tk)\gets\SKECRSKL.\qKG(\ske.\msk)$. We denote the register holding $\ske.\qdk$ as $\qreg{SKE.DK}$.
    \item Prepare a register $\qreg{ABE.SK}$ that is initialized to $\ket{0\cdots0}_{\qreg{ABE.SK}}$.
    \item Choose explicit randomness $\key \leftarrow \bit^{\secp}$.
    \item Apply the map
        $\ket{u}_{\qreg{SKE.DK}}\ket{v}_{\qreg{ABE.SK}}\ra\ket{u}_{\qreg{SKE.DK}}\ket{v\oplus\ABE.\KG(\abe.\msk,u,
        \key)}_{\qreg{ABE.SK}}$ to the registers $\qreg{SKE.DK}$ and
        $\qreg{ABE.SK}$, and obtain $\qdk$ over the registers
        $\qreg{SKE.DK}$ and $\qreg{ABE.SK}$.

    \item Output $\qdk$ and
        $\vk\seteq(\abe.\msk,\ske.\vk,\ske.\tk, \key)$.
\end{itemize}

\item[$\PKECRSKL.\Enc(\ek,\msg)$:] $ $
\begin{itemize}
    \item Parse $\ek=\abe.\pk$.
    \item Generate $\tlC\gets\CCSim(1^\secp,\pp_D,1)$, where $\pp_D$ consists of circuit parameters of $D$ defined in the security proof.
    \item Generate $\abe.\ct\gets\ABE.\Enc(\abe.\pk,\tlC,\msg)$.
    \item Output $\ct\seteq\abe.\ct$.
\end{itemize}
 
\item[$\PKECRSKL.\qDec(\qdk,\ct)$:] $ $
\begin{itemize}
   \item Parse $\ct=\abe.\ct$. We denote the register holding $\qdk$ as $\qreg{SKE.DK}\tensor\qreg{ABE.SK}$.
   \item Prepare a register $\qreg{MSG}$ that is initialized to $\ket{0\cdots0}_{\qreg{MSG}}$
   \item Apply the map $\ket{v}_{\qreg{ABE.SK}}\ket{w}_{\qreg{MSG}}\ra\ket{v}_{\qreg{ABE.SK}}\ket{w\oplus\ABE.\Dec(v,\abe.\ct)}_{\qreg{MSG}}$ to the registers $\qreg{ABE.SK}$ and $\qreg{MSG}$.
   \item Measure the register $\qreg{MSG}$ in the computational basis and output the result $\msg^\prime$.
\end{itemize}

\item[$\PKECRSKL.\qVrfy(\vk,\qdk^\prime)$:] $ $
\begin{itemize}
    \item Parse $\vk=(\abe.\msk,\ske.\vk,\ske.\tk, \key)$. We denote the register holding $\qdk^\prime$ as $\qreg{SKE.DK}\tensor\qreg{ABE.SK}$.
    \item Prepare a register $\qreg{SKE.KT}$ that is initialized to $\ket{0}_{\qreg{SKE.KT}}$.
    \item Apply the map $\ket{u}_{\qreg{SKE.DK}}\ket{\beta}_{\qreg{SKE.KT}}\ra\ket{u}_{\qreg{SKE.DK}}\ket{\beta\oplus\SKECRSKL.\KeyTest(\ske.\tk,u)}_{\qreg{SKE.KT}}$ to the registers $\qreg{SKE.DK}$ and $\qreg{SKE.KT}$.
    \item Measure $\qreg{SKE.KT}$ in the computational basis and output $\bot$ if the result is $0$. Otherwise, go to the next step.
    \item Apply the map
        $\ket{u}_{\qreg{SKE.DK}}\ket{v}_{\qreg{ABE.SK}}\ra\ket{u}_{\qreg{SKE.DK}}\ket{v\oplus\ABE.\KG(\abe.\msk,u,
        \key)}_{\qreg{ABE.SK}}$ to the registers $\qreg{SKE.DK}$ and $\qreg{ABE.SK}$.
    \item Trace out the register $\qreg{ABE.SK}$ and obtain $\ske.\qdk^\prime$ over $\qreg{SKE.DK}$.
    \item Output $\top$ if $\top=\SKECRSKL.\qVrfy(\ske.\vk,\ske.\qdk')$
        and $\bot$ otherwise.

\end{itemize}
\end{description}

\paragraph{Decryption correctness.}
The key
$\qdk$ output by $\PKECRSKL.\qKG$ is of the form $\sum_u \alpha_u
\ket{u}_{\qreg{SKE.DK}}\ket{\abe.\sk_u}_{\qreg{ABE.SK}}$, where
$\abe.\sk_u\gets\ABE.\KG(\abe.\msk,u, \key)$.
Let $\ct\gets\PKECRSKL.\Enc(\ek,\msg)$.
On applying $\ket{v}_{\qreg{ABE.SK}}\ket{w}_{\qreg{MSG}}\ra\ket{v}_{\qreg{ABE.SK}}\ket{w\oplus\ABE.\Dec(v,\abe.\ct)}_{\qreg{MSG}}$ to $\sum_u \alpha_u \ket{u}_{\qreg{SKE.DK}}\ket{\abe.\sk_u}_{\qreg{ABE.SK}} \tensor \ket{0\cdots0}_{\qreg{MSG}}$, with overwhelming probability, the result is negligibly close to
\begin{align}
\sum_u \alpha_u \ket{u}_{\qreg{SKE.DK}}\ket{\abe.\sk_u}_{\qreg{ABE.SK}} \tensor \ket{\msg}_{\qreg{MSG}}
\end{align}
since $\tlC(u)=\bot$ and thus $R(\tlC,u)=0$ for $\tlC\gets\CCSim(1^\secp,\pp_D,1)$ and almost every string $u$.
Therefore, we see that $\PKECRSKL$ satisfies decryption correctness.

\paragraph{Verification correctness.}
Let $\qdk\gets\PKECRSKL.\qKG(\msk)$.
It is clear that the state $\ske.\qdk^\prime$ obtained when computing $\PKECRSKL.\qVrfy(\vk,\qdk)$ is the same as $\ske.\qdk$ generated when generating $\qdk$.
Therefore, the verification correctness of $\PKECRSKL$ follows from that of $\SKECRSKL$.

\subsection{Proof of IND-KLA Security}
Let $\qA$ be an adversary for the IND-KLA security of $\PKECRSKL$.
We consider the following sequence of experiments.
\begin{description}
\item[$\hybi{0}^\coin$:]This is $\expb{\PKECRSKL,\qA}{ind}{kla}(1^\secp,\coin)$.
\begin{enumerate}
\item The challenger $\qCh$ generates $(\abe.\pk,\abe.\msk)\gets\ABE.\Setup(1^\secp)$ and $\ske.\msk\gets\SKECRSKL.\Setup(1^\secp)$, and sends $\ek\seteq\abe.\pk$ to $\qA$.

\item $\qA$ requests $q$ decryption keys for some polynomial $q$.
$\qCh$ generates $\qdk_i$ as follows for every $i\in[q]$:

\begin{itemize}
    \item Generate $(\ske.\qdk_i,\ske.\vk_i,\ske.\tk_i)\gets\SKECRSKL.\qKG(\ske.\msk)$. We denote the register holding $\ske.\qdk_i$ as $\qreg{SKE.DK_i}$.
    \item Prepare a register $\qreg{ABE.SK_i}$ that is initialized to $\ket{0\cdots0}_{\qreg{ABE.SK_i}}$.
    \item Choose explicit randomness $\key_i \leftarrow \bit^\secp$.
    \item Apply the map
        $\ket{u}_{\qreg{SKE.DK_i}}\ket{v}_{\qreg{ABE.SK_i}}\ra\ket{u}_{\qreg{SKE.DK_i}}\ket{v\oplus\ABE.\KG(\abe.\msk,u,
        \key_i)}_{\qreg{ABE.SK_i}}$ to the registers $\qreg{SKE.DK_i}$ and $\qreg{ABE.SK_i}$, and obtain $\qdk_i$ over the registers $\qreg{SKE.DK_i}$ and $\qreg{ABE.SK_i}$.
\end{itemize}
$\qCh$ sends $\qdk_1,\ldots,\qdk_q$ to $\qA$.
\item $\qA$ can get access to the following (stateful) verification
oracle $\Oracle{\qVrfy}$ where $V_i$ is initialized to be $\bot$:

\begin{description}
\item[ $\Oracle{\qVrfy}(i,\widetilde{\qdk})$:] It computes $d$ as follows.  

\begin{enumerate}
    \item Let the register holding $\widetilde{\qdk}$ be $\qreg{SKE.DK_i}\tensor\qreg{ABE.SK_i}$.
    \item Prepare a register $\qreg{SKE.KT_i}$ that is initialized to $\ket{0}_{\qreg{SKE.KT_i}}$.
    \item Apply the map $\ket{u}_{\qreg{SKE.DK_i}}\ket{\beta}_{\qreg{SKE.KT_i}}\ra\ket{u}_{\qreg{SKE.DK_i}}\ket{\beta\oplus\SKECRSKL.\KeyTest(\ske.\tk_i,u)}_{\qreg{SKE.KT_i}}$ to the registers $\qreg{SKE.DK_i}$ and $\qreg{SKE.KT_i}$.
    \item Measure $\qreg{SKE.KT_i}$ in the computational basis and
        set $d\seteq\bot$ if the result is $0$. Otherwise, go to the next step.
    \item Apply the map
        $\ket{u}_{\qreg{SKE.DK_i}}\ket{v}_{\qreg{ABE.SK_i}}\ra\ket{u}_{\qreg{SKE.DK_i}}\ket{v\oplus\ABE.\KG(\abe.\msk,u,\key_i)}_{\qreg{ABE.SK_i}}$ to the registers $\qreg{SKE.DK_i}$ and $\qreg{ABE.SK_i}$.
    \item Trace out the register $\qreg{ABE.SK_i}$ and obtain $\ske.\qdk^\prime$ over $\qreg{SKE.DK_i}$.
    \item Set $d\seteq\top$ if
        $\top=\SKECRSKL.\qVrfy(\ske.\vk_i,\ske.\qdk')$ and set $d\seteq\bot$ otherwise.
It returns $d$ to $\qA$. Finally, if $V_i=\bot$ and $d=\top$, it updates $V_i\seteq \top$. 
\end{enumerate}
            \end{description}
            \item $\qA$ sends $(\msg_0^*,\msg_1^*)\in \cM^2$ to
                $\qCh$. If $V_i=\bot$ for some $i\in[q]$,
                $\qCh$ outputs $0$ as the final output of this
                experiment. Otherwise, $\qCh$ generates
                $\tlC^*\gets\CCSim(1^\secp, \pp_D, 1)$ and $\abe.\ct^*\gets\ABE.\Enc(\abe.\pk,\tlC^*,\msg_\coin^*)$, and sends $\ct^*\seteq\abe.\ct^*$ to $\qA$.
            \item $\qA$ outputs $\coin^\prime$. $\qCh$ outputs $\coin'$ as the final output of the experiment.
        \end{enumerate}
        
\item[$\hybi{1}^\coin$:]This is the same as $\hybi{0}^\coin$ except
that $\tlC^*$ is generated as
$\tlC^*\gets\CCObf(1^\secp,D[\ske.\ct^*], \lock, 0)$, where
$\ske.\ct^*\gets\SKECRSKL.\Enc(\ske.\msk,0^\secp)$,
$\lock\gets\bit^\secp$, and $D[\ske.\ct^*](x)$ is a circuit that has
$\ske.\ct^*$ hardwired and outputs $\SKECRSKL.\CDec(x,\ske.\ct^*)$.

\end{description}

We pick $\lock$ as a uniformly random string that is completely independent of other variables such as $\ske.\ct^*$.
Thus, from the security of $\CCObf$, we have $\Hyb_0^\coin \approx
\Hyb_1^\coin$.

\begin{description}
\item[$\hybi{2}^\coin$:]This is the same as $\hybi{1}^\coin$ except
that $\ske.\ct^*$ hardwired into the obfuscated circuit $\tlC^*$ is
generated as $\ske.\ct^*\gets\SKECRSKL.\Enc(\ske.\msk,\lock)$.
\end{description}

From the OT-IND-KLA security of $\SKECRSKL$, we can show that
$\hybi{1}^\coin \approx \hybi{2}^\coin$. Suppose that $\Hyb_1^\coin
\not \approx \Hyb_2^\coin$ and $\qD$ is a corresponding
distinguisher. We consider the following reduction
$\qR$:

\begin{description}
\item Execution of $\qR^{\qD}$ in
$\expc{\SKECRSKL,\qR}{ot}{ind}{kla}(1^\secp, b)$:

\begin{enumerate}
\item $\qCh$ runs $\ske.\msk \gets \SKECRSKL.\Setup(1^\secp)$ and initializes $\qR$
with the security parameter $1^\secp$.
\item $\qR$ generates $(\abe.\pk, \abe.\msk) \gets
\ABE.\Setup(1^\secp)$ and sends $\ek \seteq \abe.\pk$ to $\qD$.
\item $\qD$ requests $q$ decryption keys for some polynomial $q$.
$\qR$ requests $q$ decryption keys. $\qCh$ generates $(\ske.\qdk_i,
\ske.\vk_i, \ske.\tk_i) \gets \SKECRSKL.\qKG(\msk)$ for every $i \in [q]$
and sends $(\ske.\qdk_i, \ske.\tk_i)_{i\in[q]}$ to $\qR$.
\item $\qR$ computes $\qdk_1, \ldots, \qdk_q$ as in Step $2.$ of
$\Hyb_0^\coin$, except that the received values
$(\ske.\qdk_i)_{i \in [q]}$ are used instead of the original ones.
\item $\qR$ simulates the access to $\Oracle{\qVrfy}(i,
\widetilde{\qdk})$ for $\qD$ as follows:
\begin{description}
\item $\Oracle{\qVrfy}(i, \widetilde{\qdk}):$
\begin{enumerate}
\item Perform Step $3.(\textrm{a})$-$3.(\textrm{f})$ of $\Hyb_0^\coin$ to obtain
$\ske.\qdk'$, but using the received value $\ske.\tk_i$
instead of the original one.
\item Set $d \seteq \top$ if $\top =
\SKECRSKL.\Oracle{\qVrfy}(i, \ske.\qdk')$ and set $d \seteq \bot$
otherwise.
It returns $d$ to $\qD$. Finally, if $V_i = \bot$ and $d = \top$, it
updates $V_i = \top$.
\end{enumerate}
\end{description}
\item $\qR$ sends $(\ske.\msg_0^*, \ske.\msg_1^*) \seteq (0^\secp, \lock)$
to $\qCh$ and receives $\ske.\ct^* \gets \SKECRSKL.\Enc(\ske.\msk,
\ske.\msg^*_b)$.
\item $\qD$ sends $(\msg_0^*, \msg_1^*) \in \cM^2$ to $\qR$. If $V_i =
\bot$ for some $i\in[q]$, $\qR$ outputs $0$. Otherwise, $\qR$
generates $\tlC^* \gets \CCObf(1^\secp, D[\ske.\ct^*], \lock, 0)$
and $\abe.\ct^* \gets \ABE.\Enc(\abe.\pk, \tlC^*, \msg^*_\coin)$ and
sends $\ct^* \seteq \abe.\ct^*$ to $\qD$.
\item $\qD$ outputs a bit $b'$. $\qR$ outputs $b'$ and $\qCh$
outputs $b'$ as the final output of the experiment.
\end{enumerate}
\end{description}

It is easy to see that the view of $\qD$ is the same as that in
$\Hyb^\coin_2$ when $\lock$ is encrypted in $\ske.\ct^*$ and that of
$\Hyb^\coin_1$ when $0^\secp$ is encrypted. Moreover, for $\qD$ to
distinguish between the two hybrids, it must be the case that $V_i =
\top$ for all $i \in [q]$, which directly implies that the $q$
analogous values checked by $\SKECRSKL.\Oracle{\qVrfy}$ must also be
$\top$.  Consequently, $\qR$ breaks the OT-IND-KLA security of
$\SKECRSKL$. Therefore, it must be that $\Hyb_1^\coin \approx
\Hyb_2^\coin$.

\begin{description}
\item[$\hybi{3}^\coin$:]This is the same as $\hybi{2}^\coin$ except
that $\qdk_i$ is generated as follows for every $i\in[q]$. (The
difference is colored in red.)
\begin{itemize}
    \item Generate $(\ske.\qdk_i,\ske.\vk_i,\ske.\tk_i)\gets\SKECRSKL.\qKG(\ske.\msk)$. We denote the register holding $\ske.\qdk_i$ as $\qreg{SKE.DK_i}$.
    \textcolor{red}{
        \item Prepare a register $\qreg{ABE.R_i}$ that is initialized to $\ket{0}_{\qreg{ABE.R_i}}$.
    \item Apply the map $\ket{u}_{\qreg{SKE.DK_i}}\ket{\beta}_{\qreg{ABE.R_i}}\ra\ket{u}_{\qreg{SKE.DK_i}}\ket{\beta\oplus R(\tlC^*,u)}_{\qreg{ABE.R_i}}$
to the registers $\qreg{SKE.DK_i}$ and $\qreg{ABE.R_i}$. (Note that we
can generate $\tlC^*$ at the beginning of the game.)
    \item Measure $\qreg{ABE.R_i}$ in the computational basis and set $\qdk_i\seteq\bot$ if the result is $0$. Otherwise, go to the next step.
    }
    \item Prepare a register $\qreg{ABE.SK_i}$ that is initialized to $\ket{0\cdots0}_{\qreg{ABE.SK_i}}$.

    \item Sample explicit randomness $\key_i \gets \bit^\secp$.
    \item Apply the map
        $\ket{u}_{\qreg{SKE.DK_i}}\ket{v}_{\qreg{ABE.SK_i}}\ra\ket{u}_{\qreg{SKE.DK_i}}\ket{v\oplus\ABE.\KG(\abe.\msk,u,\key_i)}_{\qreg{ABE.SK_i}}$ to the registers $\qreg{SKE.DK_i}$ and $\qreg{ABE.SK_i}$, and obtain $\qdk_i$ over the registers $\qreg{SKE.DK_i}$ and $\qreg{ABE.SK_i}$.
\end{itemize}
\end{description}

From the decryption correctness of $\SKECRSKL$, the added procedure that checks $R(\tlC^*,u)$ in superposition does not affect the final state $\qdk_i$ with overwhelming probability since $R(\tlC^*,u)=1$ in this hybrid for any $u$ that appears in $\ske.\qdk_i$ when describing it in the computational basis.
Therefore, we have $\Hyb_2^\coin \approx \Hyb_3^\coin$.

\begin{description}
\item[$\hybi{4}^\coin$:]This is the same as $\hybi{3}^\coin$ except
that the oracle $\Oracle{\qVrfy}$ behaves as follows. (The
difference is colored in red.)
\begin{description}
\item[ $\Oracle{\qVrfy}(i,\widetilde{\qdk})$:] It computes $d$ as follows.  
\begin{enumerate}[(a)]
    \item Let the register holding $\widetilde{\qdk}$ be $\qreg{SKE.DK_i}\tensor\qreg{ABE.SK_i}$.
    \item Prepare a register $\qreg{SKE.KT_i}$ that is initialized to $\ket{0}_{\qreg{SKE.KT_i}}$.
    \item Apply the map $\ket{u}_{\qreg{SKE.DK_i}}\ket{\beta}_{\qreg{SKE.KT_i}}\ra\ket{u}_{\qreg{SKE.DK_i}}\ket{\beta\oplus\SKECRSKL.\KeyTest(\ske.\tk_i,u)}_{\qreg{SKE.KT_i}}$ to the registers $\qreg{SKE.DK_i}$ and $\qreg{SKE.KT_i}$.
    \item Measure $\qreg{SKE.KT_i}$ in the computational basis and set $d\seteq\bot$ if the result is $0$. Otherwise, go to the next step.
    \textcolor{red}{
    \item Prepare a register $\qreg{ABE.R_i}$ that is initialized to $\ket{0}_{\qreg{ABE.R_i}}$.
    \item Apply the map $\ket{u}_{\qreg{SKE.DK_i}}\ket{\beta}_{\qreg{ABE.R_i}}\ra\ket{u}_{\qreg{SKE.DK_i}}\ket{\beta\oplus R(\tlC^*,u)}_{\qreg{ABE.R_i}}$ to the registers $\qreg{SKE.DK_i}$ and $\qreg{ABE.R_i}$.
    \item Measure $\qreg{ABE.R_i}$ in the computational basis and set $d\seteq\bot$ if the result is $0$. Otherwise, go to the next step.
    }
    \item Apply the map
        $\ket{u}_{\qreg{SKE.DK_i}}\ket{v}_{\qreg{ABE.SK_i}}\ra\ket{u}_{\qreg{SKE.DK_i}}\ket{v\oplus\ABE.\KG(\abe.\msk,u,\key_i)}_{\qreg{ABE.SK_i}}$ to the registers $\qreg{SKE.DK_i}$ and $\qreg{ABE.SK_i}$.
    \item Trace out the register $\qreg{ABE.SK_i}$ and obtain $\ske.\qdk^\prime$ over $\qreg{SKE.DK_i}$.
    \item Set $d\seteq\top$ if $\top=\SKECRSKL.\qVrfy(\ske.\vk_i,
        \ske.\qdk')$ and set $d\seteq\bot$ otherwise.
    Return $d$ to $\qA$. Finally, if $V_i=\bot$ and $d=\top$, update $V_i\seteq \top$.
\end{enumerate}
            \end{description}
\end{description}


Suppose there exists a QPT distinguisher $\qD$ that has
non-negligible advantage in distinguishing $\Hyb_3^\coin$ and
$\Hyb_4^\coin$. Let $\qD$ make $q = \poly(\lambda)$ many queries to
the oracle $\Oracle{\qVrfy}(\cdot, \cdot)$. We will now consider the
following QPT algorithm $\qA_\oh$ with access to an oracle
$\Oracle{\qKG}$ and an oracle $\cO$ that runs $\qD$ as follows:

\begin{description}
\item $\underline{\qA_\oh^{\Oracle{\qKG}, \cO}(\abe.\pk, \ske.\msk)}:$
\begin{enumerate}
\item $\qA_\oh$ initializes $\qD$ with input $\ek = \abe.\pk$.
\item When $\qD$ requests $q$ decryption keys, $\qA_\oh$ queries
$\Oracle{\qKG}$ on input $q$ and receives $(\{\qdk_i\}_{i \in [q]}, \{\vk_i\}_{i \in [q]})$. It forwards $\{\qdk_i\}_{i\in [q]}$ to $\qD$.

\item $\qA_\oh$ simulates the access of $\Oracle{\qVrfy}$ for $\qD$ as
follows:
\begin{description}
\item $\underline{\Oracle{\qVrfy}(y, \widetilde{\qdk})}:$
\begin{enumerate}
\item Execute Steps (a)-(e) of $\Oracle{\qVrfy}$ as in
    $\Hyb_4^\coin$.
    \item Apply the map
        $\ket{u}_{\qreg{SKE.DK_y}}\ket{\beta}_{\qreg{ABE.R_y}}\ra\ket{u}_{\qreg{SKE.DK_y}}\ket{\beta\oplus
        \calO(u)}_{\qreg{ABE.R_y}}$ to the registers
        $\qreg{SKE.DK_y}$ and $\qreg{ABE.R_y}$. 
\item Execute Steps (g)-(j) of $\Oracle{\qVrfy}$ as in
$\Hyb_4^\coin$.
\end{enumerate}
\end{description}

\item $\qD$ sends $(\msg_0^*, \msg_1^*) \in \cM^2$ to $\qA_\oh$. If $V_i =
\bot$ for some $i \in [q]$, $\qA_\oh$ outputs $0$. Otherwise,
$\qA_\oh$ generates $\tlC^* \gets \CCObf(1^\secp, D[\ske.\ct^*], \lock,
0)$, where $\ske.\ct^* \gets \SKECRSKL.\Enc(\ske.\msk, \lock)$.
It then generates $\abe.\ct^* \gets\ABE.\Enc(\abe.\pk, \tlC^*, \msg^*_\coin)$ and sends $\ct^* \seteq\abe.\ct^*$ to $\qD$.
\item $\qD$ outputs a guess $b'$. $\qA_\oh$ outputs $b'$.
\end{enumerate}
\end{description}

Let $H$ be an oracle that for every input $u$, outputs $1$.
Consider now the extractor $\qB_\oh^{\Oracle{\qKG}, H}$
as specified by the O2H Lemma (Lemma \ref{lem:O2H}). We will now
construct a reduction $\qR$ that runs $\qB_\oh$ by simulating the oracles
$\Oracle{\qKG}$ and $H$ for $\qB_\oh$, and breaks the
key-testability of the $\SKECRSKL$ scheme.

\begin{description}
\item Execution of $\qR$ in
$\expb{\SKECRSKL,\qR}{key}{test}(1^\secp)$:

\begin{enumerate}
\item The challenger $\qCh$ runs $\ske.\msk \leftarrow
\SKECRSKL.\Setup(1^\secp)$ and
initializes $\qR$ with input $\ske.\msk$.
\item $\qR$ samples $(\abe.\pk, \abe.\msk) \gets
\ABE.\Setup(1^\secp)$ and initializes $\qB_\oh$ with the input
$(\abe.\pk, \ske.\msk)$.
\item When $\qB_\oh$ queries input $q$ to $\Oracle{\qKG}$, $\qR$
generates the decryption-keys $\qdk_1, \ldots, \qdk_q$ in the same way
as in $\Hyb_3^\coin$ and sends them to $\qB_\oh$.

\item When $\qB_\oh$ queries an input $u$ to $H$, $\qR$ responds with $1$.

\item $\qB_\oh$ outputs measured index $y$ and measurement outcome $\dk$. $\qR$ sends $(y, \dk, \lock)$
to $\qCh$.

\end{enumerate}
\end{description}

We will now claim that with non-negligible probability, $\qR$ obtains
values $\dk$ and $y$ such that $R(\tlC^*, \dk) = 0$. By the definition of
$R$ and $\tlC^*$ and the decryption correctness of $\SKECRSKL$, this
will imply that $\SKECRSKL.\CDec(\dk, \allowbreak\ske.\ct^\star) \neq
\lock$. Moreover, $\KeyTest(\ske.\tk_y, \dk)$ also holds.
Consequently, this will imply $\qR$ breaks the key-testability of
$\SKECRSKL$. To prove this, we will rely on the One-Way to Hiding
(O2H) Lemma (Lemma \ref{lem:O2H}).  Consider an oracle $G$ which takes
as input $u$ and outputs $R(\tlC^*, u)$ and an oracle $H$ which takes
as input $u$ and outputs $1$. Notice that if the oracle $\calO = G$,
then the view of $\qD$ as run by $\qA_\oh$ is the same as in
$\Hyb_4^\coin$, while if $\calO = H$, the view of $\qD$ is the same as
in $\Hyb_3^\coin$. By the O2H Lemma, we have the following, where $z =
(\abe.\pk, \ske.\msk)$.

\begin{align}
\abs{\Pr[\qA_\oh^{\Oracle{\qKG},
H}(z)=1]-\Pr[\qA_\oh^{\Oracle{\qKG}, G}(z)=1]} \leq
2q\cdot\sqrt{\Pr[\qB_\oh^{\Oracle{\qKG}, H}(z)\in S]}
\enspace.
\end{align}
where $S$ is a set where
the oracles $H$ and $G$ differ, which happens only for inputs $u$ s.t.
$R(\tlC^*, u) = 0$. Since $\qR$ obtains $\dk$ and $y$ as the
output of $\qB_\oh^{\Oracle{\qKG},H}(z)$, the argument
follows that $\Hyb_3^\coin \approx \Hyb_4^\coin$.

\begin{description}
\item[$\hybi{5}^\coin$:]This is the same as $\hybi{4}^\coin$ except that $\ct^*\seteq\abe.\ct^*$ is generated as $\abe.\ct^*\gets\ABE.\Enc(\abe.\pk,\tlC^*,0^{\msglen})$.
\end{description}

The view of $\qA$ in $\hybi{4}^\coin$ and $\hybi{5}^\coin$ can be
simulated with $\abe.\pk$ and the access to the quantum key
generation oracle $\Oracle{qkg}$. This is because before $\ABE.\KG$
is required to be applied in both the generation of $\{\qdk_i\}_{i
\in [q]}$ and to compute the responses of $\Oracle{\qVrfy}$, the
relation check $R(\tlC^*, u)$ is already applied in
superposition. Thus, we have $\Hyb_4^\coin \approx \Hyb_5^\coin$.

Lastly, $\hybi{5}^{0}$ and $\hybi{5}^{1}$ are exactly the same experiment and thus we have $\abs{\Pr[\hybi{5}^0=1]-\Pr[\hybi{5}^1=1]}=\negl(\secp)$.
Then, from the above arguments, we obtain
\begin{align}
   & \abs{\Pr[\expb{\PKECRSKL,\qA}{ind}{kla}(1^\secp,0)=1]-\Pr[\expb{\PKECRSKL,\qA}{ind}{kla}(1^\secp,1)=1]}\\
=&\abs{\Pr[\hybi{0}^0=1]-\Pr[\hybi{0}^1=1]}\le\negl(\secp).
\end{align}
This completes the proof. \qed

Given the fact that SKE-CR-SKL with Key-Testability (implied by
BB84-based SKE-CD and OWFs), Compute-and-Compare Obfuscation, and
Ciphertext-Policy ABE for General Circuits are all implied by the
LWE assumption, we state the following theorem:

\begin{theorem}
There exists a PKE-CR-SKL scheme satisfying IND-KLA security,
assuming the polynomial hardness of the LWE assumption.
\end{theorem}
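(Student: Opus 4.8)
\medskip
\noindent\textbf{Proof proposal.}
The plan is to instantiate the generic $\PKECRSKL$ construction of \cref{sec:PKE-CR-SKL} with LWE-based components and then verify its three required properties. First I would collect the ingredients. For the ABE scheme I would use the quantum selectively secure ABE for all polynomial-time computable relations, which holds under LWE by the lattice-trapdoor analysis of the Boneh et al.\ scheme \cite{EC:BGGHNS14} discussed in \cref{sec-quantum-secure-ABE}. For $\CCObf$ I would use the compute-and-compare obfuscator for distributions that output a uniformly random lock value independent of the program and auxiliary input, which holds under LWE \cite{FOCS:GoyKopWat17,FOCS:WicZir17}. For the $\SKECRSKL$ building block (with the classical decryption property and key testability) I would use the construction of \cref{sec:SKECRSKL-KT}, which reduces to a BB84-based SKE-CD scheme and a one-way function; the former follows from CPA-secure secret-key encryption by \cref{thm:SKECD-BB84}, and both CPA-secure SKE and OWFs follow from LWE. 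Plugging these into $\PKECRSKL.(\Setup,\qKG,\Enc,\qDec,\qVrfy)$ yields the candidate scheme.

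Second, correctness is read off from \cref{sec:PKE-CR-SKL}: decryption correctness uses that $\tlC\gets\CCSim(1^\secp,\pp_D,1)$ outputs $\bot$ on (almost) every input, so $R(\tlC,u)=0$ for every $u$ in the computational-basis support of $\ske.\qdk$, and applying $\ABE.\Dec$ in superposition followed by the gentle-measurement lemma leaves $\qdk$ essentially undisturbed; verification correctness reduces to that of $\SKECRSKL$, since the state $\ske.\qdk'$ recovered inside $\qVrfy$ on an honest $\qdk$ equals the original $\ske.\qdk$.

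Third, for IND-KLA security I would run the hybrid chain $\hybi{0}^\coin,\dots,\hybi{5}^\coin$. The step $\hybi{0}\approx\hybi{1}$ is distributional indistinguishability of $\CCObf$, applicable because $\lock$ is sampled uniformly and independently of $\ske.\ct^*$. The step $\hybi{1}\approx\hybi{2}$ is OT-IND-KLA security of $\SKECRSKL$: the reduction embeds the SKE challenge ciphertext (of $0^\secp$ versus $\lock$) as the value hardwired in $D$, simulates verification queries by invoking the $\SKECRSKL$ verification oracle, and uses that the distinguisher wins only when all $q$ leased keys verify. The steps $\hybi{2}\approx\hybi{3}$ and $\hybi{3}\approx\hybi{4}$ insert, in superposition, the relation check $R(\tlC^*,\cdot)$ into key generation and into the verification oracle; these are justified by decryption correctness of $\SKECRSKL$ together with the One-Way-to-Hiding lemma with auxiliary quantum oracle (\cref{lem:O2H}), where the potential distinguishing input extracted by the O2H extractor is a string $\dk$ with $R(\tlC^*,\dk)=0$ but $\KeyTest(\ske.\tk,\dk)=1$, which is exactly the event ruled out by key testability (\cref{def:key-testability-SKE}). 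Finally $\hybi{4}\approx\hybi{5}$ is quantum selective security of $\ABE$, applicable because in these hybrids $R(\tlC^*,\cdot)$ is already applied in superposition before $\ABE.\KG$ is ever invoked, so the entire view is simulatable from $\abe.\pk$ and superposition access to $\Oracle{qkg}$. Since $\hybi{5}^0$ and $\hybi{5}^1$ are identical experiments, telescoping yields negligible advantage, establishing IND-KLA security; combined with the first two steps this proves the theorem.

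The step I expect to be the main obstacle is the simulation of the verification oracle in the ABE hybrid, which is precisely why $\SKECRSKL$ must be augmented with the classical decryption property and key testability rather than used as a plain SKE-CR-SKL: a malformed returned key may carry a superposition term $\widetilde{u}$ with $R(\tlC^*,\widetilde{u})=0$, i.e.\ $\CDec(\widetilde{u},\ske.\ct^*)\ne\lock$, and such terms cannot be answered through the ABE key-generation oracle. The resolution is to apply $\KeyTest$ in superposition and post-select on acceptance, after which every surviving term decrypts correctly; the residual probability that the adversary produces an accepting-but-incorrect $\dk$ is bounded by one-wayness of $f$ via the Lamport-style pre-image register entangled with the BB84 ciphertext. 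Once this is in place, the remaining hybrid transitions are routine bookkeeping.
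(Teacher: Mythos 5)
Your proposal is correct and follows essentially the same route as the paper: the same instantiation (quantum selectively secure ABE from \cite{EC:BGGHNS14}, compute-and-compare obfuscation, and the SKE-CR-SKL scheme of \cref{sec:SKECRSKL-KT} built from BB84-based SKE-CD and OWFs, all under LWE), and the same hybrid chain $\hybi{0}^\coin,\dots,\hybi{5}^\coin$ with each transition justified exactly as in the paper, including the use of \cref{lem:O2H} together with key testability to handle the verification oracle before invoking ABE security. Nothing is missing.
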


\nikhil{At this point, we can probably just state the ABE-CR-SKL and ABE-CR\textsuperscript{2}-SKL theorems and move them to the Appendix.}

\ifnum\submission=0

\newcommand{\decryptable}{\mathsf{decryptable}}
\newcommand{\undecryptable}{\mathsf{undecryptable}}

\section{ABE-CR-SKL from LWE}\label{sec:ABE-SKL}
In this section, we show how to achieve ABE-CR-SKL from the LWE assumption.
To this end, we also introduce SKFE-CR-SKL with classical decryption and key-testability.
First, we recall the standard SKFE.

\begin{definition}[Secret-Key Functional Encryption]\label{def:SKFE}
An SKFE scheme $\SKFE$ for the functionality $F: \cX \times
\cY \ra \cZ$ is a tuple of four PPT algorithms $(\Setup, \KG, \Enc, \Dec)$. 
\begin{description}
\item[$\Setup(1^\secp)\ra\msk$:] The setup algorithm takes a security parameter $1^\lambda$, and outputs a master secret key $\msk$.
\item[$\KG(\msk,y)\ra\sk_y$:] The key generation algorithm takes a master secret key $\msk$ and a function $y \in \cY$, and outputs a functional decryption key $\sk_y$.

\item[$\Enc(\msk,x)\ra\ct$:] The encryption algorithm takes a master secret key $\msk$ and a plaintext $x \in \cX$, and outputs a ciphertext $\ct$.

\item[$\Dec(\sk_y,\ct)\ra z$:] The decryption algorithm takes a
functional decryption key $\sk_y$ and a ciphertext $\ct$, and
outputs  $z \in \{ \bot \} \cup \cZ$.

\item[Correctness:] We require that for every $x \in \cX$ and $y\in\cY$, we have that

\[
\Pr\left[
\Dec(\sk_y, \ct) = F(x,y)
 \ :
\begin{array}{rl}
 &\msk \la \Setup(1^\secp)\\
 & \sk_y \gets \KG(\msk,y) \\
 &\ct \gets \Enc(\msk,x)
\end{array}
\right] \ge 1 - \negl(\secp).
\]
\end{description}
\end{definition}

\begin{definition}[Selective Single-Ciphertext Security]\label{def:sel-1ct-SKFE}
We formalize the experiment
$\expb{\SKFE,\qA}{adp}{ind}(1^\secp,\coin)$ between an adversary
$\qA$ and a challenger for an SKFE scheme for the functionality $F:\cX\times\cY\ra\cZ$ as follows:
        \begin{enumerate}
            \item Initialized with $1^\secp$, $\qA$ outputs $(x_0^*,x_1^*)$. The challenger $\Ch$ runs $\msk\gets\Setup(1^\secp)$ and sends $\ct^*\gets\Enc(\msk,x_\coin^*)$ to $\qA$.
            \item $\qA$ can get access to the following oracle.
            \begin{description}
            \item[$\Oracle{\qKG}(y)$:] Given $y$, if $F(x_0^*,y)\ne F(x_1^*,y)$, returns $\bot$. Otherwise, it returns $\sk_y\gets\KG(\msk,y)$.
            \end{description}
            \item $\qA$ outputs a guess $\coin^\prime$ for $\coin$. 
            $\qCh$ outputs $\coin'$ as the final output of the experiment.
        \end{enumerate}
        
We say that $\SKFE$ satisfies selective single-ciphertext security if, for any QPT $\qA$, it holds that
\begin{align}
\advb{\SKFE,\qA}{sel}{1ct}(1^\secp) \seteq \abs{\Pr[\expb{\SKFE,\qA}{sel}{1ct} (1^\secp,0) \ra 1] - \Pr[\expb{\SKFE,\qA}{sel}{1ct} (1^\secp,1) \ra 1] }\le \negl(\secp).
\end{align}
\end{definition}

\begin{theorem}[\cite{C:GorVaiWee12}]\label{thm:1ct_adaptive_function_private_SKFE}
    Assuming the existence of OWFs, there exists selective single-ciphertext secure SKFE.
\end{theorem}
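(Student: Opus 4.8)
The plan is to give the standard garbled‑circuit construction: this is essentially the single‑key (i.e., collusion bound $q=1$) secret‑key instance of Gorbunov--Vaikuntanathan--Wee~\cite{C:GorVaiWee12}, and every ingredient below follows from OWFs. Fix the functionality $F:\cX\times\cY\to\cZ$ with $\cX=\bit^{n}$ for a polynomial $n=n(\secp)$, and use a pseudorandom function $\PRF$, a CPA‑secure secret‑key encryption scheme with syntax $(\SKE.\KG,\SKE.\Enc,\SKE.\Dec)$, and a garbling scheme $(\Garble,\GCEval,\SimGC)$ for polynomial‑size circuits. First I would describe the scheme: $\Setup(1^\secp)$ samples a PRF key $K$ and outputs $\msk\seteq K$; $\Enc(\msk,x)$ outputs $\ct\seteq(\PRF_K(i,x_i))_{i\in[n]}$, one ``wire key'' per bit of $x$; $\KG(\msk,y)$ runs $(\widetilde{G_y},\{L_{i,b}\}_{i\in[n],b\in\bit})\gets\Garble(1^\secp,G_y)$ for the circuit $G_y(\cdot)\seteq F(\cdot,y)$ and outputs $\sk_y\seteq\bigl(y,\widetilde{G_y},\{\SKE.\Enc_{\PRF_K(i,b)}(L_{i,b})\}_{i\in[n],b\in\bit}\bigr)$; and $\Dec(\sk_y,\ct)$ uses the $i$‑th wire key from $\ct$ to recover $L_{i,x_i}$ from the encrypted label table (a point‑and‑permute tag or authenticated encryption indicates which ciphertext decrypts correctly), then outputs $\GCEval(\widetilde{G_y},\{L_{i,x_i}\}_i)$. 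Correctness is immediate from garbling correctness.

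Next I would prove selective single‑ciphertext security by a hybrid argument. Let $\qA$ commit to $(x_0^*,x_1^*)$ and let $\ct^*=(\PRF_K(i,(x_\coin^*)_i))_i$. Starting from $\expb{\SKFE,\qA}{sel}{1ct}(1^\secp,\coin)$: (i) replace $\PRF_K$ by a truly random function, using PRF security; now the wire keys $\PRF_K(i,1-(x_\coin^*)_i)$ that are \emph{not} contained in $\ct^*$ are uniform and occur in $\qA$'s view only as $\SKE$ keys. (ii) Using CPA security of $\SKE$ — with a sub‑hybrid over the $q$ key queries and the $n$ wires — replace every $\SKE.\Enc_{\PRF_K(i,1-(x_\coin^*)_i)}(L_{i,1-(x_\coin^*)_i})$ occurring in any $\sk_y$ by an encryption of $0$; now each ``unused'' label $L_{i,1-(x_\coin^*)_i}$ occurs only inside the garbled circuit $\widetilde{G_y}$. (iii) In a hybrid over the $q$ key queries, replace each $(\widetilde{G_y},\{L_{i,(x_\coin^*)_i}\}_i)$ by $\SimGC(1^\secp,\pp_{G_y},F(x_\coin^*,y))$, justified by garbled‑circuit simulation security (the complementary labels are now genuinely unused, and the reduction can produce all the \emph{other} functional keys itself, since their garbling randomness is sampled freshly per key). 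Since the key‑query constraint guarantees $F(x_0^*,y)=F(x_1^*,y)$ for every queried $y$, the final hybrid is syntactically independent of $\coin$; running the chain for $\coin=0$ and $\coin=1$ yields $\advb{\SKFE,\qA}{sel}{1ct}(1^\secp)\le\negl(\secp)$.

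The main obstacle is step (iii). In the natural single‑ciphertext construction the input‑wire material is determined by $\msk$ and must therefore be shared across all functional keys, so one cannot directly invoke garbled‑circuit simulation for one $\widetilde{G_y}$ while keeping the others real — the reduction would need the secret half of the shared labels. Inserting the $\SKE$ layer is exactly what removes this obstruction: it lets us first ``erase'' the unused labels from $\qA$'s view, after which the per‑key garbling randomness can be re‑sampled independently inside the reduction. I note that selectivity is not essential to this argument but is all that Definition~\ref{def:sel-1ct-SKFE} demands; an alternative route is to invoke function‑private single‑key SKFE from OWFs and apply the message/function duality of secret‑key FE (which swaps the roles of ciphertexts and keys), but the direct construction above is cleaner for our purposes.
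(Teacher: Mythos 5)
Your construction is the standard dualization of the Sahai--Seyalioglu/GVW one-time scheme (garbled circuit in the functional key, input labels tied to the single ciphertext), which is the route the citation to \cite{C:GorVaiWee12} intends --- the paper itself gives no proof beyond that citation. However, there is a genuine gap exactly where you assert that the final hybrid is ``syntactically independent of $\coin$'', and it traces back to the throwaway clause that ``a point-and-permute tag or authenticated encryption indicates which ciphertext decrypts correctly.'' These two options are not interchangeable. If you use authenticated encryption and let the decryptor try both table entries, with $c_{i,b}=\SKE.\Enc_{\PRF_K(i,b)}(L_{i,b})$ stored at position $b$, then any key holder learns every bit of the encrypted input: given the wire key $\PRF_K(i,(x^*_\coin)_i)$ from the challenge ciphertext, the entry that decrypts successfully sits at position $(x^*_\coin)_i$. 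So the scheme in that instantiation is insecure whenever $x_0^*\ne x_1^*$ (a single admissible key query suffices, and the constraint $F(x_0^*,y)=F(x_1^*,y)$ does not force $x_0^*=x_1^*$), and correspondingly your final hybrid is not coin-independent: after steps (i)--(iii) the encryptions of simulated labels occupy the positions determined by $x^*_\coin$ while the dummy encryptions occupy the complementary positions, a difference the adversary detects by decrypting with the wire keys it holds.

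The fix is point-and-permute, but with \emph{secret} permutation bits shared through $\msk$, and it does real work in the proof, so it cannot be left implicit. For each wire $i$ derive a secret bit $\sigma_i$ from the PRF, store $c_{i,b}$ at position $b\xor\sigma_i$, and let the ciphertext carry the pointer bit $(x_i\xor\sigma_i)$ alongside the wire key; decryption then needs no trial decryption. In the final hybrid the live entry sits at position $\sigma_i\xor(x^*_\coin)_i$ and the pointer bit in the challenge ciphertext equals the same value, both uniformly random and independent of $\coin$, which is precisely what makes your concluding claim true (the PRF hybrid must also cover the $\sigma_i$'s). With this amendment the PRF/CPA/garbling hybrid chain you describe goes through; as written, the last step fails for one of the two instantiations you explicitly allow.
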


\subsection{Definitions of SKFE-CR-SKL}

We consider the classical decryption property and key-testability for
SKFE-CR-SKL as in the case of SKE-CR-SKL.
\begin{definition}[SKFE-CR-SKL]
An SKFE-CR-SKL scheme $\SKFESKL$ for the functionality $F: \cX \times
\cY \ra \cZ$ is a tuple
of five algorithms $(\Setup,\qKG, \Enc, \allowbreak\qDec,\qVrfy)$.
\begin{description}
\item[$\Setup(1^\secp)\ra\msk$:] The setup algorithm takes a security parameter $1^\lambda$ and a master secret key $\msk$.

\item[$\qKG(\msk,y)\ra(\qsk_y,\vk,\tk)$:] The key generation algorithm
takes a master secret key $\msk$ and a string $y \in \cY$, and
outputs a functional secret key $\qsk_y$, a certificate verification
key $\vk$, and a testing key $\tk$.

\item[$\Enc(\msk,x)\ra\ct$:] The encryption algorithm takes a master
secret key $\msk$ and a string $x \in \cX$, and outputs a ciphertext $\ct$.

\item[$\qDec(\qsk_y,\ct)\ra z$:] The decryption algorithm takes a
functional secret key $\qsk_y$ and a ciphertext $\ct$, and outputs a
value $z \in \cZ \cup \{\bot\}$.


\item[$\qVrfy(\vk,\widetilde{\qsk}_y)\ra\top/\bot$:] The verification algorithm takes a verification key $\vk$ and a (possibly malformed) functional secret key $\widetilde{\qsk}_y$, and outputs $\top$ or $\bot$.


\item[Decryption correctness:] For all $x\in\cX$ and
$y\in\cY$, we have
\begin{align}
\Pr\left[
\qDec(\qsk_y, \ct) \allowbreak = F(x,y)
\ :
\begin{array}{ll}
\msk\gets\Setup(1^\secp)\\
(\qsk_y,\vk,\tk)\gets\qKG(\msk,y)\\
\ct\gets\Enc(\msk,x)
\end{array}
\right] 
\ge 1-\negl(\secp).
\end{align}


\item[Verification correctness:] We have 
\begin{align}
\Pr\left[
\qVrfy(\vk,\qsk_y)=\top
\ :
\begin{array}{ll}
\msk\gets\Setup(1^\secp)\\
(\qsk_y,\vk,\tk)\gets\qKG(\msk,y)
\end{array}
\right] 
\ge1-\negl(\secp).
\end{align}

\end{description}
\end{definition}

\begin{definition}[Classical Decryption Property]\label{def:classical-Dec-SKFE}
We say that $\SKFESKL=(\Setup,\qKG,\Enc,\qDec,\qVrfy)$ has
the classical decryption property if there exists a deterministic
polynomial time algorithm $\CDec$ such that given $\qsk_y$ in the
register $\qreg{SK}$ and ciphertext $\ct$, $\qDec$ applies the map
$\ket{u}_{\qreg{SK}}\ket{v}_{\qreg{OUT}}\ra\ket{u}_{\qreg{SK}}\ket{v\oplus\CDec(u,\ct)}_{\qreg{OUT}}$
and outputs the measurement result of the register $\qreg{OUT}$ in
the computational basis, where $\qreg{OUT}$ is initialized to
$\ket{0\cdots0}_{\qreg{OUT}}$.
\end{definition}
\begin{definition}[Key-Testability]\label{def:key-testability-SKFE}
We say that an SKFE-CR-SKL scheme $\SKFESKL$ with the classical
decryption property satisfies key testability, if there exists a
classical deterministic algorithm $\KeyTest$ that satisfies the
following conditions:

\begin{itemize}
\item \textbf{Syntax:} $\KeyTest$ takes as input a testing key $\tk$
and a classical string $\sk$ as input. It outputs $0$ or $1$.

\item \textbf{Correctness:} Let $\msk\gets\Setup(1^\secp)$ and
$(\qsk_y,\vk,\tk)\gets\qKG(\msk,y)$ where $y$ is a string. We denote the register holding
$\qsk_y$ as $\qreg{SK}$. Let $\qreg{KT}$ be a register that is
initialized to $\ket{0}_{\qreg{KT}}$. If we apply the map
$\ket{u}_{\qreg{SK}}\ket{\beta}_{\qreg{KT}}\ra\ket{u}_{\qreg{SK}}\ket{\beta\oplus\KeyTest(\tk,u)}_{\qreg{KT}}$
to the registers $\qreg{SK}$ and $\qreg{KT}$ and then measure
$\qreg{KT}$ in the computational basis, we obtain $1$ with
overwhelming probability.

\item \textbf{Security:} Consider the following experiment
$\expb{\SKFESKL,\qA}{key}{test}(1^\secp)$.

\begin{enumerate}
\item The challenger $\qCh$ runs $\msk\gets\Setup(1^\secp)$ and
initializes $\qA$ with input $\msk$. 

\item $\qA$ can get access to the following oracle, where the list $\List{\qKG}$ used by the oracles is initialized
to an empty list.
\begin{description}
\item[$\Oracle{\qKG}(y)$:] Given $y$, it finds an entry of the form
$(y,\tk)$ from $\List{\qKG}$. If there is such an entry, it
returns $\bot$.

Otherwise, it generates $(\qsk_y,\vk,\tk)\la\qKG(\msk,y)$, sends
$(\qsk_y,\vk,\tk)$ to $\qA$, and adds $(y,\tk)$ to $\List{\qKG}$.
\end{description}

\item $\qA$ sends a tuple of classical strings $(y, \sk, x)$ to $\qCh$.
$\qCh$ outputs $\bot$ if there is no entry of the form $(y,\tk)$ in $\List{\qKG}$ for some $\tk$.
Otherwise, $\qCh$ generates $\ct\gets\Enc(\msk,x)$, and outputs $\top$ if
$\KeyTest(\tk,\sk)=1$ and $\CDec(\sk,\ct)\ne F(x,y)$, and outputs $\bot$ otherwise.
\end{enumerate}

For all QPT $\qA$, the following must hold:

\begin{align}
\advb{\SKFESKL,\qA}{key}{test}(1^\secp) \seteq
\Pr[\expb{\SKFESKL,\qA}{key}{test}(1^\secp) \ra \top] \le
\negl(\secp).
\end{align} 
\end{itemize}
\end{definition}

\begin{definition}[Selective Single-Ciphertext KLA Security]\label{def:sel-1ct-security_SKFE-SKL}
We say that an SKFE-CR-SKL scheme $\SKFESKL$ is selective single-ciphertext secure, if
it satisfies the following requirement, formalized from the
experiment $\expc{\SKFESKL,\qA}{sel}{1ct}{kla}(1^\secp,\coin)$
between an adversary $\qA$ and a challenger:

\begin{enumerate}
\item Initialized with $1^\secp$, $\qA$ outputs $(x_0^*,x_1^*)$. The challenger $\qCh$ runs $\msk\gets\Setup(1^\secp)$. 

\item $\qA$ can get access to the following (stateful) oracles,
where the list $\List{\qKG}$ used by the oracles is initialized
to an empty list:

\begin{description}
\item[$\Oracle{\qKG}(y)$:] Given $y$, it finds an entry of the form
$(y,\vk,V)$ from $\List{\qKG}$. If there is such an entry, it
returns $\bot$.

Otherwise, it generates $(\qsk,\vk,\tk)\la\qKG(\msk,y)$, sends
$\qsk$ and $\tk$ to $\qA$, and adds $(y,\vk,\bot)$ to $\List{\qKG}$.

\item[$\Oracle{\qVrfy}(y,\widetilde{\qsk})$:] Given
$(y,\widetilde{\qsk})$, it finds an entry $(y,\vk,V)$ from
$\List{\qKG}$. (If there is no such entry, it returns $\bot$.) It
then runs $\decision \gets \qVrfy(\vk,\widetilde{\qsk})$ and returns
$\decision$ to $\qA$. If $V=\bot$, it updates the entry into
$(y,\vk,\decision)$. 
\end{description}

\item $\qA$ requests the challenge ciphertext. If there exists
an entry $(y,\vk,V)$ in $\List{\qKG}$ such that $F(x_0^*,y)\ne
F(x_1^*,y)$ and $V=\bot$, $\qCh$ outputs $0$ as the
final output of this experiment. Otherwise, $\qCh$ generates
$\ct^*\la\Enc(\msk,x_\coin^*)$ and sends $\ct^*$ to $\qA$.

\item $\qA$ continues to make queries to $\Oracle{\qKG}$. However, $\qA$ is not allowed to send $y$ such that $F(x_0^*,y)\ne F(x_1^*,y)$ to $\Oracle{\qKG}$.

\item $\qA$ outputs a guess $\coin^\prime$ for $\coin$. $\qCh$
outputs $\coin'$ as the final output of the experiment.
\end{enumerate}

For any QPT $\qA$, it holds that

\begin{align}
\advc{\SKFESKL,\qA}{sel}{1ct}{kla}(1^\secp) \seteq \abs{\Pr[\expc{\SKFESKL,\qA}{sel}{1ct}{kla} (1^\secp,0) \ra 1] - \Pr[\expc{\SKFESKL,\qA}{sel}{1ct}{kla} (1^\secp,1) \ra 1] }\leq \negl(\secp).
\end{align} 
\end{definition}

In Appendix \ref{sec:SKFESKL-KT}, we prove the following theorem:

\begin{theorem}
Assuming the existence of a BB84-based SKE-CD scheme and the existence
of OWFs, there exists a selective single-ciphertext KLA secure
SKFE-CR-SKL scheme satisfying the key-testability property.
\end{theorem}

\subsection{Definitions of ABE-CR-SKL}\label{sec:ABE_SKL_def}

In this section, we recall definitions of ABE-CR-SKL by Agrawal et al.~\cite{EC:AKNYY23}.

\begin{definition}[ABE-CR-SKL]
An ABE-CR-SKL scheme $\ABESKL$ is a tuple of five algorithms $(\Setup,
\qKG, \Enc, \qDec,\allowbreak\qVrfy)$.
Below, let $\cX = \{ \cX_\secp \}_\secp$, $\cY= \{ \cY_\secp
\}_\secp$, and $R= \{ R_\secp: \cX_\secp \times \cY_\secp \to \bit
\}_\secp$ be the ciphertext attribute space, the key attribute
space, and the associated relation of $\ABESKL$, respectively. Let
$\cM$ denote the message space.

\begin{description}
\item[$\Setup(1^\secp)\ra(\pk,\msk)$:] The setup algorithm takes a security parameter $1^\lambda$, and outputs a public key $\pk$ and master secret key $\msk$.
\item[$\qKG(\msk,y)\ra(\qsk_y,\vk)$:] The key generation algorithm
    takes a master secret key $\msk$ and a key attribute $y \in
    \calY$, and outputs a user secret key $\qsk_y$ and a verification key $\vk$.


\item[$\Enc(\pk,x,\msg)\ra\ct$:] The encryption algorithm takes a public key $\pk$, a ciphertext attribute $x\in \cX$, and a plaintext $\msg\in\cM$, and outputs a ciphertext $\ct$.

\item[$\qDec(\qsk_y,\ct)\ra \msg^\prime$:] The decryption algorithm
takes a user secret key $\qsk_y$ and a ciphertext $\ct$. It outputs
a value $\msg^\prime\in \{\bot\}\cup \cM$.


\item[$\qVrfy(\vk,\qsk^\prime)\ra\top/\bot$:] The verification algorithm takes a verification key $\vk$ and a quantum state $\qsk^\prime$, and outputs $\top$ or $\bot$.

\item[Decryption correctness:]For every $x \in \cX$ and $y \in \cY$
satisfying $R(x,y)=0$ and $\msg\in\cM$, we have
\begin{align}
\Pr\left[
\qDec(\qsk_y, \ct) \allowbreak = \msg
\ \middle |
\begin{array}{ll}
(\pk,\msk) \la \Setup(1^\secp)\\
(\qsk_y,\vk)\gets\qKG(\msk,y)\\
\ct\gets\Enc(\pk,x,\msg)
\end{array}
\right] 
\ge1-\negl(\secp).
\end{align}

\item[Verification correctness:] For every $y \in \cY$, we have 
\begin{align}
\Pr\left[
\qVrfy(\vk,\qsk_y)=\top
\ \middle |
\begin{array}{ll}
(\pk,\msk) \la \Setup(1^\secp)\\
(\qsk_y,\vk)\gets\qKG(\msk,y)\\
\end{array}
\right] 
\ge1-\negl(\secp).
\end{align}

\end{description}
\end{definition}

\begin{definition}[Adaptive IND-KLA Security]\label{def:ada_lessor_ABESKL}
We say that an ABE-CR-SKL scheme $\ABESKL$ for relation $R:\cX\times
\cY \to \bin$ is adaptively IND-KLA secure, if it satisfies the
following requirement, formalized from the experiment
$\expc{\ABESKL, \qA}{ada}{ind}{kla}(1^\secp,\coin)$ between an
adversary $\qA$ and a challenger $\qCh$:
        \begin{enumerate}
            \item At the beginning, $\qCh$ runs $(\pk,\msk)\gets\Setup(1^\secp)$
            and initializes the list $\List{\qKG}$ to be an empty set. 
            Throughout the experiment, $\qA$ can access the following oracles.
            \begin{description}
            \item[$\Oracle{\qKG}(y)$:] Given $y$, it finds an entry of the form $(y,\vk,V)$ from $\List{\qKG}$. If there is such an entry, it returns $\bot$.
            Otherwise, it generates $(\qsk_y,\vk)\la\qKG(\msk,y)$,
            sends $\qsk_y$ to $\qA$, and adds $(y,\vk,\bot)$ to $\List{\qKG}$.
            
            \item[$\Oracle{\qVrfy}(y,\widetilde{\qsk})$:] Given $(y,\widetilde{\qsk})$, it finds an entry $(y,\vk,V)$ from $\List{\qKG}$. (If there is no such entry, it returns $\bot$.) 
            It then runs $\decision \seteq \qVrfy(\vk,
            \widetilde{\qsk})$ and returns $\decision$ to $\qA$.
            If $V=\bot$, it updates the entry into $(y,\vk,\decision)$.
            \end{description}
            \item \label{ada_lessor_abe_challenge}
            When $\qA$ sends $(x^*,\msg_0,\msg_1)$ to $\qCh$, 
            it checks that for every entry $(y,\vk,V)$ in 
            $\List{\qKG}$ such that $R(x^*,y)=0$, it holds
            that $V=\top$. If so,
            $\qCh$ generates $\ct^*\la\Enc(\pk,x^*,\msg_\coin)$ and
            sends $\ct^*$ to $\qA$. Otherwise, it outputs $0$.
            \item 
            $\qA$ continues to make queries to $\Oracle{\qKG}(\cdot)$ and  $\Oracle{\qVrfy}(\cdot,\cdot )$.
            However, $\qA$ is not allowed to send a key attribute
            $y$ such that $R(x^*,y)=0$ to $\Oracle{\qKG}$.
            \item $\qA$ outputs a guess $\coin^\prime$ for $\coin$. 
            $\qCh$ outputs $\coin'$ as the final output of the experiment.
        \end{enumerate}
        For any QPT $\qA$, it holds that
\ifnum\llncs=0        
\begin{align}
\advc{\ABESKL,\qA}{ada}{ind}{kla}(1^\secp) \seteq \abs{\Pr[\expc{\ABESKL,\qA}{ada}{ind}{kla} (1^\secp,0) \ra 1] - \Pr[\expc{\ABESKL,\qA}{ada}{ind}{kla} (1^\secp,1) \ra 1] }\leq \negl(\secp).
\end{align}
\else
\begin{align}
\advb{\PKFESKL,\qA}{ada}{lessor}(\secp) 
&\seteq \abs{\Pr[\expb{\ABESKL,\qA}{ada}{lessor} (1^\secp,0) \ra 1] - \Pr[\expb{\ABESKL,\qA}{ada}{lessor} (1^\secp,1) \ra 1] }\\
&\leq \negl(\secp).
\end{align}
\fi
\end{definition}

\begin{remark}\label{remark:same_query_remark}
Although we can handle the situation where multiple keys for the same attribute $y$ are generated using an index management such as $(y,1,vk_1,V_1)$, $(y,2,vk_2,V_2)$, we use the simplified definition as Agrawal et al.~\cite{EC:AKNYY23} did.\ryo{I avoided copy-and-paste from AKNYY23.}
\end{remark}

We also consider relaxed versions of the above security notion. 

\begin{definition}[Selective IND-KLA
Security]\label{def:sel_ind_ABE_SKL} We consider the same security
game as that for adaptive IND-KLA security except that the adversary
$\qA$ should declare its target $x^*$ at the beginning of the game
(even before it is given $\pk$).

We then define the advantage
$\advc{\ABESKL,\qA}{sel}{ind}{kla}(1^\secp)$ for the selective case
similarly. We say $\ABESKL$ is selectively IND-KLA secure if for any
QPT adversary $\qA$, $\advc{\ABESKL,\qA}{sel}{ind}{kla}(1^\secp)$ is
negligible.    
\end{definition}

\subsection{Construction}
We construct an ABE-CR-SKL scheme
$\ABESKL=\ABESKL.(\Setup,\qKG,\allowbreak\Enc,\qDec,\qVrfy)$ for the relation
$R$ with the message space $\cM \seteq \bit^{\msglen}$ using the following building blocks.
\begin{itemize}
\item ABE scheme $\ABE=\ABE.(\Setup,\KG,\Enc,\Dec)$ for the following relation $R^\prime$.
\begin{description}
\item[$R^\prime(x^\prime,y^\prime)$:]Interpret $x^\prime\seteq x\|C$ and
    $y^\prime\seteq y\|z$, where $C$ is a circuit. Then, output $0$ if
    $R(x,y)=0$ and $C(z)=\bot$, and otherwise output $1$.
\end{description}

\item Compute-and-Compare Obfuscation $\CCObf$ with the simulator $\CCSim$.
\item SKFE-CR-SKL scheme with key-testability
$\SKFESKL=\SKFESKL.\allowbreak(\Setup,\qKG,\Enc,\qDec,\qVrfy,\allowbreak\KeyTest)$ for
the following functionality $F$. It also has the classical
decryption algorithm $\SKFESKL.\CDec$.

\begin{description}
\item[$F(\widetilde{x},\widetilde{y})$:]Interpret
    $\widetilde{x}\seteq x\|z$ and $\widetilde{y}\seteq y$. Then, output $z$
    if $R(x,y)=0$, and otherwise output $\bot$.
\end{description}

\end{itemize} 

The construction is as follows.

\begin{description}
\item[$\ABESKL.\Setup(1^\secp)$:] $ $
\begin{itemize}
    \item Generate $(\abe.\pk,\abe.\msk)\gets\ABE.\Setup(1^\secp)$.
    \item Generate $\skfe.\msk\gets\SKFESKL.\Setup(1^\secp)$.
    \item Output $\pk\seteq\abe.\pk$ and $\msk\seteq(\abe.\msk,\skfe.\msk)$.
\end{itemize}

\item[$\ABESKL.\qKG(\msk,y)$:] $ $
\begin{itemize}
    \item Parse $\msk=(\abe.\msk,\skfe.\msk)$.
    \item Generate $(\skfe.\qsk,\skfe.\vk,\skfe.\tk)\gets\SKFESKL.\qKG(\skfe.\msk,y)$. We denote the register holding $\skfe.\qsk$ as $\qreg{SKFE.SK}$.
    \item Sample explicit randomness $\key \gets \bit^\secp$.
    \item Prepare a register $\qreg{ABE.SK}$ that is initialized to $\ket{0\cdots0}_{\qreg{ABE.SK}}$.
    \item Apply the map
        $\ket{u}_{\qreg{SKFE.SK}}\ket{v}_{\qreg{ABE.SK}}\ra\ket{u}_{\qreg{SKFE.SK}}\ket{v\oplus\ABE.\KG(\abe.\msk,y\|u,\key)}_{\qreg{ABE.SK}}$ to the registers $\qreg{SKFE.SK}$ and $\qreg{ABE.SK}$, and obtain $\qsk$ over the registers $\qreg{SKFE.SK}$ and $\qreg{ABE.SK}$.
    \item Output $\qsk$ and
        $\vk\seteq(y,\abe.\msk,\skfe.\vk,\skfe.\tk, \key)$.
\end{itemize}

\item[$\ABESKL.\Enc(\pk,x,\msg)$:] $ $
\begin{itemize}
    \item Parse $\pk=\abe.\pk$.
\item Generate $\tlC\gets\CCSim(1^\secp,\pp_D,1)$, where $\pp_D$ consists of circuit parameters of $D$ defined in the security proof.
    \item Generate $\abe.\ct\gets\ABE.\Enc(\abe.\pk,x\|\tlC,\msg)$.
    \item Output $\ct\seteq\abe.\ct$.
\end{itemize}
 
\item[$\ABESKL.\qDec(\qsk,\ct)$:] $ $
\begin{itemize}
   \item Parse $\ct=\abe.\ct$. We denote the register holding $\qsk$ as $\qreg{SKFE.SK}\tensor\qreg{ABE.SK}$.
   \item Prepare a register $\qreg{MSG}$ that is initialized to $\ket{0\cdots0}_{\qreg{MSG}}$
   \item Apply the map $\ket{v}_{\qreg{ABE.SK}}\ket{w}_{\qreg{MSG}}\ra\ket{v}_{\qreg{ABE.SK}}\ket{w\oplus\ABE.\Dec(v,\abe.\ct)}_{\qreg{MSG}}$ to the registers $\qreg{ABE.SK}$ and $\qreg{MSG}$.
   \item Measure the register $\qreg{MSG}$ in the computational basis and output the result $\msg^\prime$.
\end{itemize}

\item[$\ABESKL.\qVrfy(\vk,\qsk^\prime)$:] $ $
\begin{itemize}
    \item Parse $\vk=(y,\abe.\msk,\skfe.\vk,\skfe.\tk, \key)$. We denote the register holding $\qsk^\prime$ as $\qreg{SKFE.SK}\tensor\qreg{ABE.SK}$.
    \item Prepare a register $\qreg{SKFE.KT}$ that is initialized to $\ket{0}_{\qreg{SKFE.KT}}$.
    \item Apply the map $\ket{u}_{\qreg{SKFE.SK}}\ket{\beta}_{\qreg{SKFE.KT}}\ra\ket{u}_{\qreg{SKFE.SK}}\ket{\beta\oplus\SKFESKL.\KeyTest(\skfe.\tk,u)}_{\qreg{SKFE.KT}}$ to the registers $\qreg{SKFE.SK}$ and $\qreg{SKFE.KT}$.
    \item Measure $\qreg{SKFE.KT}$ in the computational basis and
        output $\bot$ if the result is $0$. Otherwise, go to the next step.
    \item Apply the map
        $\ket{u}_{\qreg{SKFE.SK}}\ket{v}_{\qreg{ABE.SK}}\ra\ket{u}_{\qreg{SKFE.SK}}\ket{v\oplus\ABE.\KG(\abe.\msk,y\|u,\key)}_{\qreg{ABE.SK}}$ to the registers $\qreg{SKFE.SK}$ and $\qreg{ABE.SK}$.
    \item Trace out the register $\qreg{ABE.SK}$ and obtain
        $\skfe.\qsk^\prime$ over register $\qreg{SKFE.SK}$.
    \item Output $\top$ if $\top=\SKFESKL.\qVrfy(\skfe.\vk,\skfe.\qsk^\prime)$ and $\bot$ otherwise.

\end{itemize}
\end{description}

\paragraph{Decryption correctness:}
\fuyuki{We require that simulated obfuscated program generated by $\CCSim$ outputs $\bot$ for every input.}
Let $x$ and $y$ be a ciphertext-attribute and a key-attribute,
respectively, such that $R(x,y)=0$. The secret-key
$\qsk$ output by $\ABESKL.\qKG$ is of the form $\sum_u \alpha_u
\ket{u}_{\qreg{SKFE.SK}}\ket{\abe.\sk_{y\|u}}_{\qreg{ABE.SK}}$,
where $\abe.\sk_{y\|u}\gets\ABE.\KG(\abe.\msk,y\|u,\key)$.
Let $\ct\gets\ABESKL.\Enc(\pk,x,\msg)$, where
$\ct=\abe.\ct\gets\ABE.\Enc(\abe.\pk,x\|\tlC,\msg)$ and
$\tlC\gets\CCSim(1^\secp, \pp_D, 1)$.
If we apply the map
\begin{align}
\ket{v}_{\qreg{ABE.SK}}\ket{w}_{\qreg{MSG}}\ra\ket{v}_{\qreg{ABE.SK}}\ket{w\oplus\ABE.\Dec(v,\abe.\ct)}_{\qreg{MSG}}
\end{align}
to $\sum_u \alpha_u \ket{u}_{\qreg{SKFE.SK}}\ket{\abe.\sk_{y\|u}}_{\qreg{ABE.SK}} \tensor \ket{0\cdots0}_{\qreg{MSG}}$, the result is
\begin{align}
\sum_u \alpha_u \ket{u}_{\qreg{SKFE.SK}}\ket{\abe.\sk_{y\|u}}_{\qreg{ABE.SK}} \tensor \ket{\msg}_{\qreg{MSG}}
\end{align}
since $\tlC(u)=\bot$ and thus $R^\prime(x\|\tlC,y\|u)=0$ for every string $u$.
Therefore, $\ABESKL$ satisfies decryption correctness.

\paragraph{Verification correctness.}
Let $y$ be a key attribute and $(\qsk,\vk)\gets\ABESKL.\qKG(\msk,y)$.
It is clear that the state $\skfe.\qsk^\prime$ obtained when computing $\ABESKL.\qVrfy(\vk,\qsk)$ is the same as $\skfe.\qsk$ generated when generating $\qsk$.
Therefore, the verification correctness of $\ABESKL$ follows from that of $\SKFESKL$.

\subsection{Proof of Selective IND-KLA Security}

Let $\qA$ be an adversary for the selective IND-KLA security of $\ABESKL$.
We consider the following sequence of experiments.
\begin{description}
\item[$\hybi{0}^\coin$:]This is $\expc{\ABESKL,\qA}{sel}{ind}{kla}(1^\secp,\coin)$.
\begin{enumerate}
\item $\qA$ declares the challenge ciphertext attribute $x^*$. The
    challenger $\qCh$ generates $(\abe.\pk,\abe.\msk)\gets\ABE.\Setup(1^\secp)$ and $\skfe.\msk\gets\SKFESKL.\Setup(1^\secp)$, and sends $\pk\seteq\abe.\pk$ to $\qA$.

\item $\qA$ can get access to the following (stateful) oracles, where the list $\List{\qKG}$ used by the oracles is initialized to an empty list:

\begin{description}
\item[$\Oracle{\qKG}(y)$:]Given $y$, it finds an entry of the form
$(y,\vk_y,V)$ from $\List{\qKG}$. If there is such an entry, it
returns $\bot$. Otherwise, it generates $\qsk_y, \vk_y$ as follows.

\begin{itemize}
\item Generate $(\skfe.\qsk_y,\skfe.\vk_y,\skfe.\tk_y)\gets\SKFESKL.\qKG(\skfe.\msk,y)$. We denote the register holding $\skfe.\qsk_y$ as $\qreg{SKFE.SK_y}$.

\item Prepare a register $\qreg{ABE.SK_y}$ that is initialized to $\ket{0\cdots0}_{\qreg{ABE.SK_y}}$.

\item Sample explicit randomness $\key_y \gets \bit^\secp$.

\item Apply the map
    $\ket{u}_{\qreg{SKFE.SK_y}}\ket{v}_{\qreg{ABE.SK_y}}\ra\ket{u}_{\qreg{SKFE.SK_y}}\ket{v\oplus\ABE.\KG(\abe.\msk,y\|u,\key_y)}_{\qreg{ABE.SK_y}}$ to the registers $\qreg{SKFE.SK_y}$ and $\qreg{ABE.SK_y}$, and obtain $\qsk_y$ over the registers $\qreg{SKFE.SK_y}$ and $\qreg{ABE.SK_y}$.

\item Set $\vk_y\seteq(y,\abe.\msk,\skfe.\vk_y,\skfe.\tk_y, \key_y)$.
\end{itemize}
It returns $\qsk_y$ to $\qA$ and adds the entry $(y,\vk_y,\bot)$ to $\List{\qKG}$.
\item[
$\Oracle{\qVrfy}(y,\widetilde{\qsk})$:] It finds an entry
$(y,\vk_y,V)$ from $\List{\qKG}$. (If there is no such entry, it
returns $\bot$.) It parses
$\vk_y=(y,\abe.\msk,\skfe.\vk_y,\skfe.\tk_y,\key_y)$ and computes
$d$ as follows.  

\begin{itemize}
    \item Let the register holding $\widetilde{\qsk}$ be $\qreg{SKFE.SK_y}\tensor\qreg{ABE.SK_y}$.
    \item Prepare a register $\qreg{SKE.KT_y}$ that is initialized to $\ket{0}_{\qreg{SKE.KT_y}}$.
    \item Apply $\ket{u}_{\qreg{SKFE.SK_y}}\ket{\beta}_{\qreg{SKE.KT_y}}\ra\ket{u}_{\qreg{SKFE.SK_y}}\ket{\beta\oplus\SKFESKL.\KeyTest(\skfe.\tk_y,u)}_{\qreg{SKE.KT_y}}$ to the registers $\qreg{SKFE.SK_y}$ and $\qreg{SKE.KT_y}$.
    \item Measure $\qreg{SKE.KT_y}$ in the computational basis and
        set $d\seteq\bot$ if the result is $0$. Otherwise, go to the next step.
    \item Apply the map
        $\ket{u}_{\qreg{SKFE.SK_y}}\ket{v}_{\qreg{ABE.SK_y}}\ra\ket{u}_{\qreg{SKFE.SK_y}}\ket{v\oplus\ABE.\KG(\abe.\msk,y\|u,\key_y)}_{\qreg{ABE.SK_y}}$ to the registers $\qreg{SKFE.SK_y}$ and $\qreg{ABE.SK_y}$.
    \item Trace out the register $\qreg{ABE.SK_y}$ and obtain $\skfe.\qsk^\prime$ over $\qreg{SKFE.SK_y}$.
    \item Set $d\seteq\top$ if $\top=\SKFESKL.\qVrfy(\skfe.\vk_y,\skfe.\qsk^\prime)$ and set $d\seteq\bot$ otherwise.
It returns $d$ to $\qA$. Finally, if $V=\bot$, it updates the entry
$(y,\vk_y,V)$ to $(y,\vk_y,d)$. 
\end{itemize}
\end{description}
\item $\qA$ sends $(\msg_0^*,\msg_1^*)\in \cM^2$ to
$\qCh$. $\qCh$ checks if for every entry $(y, \vk_y, V)$ in
$L_{\qKG}$ such that $R(x^*,y) = 0$, it holds that $V = \top$. If
so, it generates $\tlC^*\gets\CCSim(1^\secp, \pp_D, 1)$ and
$\abe.\ct^*\gets\ABE.\Enc(\abe.\pk,x^*\|\tlC^*,\allowbreak\msg_\coin^*)$, and
sends $\ct^*\seteq\abe.\ct^*$ to $\qA$. Otherwise, it outputs $0$.

\item $\qA$ outputs $\coin^\prime$. The challenger outputs $\coin'$
as the final output of the experiment.
\end{enumerate}
        
\item[$\hybi{1}^\coin$:]This is the same as $\hybi{0}^\coin$ except
that $\tlC^*$ is generated as
$\tlC^*\gets\CCObf(1^\secp,D[\skfe.\ct^*], \lock, 0)$, where $\skfe.\ct^*\gets\SKFESKL.\Enc(\skfe.\msk,\allowbreak x^*\|0^\secp)$, $\lock\gets\bit^\secp$, and $D[\skfe.\ct^*](x)$ is a circuit that has $\skfe.\ct^*$ hardwired and outputs $\SKFESKL.\CDec(x,\skfe.\ct^*)$.
\end{description}

We pick $\lock$ as a uniformly random string that is completely independent of other variables such as $\skfe.\ct^*$.
Thus, from the security of $\CCObf$, we have $\abs{\Pr[\hybi{0}^\coin=1]-\Pr[\hybi{1}^\coin=1]}=\negl(\secp)$.

\begin{description}
\item[$\hybi{2}^\coin$:]This is the same as $\hybi{1}^\coin$ except
that $\skfe.\ct^*$ hardwired into the circuit $D$ is generated
as $\skfe.\ct^*\gets\SKFESKL.\Enc(\skfe.\msk,x^*\|\lock)$.
\end{description}

\fuyuki{$\qA$ returns every key for a key attribute $y$ such that $R(x^*,y)=\decryptable$. On the other hand, for any key attirbute $y$ such that $R(x^*,y)=\undecryptable$, $F(x^*\|\lock,y)=F(x^*\|0^\secp,y)=\bot$. Then, we can invoke the OT-IND-KLA security of $\SKFESKL$.}

From the selective single-ciphertext security of $\SKFESKL$, we can show that
$\hybi{1}^\coin \approx \hybi{2}^\coin$. Suppose that $\Hyb_1^\coin
\not \approx \Hyb_2^\coin$ and $\qD$ is a corresponding
distinguisher. We consider the following reduction
$\qR$:

\begin{description}
\item Execution of $\qR^{\qD}$ in
$\expc{\SKFESKL,\qR}{sel}{1ct}{kla}(1^\secp, b)$:

\begin{enumerate}
\item $\qCh$ runs $\skfe.\msk \gets \SKFESKL.\Setup(1^\secp)$ and initializes $\qR$
with the security parameter $1^\secp$.

\item $\qD$ declares the challenge ciphertext-attribute $x^*$. $\qR$
generates $(\abe.\pk, \abe.\msk) \gets \ABE.\Setup(1^\secp)$ and
sends $\ek \seteq \abe.\pk$ to $\qD$.

\item $\qR$ chooses $\lock \chosen \zo{\secp}$ and sends $(\skfe.\msg_0^*,
\skfe.\msg_1^*) \seteq (x^* \| 0^\secp, x^* \| \lock)$ to $\qCh$.

\item $\qR$ simulates the access to $\Oracle{\qKG}(y)$ for $\qD$ as
    follows, where $L_{\qKG}$ is a list initialized to be empty:
\begin{description}
\item $\Oracle{\qKG}(y):$ Given $y$, it finds an entry of the
form $(y, \vk_y, V)$ from $L_{\qKG}$. If there is such an entry,
it returns $\bot$. Otherwise, it generates $(\qsk_y, \vk_y)$ 
similar to $\Hyb_0^\coin$ except that the values
$(\skfe.\qsk_y, \skfe.\vk_y, \skfe.\tk_y)$ are generated as
$(\skfe.\qsk_y,
\skfe.\vk_y, \skfe.\tk_y) \gets \SKFESKL.\Oracle{\qKG}(y)$ instead.
It adds $(y, \vk_y, \bot)$ to $L_{\qKG}$ and returns $\qsk_y$.
\end{description}

\item $\qR$ simulates the access to $\Oracle{\qVrfy}(y,
\widetilde{\qsk})$ as follows:
\begin{description}
\item $\Oracle{\qVrfy}(y, \widetilde{\qsk}):$ Given $(y,
\widetilde{\qsk})$, it finds an entry $(y, \vk_y, V)$ from
$L_{\qKG}$ (If there is no such entry, it returns $\perp$). It then
executes a procedure similar to that in $\Hyb_0^\coin$, except that $\SKFESKL.\Oracle{\qVrfy}(y,
\skfe.\qsk')$ is executed instead of $\SKFESKL.\qVrfy(\skfe.\vk_y,
\skfe.\qsk')$. The corresponding output $d$ is returned as the
output of the oracle.
\end{description}
\item $\qR$ requests the challenge ciphertext from $\qCh$ and receives
$\skfe.\ct^* \gets \SKFESKL.\Enc(\skfe.\msk, \skfe.\msg^*_b)$.
\item $\qD$ sends $(\msg_0^*, \msg_1^*) \in \cM^2$ to $\qR$. $\qR$
checks that for every entry $(y,\vk_y,V)$ such that $R(x^*,y)=0$, it
holds that $V = \top$. If so, it
generates $\tlC^* \gets \CCObf(1^\secp, D[\skfe.\ct^*], \lock, 0)$
and $\abe.\ct^* \gets \ABE.\Enc(\abe.\pk, x^* \| \tlC^*, \msg^*_\coin)$ and
sends $\ct^* \seteq \abe.\ct^*$ to $\qD$. Else, it outputs $0$.
\item $\qD$ outputs a bit $b'$. $\qR$ outputs $b'$ and $\qCh$
outputs $b'$ as the final output of the experiment.
\end{enumerate}
\end{description}

It is easy to see that the view of $\qD$ is the same as that in
$\Hyb^\coin_2$ when $\lock$ is encrypted in $\skfe.\ct^*$ and that
of $\Hyb^\coin_1$ when $0^\secp$ is encrypted. Moreover, for $\qD$
to distinguish between the two hybrids, it must be the case that
$V = \top$ for all entries $(y, \vk_y, V)$ such that $R(x^*, y) =
0$, which directly implies that the analogous values checked by
$\SKFESKL.\Oracle{\qVrfy}$ must also be $\top$. If $R(x^*,y)=1$, $F(x^*\concat 0^\secp,y)=F(x^*\concat \lock,y)=\bot$.  Consequently,
$\qR$ breaks the selective single-ciphertext security of $\SKFESKL$. Therefore, it
must be that $\Hyb_1^\coin \approx \Hyb_2^\coin$.

\begin{description}
\item[$\hybi{3}^\coin$:]This is the same as $\hybi{2}^\coin$ except that $\Oracle{\qKG}$ behaves as follows. (The difference is red colored.)

\begin{description}
\item[$\Oracle{\qKG}(y)$:]Given $y$, it finds an entry of the form
$(y,\vk_y,V)$ from $\List{\qKG}$. If there is such an entry, it
returns $\bot$. Otherwise, it generates $\qsk_y, \vk_y$ as follows.

\begin{itemize}
\item Generate $(\skfe.\qsk_y,\skfe.\vk_y,\skfe.\tk_y)\gets\SKFESKL.\qKG(\skfe.\msk,y)$. We denote the register holding $\skfe.\qsk_y$ as $\qreg{SKFE.SK_y}$.
    \textcolor{red}{
        \item Prepare a register $\qreg{ABE.R^\prime_y}$ that is initialized to $\ket{0}_{\qreg{ABE.R^\prime_y}}$.
    \item Apply the map $\ket{u}_{\qreg{SKFE.SK_y}}\ket{\beta}_{\qreg{ABE.R^\prime_y}}\ra\ket{u}_{\qreg{SKFE.SK_y}}\ket{\beta\oplus R^\prime(x^*\|\tlC^*,y\|u)}_{\qreg{ABE.R^\prime_y}}$ to the registers $\qreg{SKFE.SK_y}$ and $\qreg{ABE.R^\prime_y}$.
    \item Measure $\qreg{ABE.R^\prime_y}$ in the computational basis
        and output $\bot$ if the result is $0$. Otherwise, go to the next step.
    }

\item Prepare a register $\qreg{ABE.SK_y}$ that is initialized to $\ket{0\cdots0}_{\qreg{ABE.SK_y}}$.

\item Sample explicit randomness $\key_y \gets \bit^\secp$.

\item Apply the map
    $\ket{u}_{\qreg{SKFE.SK_y}}\ket{v}_{\qreg{ABE.SK_y}}\ra\ket{u}_{\qreg{SKFE.SK_y}}\ket{v\oplus\ABE.\KG(\abe.\msk,y\|u,\key_y)}_{\qreg{ABE.SK_y}}$ to the registers $\qreg{SKFE.SK_y}$ and $\qreg{ABE.SK_y}$, and obtain $\qsk_y$ over the registers $\qreg{SKFE.SK_y}$ and $\qreg{ABE.SK_y}$.

\item Set $\vk_y\seteq(y,\abe.\msk,\skfe.\vk_y,\skfe.\tk_y, \key_y)$.
\end{itemize}
It returns $\qsk_y$ to $\qA$ and adds the entry $(y,\vk_y,\bot)$ to $\List{\qKG}$.
\end{description}
\end{description}

With overwhelming probability, we observe that the added procedure that checks
$R^\prime(x^* \| \tlC^*,y \| u)$ in superposition affects the final state $\qsk_y$ at most negligibly.
We consider the following two cases.
\begin{itemize}
\item The first case is where $R(x^*,y)=1$. In this case, we have
    $R^\prime(x^*\|\tlC^*,y\|u)=1$ for any $u$.
Hence, we see that the added procedure does not affect the state $\skfe.\qsk_y$.
\item The second case is where $R(x^*,y)=0$. Let
$\skfe.\qsk_y=\sum_u \alpha_u \ket{u}_{\qreg{SKFE.SK_y}}$. From
the classical decryption property of $\SKFESKL$, we have that for
any $u$, it holds that
$\SKFESKL.\CDec(u,\skfe.\ct^*)=\lock$. Thus, $\tlC^*(u)\ne\bot$ and
$R^\prime(x^*\|\tlC^*,y\|u)=1$ with overwhelming probability from
the correctness of $\CCObf$. Hence, we see that the added procedure affects the state $\skfe.\sk_y$ at most negligibly.
\end{itemize}
Therefore, we have $\abs{\Pr[\hybi{2}^\coin=1]-\Pr[\hybi{3}^\coin=1]}=\negl(\secp)$.

\begin{description}
\item[$\hybi{4}^\coin$:]This is the same as $\hybi{3}^\coin$ except that the oracle $\Oracle{\qVrfy}$ behaves as follows. (The difference is red colored.)
\begin{description}
\item[ $\Oracle{\qVrfy}(y,\widetilde{\qsk})$:] It finds an entry
    $(y,\vk_y,V)$ from $\List{\qKG}$. (If there is no such entry, it
    returns $\bot$.) It parses
    $\vk_y=(y,\abe.\msk,\skfe.\vk_y,\skfe.\tk_y,\key_y)$. If
    $R(x^*, y) = 1$, then it behaves exactly as in $\Hyb^\coin_3$
    and otherwise, it computes $d$ as follows:
    \begin{enumerate}[(a)]
    \item Let the register holding $\widetilde{\qsk}$ be $\qreg{SKFE.SK_y}\tensor\qreg{ABE.SK_y}$.
    \item Prepare a register $\qreg{SKFE.KT_y}$ that is initialized to $\ket{0}_{\qreg{SKFE.KT_y}}$.
    \item Apply $\ket{u}_{\qreg{SKFE.SK_y}}\ket{\beta}_{\qreg{SKFE.KT_y}}\ra\ket{u}_{\qreg{SKFE.SK_y}}\ket{\beta\oplus\SKFESKL.\KeyTest(\skfe.\tk_y,u)}_{\qreg{SKFE.KT_y}}$ to the registers $\qreg{SKFE.SK_y}$ and $\qreg{SKFE.KT_y}$.
    \item Measure $\qreg{SKFE.KT_y}$ in the computational basis and
        set $d\seteq\bot$ if the result is $0$. Otherwise, go to the next step.
    \textcolor{red}{
    \item Prepare a register $\qreg{ABE.R^\prime_y}$ that is initialized to $\ket{0}_{\qreg{ABE.R^\prime_y}}$.
    \item Apply the map $\ket{u}_{\qreg{SKFE.SK_y}}\ket{\beta}_{\qreg{ABE.R^\prime_y}}\ra\ket{u}_{\qreg{SKFE.SK_y}}\ket{\beta\oplus R^\prime(x^*\|\tlC^*,y\|u)}_{\qreg{ABE.R^\prime_y}}$ to the registers $\qreg{SKFE.SK_y}$ and $\qreg{ABE.R^\prime_y}$.
    \item Measure $\qreg{ABE.R^\prime_y}$ in the computational basis and set $d\seteq\bot$ if the result is $0$. Otherwise, go to the next step.
    }
    \item Apply the map
        $\ket{u}_{\qreg{SKFE.SK_y}}\ket{v}_{\qreg{ABE.SK_y}}\ra\ket{u}_{\qreg{SKFE.SK_y}}\ket{v\oplus\ABE.\KG(\abe.\msk,y\|u,\key)}_{\qreg{ABE.SK_y}}$ to the registers $\qreg{SKFE.SK_y}$ and $\qreg{ABE.SK_y}$.
    \item Trace out the register $\qreg{ABE.SK_y}$ and obtain $\skfe.\qsk^\prime$ over $\qreg{SKFE.SK_y}$.
    \item Set $d\seteq\top$ if $\top=\SKFESKL.\qVrfy(\skfe.\vk_y,\skfe.\qsk^\prime)$ and set $d\seteq\bot$ otherwise.
It returns $d$ to $\qA$. Finally, if $V=\bot$, it updates the entry $(y,\vk_y,V)$ into $(y,\vk_y,d)$. 
\end{enumerate}
            \end{description}
\end{description}

%

Suppose there exists a QPT distinguisher $\qD$ that has
non-negligible advantage in distinguishing $\Hyb_3^\coin$ and
$\Hyb_4^\coin$. Let $\qD$ make $q = \poly(\lambda)$ many queries to
the oracle $\Oracle{\qVrfy}(\cdot, \cdot)$. We will now consider the
following QPT algorithm $\qA_\oh$ with access to an oracle
$\widetilde{\Oracle{\qKG}}$ and an oracle $\cO$ that runs $\qD$ as follows:

\begin{description}
\item $\underline{\qA_\oh^{\widetilde{\Oracle{\qKG}}, \cO}(\abe.\pk,
\skfe.\msk)}:$
\begin{enumerate}
\item $\qA_\oh$ runs $\qD$ who sends the challenge ciphertext-attribute
$x^*$ to $\qA_\oh$.
\item $\qA_\oh$ sends $\ek = \abe.\pk$ to $\qD$ and initializes $L_{\qKG}$
to be an empty list.
\item When $\qD$ queries $\Oracle{\qKG}$ on input $y$, $\qA_\oh$ queries
the oracle $\widetilde{\Oracle{\qKG}}(y)$ in order to obtain its
response $\ket{\phi}$ and a list $\widetilde{L_{\qKG}}$. It updates
$L_{\qKG} = \widetilde{L_{\qKG}}$. It sends $\ket{\phi}$ to $\qD$.
\item $\qA_\oh$ simulates the access of $\Oracle{\qVrfy}$ for $\qD$ as
follows:
\begin{description}
\item $\underline{\Oracle{\qVrfy}(y, \widetilde{\qsk})}:$
\begin{enumerate}
\item Execute Steps (a)-(e) of $\Oracle{\qVrfy}$ as in
    $\Hyb_4^\coin$.
    \item Apply the map
        $\ket{u}_{\qreg{SKE.SK_y}}\ket{\beta}_{\qreg{ABE.R_y}}\ra\ket{u}_{\qreg{SKE.SK_y}}\ket{\beta\oplus
        \calO(u)}_{\qreg{ABE.R_y}}$ to the registers
        $\qreg{SKE.SK_y}$ and $\qreg{ABE.R_y}$. 
\item Execute Steps (g)-(j) of $\Oracle{\qVrfy}$ as in
$\Hyb_4^\coin$.
\end{enumerate}
\end{description}

\item $\qD$ sends $(\msg_0^*, \msg_1^*) \in \cM^2$ to $\qA_\oh$. $\qA_\oh$ checks if
for every entry $(y, \vk_y, V)$ in $L_{\qKG}$ such that $R(x^*, y) =0$, it holds that $V = \top$. If so, it generates
$\tlC^*\gets\CCObf(1^\secp,D[\skfe.\ct^*], \lock, 0)$
 and $\abe.\ct^* \gets \ABE.\Enc(\abe.\pk, x^* \|
\tlC^*, \msg^*_\coin)$, where $\skfe.\ct^* = \SKFESKL.\Enc(\skfe.\msk,
x^* \| 0^\secp) $and sends $\ct^* \seteq \abe.\ct^*$ to $\qD$.
Otherwise, it outputs $0$.
\item $\qD$ outputs a guess $b'$. $\qA_\oh$ outputs $b'$.
\end{enumerate}
\end{description}

Let $H$ be an oracle that for every input $u$, outputs $1$.
Consider now the extractor $\qB_\oh^{\widetilde{\Oracle{\qKG}}, H}$
as specified by the O2H Lemma (Lemma \ref{lem:O2H}). We will now
construct a reduction $\qR$ that runs $\qB_\oh$ by simulating the oracles
$\widetilde{\Oracle{\qKG}}$ and $H$ for $\qB_\oh$, and breaks
key-testability of the $\SKFESKL$ scheme.

\begin{description}
\item Execution of $\qR$ in
$\expb{\SKFESKL,\qR}{key}{test}(1^\secp)$:

\begin{enumerate}
\item The challenger $\qCh$ runs $\skfe.\msk \leftarrow
\SKFESKL.\Setup(1^\secp)$ and
initializes $\qR$ with input $\skfe.\msk$.
\item $\qR$ samples $(\abe.\pk, \abe.\msk) \gets
\ABE.\Setup(1^\secp)$ and initializes $\qB_\oh$ with the input
$(\abe.\pk, \skfe.\msk)$.
\item $\qR$ simulates the access to $\widetilde{\Oracle{\qKG}}(y)$ for $\qB$ as
follows, where $L_{\qKG}$ is a list initialized to be empty:
\begin{description}
\item $\widetilde{\Oracle{\qKG}}(y):$ Given $y$, it finds an entry of the
form $(y, \vk_y, V)$ from $L_{\qKG}$ (If there is such an entry,
it returns $(\bot, L_{\qKG})$). Otherwise, it generates $(\qsk_y, \vk_y)$ 
similar to $\Hyb_3^\coin$.
It adds $(y, \vk_y, \bot)$ to $L_{\qKG}$ and returns $(\qsk_y,
L_{\qKG})$.
\end{description}

\item When $\qB_\oh$ queries an input $u$ to $H$, $\qR$ responds with $1$.

\item $\qB_\oh$ outputs values $y$ and $\sk$. $\qR$ sends $(\sk, y,
\lock)$ to $\qCh$.

\end{enumerate}
\end{description}

We will now claim that with non-negligible probability, $\qR$ obtains
values $\sk$ and $y$ such that $R'(x^* \| \tlC^*,y \| \sk) = 0$.  Recall that
the hybrids differ only when $R(x^*, y) = 0$.  By the definition of
$R'$ and $\tlC^*$ and the decryption correctness of $\SKFESKL$, this
will imply that $\SKFESKL.\CDec(\sk, \skfe.\ct^\star) \neq
F(x^*\|\lock,y)$ since $F(x^* \| \lock, y) = \lock$, and $R'(x^* \| \tlC^*,y \| \sk) = 0$ requires $\tlC^*(\sk)=\bot$. Moreover,
$\KeyTest(\skfe.\tk_y, \sk)$ also holds. Consequently, this will imply
$\qR$ breaks the key-testability of $\SKFESKL$. To prove this, we will
rely on the One-Way to Hiding (O2H) Lemma (Lemma \ref{lem:O2H}).
Consider an oracle $G$ which takes as input $u$ and outputs $R'(x^* \|
\tlC^\star, y\|u)$ and an oracle $H$ which takes as input $u$ and
outputs $1$. Notice that if the oracle $\calO = G$, then the view of
$\qD$ as run by $\qA_\oh$ is the same as in $\Hyb_4^\coin$, while if $\calO = H$, the view
of $\qD$ is the same as in $\Hyb_3^\coin$. By the O2H Lemma, we
have the following, where $z = (\abe.\pk, \skfe.\msk)$.

\begin{align}
\abs{\Pr[\qA_\oh^{\widetilde{\Oracle{\qKG}},
H}(z)=1]-\Pr[\qA_\oh^{\widetilde{\Oracle{\qKG}}, G}(z)=1]} \leq
2q\cdot\sqrt{\Pr[\qB_\oh^{\widetilde{\Oracle{\qKG}}, H}(z)\in S]}
\enspace.
\end{align}
where $S$ is a set where
the oracles $H$ and $G$ differ, which happens only for inputs $u$ s.t.
$R'(x^*\|\tlC^\star, y\|u) = 0$. Since $\qR$ obtains $\sk$ and $y$ as the
output of $\qB_\oh^{\widetilde{\Oracle{\qKG}},H}(z)$, the argument
follows that $\Hyb_3^\coin \approx \Hyb_4^\coin$.

\begin{description}
\item[$\hybi{5}^\coin$:]This is the same as $\hybi{4}^\coin$ except that $\ct^*\seteq\abe.\ct^*$ is generated as $\abe.\ct^*\gets\ABE.\Enc(\abe.\pk,x^*\|\tlC^*,0^{\msglen})$.
\end{description}

The view of $\qA$ in $\hybi{4}^\coin$ and $\hybi{5}^\coin$ can be simulated with $\abe.\pk$ and the access to the quantum key generation oracle $\Oracle{qkg}$.
This is because before $\ABE.\KG$ is required to be applied in the
simulation of oracles $\Oracle{\qKG}$ and $\Oracle{\qVrfy}$, the
relation check $R'(x^* \| \tlC^\star, y\|u)$ is already applied in
superposition.
Thus, we have $\abs{\Pr[\hybi{4}^\coin=1]-\Pr[\hybi{5}^\coin=1]}=\negl(\secp)$.

Lastly, $\hybi{5}^{0}$ and $\hybi{5}^{1}$ are exactly the same experiment and thus we have $\abs{\Pr[\hybi{5}^0=1]-\Pr[\hybi{5}^1=1]}=\negl(\secp)$.
Then, from the above arguments, we obtain
\begin{align}
&\abs{\Pr[\expc{\ABESKL,\qA}{sel}{ind}{kla}(1^\secp,0)=1]-\expc{\ABESKL,\qA}{sel}{ind}{kla}(1^\secp,1)=1]}\\
=&\abs{\Pr[\hybi{0}^0=1]-\Pr[\hybi{0}^1=1]}\le\negl(\secp).
\end{align}
This completes the proof. \qed

Recall now that SKFE-CR-SKL with Key-Testability is implied by
BB84-based SKE-CD, OWFs and adaptively single-ciphertext
function-private SKFE. Since this notion of SKFE is implied by OWFs
(Theorem \ref{thm:1ct_adaptive_function_private_SKFE}), we have that
SKFE-CR-SKL with Key-Testability is implied by LWE. Since,
Compute-and-Compare Obfuscation and quantum-secure Ciphertext-Policy
ABE for Circuits are both implied by LWE, this gives us the 
following theorem:

\begin{theorem}
There exists an ABE-CR-SKL scheme satisfying Selective IND-KLA
security, assuming the polynomial hardness of the LWE assumption.
\end{theorem}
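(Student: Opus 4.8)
The plan is to instantiate the generic scheme $\ABESKL$ constructed in this section from (i) a quantum selectively-secure ciphertext-policy ABE for polynomial-size circuits, (ii) compute-and-compare obfuscation $\CCObf$, and (iii) an SKFE-CR-SKL scheme with the classical decryption property and key-testability, and then to argue that each of these is a consequence of the polynomial hardness of LWE. Decryption and verification correctness of $\ABESKL$ follow directly from those of the three building blocks, exactly as checked above: on a well-formed key the verification procedure merely re-derives and traces out the $\qreg{ABE.SK}$ register, handing $\SKFESKL.\qVrfy$ the very state $\skfe.\qsk$ it expects. Selective IND-KLA security is precisely the hybrid argument $\hybi{0}^\coin,\dots,\hybi{5}^\coin$ carried out above: $\hybi{0}^\coin$ is the real game; $\hybi{0}^\coin\approx\hybi{1}^\coin$ by distributional indistinguishability of $\CCObf$, using that $\lock$ is uniform and independent of everything else; $\hybi{1}^\coin\approx\hybi{2}^\coin$ by the selective single-ciphertext KLA security of $\SKFESKL$ (\cref{def:sel-1ct-security_SKFE-SKL}), which applies because every key with $R(x^*,y)=0$ must be verifiably returned while every key with $R(x^*,y)=1$ has $F(x^*\|0^\secp,y)=F(x^*\|\lock,y)=\bot$; $\hybi{2}^\coin\approx\hybi{3}^\coin$ and $\hybi{3}^\coin\approx\hybi{4}^\coin$ insert the relation check $R'(x^*\|\tlC^*,y\|u)$ in superposition into key generation and into $\Oracle{\qVrfy}$, justified by the classical decryption property of $\SKFESKL$ together with the O2H Lemma with auxiliary quantum oracle (\cref{lem:O2H}) and the key-testability of $\SKFESKL$; and $\hybi{4}^\coin\approx\hybi{5}^\coin$ reduces to the quantum selective security of the ABE, which now suffices because $\ABE.\KG$ is applied only after the relation check is in place, so the entire view is simulable from $\abe.\pk$ and superposition access to the key-generation oracle. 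Since $\hybi{5}^0$ and $\hybi{5}^1$ are identical experiments, the advantage is negligible.

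The second part of the plan is to discharge the three assumptions. Compute-and-compare obfuscation for distributions outputting a uniform lock value independent of the program and auxiliary input is known from LWE \cite{FOCS:GoyKopWat17,FOCS:WicZir17}. Quantum selectively-secure ciphertext-policy ABE for polynomial-size circuits follows from the LWE-based ABE of Boneh et al.\ \cite{EC:BGGHNS14} via the lattice-trapdoor simulation argument, which shows that scheme remains secure under superposition key-generation queries (as discussed in \cref{sec:standard_crypto}). For $\SKFESKL$ with classical decryption and key-testability I would chain the following implications, all available from LWE: a BB84-based SKE-CD scheme exists from any CPA-secure SKE (\cref{thm:SKECD-BB84}), hence from OWFs; selective single-ciphertext SKFE exists from OWFs (\cref{thm:1ct_adaptive_function_private_SKFE}); OWFs are implied by LWE; and the SKFE-CR-SKL construction with key-testability combines exactly a BB84-based SKE-CD scheme, an OWF, and selective single-ciphertext SKFE. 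Combining these, $\ABESKL$ is instantiable from the polynomial hardness of LWE, which finishes the proof.

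I expect no genuine obstacle in this wrap-up itself; the delicate work is entirely in the already-completed hybrid argument, and it concentrates in the handling of $\Oracle{\qVrfy}$ on possibly malformed keys: a returned state may carry a superposition term $\widetilde{u}$ with $R(x^*,y)=0$ for which $\SKFESKL.\CDec(\widetilde{u},\skfe.\ct^*)\neq F(x^*\|\lock,y)$, so the $\qreg{ABE.SK}$ register cannot be uncomputed blindly. This is exactly why key-testability is threaded through $\SKFESKL$ and why \cref{lem:O2H} must be stated with an auxiliary quantum oracle, namely the oracle $\Oracle{\qKG}$ supplying ABE keys in superposition. The only mechanical care needed when assembling the final theorem is to keep the circuit parameters $\pp_D$ consistent between $\CCSim(1^\secp,\pp_D,1)$ used in the construction and $\CCObf(1^\secp,D[\skfe.\ct^*],\lock,0)$ used in $\hybi{1}^\coin$, so that the $\CCObf$ switch is legitimate.
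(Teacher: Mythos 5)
Your proposal is correct and follows essentially the same route as the paper: the hybrid sequence $\hybi{0}^\coin$ through $\hybi{5}^\coin$ with the same justifications (compute-and-compare obfuscation, selective single-ciphertext KLA security of $\SKFESKL$, the classical decryption property, key-testability combined with the O2H lemma with auxiliary quantum oracle for the verification oracle, and quantum selective security of the ABE), followed by instantiating all building blocks from LWE via OWFs, BB84-based SKE-CD, selective single-ciphertext SKFE, and the ABE of Boneh et al. The only slight imprecision is that you lump the justification of $\hybi{2}^\coin\approx\hybi{3}^\coin$ together with that of $\hybi{3}^\coin\approx\hybi{4}^\coin$; in the paper the former uses only the classical decryption property and correctness of $\CCObf$, while the O2H lemma and key-testability are needed only for the latter, but this does not affect correctness.
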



\section{ABE-CR\textsuperscript{2}-SKL from Multi-Input ABE}\label{sec:ABR-CR2-SKL}
In this section, we show how to achieve ABE-CR-SKL with classical certificates (ABE-CR\textsuperscript{2}-SKL) from MI-ABE.
\subsection{Definitions of Multi-Input ABE}\label{sec:def-mi-abe}
First, we recall the definitions of MI-ABE.
The syntax of MI-ABE is as follows:

\begin{definition}\label{def:miabe}
A Multi-Input (Ciphertext-Policy) Attribute-Based Encryption scheme
$\MIABE$ is a tuple of four PPT algorithms $(\Setup, \KeyGen,
\allowbreak \Enc,
\Dec)$. Let $k=\poly(\secp)$ and let $\{\cX_\secp\}_\secp$,
$\{(\cY_\secp)^k\}_\secp$ and $R = \{R_\secp: \cX_\secp \times
(\cY_\secp)^k \ra \bit\}$ be the ciphertext attribute space, key
attribute space and the relation associated with $\MIABE$
respectively. Let $\bit^\ell$ be the message space. Let $s(\secp)$
denote the maximum bit string length required to describe PPT
circuits with input size $k\cdot n(\secp)$ and depth $d(\secp)$ for
polynomials $s, n$ and $d$. Let $\cX_\secp = \bit^{s(\secp)}$ and
$\cY = \bit^{n(\secp)}$. Let $R$ be such that $R(x, y_1, \ldots,
y_k) = 0 \iff x(y_1, \ldots, y_k) = \bot$ where $x \in \cX$ is
parsed as a $k$-input circuit and $(y_1, \ldots, y_k)\in \cY^k$. 

\begin{description}
\item[$\Setup(1^\secp)\ra(\pk,\msk)$:] The setup algorithm takes a security parameter $1^\secp$ and outputs a public key $\pk$ and master secret key $\msk$.

\item[$\KG(\msk, i, y_i)\ra \sk_i$:] The key generation algorithm
takes the master secret key $\msk$, an index $i \in [k]$, and a
key-attribute $y_i \in \cY$ as input. It outputs a secret-key
$\sk_i$.

\item[$\Enc(\pk, x, \msg)\ra \ct$:] The encryption algorithm takes
the public key $\pk$, a ciphertext attribute $x \in \cX$, and a
message $\msg \in \bit^\ell$ as input. It outputs a ciphertext
$\ct$.

\item[$\Dec(\ct, \sk_1, \ldots \sk_k)\ra \widetilde{\msg}$:]
The decryption algorithm takes as input a ciphertext $\ct$ and $k$
secret-keys $\sk_1, \ldots, \sk_k$. It outputs a value
$\widetilde{\msg} \in \bit^\ell \cup \bot$.

\item[Correctness:] We require that
\[
\Pr\left[
\Dec(\ct, \sk_1, \ldots, \sk_k) = \msg
 \ \Bigg\lvert
\begin{array}{rl}
 &(\pk,\msk) \la \Setup(1^\secp),\\
 & \forall i \in [k]: \sk_i \gets \KG(\msk,i,y_i), \\
 &\ct \gets \Enc(\pk,x,\msg)
\end{array}
\right] \ge 1 -\negl(\secp).
\]
holds for all $x\in \cX$ and $(y_1, \ldots, y_k) \in \cY^k$ such
that $R(x,y_1,\ldots,y_k)=0$ and $\msg\in \bin^\ell$.
\end{description}

\begin{remark}
We say that an MI-ABE scheme has \emph{polynomial-arity}, if it
allows $k$ to be an arbitrary polynomial in $\secp$.
\end{remark}

%
\end{definition}

\begin{definition}[Post-Quantum Selective Security for MI-ABE:]\label{def:pq_sel_MIABE}
We say that $\MIABE$ is a selective-secure MI-ABE scheme for relation
$R:\cX \times \cY^k \ra \bit$, if it satisfies the following
requirement, formalized from the experiment
$\expb{\MIABE, \qA}{sel}{ind}(1^\secp,\coin)$ between an adversary
$\qA$ and a challenger $\Ch$:

\begin{enumerate}
\item $\qA$ declares the challenge ciphertext attribute
$x$. $\Ch$ runs $(\pk,\msk)\gets\Setup(1^\secp)$ and sends $\pk$ to $\qA$.

\item $\qA$ can get access to the following key generation oracle.

\begin{description}
\item[$\Oracle{kg}(i, y_i)$:] It outputs $\sk_i \gets \KG(\msk, i,
y_i)$ to $\qA$.

\end{description}

\item At some point, $\qA$ sends $(\msg_0,\msg_1)$ to $\Ch$. Then, $\Ch$
generates $\ct^*\gets\Enc(\pk,x,\msg_\coin)$ and sends $\ct^*$ to
$\qA$.

\item Again, $\qA$ can get access to the oracle $\Oracle{kg}$.
\item $\qA$ outputs a guess $\coin^\prime$ for $\coin$ and the
experiment outputs $\coin'$.
\end{enumerate}

For an adversary to be admissible, we require that for all tuples
$(y_1, \ldots, y_k) \in \cY^k$ received by $\qA$, it must hold that
$R(x, y_1, \ldots, y_k) = 1$ (non-decrytable). Given this
constraint, we say $\MIABE$ satisfies selective security if for all
QPT $\qA$, the following holds:

\begin{align}
\advb{\MIABE,\qA}{sel}{ind}(1^\secp) \seteq
\abs{\Pr[\expb{\MIABE,\qA}{sel}{ind} (1^\secp,0) \ra 1] -
\Pr[\expb{\MIABE,\qA}{sel}{ind} (1^\secp,1) \ra 1] }\leq \negl(\secp).
\end{align}

%

\end{definition}
\begin{remark}
    We do not require quantum security in~\cref{def:pq_sel_MIABE} unlike~\cref{def:qsel_ind_ABE}.
\end{remark}

\paragraph{Comparison with the Definition of Agrawal et.
al \cite{C:ARYY23}.}

Their definition considers a key-policy variant of MI-ABE. It
consists of algorithms $(\Setup, \Enc, \KG_1, \cdots,\allowbreak
\KG_{k-1}, \KG_k, \Dec)$ where $\Enc$ works in a similar way
as our definition, i.e., $\Enc(\pk,\allowbreak x_0, \msg) \ra \ct$ for message
$\msg$ and attribute $x_0$. For each $i \in [k-1]$, $\KG_i$ works
as $\KG_i(\msk, x_i) \ra \sk_i$ for the $i$-th key-attribute
$x_i$. On the other hand, $\KG_k$ works as $\KG_k(\msk, f)
\ra \sk_f$ where $f$ is an arbitrary $k$-input function. The
guarantee is that decryption is feasible if and only if
$f(x_0, \ldots, x_{k-1}) = \bot$. The algorithm $\Dec$ has the
syntax $\Dec(\pk, \ct, \sk_1, \ldots, \sk_{k-1}, \sk_f) \ra \msg'$.

It is easy to see that this definition directly implies a
$(k-1)$-input ciphertext-policy MI-ABE with algorithms $(\Setup',
\Enc', \KG', \Dec')$. Specifically, $\Setup'(1^\secp) \seteq
\Setup(1^\secp)$, $\Enc'(\pk, x_0, \msg) \seteq \big(\pk, \Enc(\pk,
x_0, \msg)\big)$ and $\KG'$ can be defined as $\KG'(\msk, i, x_i)
\seteq \KG_i(\msk, x_i)$ for all $i \in [k-1]$. Finally, $\Dec'$ is
defined as $\Dec'(\ct = (\pk, \widetilde{\ct}), \sk_1, \ldots,
\sk_{k-1}) \seteq \Dec(\pk, \widetilde{\ct}, \sk_1, \ldots,
\sk_{k-1}, \sk_f)$ for the function $f(x_0, \ldots, x_{k-1}) =
C_{x_0}(x_1, \ldots, x_{k-1})$ where $C_{x_0}$ is the circuit
described by ciphertext-attribute $x_0$.

Unfortunately, \cite{C:ARYY23} only allows for a constant $k$, and
hence we do not currently know of a construction where
$k=\poly(\secp)$. In the recent work of \cite{myEPRINT:AgrKumYam24a}, a
variant of MI-ABE (for $k=\poly(\secp)$) was achieved from the LWE and evasive LWE assumptions, which is non-standard.
However, the scheme it implies would require the encryption algorithm $\Enc'$ to utilize the master secret-key
$\msk$.

\paragraph{Disucssion on MI-ABE and IO.}
We focus on MI-ABE satisfying~\cref{def:miabe} in this paragraph.
MI-ABE is not stronger than IO since IO is equivalent to multi-input functional encryption (MIFE)~\cite{EC:GGGJKL14} and MIFE trivially implies MI-ABE.
Although we do not know whether MIFE is separated from MI-ABE, it is likely since PKFE is separated from ABE~\cite{C:GarMahMoh17} and constructing FE schemes is significantly more challenging than ABE schemes.
Moreover, MI-ABE is not currently known from weaker assumptions than IO.
They are qualitatively different primitives, and MI-ABE could be constructed from weaker assumptions in the future.
We also know that MI-ABE implies witness encryption~\cite{SCN:BJKPW18},\ryo{but this does not imply anything about relationship between MI-ABE and PKFE.} and witness encryption is achieved from the evasive LWE assumption~\cite{C:Tsabary22,AC:VaiWeeWic22}.
Hence, MI-ABE is somewhere between witness encryption and IO.
Achieving MI-ABE from lattice assumptions is an interesting open problem.

\subsection{Definitions of ABE-CR\textsuperscript{2}-SKL}
The syntax of ABE-CR\textsuperscript{2}-SKL is defined as follows. 

\begin{definition}[ABE-CR\textsuperscript{2}-SKL]\label{def:pke-cr2}
An ABE-CR\textsuperscript{2}-SKL scheme $\ABECRCRSKL$ is a tuple of
six algorithms $(\Setup,\qKG, \Enc, \allowbreak\qDec,\qDel, \Vrfy)$. 
Below, let $\cM$  be the message space of $\ABECRCRSKL$, $\cX$ be the
ciphertext-attribute space, and $\cY$ the key-attribute space.
\begin{description}
\item[$\Setup(1^\secp)\ra(\ek,\msk)$:] The setup algorithm takes a
    security parameter $1^\lambda$, and outputs an encryption key
    $\ek$ and a master secret-key $\msk$.

\item[$\qKG(\msk, y)\ra(\qsk,\vk)$:] The key generation algorithm takes
the master secret-key $\msk$ and a key-attribute $y\in \cY$ as inputs, and outputs a decryption key
$\qsk$ and a verification key $\vk$.

\item[$\Enc(\ek,\widetilde{x},\msg)\ra\ct$:] The encryption algorithm takes an
encryption key $\ek$, a ciphertext-attribute $\widetilde{x} \in \cX$, and a message $\msg
\in \cM$ as inputs, and outputs a ciphertext $\ct$.

\item[$\qDec(\qsk,\ct)\ra\widetilde{\msg}$:] The decryption
    algorithm takes a decryption key $\qsk$ and a ciphertext $\ct$,
    and outputs a value $\widetilde{\msg}$ or $\bot$.


\item[$\qDel(\qsk)\ra\cert$:] The deletion
algorithm takes a decryption key $\qsk$ and outputs a deletion
certificate $\cert$.

\item[$\Vrfy(\vk, \cert')\ra\top/\bot$:] The verification algorithm
takes a verification key $\vk$ and a certificate $\cert'$,
and outputs $\top$ or $\bot$.

\item[Decryption correctness:]For every $\msg \in \cM$, $\widetilde{x} \in \cX$
    and $y \in \cY$ such that $\widetilde{x}(y) = 0$ (decryptable), we have
\begin{align}
\Pr\left[
\qDec(\qsk, \ct) \allowbreak = \msg
\ :
\begin{array}{ll}
(\ek,\msk)\gets\Setup(1^\secp)\\
(\qsk,\vk)\gets\qKG(\msk, y)\\
\ct\gets\Enc(\ek,\widetilde{x},\msg)
\end{array}
\right] 
\ge 1-\negl(\secp).
\end{align}

\item[Verification correctness:] For every $y \in \cY$, we have
\begin{align}
\Pr\left[
\Vrfy(\vk,\cert)=\top
\ :
\begin{array}{ll}
(\ek,\msk)\gets\Setup(1^\secp)\\
(\qsk,\vk)\gets\qKG(\msk, y)\\
\cert\gets\qDel(\qsk)
\end{array}
\right] 
\ge 1-\negl(\secp).
\end{align}
\end{description}

\begin{remark}
We use the same experiment identifier
$\mathsf{sel}\textrm{-}\mathsf{ind}\textrm{-}\mathsf{kla}$ to refer to
the selective IND-KLA
security experiments of ABE-CR-SKL and
ABE-CR\textsuperscript{2}-SKL. This is for the sake of simplicity,
as the exact experiment will be clear from the context.
\end{remark}

\begin{description}
\item[Selective IND-KLA Security:] This security notion for an
ABE-CR\textsuperscript{2}-SKL scheme is defined in the same way as
for ABE-CR-SKL, except that instead of access to the oracle
$\Oracle{\qVrfy}$ in the experiment, $\qA$ receives access to the
following oracle:

\begin{description}

\item[$\Oracle{\Vrfy}(y, \cert)$:] Given $(y,\cert)$, it finds an
entry $(y,\vk,V)$ from $\List{\qKG}$. (If there is no such entry, it returns $\bot$.) 
It then runs $\decision \seteq \Vrfy(\vk, \cert)$ and returns
$\decision$ to $\qA$. If $V=\bot$, it updates the entry into
$(y,\vk,\decision)$.

\end{description}

\end{description}
\end{definition}

\subsection{Construction of ABE-CR\textsuperscript{2}-SKL from MI-ABE}\label{sec-ABE-CR-SKL-classical-certificate}
\paragraph{Construction overview.}

We first provide the main idea
behind our ABE-CR\textsuperscript{2}-SKL based on MI-ABE, which follows along the lines of our
ABE-CR-SKL construction with some key-differences.
In this case, we directly rely on a BB84-based SKE-CD scheme as a
building block, instead of needing something similar to SKFE-CR-SKL.
Consider now an SKE-CD ciphertext $\skecd.\qct$ of the plaintext $0^\secp$.
Let its corresponding verification-key $\skecd.\vk$ be of the form
$\skecd.\vk = (x,
\theta)$, where $x$ and $\theta$ are $k$-bit strings. Then, $\skecd.\qct$ is
of the form $\ket{\psi_1} \otimes \cdots \otimes \ket{\psi_k}$, where
for $i \in [k]$, $\ket{\psi_i}$ is of the following form:

\begin{align}
\ket{\psi_i}=
\begin{cases}
    \ket{x[i]} & if~~ \theta[i]=0\\
    \ket{0}+(-1)^{x[i]}\ket{1} & if~~ \theta[i]=1
\end{cases}
\end{align}

Now, for each $i \in \{2, \dots, k\}$ and $u \in \bit$, consider the attribute-key
$\abe.\sk_{i, u}$ generated by the MI-ABE key-generation
algorithm for slot $i$ and attribute $t \| u$. Here, $t$ is a value chosen at
random and is common for all slots corresponding to a given decryption-key
$\qsk$. Additionally, for $u \in \bit$, consider the keys
$\abe.\sk_{1,u}$ corresponding to the attributes $t \| u \| y$ where
$y$ is the actual ABE key-attribute.

Consider now the following state $\rho_i$ for each $i \in [k]$:

\begin{align}
    \rho_i=
    \begin{cases}
        \ket{x[i]}\ket{\abe.\sk_{i,x[i]}} & if~~ \theta[i]=0\\
        \ket{0}\ket{\abe.\sk_{i,0}}+(-1)^{x[i]}\ket{1}\ket{\abe.\sk_{i,1}} & if~~ \theta[i]=1,
    \end{cases}
\end{align}

The quantum decryption-key of our scheme will be the tuple $\qsk =
(\rho_i)_{i \in [k]}$. The encryption algorithm is similar to the
ABE-CR-SKL scheme, and outputs an MI-ABE encryption of message $\msg$
under a policy $C_{\mathsf{ABE}} \| \tlC$ where $\tlC \gets
\CCSim(1^\secp)$ is a simulator of a compute and compare obfuscator and $C_{\mathsf{ABE}}$ is the actual ABE policy.
However, the MI-ABE relation is a little different. The relation is
such that for a set of $k$ attributes $y_1 = t_1 \| u_1 \| y$ and $y_2
= t_2 \| u_2, \ldots, y_k = t_k \| u_k$, it outputs $0$ (decryptable)
whenever $C_{\mathsf{ABE}}(y) = 0$ (the ABE relation is satisfied) AND $t_1 = \ldots = t_k$ AND
$\tlC(u_1 \| \ldots \| u_k) = \bot$.  Otherwise, it outputs $1$. As in
ABE-CR-SKL, the idea is that $\tlC$ is indistinguishable from $\tlC^*$
that is an obfuscation of the compute-and-compare circuit
$\cnc{D[\lock \xor r, \skecd.\sk]}{\lock, 0}$.  Here, $\lock$ and $r$ are
random values, and $\skecd.\sk$ is the secret-key of the SKE-CD scheme.  The
circuit $D$ is such that $D(u)$ outputs $\lock \xor r \xor
\SKECD.\CDec(\skecd.\sk, u)$, where $\SKECD.\CDec$ is the classical
decryption algorithm of the SKE-CD scheme.

Consequently, when every $\qsk = (\rho_i)_{i\in[k]}$ the adversary
receives is generated using an SKE-CD encryption of $r$ (instead of
$0^\secp$ as in the scheme), then any tuple of $k$ MI-ABE keys of the
adversary having attributes $\big(t_1 \| u_1 \| y, t_2 \| u_2 \ldots, t_k \| u_k\big)$
satisfies one of the three conditions:

\begin{itemize}
\item The values $t_1 \ldots t_k$ are not all the same.
\item $C_{\mathsf{ABE}}(y) \neq 0$.
\item For $u = u_1 \| \ldots \| u_k$, $D[\lock \xor r, \skecd.\sk](u)$
returns $\lock$.
\end{itemize}

Notice that the former condition ensures that the adversary cannot
interleave keys corresponding to different decryption-keys.
The last condition holds because the adversary never receives
$\abe.\sk_{i, 1 - x[i]}$ for any $i$ such that $\theta[i] = 0$.
Consequently, $u$ and $x$ are the same at all positions where
$\theta[i]=0$. This means that $\SKECD.\CDec(\skecd.\sk, u)$ outputs $r$ by the
classical decryption property of $\SKECD$ (Definition \ref{def:bb84}).
It is important to note that the positions $i$ where $\theta[i] = 1$
(the Hadamard positions) have no effect on the value output by
$\SKECD.\CDec$
as their purpose is just in the verification of deletion.
As a result, the security of
MI-ABE allows to simulate the adversary's view in this hybrid, as
no ``decryptable'' set of $k$ keys is given out.

Importantly, the switch from SKE-CD encryptions of $0^\secp$ to $r$ is
indistinguishable, given that the adversary deletes all the information
in the SKE-CD ciphertexts.  To enforce this, the deletion algorithm
requires the adversary to measure both the SKE-CD and MI-ABE registers
for each slot to obtain values $(c_i, d_i)_{i \in [k]}$. Then, given
the values $\{\abe.\sk_{i, 0} \xor \abe.\sk_{i, 1}\}_{i \in [k]}$ as
part of the verification key, the verification checks whether $x[i] =
c_i \xor d_i \cdot (\abe.\sk_{i, 0} \xor \abe.\sk_{i, 1})$ holds for
every $i \in [k]$ such that $\theta[i] = 1$. As a result, we are able
use a standard hybrid argument to turn any distinguisher (of the
$0^\secp$ and $r$ hybrids) into an attack on the certified deletion
security of the SKE-CD scheme.

\paragraph{Construction.}
We will construct an ABE-CR\textsuperscript{2}-SKL
scheme $\ABECRCRSKL = \ABECRCRSKL.(\Setup, \qKG, \Enc,$ $\qDec,
\qDel, \Vrfy)$ using the following building blocks:

\begin{itemize}
\item Multi-Input (Ciphertext-Policy) ABE Scheme $\MIABE =
\MIABE.(\Setup, \KG,\allowbreak \Enc, \Dec)$ for the following relation:

\begin{description}
    \item[$R(\widetilde{x} \| z,y_1, \cdots, y_k)$:] Let $\widetilde{x}$ be
interpreted as a circuit and $z$ as a $k$-input circuit.
\begin{itemize}
\item Parse $y_1 = t_1 \| u_1 \| y$.
\item Parse $y_i=t_i\|u_i$ for every $i\in 2, \ldots, k$.
\item If $t_i\ne t_j$ for some $i,j\in[k]$, output $1$. Otherwise, go to the next step.
\item If $z(u_1\|\cdots\|u_k) = \bot$ AND $\widetilde{x}(y) = 0$, output $0$
(decryptable). Else, output $1$.
\end{itemize}
\end{description}

\item Compute-and-Compare Obfuscation $\CCObf$ with the simulator
$\CCSim$.

\item BB84-based SKE-CD scheme $\SKECD =
\SKECD.(\KG,\qEnc,\qDec,\qDel,\Vrfy)$ with the classical
decryption algorithm $\SKECD.\CDec$.
\end{itemize}

The description of each algorithm of $\ABECRCRSKL$ is as follows.

\begin{description}
\item[$\ABECRCRSKL.\Setup(1^\secp)$:] $ $
\begin{itemize}
    \item Generate $(\abe.\pk, \abe.\msk)\gets\MIABE.\Setup(1^\secp)$.
        \item Generate $\skecd.\sk\gets\SKECD.\KG(1^\secp)$.
    \item Output $\ek\seteq\abe.\pk$ and
        $\msk\seteq(\abe.\msk,\skecd.\sk)$.
\end{itemize}

\item[$\ABECRCRSKL.\qKG(\msk, y)$:] $ $
\begin{itemize}
\item Parse $\msk=(\abe.\msk,\ske.\msk)$.
\item Sample a random value $t\gets\bit^\secp$.
\item Generate
$(\skecd.\qct,\skecd.\vk)\gets\SKECD.\qEnc(\skecd.\sk,0^\secp)$.
$\skecd.\vk$ is of the form
$(x,\theta)\in\bit^{k}\times\bit^{k}$, and $\skecd.\qct$ is of
the form
$\ket{\psi_1}_{\qreg{SKECD.CT_1}}\tensor\cdots\tensor\ket{\psi_{k}}_{\qreg{SKECD.CT_{k}}}$.

\item Generate $\abe.\sk_{1,b}\gets\MIABE.\KG(\abe.\msk, 1, t\|b\|y)$
  for each $b \in \bit$.

\item For every $i \in 2, \ldots, k$, do the following:
\begin{itemize}
\item Generate $\abe.\sk_{i,b}\gets\MIABE.\KG(\abe.\msk, i, t\|b)$
for each $b\in\bit$.
\end{itemize}

\item For every $i \in [k]$, do the following:
\begin{itemize}
\item Prepare a register $\qreg{ABE.SK_i}$ that is initialized
to $\ket{0\cdots0}_{\qreg{ABE.SK_i}}$.

\item Apply the map
$\ket{u}_{\qreg{SKECD.CT_i}}\ket{v}_{\qreg{ABE.SK_i}}\ra\ket{u}_{\qreg{SKECD.CT_i}}\ket{v\oplus\abe.\sk_{i,u}}_{\qreg{ABE.SK_i}}$
to the registers $\qreg{SKECD.CT_i}$ and $\qreg{ABE.SK_i}$, and
obtain the resulting state $\rho_i$. 
\end{itemize}

\item Output $\qsk:=(\rho_i)_{i \in [k]}$ and
$\vk\seteq\big(\skecd.\vk,
\{\abe.\sk_{i,0} \xor
\abe.\sk_{i,1}\}_{i\in[k]:\theta[i]=1}\big)$.

\end{itemize}

\item[$\ABECRCRSKL.\Enc(\ek,\widetilde{x},\msg)$:] $ $
\begin{itemize}
    \item Parse $\ek=\abe.\pk$.
\item Generate $\tlC\gets\CCSim(1^\secp,\pp_D,1)$, where $\pp_D$ consists of circuit parameters of $D$ defined in the security proof.
    \item Generate $\abe.\ct\gets\MIABE.\Enc(\abe.\pk,\widetilde{x} \| \tlC,\msg)$.
    \item Output $\ct\seteq\abe.\ct$.
\end{itemize}
 
\item[$\ABECRCRSKL.\qDec(\qsk,\ct)$:] $ $
\begin{itemize}
\item Parse $\qsk=(\rho_i)_{i\in[k]}$ and $\ct=\abe.\ct$. We
denote the register holding $\rho_i$ as $\qreg{SKECD.CT_i}\tensor\qreg{ABE.SK_i}$.

\item Prepare a register $\qreg{MSG}$ that is initialized to $\ket{0\cdots0}_{\qreg{MSG}}$

\item To the registers $\qreg{\bigotimes_{i\in[k]}ABE.SK_i}$ and
$\qreg{MSG}$, apply 
$\bigotimes_{i\in[k]}\ket{v_i}_{\qreg{ABE.SK_i}}\otimes\ket{w}_{\qreg{MSG}}\ra\bigotimes_{i\in[k]}\ket{v_i}_{\qreg{ABE.SK_i}}\otimes\ket{w\oplus\MIABE.\Dec(\abe.\ct,
v_1,\cdots,v_k)}_{\qreg{MSG}}$.

   \item Measure the register $\qreg{MSG}$ in the computational basis and output the result $\msg^\prime$.
\end{itemize}

\item[$\ABECRCRSKL.\qDel(\qsk)$:] $ $
\begin{itemize}
\item Parse $\qsk = (\rho_i)_{i \in [k]}$. Let the register 
holding $\rho_i$ be denoted as $\qreg{SKECD.CT_i}\tensor\qreg{ABE.SK_i}$.
\item For each $i \in [k]$, measure the registers $\qreg{SKECD_i}$ and $\qreg{ABE.SK_i}$ in the Hadamard basis to
obtain outcomes $c_i$ and $d_i$.
\item Output $\cert = (c_i,d_i)_{i \in [k]}$.
\end{itemize}

\item[$\ABECRCRSKL.\Vrfy(\vk,\cert)$:] $ $
\begin{itemize}
\item Parse $\vk = \big(\skecd.\vk=(x,\theta), \{\abe.\sk_{i,0}
\xor \abe.\sk_{i,1}\}_{i\in[k]:\theta[i]=1}\big)$ and $\cert =
(c_i,d_i)_{i \in [k]}$.

\item Output $\top$ if $x[i]=c_i\oplus
    d_i\cdot(\abe.\sk_{i,0}\oplus\abe.\sk_{i,1})$ holds for every
    $i\in[k]$ such that $\theta[i]=1$ and $\bot$ otherwise.

\end{itemize}
\end{description}

Let $\qsk \gets\ABECRCRSKL.\qKG(\msk, y)$. $\qsk$ is of the form
$(\rho_i)_{i \in [k]}$, where $\rho_i$ is of the following form, where
$(x, \theta)$ is the verification key of a BB84-based SKECD scheme:
\begin{align}
\rho_i=
\begin{cases}
    \ket{x[i]}\ket{\abe.\sk_{i,x[i]}} & if~~ \theta[i]=0\\
    \ket{0}\ket{\abe.\sk_{i,0}}+(-1)^{x[i]}\ket{1}\ket{\abe.\sk_{i,1}} & if~~ \theta[i]=1,
\end{cases}
\end{align}

Recall that $\abe.\sk_{i, b} = \MIABE.\KG(\abe.\msk, i, t\|b)$ for every
$i \in \{2, \ldots, k\}$ and $b \in \bit$. Moreover, 
$\abe.\sk_{1, b} = \MIABE.\KG(\abe.\msk, 1, t\|b\|y)$ for each $b \in
\bit$.

\paragraph{Decryption correctness.}
The MI-ABE relation defined in the construction is as follows:

\begin{description}
\item[$R(\widetilde{x} \| z,y_1, \cdots, y_k)$:] Let $\widetilde{x}$ be
interpreted as a circuit and $z$ as a $k$-input circuit.
\begin{itemize}
\item Parse $y_1 = t_1 \| u_1 \| y$.
\item Parse $y_i=t_i\|u_i$ for every $i\in 2, \ldots, k$.
\item If $t_i\ne t_j$ for some $i,j\in[k]$, output $1$. Otherwise, go to the next step.
\item If $z(u_1\|\cdots\|u_k) = \bot$ AND $\widetilde{x}(y) = 0$, output $0$
(decryptable). Else, output $1$.
\end{itemize}
\end{description}

Clearly, $z \seteq \tlC \gets \CCSim(1^\secp, \pp_D, 1)$ always outputs $\bot$
and $t_1 = \ldots = t_k$ holds by construction. Hence, the guarantee
follows from the decryption correctness of $\MIABE$, as long as
$\widetilde{x}(y) = 0$ holds, as desired.

\paragraph{Verification correctness.}

Recall that the verification only checks the Hadamard basis positions,
i.e, positions $i \in [k]$ such that $\theta[i] = 1$. Consider the
outcome $(c_i, d_i)$ obtained by measuring the state
$\ket{0}\ket{\abe.\sk_{i,0}}+(-1)^{x[i]}\ket{1}\ket{\abe.\sk_{i,1}}$
in the Hadamard basis, where $c_i$ denotes the first bit of the
outcome. It is easy to see that $x[i] = c_i \xor d_i \cdot
(\abe.\sk_{i, 0} \xor \abe.\sk_{i, 1})$ is satisfied. Hence, the
verification correctness follows.

\subsection{Proof of Selective IND-KLA Security}
Let $\qA$ be an adversary for the selective IND-KLA security of $\ABECRCRSKL$.
We consider the following sequence of experiments.
\begin{description}
\item[$\hybi{0}^\coin$:] This is
$\expc{\ABECRCRSKL,\qA}{sel}{ind}{kla}(1^\secp,\coin)$.

\begin{enumerate}
\item $\qA$ declares the challenge ciphertext attribute $x^* \in \cX$.
\item The challenger $\qCh$ generates
    $(\abe.\pk,\abe.\msk)\gets\MIABE.\Setup(1^\secp)$ and
    $\skecd.\sk \gets \SKECD.\KG(1^\secp)$, and sends $\ek\seteq\abe.\pk$ to $\qA$.

\item $\qA$ can get access to the following (stateful) oracles, where
the list $L_{\qKG}$ used by the oracles is initialized to an empty
list:


\begin{description}
\item[$\Oracle{\qKG}(y)$:] Given $y$, it finds an entry of the form
$(y,\vk_y,V_y)$ from $\List{\qKG}$. If there is such an entry, it returns $\bot$.
Otherwise, it generates $(\qsk_y,\vk_y)\la\qKG(\msk,y)$,
sends $\qsk_y$ to $\qA$, and adds $(y,\vk_y,\bot)$ to $\List{\qKG}$.
    
\item[$\Oracle{\Vrfy}(y, \cert)$:] Given $(y, \cert)$, it finds an
entry $(y,\vk_y,V_y)$ from $\List{\qKG}$. (If there is no such entry, it returns $\bot$.) 
It then runs $\decision \seteq \Vrfy(\vk_y, \cert)$ and returns
$\decision$ to $\qA$. If $V_y=\bot$, it updates the entry into
$(y,\vk_y,\decision)$.
\end{description}

\item $\qA$ sends $(\msg_0^*,\msg_1^*)\in \cM^2$ to the challenger. If
$V_j=\bot$ for some $j\in[q]$, $\qCh$ outputs $0$ as the final
output of this experiment. Otherwise, $\qCh$ generates
$\tlC^*\gets\CCSim(1^\secp, \pp_D, 1)$ and
$\abe.\ct^*\gets\MIABE.\Enc(\abe.\pk,x^* \| \tlC^*,\msg_\coin^*)$, and sends
$\ct^*\seteq\abe.\ct^*$ to $\qA$.

\item $\qA$ outputs $\coin^\prime$. $\qCh$ outputs $\coin'$ as the final output of the experiment.
\end{enumerate}

\item[$\hybi{1}^\coin$:]This is the same as $\hybi{0}^\coin$ except
that $\tlC^*$ is generated as $\tlC^*\gets\CCObf(1^\secp,D[\lock\oplus
r, \skecd.\sk], \lock,0)$, where $\lock\gets\bit^\secp$, $r\gets\bit^\secp$ and
$D[\lock\oplus r, \skecd.\sk](x)$ is a circuit that has $\lock\oplus
r$ and $\skecd.\sk$ hardwired and outputs $\lock\oplus
r\oplus\SKECD.\CDec(\skecd.\sk,x)$.

Since $\lock$ is chosen at random independently of all other
variables, from the security of compute-and-compare obfuscation, we
have that $\Hyb_0^\coin \approx \Hyb_1^\coin$.

\item[$\hybi{2}^\coin$:]This is the same as $\hybi{1}^\coin$ except
that $\skecd.\qct$ generated as part of $\ABECRCRSKL.\qKG(\msk, y)$
such that $x^*(y) = 0$, is generated as
$(\skecd.\qct,\allowbreak\skecd.\vk)\gets\SKECD.\qEnc(\skecd.\sk,r)$.

In the previous step, we changed the distribution of $\tlC^*$ used
to generate the challenge ciphertext so that $\tlC^*$ has
$\skecd.\sk$ hardwired. However, the ciphertext is given to $\qA$
after $\qA$ deletes all the leased secret keys satisfying the ABE
relation, and thus the
corresponding ciphertetexts of $\SKECD$.  Thus, we can use the security of
$\SKECD$ to argue that $\hybi{1}^\coin \approx \hybi{2}^\coin$. Let
$q$ be the number of key-queries made to $\Oracle{\qKG}$ that satisfy
the relation wrt $x^*$. Consider now $q+1$ hybrids $\Hyb_{1,0}^\coin, \cdots, 
\Hyb_{1,q}^\coin$ where for every $l \in [q]$, $\Hyb_{1,l}^\coin$
is such that the first $l$ of the $q$ keys are generated using
$\SKECD$ encryptions of $r$ (instead of $0^\secp$). Notice that
$\Hyb_{1,0}^\coin \equiv \Hyb_1^\coin$ and
$\Hyb_{1,q}^\coin \equiv \Hyb_2^\coin$. Suppose that
$\Hyb_{1,l}^\coin \not\approx \Hyb_{1,l+1}^\coin$ for some $l \in
[q]$. Let $\qD$ be a distinguisher with non-negligible advantage in
distinguishing the hybrids.
We will construct the following reduction to the IND-CVA-CD
security of $\SKECD$:

\begin{description}
\item Execution of $\qR^\qD$ in
$\expc{\SKECD,\qR}{ind}{cva}{cd}(1^\secp,b)$:
\begin{enumerate}
\item The challenger $\qCh$ computes $\sk \gets \KG(1^\secp)$.
\item $\qD$ declares a challenge ciphertext attribute $x^*$ to $\qR$. $\qR$ generates $(\abe.\pk, \abe.\msk) \gets
\MIABE.\Setup(\allowbreak1^\secp)$ and sends $\ek \seteq \abe.\pk$ to $\qD$.

\item $\qR$ sends $(r_0,r_1) \seteq (0^\secp, r)$ to $\qCh$.
\item $\qCh$ computes $(\qct^\star, \vk^\star) \gets \qEnc(\sk,
r_b)$ and sends $\qct^\star$ to $\qR$.

\item $\qR$ simulates the oracle $\Oracle{\qKG}(y)$ for $\qD$ as
follows. Given $y$, it finds an entry $(y, \vk_y, V_y)$ from
$L_{\qKG}$. If there is such an entry, it returns $\perp$. Otherwise,
it proceeds as follows:

\begin{itemize}
\item For the $j$-th query (of the $q$ satisfying key-queries), if $j \neq l+1$, $\qR$ computes 
$\qsk_y$ and $\vk_y$ as in $\Hyb_0^\coin$, except
that the values $(\skecd.\qct, \skecd.\vk)$ are computed using
$\Oracle{\Enc}(r)$ for $j \in [l]$ and using
$\Oracle{\Enc}(0^\secp)$ for $j \in \{l+2, \ldots, q\}$.

\item For the $j$-th query (of the $q$ satisfying key-queries), if $j = l+1$, $\qR$ computes 
$\qsk_{y}$ as in $\Hyb_0^\coin$, except that the value $\qct^\star$ is used instead
of $\skecd.\qct$. Consider the values $\{\abe.\sk_{i,0},
\abe.\sk_{i,1}\}_{i\in[k] : \theta[i]=1}$ computed during the
computation of $\qsk_{y}$. $\qR$ computes the value
$\widetilde{\vk}_{y} = \{\abe.\sk_{i,0}\xor
\abe.\sk_{i,1}\}_{i\in[k]}$.

\item If $x^*(y) \neq 0$ (unsatisfying key-queries), $\qR$ computes
$\qsk_y$ and $\vk_y$ as in $\Hyb_0^\coin$, except that $(\skecd.\qct,
\skecd.\vk)$ are computed using $\Oracle{\Enc}(0^\secp)$.
\end{itemize}

\item $\qR$ simulates the view of oracle $\Oracle{\Vrfy}(y, \cert)$ for
$\qD$ as follows. Given $(y, \cert)$, it finds an entry $(y, \vk_y,
V_y)$ from $L_{\qKG}$. If there is no such entry, it returns $\perp$.
Otherwise, it proceeds as follows:
\begin{itemize}
\item If $x^*(y) \neq 0$, $\qR$ simulates access to $\Oracle{\Vrfy}$ as in $\Hyb_0^\coin$.

\item If $y$ corresponds to the $j$-th satisfying key-query to
$\Oracle{\qKG}$ and $j \neq l+1$,
$\qR$ simulates access to $\Oracle{\Vrfy}$ as in $\Hyb_0^\coin$.

\item If $y$ corresponds to the $j$-th satisfying key-query to
$\Oracle{\qKG}$ and $j = l+1$, $\qR$ simulates access to
$\Oracle{\Vrfy}(y,\cert)$ as follows:

\begin{enumerate}
    \item Parse $\widetilde{\vk}_{y} = \{\abe.\sk_{i,0}\xor
\abe.\sk_{i,1}\}_{i\in[k]}$ and $\cert=(c_i,d_i)_{i \in [k]}$.
\item Compute $x[i] = c_i \xor d_i\cdot(\abe.\sk_{i,0} \xor
\abe.\sk_{i,1})$ for every $i \in [k]$.
\item If $\Oracle{\Vrfy}(x) = \sk$, set $d \seteq \top$. Else,
      set $d \seteq \bot$.
\end{enumerate}

It returns $d$ to $\qD$. If $V_{y} = \bot$, it updates the entry in
$L_{\qKG}$ into $(y, \vk_y, d)$.
\end{itemize}

\item $\qD$ sends $(m_0^*, m_1^*) \in \cM^2$ to $\qR$. $\qR$ 
computes $\ct^*$ as in Step $4.$ of $\Hyb_1^\coin$ using
$\skecd.\sk \seteq \sk$ obtained from $\qCh$. It sends $\ct^*$ to
$\qD$ if $V_y = \top$ for every entry of the form $(y, \vk_y, V_y)$ in
$L_{\qKG}$ such that $x^*(y) = 0$. Else, it outputs $\bot$.
\item $\qD$ guesses a bit $b'$. $\qR$ sends $b'$ to $\qCh$.
\end{enumerate}
\end{description}
Notice that the view of $\qD$ in the reduction is that of
$\Hyb_{1,l}^\coin$ if $b=0$ and $\Hyb_{1,l+1}^\coin$ if $b=1$.
Moreover, $\qD$ can only distinguish when $V_y = \top$ for all entries
$(y, \vk_y, V_y) \in L_{\qKG}$ such that $x^*(y) = 0$.
This means that $V_{y}$ corresponding to the $(l+1)$-th satisfying
query to $\Oracle{\qKG}$ also
satisfies $V_y = \top$. Consequently, the
corresponding verification check of $\qCh$ is also $\top$. Hence, $\qR$ succeeds
in breaking IND-CVA-CD security of $\SKECD$. Therefore, we have
$\Hyb_1^\coin \approx \Hyb_2^\coin$.
\end{description}

We will now bound the distinguishing gap between $\hybi{2}^0$ and
$\hybi{2}^1$ using the security of $\MIABE$. Consider a
decryption key $\qsk_y = (\rho_i)_{i\in[k]}$ such that $x^*(y) = 0$. Recall that a ciphertext of BB84-based $\SKECD$ is of the form
$\ket{\psi_1}\tensor\cdots \tensor\ket{\psi_k}$, where
    \begin{align}
    \ket{\psi_i}=
    \begin{cases}
        \ket{x[i]} & \textrm{if}~~ \theta[i]=0\\
        \ket{0}+(-1)^{x[i]}\ket{1} & \textrm{if}~~ \theta[i]=1
    \end{cases}
    \end{align}
As a result, we have
    \begin{align}
    \rho_i=
    \begin{cases}
        \ket{x[i]}\ket{\abe.\sk_{i,x[i]}} & \textrm{if}~~ \theta[i]=0\\
        \ket{0}\ket{\abe.\sk_{i,0}}+(-1)^{x[i]}\ket{1}\ket{\abe.\sk_{i,1}} & \textrm{if}~~ \theta[i]=1,
    \end{cases}
    \end{align}
where $\abe.\sk_{1,b}\gets\MIABE.\KG(\abe.\msk,1,t_y\|b\|y)$ for each
$b \in \bit$ and
$\abe.\sk_{i,b}\gets\MIABE.\KG(\abe.\msk,i,\allowbreak t_y\|b)$
for each $i \in \{2, \ldots, k\}$ and $b \in \bit$. Here, $t_y$ is a
random value that is common for each of the $k$ ``slots'' of the
decryption-key $\qsk_y$. Notice first that no
two values $t_y, t_w$ are equal for $t_w$ corresponding to some
$\qsk_w$ except with negligible probability.
Consequently, due to the defined relation $R$ and the selective
security of $\MIABE$, any set of $k$ secret keys is ``decryptable'' only if
they correspond to the same decryption key $\qsk_y$. Now, we notice
that $\abe.\sk_{i,1-x[i]}$ is not given to $\qA$ for any $i\in[k]$
such that $\theta[i]=0$. This means that for any set of $k$ $\MIABE$
keys corresponding to the decryption-key $\qsk_y$, the attributes
$(x'[1], \cdots, x'[k])$ satisfy $x'[i] = x[i]$ for all $i:\theta[i] =
0$. Consequently, the classical decryption property (See Definition
\ref{def:bb84}) of the BB84-based SKE-CD scheme guarantees that $\lock
\xor r \xor \SKECD.\CDec(\skecd.\sk, x') = \lock$ in these hybrids. This means that
$\tlC^*(x') = 0$ (instead of $\bot$).  As a result, any subset of
$\MIABE$ keys of size $k$ given to $\qA$ is a ``non-decryptable'' set
in the hybrids $\Hyb_2^0$ and $\Hyb_2^1$.
It is easy to see now that we can reduce to the selective security of
$\MIABE$. Specifically, the reduction specifies the target attribute
$x^* \| \tlC^*$ where
$\tlC^* \gets \CCObf(1^\secp, D[\lock\xor r, \skecd.\sk], \lock, 0)$
and simulates the view for $\qA$ by querying the key-generation oracle
of $\MIABE$ accordingly. Note that the reduction can handle
verification queries since it can obtain both $\abe.\sk_{i,0}$ and
$\abe.\sk_{i,1}$ for every $i\in[k]$ such that $\theta[i]=1$ that are
sufficient to perform the verification.  (Especially, it does not need
$\abe.\sk_{i,1-x[i]}$ for $i\in[k]$ such that $\theta[i]=0$ for
verification.) We now state the following theorem:

\begin{theorem}
There exists an ABE-CR\textsuperscript{2}-SKL scheme satisfying
selective IND-KLA Security, assuming the existence of a
selectively-secure Multi-Input ABE scheme for polynomial-arity,
Compute-and-Compare Obfuscation, and a BB84-based SKE-CD scheme.
\end{theorem}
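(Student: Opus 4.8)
The plan is to prove selective IND-KLA security by the three-step hybrid argument indicated by the construction, reducing successively to Compute-and-Compare obfuscation, to IND-CVA-CD security of the BB84-based $\SKECD$ scheme, and finally to selective security of $\MIABE$. Fix a QPT adversary $\qA$ and let $\hybi{0}^\coin$ denote $\expc{\ABECRCRSKL,\qA}{sel}{ind}{kla}(1^\secp,\coin)$; the target is to show $\abs{\Pr[\hybi{0}^0=1]-\Pr[\hybi{0}^1=1]}\le\negl(\secp)$.

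First I would define $\hybi{1}^\coin$, identical to $\hybi{0}^\coin$ except that the simulated program $\tlC^*\gets\CCSim(1^\secp,\pp_D,1)$ placed inside the challenge $\MIABE$ ciphertext is replaced by a genuine obfuscation $\tlC^*\gets\CCObf(1^\secp,D[\lock\oplus r,\skecd.\sk],\lock,0)$, where $\lock,r\gets\bit^\secp$ are fresh and $D[\lock\oplus r,\skecd.\sk](x)$ outputs $\lock\oplus r\oplus\SKECD.\CDec(\skecd.\sk,x)$. The point is that $\lock$ does not occur at all in $\hybi{0}^\coin$, and in $\hybi{1}^\coin$ it is uniform and independent of the obfuscated program and of the auxiliary information given to $\qA$ (which at this stage comprises $\abe.\pk$, the quantum decryption keys, and the challenge ciphertext, all of which depend only on $\abe.\msk$, $r$, and $\SKECD$ encryptions of $0^\secp$). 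Hence distributional indistinguishability of $\CCObf$ for uniform independent lock values gives $\hybi{0}^\coin\approx\hybi{1}^\coin$.

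Next I would define $\hybi{2}^\coin$: for every key-query $y$ with $x^*(y)=0$ (``decryptable'', recognizable since $x^*$ is declared up front), the $\SKECD$ ciphertext $\skecd.\qct$ embedded in $\qsk_y$ is generated as an encryption of $r$ instead of $0^\secp$, while non-decryptable keys are unchanged. I would justify $\hybi{1}^\coin\approx\hybi{2}^\coin$ by an inner hybrid over the (polynomially many) decryptable key-queries, reducing each step to IND-CVA-CD security of $\SKECD$: the reduction embeds the $\SKECD$ challenge ciphertext into one decryption key, produces all other $\SKECD$ ciphertexts through the encryption oracle, and answers each $\ABECRCRSKL$ verification query for the embedded key by forming the candidate string $x$ from the Hadamard-position measurement outcomes together with the values $\{\abe.\sk_{i,0}\oplus\abe.\sk_{i,1}\}_{i:\theta[i]=1}$ (which the reduction generates itself from fresh $\MIABE$ keys) and forwarding $x$ to the $\SKECD$ verification oracle. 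This works because (i) the only place the embedded $\skecd.\sk$ is needed is inside $D$ in the challenge ciphertext, which in the $\ABECRCRSKL$ game is released only after $\qA$ has verifiably deleted every decryptable key — exactly when IND-CVA-CD's $\Oracle{\Vrfy}$ hands back $\skecd.\sk$ — and (ii) $\ABECRCRSKL.\Vrfy$ checks the Hadamard positions precisely as $\SKECD.\Vrfy$ does.

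Finally I would bound $\abs{\Pr[\hybi{2}^0=1]-\Pr[\hybi{2}^1=1]}$ directly by selective security of $\MIABE$. In $\hybi{2}^\coin$ every decryptable $\qsk_y$ is built from an $\SKECD$ encryption of $r$, so by the classical-decryption property of the BB84-based scheme, any $u_1\|\cdots\|u_k$ agreeing with $x$ at the computational-basis positions ($\theta[i]=0$) gives $\SKECD.\CDec(\skecd.\sk,u_1\|\cdots\|u_k)=r$, whence $D[\lock\oplus r,\skecd.\sk](u_1\|\cdots\|u_k)=\lock$ and $\tlC^*(u_1\|\cdots\|u_k)=0\neq\bot$. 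Since $\qA$ only ever obtains $\abe.\sk_{i,x[i]}$ at computational-basis positions, and the per-key tags $t_y$ are pairwise distinct except with negligible probability, every $k$-tuple of $\MIABE$ keys $\qA$ can assemble is non-decryptable under $R$: either its tags mismatch, or its first-slot key carries a $y$ with $x^*(y)\neq0$, or $\tlC^*(u_1\|\cdots\|u_k)=\lock$. Thus the $\MIABE$ admissibility condition holds, and a reduction declaring target $x^*\|\tlC^*$ (it can compute $\tlC^*$ from $\skecd.\sk,\lock,r$ on its own), simulating $\Oracle{\qKG}$ via the slot-wise $\MIABE$ key-generation oracle, simulating $\Oracle{\Vrfy}$ from paired $\MIABE$ keys at Hadamard positions, and forwarding the $\MIABE$ challenge ciphertext on $(\msg_0^*,\msg_1^*)$, reproduces $\hybi{2}^\coin$ exactly. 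Chaining the three bounds proves the theorem. I expect this last step to be the main obstacle: it requires juggling post-challenge key-queries for non-decryptable attributes, the tag-collision bookkeeping that rules out combining keys from different decryption keys, and the need for the reduction to answer verification queries while never learning $\abe.\sk_{i,1-x[i]}$ at a computational-basis position — reconciling the $\MIABE$ relation $R$ with the entangled structure of the leased quantum keys is where the care is concentrated.
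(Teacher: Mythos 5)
Your proposal is correct and follows essentially the same route as the paper: the identical three-hybrid chain (replacing $\CCSim$ by $\CCObf(1^\secp,D[\lock\oplus r,\skecd.\sk],\lock,0)$, then switching the $\SKECD$ plaintext from $0^\secp$ to $r$ for decryptable keys via a per-key hybrid reducing to IND-CVA-CD with the verification queries forwarded to $\Oracle{\Vrfy}$ exactly as you describe, then invoking selective $\MIABE$ security using the classical decryption property, tag distinctness, and the fact that $\abe.\sk_{i,1-x[i]}$ is never released at computational-basis positions while Hadamard-position key pairs suffice for verification). The only nit is the slip ``$\tlC^*(u_1\|\cdots\|u_k)=\lock$'' in your final disjunction, which should read $\tlC^*(u_1\|\cdots\|u_k)=0\neq\bot$ (equivalently $D(u_1\|\cdots\|u_k)=\lock$), as you state correctly earlier in the same paragraph.
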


\else
\fi

\ifnum\anonymous=1
\else

\fi

	\ifnum\llncs=1
\bibliographystyle{splncs04}
\bibliography{bib/abbrev3,bib/crypto,bib/siamcomp_jacm,bib/other}
	\else
\bibliographystyle{alpha} 
\bibliography{bib/abbrev3,bib/crypto,bib/siamcomp_jacm,bib/other}
	\fi

\ifnum\cameraready=0
	\ifnum\llncs=0
	\appendix
	

\section{Quantum Secure ABE for All Relations from LWE}\label{sec-quantum-secure-ABE}

We show there exists quantum selective-secure ABE for all relations computable in polynomial time based on the LWE assumption.

\subsection{Preparation}

\begin{definition}[Quantum-Accessible Pseudo-Random Function]\label{def:prf}
Let $\{\PRF_{k}: \bin^{\ell_1} \ra \allowbreak \bin^{\ell_2} \mid k \in \bin^\secp\}$ be a family of polynomially computable functions, where $\ell_1$ and $\ell_2$ are some polynomials of $\secp$.
We say that $\PRF$ is a quantum-accessible pseudo-random function (QPRF) family if for any QPT adversary $\qA$, it holds that
\begin{align}
\advt{\qA}{prf}(\secp)
= \abs{\Pr[\qA^{\ket{\PRF_{k}(\cdot)}}(1^\secp) \gets 1 \mid k \gets \bit^{\secp}]
-\Pr[\qA^{\ket{\sfR(\cdot)}}(1^\secp) \gets 1 \mid \sfR \gets \cU]
}\leq\negl(\secp),
\end{align}
where $\cU$ is the set of all functions from $\bit^{\ell_1}$ to $\bit^{\ell_2}$. 
\end{definition}

\begin{theorem}[\cite{FOCS:Zhandry12}]\label{thm:qprf}
If there exists a OWF, there exists a QPRF.
\end{theorem}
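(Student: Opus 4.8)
The plan is to follow Zhandry's strategy~\cite{FOCS:Zhandry12}. From a (post-quantum) one-way function, first obtain a (post-quantum) length-doubling pseudorandom generator $G$ via the standard classical construction of a PRG from any OWF: its security reduction is a black-box, classical polynomial-time transformation of a distinguisher into an inverter, so it applies verbatim to QPT adversaries. Then instantiate the Goldreich--Goldwasser--Micali tree construction $\PRF_k(x_1\cdots x_n) = G_{x_n}(\cdots G_{x_1}(k)\cdots)$, where $G_0,G_1$ denote the two halves of $G$'s output, and claim that this family satisfies the QPRF notion of \cref{def:prf}.

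The conceptual obstacle is that the standard GGM security proof does not quantize directly. Classically one uses $n+1$ hybrids, replacing in hybrid $\ell$ the values at $\ell$-bit prefixes by a truly random function; the step from hybrid $\ell$ to hybrid $\ell+1$ is itself a nested hybrid over the up-to-$2^\ell$ tree nodes at level $\ell$, each step invoking $\PRG$ security. A classical adversary only ``touches'' polynomially many nodes, so this inner hybrid is harmless, but a single quantum query touches all $2^\ell$ level-$\ell$ nodes in superposition, and the naive reduction loses an exponential factor.

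The fix I would import from~\cite{FOCS:Zhandry12} rests on two facts: (i) a $q$-query quantum algorithm has zero advantage in distinguishing a uniformly random function from a random $2q$-wise independent function, so the random functions appearing in the hybrids can be simulated by efficient objects; and (ii) the \emph{small-range distribution} lemma: letting $\mathsf{SR}_r$ sample a random $P\colon X\to[r]$ and random $y_1,\dots,y_r$ and output $x\mapsto y_{P(x)}$, any $q$-query quantum algorithm distinguishes $\mathsf{SR}_r$ from a uniformly random function with advantage $O(q^3/r)$. In hybrid $\ell$ one then uses a small-range version $\mathsf{SR}_{r_\ell}$ with a \emph{polynomial} range $r_\ell$ in place of a fully random function. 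Since $\mathsf{SR}_{r_\ell}$ has only $r_\ell$ distinct seeds, the transition to hybrid $\ell+1$ is a hybrid over merely $r_\ell=\poly$ instances of $\PRG$, and each step is a clean reduction to post-quantum $\PRG$ security: given a single $\PRG$ challenge planted at one node, the reduction evaluates the rest of the GGM tree as a classical poly-time function and thereby answers all of $\qA$'s quantum queries. Telescoping the $O(q^3/r_\ell)$ small-range errors over the $n$ levels together with the $\PRG$ losses bounds $\qA$'s advantage by $\sum_\ell O(q^3/r_\ell) + \poly(\secp)\cdot\Adv_{\PRG}$; for a target QPT adversary with $q=q(\secp)$ queries and claimed advantage $\ge 1/p(\secp)$, taking $r_\ell$ to be a sufficiently large fixed polynomial makes the first term $<1/(2p(\secp))$ while keeping every reduction polynomial-time, contradicting $\PRG$ security.

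The main technical work---which I would cite rather than reprove---is fact (ii). It is established by the polynomial method: the acceptance probability of a $q$-query quantum algorithm is a degree-$\le 2q$ real polynomial in the $0/1$ entries of the oracle's truth table, and one shows via a symmetrization/coupling argument, controlling the contributions of collision patterns of size $\le 2q$, that its expectation over $\mathsf{SR}_r$ converges to its expectation over a uniformly random function at rate $O(q^3/r)$. Fact (i) is a simpler degree argument, and the remaining hybrid bookkeeping is routine.
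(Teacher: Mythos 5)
Your sketch is faithful to Zhandry's argument in~\cite{FOCS:Zhandry12}, which is exactly what the paper relies on: it states \cref{thm:qprf} as an imported result with no proof of its own, so there is nothing in the paper for your proposal to diverge from. The route you describe (post-quantum OWF $\Rightarrow$ PRG via the classical black-box reduction, then GGM made quantum-secure via small-range distributions with $O(q^3/r)$ loss and $2q$-wise independent simulation of the random functions in the hybrids) is the standard and correct proof of the cited theorem.
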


\subsection{Proofs}

We first define Key-Policy ABE for polynomial size circuits and briefly see that it can be used to instantiate ABE for any relations computable in polynomial time, even under quantum selective-security.  
Then, we prove that the Key-Policy ABE scheme for polynomial size circuits by Boneh et al.~\cite{EC:BGGHNS14} with a light modification using QPRF satisfies quantum selective-security under the LWE assumption.  

\textbf{Key-Policy ABE for Circuits:}
Let $\cX_\secp=\bin^{n(\secp)}$ and $\cY_\secp$ be the set of all circuits with input space $\bin^{n(\secp)}$ and size at most $s(\secp)$, where $n$ and $s$ are some polynomials.
Let $R_\secp$ be the following relation:
$$R_\secp(x,y)=0 \iff y(x) = 0$$
An ABE scheme for such $\{\cX_\secp\}_\secp, \{\cY_\secp\}_\secp$,
and $\{R_\secp\}_\secp$ is referred to as a Key-Policy ABE scheme
for circuits. 

\begin{lemma}\label{lma:kp_to_cp}
If there exists a quantum selective-secure Key-Policy ABE scheme
for circuits, then there exists a quantum selective-secure ABE scheme for all relations computable in polynomial time.
\end{lemma}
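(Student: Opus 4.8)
The plan is to view an arbitrary polynomial-time relation as a Key-Policy predicate applied to the ciphertext attribute. Let $R=\{R_\secp:\cX_\secp\times\cY_\secp\ra\bin\}$ be computable in time $\poly(\secp)$; without loss of generality $\cX_\secp\subseteq\bin^{n(\secp)}$ and $\cY_\secp\subseteq\bin^{m(\secp)}$ for polynomials $n,m$. For $y\in\cY_\secp$ let $C_y$ be the Boolean circuit on $\bin^{n(\secp)}$ defined by $C_y(x)\seteq R_\secp(x,y)$, padded with dummy gates to a fixed size $s(\secp)=\poly(\secp)$ upper-bounding the circuit complexity of $R_\secp$. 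The map $y\mapsto C_y$ is injective (a verbatim copy of $y$ is wired into $R_\secp$) and computable by a polynomial-size circuit, hence extends to a unitary on the computational basis. Instantiating the assumed quantum selective-secure Key-Policy ABE scheme $\ABE=\ABE.(\Setup,\KG,\Enc,\Dec)$ for circuits with input length $n(\secp)$ and size bound $s(\secp)$, I would define the target scheme by $\Setup\seteq\ABE.\Setup$, $\Enc(\pk,x,\msg)\seteq\ABE.\Enc(\pk,x,\msg)$, $\KG(\msk,y,r)\seteq\ABE.\KG(\msk,C_y,r)$, and $\Dec(\sk_y,\ct)\seteq\ABE.\Dec(\sk_y,\ct)$. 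Correctness is immediate: if $R_\secp(x,y)=0$ then $C_y(x)=0$, so $\ABE$ correctness gives $\Dec(\sk_y,\ct)=\msg$ with overwhelming probability.

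For security I would reduce quantum selective security of the target scheme (\cref{def:qsel_ind_ABE}) to that of $\ABE$. Given a QPT adversary $\qA$, the reduction $\qB$ forwards $\qA$'s declared challenge attribute $x^*$, the public key, the challenge messages, and the challenge ciphertext verbatim; the only nontrivial part is simulating $\qA$'s oracle $\Oracle{qkg}$ (over key attributes $y$) using $\qB$'s oracle $\Oracle{qkg}$ (over circuit attributes). On a query on registers $\qreg{Y},\qreg{Z}$, $\qB$ applies to $\qreg{Y}$ and a fresh ancilla $\qreg{C}$ (initialized to $\ket{0\cdots0}$) the unitary $\ket{y}_{\qreg{Y}}\ket{0}_{\qreg{C}}\mapsto\ket{y}_{\qreg{Y}}\ket{C_y}_{\qreg{C}}$; it then calls its own $\Oracle{qkg}$ with $\qreg{C}$ as the attribute register and $\qreg{Z}$ as the key register; finally it uncomputes $\qreg{C}$ via the inverse unitary and returns $\qreg{Y},\qreg{Z}$ to $\qA$. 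The inner relation measurement performed by $\qB$'s oracle computes $C_y(x^*)=R_\secp(x^*,y)$, exactly the value computed by $\qA$'s oracle, and on acceptance writes $\ABE.\KG(\msk,C_y,r)=\KG(\msk,y,r)$ into $\qreg{Z}$. Since $\Oracle{qkg}$ never touches the attribute register (in both branches only the internal flag register is measured, and $\qreg{Z}$ is updated only on acceptance), the uncomputation of $\qreg{C}$ is exact after the projection induced by the measurement, so $\qB$'s simulation is perfect and uses the same number of oracle queries as $\qA$. Thus $\qA$'s advantage equals $\qB$'s advantage against $\ABE$, which is negligible.

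The delicate point, and the one I would check carefully, is precisely this coherent uncomputation of the auxiliary circuit register across both the accepting and rejecting branches of $\Oracle{qkg}$: after the relation register is measured, the remaining joint state on $\qreg{Y}\otimes\qreg{C}$ must still be supported on pairs $\ket{y}\ket{C_y}$, so that applying the inverse of the circuit-building unitary disentangles $\qreg{C}$ back to $\ket{0\cdots0}$; this holds because $\Oracle{qkg}$ leaves the attribute register invariant and $y\mapsto C_y$ is injective. Everything else—padding all $C_y$ to a common size, encoding the abstract attribute spaces as bit strings, and matching the explicit-randomness convention of $\KG$—is routine. Combining this reduction with the quantum selective-secure Key-Policy ABE for circuits obtained below from LWE then yields quantum selective-secure ABE for every polynomial-time relation.
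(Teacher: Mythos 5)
Your proposal is correct and takes essentially the same route as the paper: define $\KG(\msk,y,r)\seteq\ABE.\KG(\msk,C_y,r)$ for the (padded) circuit $C_y(\cdot)=R_\secp(\cdot,y)$, keep $\Setup,\Enc,\Dec$ unchanged, and simulate the superposition key-generation oracle in the reduction by coherently computing $C_y$ into an ancilla register before calling the key-policy oracle with the challenge attribute $x^*$ forwarded verbatim. Your explicit argument that the ancilla can be uncomputed exactly (because the oracle leaves the attribute register untouched in both branches, so the post-measurement state stays supported on pairs $\ket{y}\ket{C_y}$) is a careful refinement of a step the paper's reduction glosses over rather than a different approach; note only that injectivity of $y\mapsto C_y$ is not actually needed for the XOR-uncomputation to work, only that $C_y$ is a deterministic function of $y$.
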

\begin{proof}
Let $\cX_\secp\subseteq\bit^n$, $\cY_\secp\subseteq \bit^\ell$, and $R_\secp:\cX_\secp\times\cY_\secp\ra\bit$, where $n$ and $\ell$ are polynomials and $R_\secp$ is computable in polynomial time.
We construct ABE scheme $\ABE = (\Setup,\KG,\Enc,\Dec)$ with attribute spaces $\{\cX_\secp\}$ and $\{\cY_\secp\}$ and relation $\{R_\secp\}$, using Key-Policy ABE scheme for circuits $\ABE' =
(\Setup,\KG',\Enc,\Dec)$ with the following attribute spaces $\{\cX^\prime_\secp\}$ and $\{\cY^\prime_\secp\}$.
\begin{itemize}
\item $\cX^\prime_\secp=\bit^n$.
\item Let $C_{\secp,y}$ be the circuit such that $C_{\secp,y}(x)=R_\secp(x,y)$ for every $x\in\bit^n$ and $y\in\bit^\ell$, and let $s$ be the maximum size of $C_y$. Note that $s$ is a polynomial in $\secp$. Then, $\cY^\prime_\secp$ is the set of all circuits with input length $n$ and size at most $s^\prime$.
\end{itemize}
The scheme is as follows:
$\KG'(\msk, y, r) = \KG(\msk, C_{\secp,y}, r)$.
It is easy to see that the correctness of $\ABE$ follows from that of $\ABE^\prime$. For the
quantum selective security of $\ABE'$, consider a reduction $\qR$
to the quantum selective security of $\ABE$.

\begin{description}
\item Execution of $\qR^{\qA}$ in Experiment $\expc{\ABE,
\qA}{q}{sel}{ind}(1^\secp,\coin)$:
\begin{enumerate}
\item $\qR$ receives challenge attribute $x^* \in \cX_\secp$ from $\qA$ and
forwards it to the challenger $\Ch$.
\item For each oracle query $(\qreg{Y}, \qreg{Z})$ made by $\qA$,
$\qR$ performs the following map on
register $\qreg{Z}$ initialized to $\ket{0^{s(\secp)}}$:
$$\ket{y}_{\qreg{Y}}\ket{z}_{\qreg{Z}} \mapsto
\ket{y}_{\qreg{Y}}\ket{z \xor C_{\secp,y}}_{\qreg{Z}}$$
\item Then, $\qR$ queries the registers $\qreg{Y}, \qreg{Z}$ to
$\Oracle{qkg}$ followed by returning the registers $\qreg{Y},
\qreg{Z}$ to $\qA$.
\item When $\qA$ sends $(\msg_0, \msg_1)$ to $\qR$, $\qR$ forwards it to
$\Ch$.
\item On receiving $\ct^\star \leftarrow \Enc(\pk, x^*, \msg_\coin)$ from
$\Ch$, $\qR$ forwards it to $\qA$.
\item Finally, when $\qA$ outputs a guess $\coin'$, $\qR$ sends
    $\coin'$ to $\Ch$.
\end{enumerate}
\end{description}

Since the view of $\qA$ is identical to that in the
quantum selective security experiment for scheme $\ABE'$, $\qR$ ends
up breaking the quantum selective security of $\ABE$.
\end{proof}

\begin{theorem}
Assuming the polynomial hardness of the
LWE problem, there exists a quantum selective-secure
Key-Policy ABE scheme for circuits.
\end{theorem}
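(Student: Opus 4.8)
The plan is to take the Key-Policy ABE scheme for circuits of Boneh et al.~\cite{EC:BGGHNS14}, make its key-generation algorithm deterministic by deriving all of its internal randomness from a quantum-accessible PRF (Definition~\ref{def:prf}), and then show that the standard lattice-trapdoor-based selective security proof goes through essentially unchanged when the key-generation oracle is accessed in superposition. Recall that in the BGGHNS scheme the public key consists of matrices $\mathbf{A},\mathbf{A}_1,\ldots,\mathbf{A}_n,\mathbf{B}$ over $\Z_q$ together with a syndrome vector $\mathbf{u}$, the master secret key is a trapdoor $\mathbf{T}_{\mathbf{A}}$, a secret key for a circuit $y$ is a short Gaussian preimage in the lattice defined by $[\mathbf{A}\mid \mathbf{A}_y]$ where $\mathbf{A}_y$ is obtained by key-homomorphic evaluation of $y$ on the $\mathbf{A}_i$, and a ciphertext for attribute $x$ is a Dual-Regev-style encoding under $[\mathbf{A}\mid \mathbf{A}_1+x_1\mathbf{G}\mid\cdots]$. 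The only modification is that the master secret key additionally stores a QPRF key $\PRFkey$, and $\KG(\msk,y,r)$ ignores $r$ and runs the Gaussian preimage sampler $\SamplePre$ with coins $\PRF_{\PRFkey}(y)$. This makes $\KG$ a deterministic function of $(\msk,y)$, matching the syntactic convention of Definition~\ref{def:ABE}, and correctness is unaffected since $\SamplePre$ with any fixed good randomness still outputs a short preimage.

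For security I would proceed through the following hybrids against $\expc{\ABE,\qA}{q}{sel}{ind}$. $\mathsf{Hyb}_0$ is the real experiment. In $\mathsf{Hyb}_1$ we replace $\PRF_{\PRFkey}(\cdot)$ by a uniformly random function $\mathsf{R}(\cdot)$; since the oracle $\Oracle{qkg}$ evaluates the PRF on the (superposed) query register, indistinguishability follows directly from QPRF security (Theorem~\ref{thm:qprf}, implied by LWE), and by Zhandry's technique~\cite{FOCS:Zhandry12} this step can even be kept efficient via a $2q$-wise independent function. In $\mathsf{Hyb}_2$ we switch to the BGGHNS selective simulator: having received $x^*$ at the start, sample $\mathbf{A}$ without a trapdoor (tied to the LWE instance), set $\mathbf{A}_i=\mathbf{A}\mathbf{R}_i-x^*_i\mathbf{G}$ for small random $\mathbf{R}_i$, and answer a key query $y$ (after the oracle has measured $R(x^*,y)$ and obtained $1$, i.e.\ $y(x^*)=1$) by running $\SampleRight$ with the $\mathbf{G}$-trapdoor and coins $\mathsf{R}(y)$. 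The crucial point is that for every branch $y$ surviving the oracle's measurement we have $y(x^*)=1$, so $\mathbf{A}_y$ has the form $\mathbf{A}\mathbf{R}_y+\mathbf{G}$ with a usable $\mathbf{G}$-part; hence the simulated key-generation is a well-defined deterministic classical map $y\mapsto\sk_y$, the unitary $\ket{y}\ket{z}\mapsto\ket{y}\ket{z\oplus\sk_y}$ can be implemented coherently, and $\mathsf{Hyb}_1\approx\mathsf{Hyb}_2$ follows since the preimage distributions of $\SamplePre$ and $\SampleRight$ are statistically close on each computational-basis component of the query (the standard trapdoor-sampling lemma is negligibly tight per query, uniformly in $y$). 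Finally, in $\mathsf{Hyb}_2$ the challenge ciphertext is built from the LWE samples exactly as in BGGHNS, so a non-negligible advantage yields an LWE distinguisher; chaining the hybrids proves the theorem, and Lemma~\ref{lma:kp_to_cp} then upgrades this to quantum selective-secure ABE for all polynomial-time relations.

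The main obstacle I anticipate is not a new idea but the bookkeeping needed to make every step compatible with superposition queries. One must check that, after the QPRF substitution, the PRF is the \emph{only} source of non-determinism in both $\KG$ and the simulator, so that the key-generation oracle is a genuine classical reversible map on which coherent queries are meaningful; this is precisely why $\qA$ is only granted superposition access to $\Oracle{qkg}$ and not to a sampler with fresh coins. One must also verify that the internal measurement of $R(x^*,y)$ inside $\Oracle{qkg}$ is performed identically in all hybrids, so the post-measurement states coincide and the admissibility constraint $y(x^*)=1$ is enforced automatically on every surviving branch (mirroring the analogous measurement in the quantum IBE proof of Boneh--Zhandry~\cite{C:BonZha13}). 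Finally one must confirm that the statistical closeness in the trapdoor step, whose error is negligible per query and independent of $y$, survives being invoked inside a superposition over exponentially many attributes $y$ --- which it does by a routine argument over the support of the query. None of this requires changing the LWE parameters beyond those already used by BGGHNS.
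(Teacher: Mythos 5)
Your proposal is correct and takes essentially the same route as the paper: derandomize the key generator of Boneh et al.~\cite{EC:BGGHNS14} with a QPRF, then push the standard selective game sequence (programmed public key, $\SampleRight$-based key simulation on the branches surviving the oracle's measurement of $R(x^*,\cdot)$, and an LWE-randomized challenge ciphertext) through superposition access to the key-generation oracle, finally invoking \cref{lma:kp_to_cp}. The only step you undersell is the one you call a ``routine argument over the support of the query'': per-attribute statistical closeness of $\SamplePre$ versus $\SampleRight$ does \emph{not} lift to superposition queries by any support/union-bound argument (the support is exponential), and the paper handles exactly this point by first using QPRF security to pass to independent truly random coins per attribute and then invoking the Boneh--Zhandry lemma~\cite{C:BonZha13} restated in its proof, which converts per-input statistical distance $\epsilon$ into a quantum-query distinguishing bound of $\sqrt{8C_0q^3\epsilon}$ over $q$ superposition queries.
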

\begin{proof}
We claim that the Key-Policy ABE scheme for circuits by Boneh et al.
\cite{EC:BGGHNS14} based on LWE is quantum selective-secure.
Actually, we will alter the key-generation algorithm of their scheme
as follows: $\KG(\msk, y, k) = \KG'(\msk, y, \PRF_k(y))$ where
$\KG'$ is the key-generation algorithm of their construction with
explicit random coins $\PRF_k(y)$, and $\{\PRF_k\}_k$ is QPRF.
This is a common technique utilized in quantum security proofs (See
for Eg. \cite{EC:BonZha13,C:BonZha13}) that allows one to use a
common random value for every term of a superposition. In the
following discussion, whenever we discuss a hybrid titled
"$\mathsf{Game}\;i$" for some value $i$, it refers to the
corresponding hybrid in Theorem 4.2 of \cite{EC:BGGHNS14}. Also, any
indistinguishability claims between $\mathsf{Game}$ hybrids that are
mentioned to be previously established, will refer to their work.
Consider the following sequence of hybrids for the aforementioned
ABE scheme $\ABE$ and a QPT adversary $\qA$:

\begin{description}
\item[$\hybi{0}^\coin$:] This is the same as the experiment
    $\expc{\ABE,\qA}{q}{sel}{ind}(1^\secp, \coin)$.



\item[$\hybi{1}^\coin$:] This is similar to $\Hyb_0^\coin$, except
that the public-key $\pk$ is generated based on the challenge
attribute $x^*$, as in the hybrid $\mathsf{Game}\;1$. It was shown
that the hybrids $\mathsf{Game}\;0$ (the original experiment) and
$\mathsf{Game}\;1$ are statistically indistinguishable. By the same
argument, it follows that $\Hyb_0^\coin \approx_s \Hyb_1^\coin$.

\item[$\hybi{2}^\coin$:] This is similar to $\Hyb_1^\coin$, except
that the public-key is further modified as in $\mathsf{Game}\;2$.
Unlike the change made in $\Hyb_1^\coin$ though, this change
requires the key-queries to be answered differently as in
$\mathsf{Game}\;2$. More specifically, the hybrid now has a
punctured master secret-key that still allows it to simulate
every key-query $y$ such that $C_y(x^*) \neq \bot$. It was shown
previously that $\mathsf{Game}\;1 \approx_s \mathsf{Game}\;2$.
Consider now the intermediate hybrids $\widetilde{\Hyb}_1^\coin,
\widetilde{\Hyb}_2^\coin$ that are similar to $\Hyb_1^\coin,
\Hyb_2^\coin$ respectively, except that all the superposition
key-queries in these hybrids are responded using independent (true)
randomness in every term of the superposition. We will now restate
the following lemma by \cite{EC:BonZha13}, which we will use to
argue that $\widetilde{\Hyb}_1^\coin \approx_s
\widetilde{\Hyb}_2^\coin$.

\begin{lemma}\cite{EC:BonZha13}
Let $\cY$ and $\cZ$ be sets and for each $y \in \cY$, let $D_y$ and
$D'_y$ be distributions on $\cZ$ such that $\SD(D_y, D'_y) \le
\epsilon$.  Let $O:\cY \ra \cZ$ and $O':\cY \ra \cZ$ be functions
such that $O(y)$ outputs $z \gets D_y$ and $O'(y)$ outputs $z' \gets
D'_y$. Then, $O(y)$ and $O'(y)$ are $\epsilon'$-statistically
indistinguishable by quantum algorithms making $q$ superposition
oracle queries, such that $\epsilon' = \sqrt{8C_0q^3\epsilon}$ where
$C_0$ is a constant.
\end{lemma}

Recall that $\cY$ denotes the set of key-attributes. Let us fix a
challenge ciphertext attribute $x^*$ for the following discussion.
For each $y \in \cY$ let the distributions $D^1_y[\pk], D^2_y[\pk]$
correspond to how a key for attribute $y$ is sampled in the hybrids
$\widetilde{\Hyb}^\coin_1, \widetilde{\Hyb}^\coin_2$ respectively,
conditioned on the public key being $\pk$. Note that for each $i \in
[2]$, we consider $D_y^i[\pk]$ to also output the public-key $\pk$
along with the secret-key for $y$. We know that on average over
$\pk$, $D_y^1[\pk] \approx_s D_y^2[\pk]$ holds for all $y \in \cY$.
It now follows from the above lemma that on average over $\pk$, the
analogously defined oracles $O_1[\pk], O_2[\pk]$ are statistically
indistinguishable by algorithms making polynomially many
superposition queries. Observe that with access to oracle $O_i[\pk]$
for each $i \in [2]$, the view of $\qA$ in
$\widetilde{\Hyb}^\coin_i$ (conditioned on the public-key being
$\pk$) can easily
be recreated as the oracle outputs $\pk$, which can then be used to
compute the challenge ciphertext corresponding to $\coin$.
Consequently, it follows that the hybrids $\widetilde{\Hyb}^\coin_1$
and $\widetilde{\Hyb}^\coin_2$ are statistically indistinguishable.
Since $\widetilde{\Hyb}^\coin_i \approx \Hyb^\coin_i$ holds for each
$i\in[2]$ (by the quantum-security of $\PRF$), we have that
$\Hyb^\coin_1 \approx \Hyb^\coin_2$.

\item[$\hybi{3}^\coin$:] This is similar to $\Hyb_2^\coin$, except
that the challenge ciphertext is chosen uniformly at random, as in
$\mathsf{Game}\;3$.

Observe that $\Hyb_3^0 \equiv \Hyb_3^1$. It was shown previously
that $\mathsf{Game}\;3 \approx \mathsf{Game}\;2$ by a reduction to
LWE. Specifically, the reduction prepares the setup based on the LWE
sample and the challenge ciphertext $x^*$, and plants the LWE
challenge in the challenge ciphertext. It is easy to see that a
similar reduction works in our case, thereby showing that
$\Hyb_2^\coin \approx \Hyb_3^\coin$. Consequently, it follows that
$\Hyb_0^0 \approx \Hyb_0^1$.
\end{description}
This completes the proof.
\end{proof}


\newcommand{\xbar}{\bar{x}}
\newcommand{\thetabar}{\bar{\theta}}
\newcommand{\FV}{\mathtt{FV}}
\newcommand{\cla}{\mathsf{cla}}

\section{IND-CVA-CD Secure BB84 Based SKE-CD}\label{sec:SKECD-BB84}

To prove \cref{thm:SKECD-BB84}, we show how to transform IND-CD secure SKE-CD to IND-CVA-CD secure one.

\begin{definition}[IND-CD Security]\label{def-ind-cd}
We define the security experiment $\expb{\CDSKE,\qA}{ind}{cd}(1^\secp,\coin)$ in the same way as $\expc{\CDSKE,\qA}{ind}{cva}{cd}(1^\secp,\coin)$ except that the adversary $\qA$ is allowed to get access to the verification oracle only once.
We say that $\CDSKE$ is IND-CD secure if for any QPT $\qA$, it holds that
\begin{align}
\advb{\CDSKE,\qA}{ind}{cd}(1^\secp)\seteq \abs{\Pr[
\expb{\CDSKE,\qA}{ind}{cd}(1^\secp, 0)=1] - \Pr[
\expb{\CDSKE,\qA}{ind}{cd}(1^\secp, 1)=1] }\le \negl(\secp).
\end{align}
\end{definition}

Bartusek and Khurana~\cite{C:BarKhu23} showed the following theorem.

\begin{theorem}[\cite{C:BarKhu23}]
There exists an IND-CD secure BB84 based SKE-CD scheme assuming just an IND-CPA secure SKE scheme.
\end{theorem}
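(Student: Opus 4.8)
The plan is to reconstruct the Bartusek--Khurana construction: a \emph{reusable} BB84-based SKE-CD scheme whose secret key is a short IND-CPA SKE key, with a freshly sampled BB84 state per encryption. Concretely, $\KG(1^\secp)$ samples an SKE key $\skecd.\sk$. To encrypt $\msg \in \bit^{\msglen}$, I would sample $x, \theta \gets \bit^{n}$ for the first $n$ (check/payload) positions and an extractor seed $\mathsf{sd}$, set the mask $w \seteq \msg \oplus \Ext\!\big(x|_{\{i \le n : \theta[i]=0\}}, \mathsf{sd}\big)$, and let the classical part (positions $n+1,\dots,\ctlen$, all in the computational basis) encode $c \seteq \mathsf{SKE.Enc}(\skecd.\sk, (\theta, \mathsf{sd}, w))$. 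The ciphertext is the BB84 state $\ket{x}_\theta$ together with this classical part, and $\vk = (x,\theta)$. Deletion measures every qubit in the Hadamard basis, and $\Vrfy$ checks the Hadamard positions, exactly as in Definition \ref{def:bb84}. The classical-decryption algorithm $\CDec(\skecd.\sk, u)$ reads $c$ from the classical part of $u$, decrypts it to recover $(\theta, \mathsf{sd}, w)$, and outputs $w \oplus \Ext(u|_{\{i \le n: \theta[i]=0\}}, \mathsf{sd})$; since any admissible $u$ agrees with $x$ on all computational positions, this yields $\msg$, giving both decryption correctness and the classical-decryption property. Verification correctness is immediate from the form of $\ket{x}_\theta$ on Hadamard positions.

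For IND-CD security I would combine two ingredients: (i) IND-CPA of $\mathsf{SKE}$, used to hide the basis $\theta^*$ (and seed and mask) of the challenge \emph{while the adversary produces its single deletion certificate}; and (ii) a standard information-theoretic certified-deletion property of BB84 states (entropic uncertainty à la Broadbent--Islam): once a valid Hadamard-basis certificate is produced for a state whose basis was unknown to the adversary at deletion time, the computational-basis string $x^*|_{\bar\theta^*}$ retains essentially full min-entropy given the adversary's residual state and $\theta^*$, so the extractor output $\Ext(x^*|_{\bar\theta^*}, \mathsf{sd}^*)$ is statistically close to uniform. Granting (ii), the mask $w^* = \msg_\coin \oplus \Ext(x^*|_{\bar\theta^*}, \mathsf{sd}^*)$ perfectly hides $\msg_\coin$ even after $\theta^*, \mathsf{sd}^*, w^*$ are revealed.

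The delicate point---and the main obstacle---is the order of operations: the IND-CD game hands the adversary the \emph{full} secret key $\skecd.\sk$ after a valid certificate, whereas any reduction to IND-CPA must keep $\skecd.\sk$ hidden. I would resolve this exactly as Bartusek and Khurana do, by first passing to an equivalent ``analysis'' game in which, instead of releasing $\skecd.\sk$, the challenger directly hands the adversary the decryption $(\theta^*, \mathsf{sd}^*, w^*)$ of the challenge classical part. This is faithful because $\coin$ enters the adversary's view \emph{only} through the challenge ciphertext $c^*$: the remaining information that $\skecd.\sk$ would unlock consists of the decryptions of the encryption-oracle ciphertexts, which are sampled independently of the challenge and of $\coin$ and can be folded into a standard multi-ciphertext IND-CPA hybrid. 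In this reformulated game $\skecd.\sk$ is never revealed, so IND-CPA applies cleanly to replace every SKE ciphertext (in particular $c^*$) by an encryption of a dummy value; after this switch the adversary's pre-reveal state is information-theoretically independent of $\theta^*$, and ingredient (ii) closes the argument by making $w^*$ statistically independent of $\coin$. Chaining the hybrids---real $\approx$ analysis game, analysis game $\cind$ dummy game via IND-CPA, dummy game $\sind$ $\coin$-independent via the BB84 deletion lemma---bounds the IND-CD advantage by a negligible function, proving the theorem.
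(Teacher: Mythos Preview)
The paper does not give its own proof of this statement; it is cited verbatim as a result of Bartusek and Khurana and used as a black box, with the paper's own contribution in that section being the subsequent upgrade from IND-CD to IND-CVA-CD. Your proposal is a faithful reconstruction of the Bartusek--Khurana construction and proof strategy (hybrid-encrypt the basis/seed/mask under the IND-CPA SKE, then invoke the BB84 certified-deletion entropy argument once the basis has been computationally hidden during the deletion phase), and it correctly instantiates all the structural requirements of Definition~\ref{def:bb84}, including the classical part encoding~$\theta$ and the classical-decryption property.
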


We prove \cref{thm:SKECD-BB84} by proving the following theorem.

\begin{theorem}
Let $\CDSKE=(\KG,\qEnc,\qDec,\qDel,\Vrfy)$ be a BB84-based SKE-CD scheme.
If $\CDSKE$ is IND-CD secure, then it is also IND-CVA-CD secure.
\end{theorem}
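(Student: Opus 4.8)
The plan is to derive IND-CVA-CD security from IND-CD security by a reduction that services the polynomially many verification-oracle queries of the IND-CVA-CD adversary using only a \emph{single} verification query, incurring a polynomial loss. The structural facts about a BB84-based scheme (\cref{def:bb84}) that I would exploit are: (i) the verification predicate $\Vrfy(\vk^*,\cdot)$ is a deterministic classical function of the submitted certificate and of $\big(H, x^*|_H\big)$, where $\vk^*=(x^*,\theta^*)$ and $H=\{i\in[n]:\theta^*[i]=1\}$; (ii) $H$ is computable from $\skecd.\sk$ together with the classical part of $\qct^*$; and (iii) from a \emph{single} valid certificate $\cert^\star$ one recovers $\cert^\star|_H=x^*|_H$, hence the entire predicate $\Vrfy(\vk^*,\cdot)$. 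Also note that $\cert$ is a classical string, so the verification oracle leaks only a membership bit for the affine set $A=\{c:c|_H=x^*|_H\}$ (plus $\skecd.\sk$ on success).

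Concretely, let $\qA$ be a QPT IND-CVA-CD adversary making at most $q=\poly(\secp)$ verification queries (w.l.o.g.\ exactly $q$). I would build an IND-CD adversary $\qR$ as follows. $\qR$ samples $j^*\gets[q]$ and runs $\qA$, forwarding every encryption query of $\qA$ to its own encryption oracle. When $\qA$ outputs $(\msg_0,\msg_1)$, $\qR$ forwards it, receives $\qct^*$, measures the classical-part positions $n+1,\dots,\ctlen$ of $\qct^*$ in the computational basis to record the classical part (non-disturbing, as $\qct^*$ is a product state whose last $\ctlen-n$ factors are computational-basis states), and forwards $\qct^*$ to $\qA$. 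For $\qA$'s $i$-th verification query $\cert_i$: if $i<j^*$, answer $\bot$; if $i=j^*$, forward $\cert_{j^*}$ to $\qR$'s single verification oracle, and if the reply is $\bot$ then answer $\bot$ to this and all subsequent queries; if the reply is $\skecd.\sk$, then $\qR$ recomputes $\theta^*$, hence $H$, from $\skecd.\sk$ and the recorded classical part, sets $A:=\{c:c|_H=\cert_{j^*}|_H\}$, answers $\skecd.\sk$ to this query, and answers every later query $\cert_i$ ($i>j^*$) by $\skecd.\sk$ if $\cert_i\in A$ and $\bot$ otherwise. Finally, when $\qA$ outputs $\coin'$, $\qR$ outputs $\coin'$ if its verification query returned $\skecd.\sk$ \emph{and} (using the reconstructed $A$) $\cert_i\notin A$ for all $i<j^*$; in every other case $\qR$ outputs $0$.

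For the analysis, let $G$ be the event that $\qR$'s verification query returned $\skecd.\sk$ and all earlier certificates lay outside $A$ — equivalently, that the first valid certificate is submitted at query exactly $j^*$. Conditioned on $G$, every answer $\qR$ gave $\qA$ agrees with the answer the real IND-CVA-CD challenger would give (using facts (ii)–(iii) to recompute $\Vrfy(\vk^*,\cdot)$), so $\qR$'s execution on $G$ is distributed exactly as the real experiment restricted to ``first valid query at index $j^*$'', and on $G$ the IND-CD flag satisfies $V=\top$. Since $j^*\gets[q]$ is independent of everything, averaging over $j^*$ gives $\Pr[\qR\text{'s IND-CD experiment outputs }1\mid\coin=b]=\Pr[\coin'=1\wedge G\mid\coin=b]=\tfrac1q\Pr[\qA\text{'s IND-CVA-CD experiment outputs }1\mid\coin=b]$, using that the IND-CVA-CD experiment outputs $\coin'$ precisely when some submitted certificate is valid. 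Subtracting the two values of $b$ yields $\advb{\CDSKE,\qR}{ind}{cd}(1^\secp)=\tfrac1q\,\advc{\CDSKE,\qA}{ind}{cva}{cd}(1^\secp)$, and IND-CD security concludes the argument.

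The main obstacle — and the reason the naive ``guess the winning query, abort otherwise'' reduction fails — is the accounting for executions where $\qR$'s guess is wrong: there $\qR$'s simulation has already diverged from the real experiment (it answered $\bot$ to a certificate that was actually valid, or it lacks $\skecd.\sk$ to answer a later valid query), and the probability of landing in such a state can depend on $\coin$ in an uncontrolled way. I resolve this by having $\qR$ output the constant $0$ on \emph{every} execution it cannot certify a posteriori as faithful, so that only the clean event $G$ contributes to $\qR$'s accepting probability; the key points to verify carefully are that $G$ is exactly recognizable by $\qR$ after the fact (this is where facts (ii)–(iii) are needed) and that $\qR$'s conditional distribution on $G$ is identical to the corresponding slice of the real experiment. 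The remainder is routine bookkeeping.
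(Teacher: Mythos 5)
Your proposal is correct and follows essentially the same route as the paper's proof: guess/isolate the index of the first accepting verification query, answer earlier queries with $\bot$, spend the single IND-CD verification query there, and then use the returned $\skecd.\sk$ together with the measured classical part to recover $\theta$ and an alternative verification predicate (your set $A$ is exactly the paper's $\vk'=(\theta,\cert_{j^*})$), which lets you both certify the guess a posteriori and answer all later queries. The only difference is bookkeeping: the paper builds one reduction $\qB_i$ per index and sums over the events $\FV_i$, whereas you sample $j^*\gets[q]$ and absorb the same decomposition into a $1/q$ advantage loss — an equivalent argument.
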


\begin{proof}
Let $\qA$ be a QPT adversary that attacks the IND-CVA-CD security of $\CDSKE$ with $q$ verification queries.
Let $\FV_i$ be the event that the $i$-th verification query is the first verification query such that the answer for it is not $\bot$.
Then, $\qA$'s advantage can be described as follows.
\begin{align}
\advc{\CDSKE,\qA}{ind}{cva}{cd}(1^\secp)
&\seteq \abs{
\Pr[\expc{\CDSKE,\qA}{ind}{cva}{cd}(1^\secp, 0)=1]
-\Pr[\expc{\CDSKE,\qA}{ind}{cva}{cd}(1^\secp, 1)=1]
}\\
&\le \sum_{i\in[q]} \abs{
\Pr[\expc{\CDSKE,\qA}{ind}{cva}{cd}(1^\secp, 0)=1 \land \FV_i]
-\Pr[\expc{\CDSKE,\qA}{ind}{cva}{cd}(1^\secp, 1)=1 \land \FV_i]\label{eqn-SKECD-BB84}
}\\
\end{align}
To bound each term of \cref{eqn-SKECD-BB84}, we construct the following QPT adversary $\qB_i$ that attacks the IND-CD security of $\CDSKE$ using $\qA$.

\begin{enumerate}
    \item $\qB_i$ initializes $\qA$ with the security parameter $1^\secp$.
    \item $\qA$ makes queries to the encryption oracle throughout the experiment.
    \begin{description}
    \item[$\Oracle{\qEnc}(\msg)$:] When $\qA$ makes an encryption query $\msg$, $\qB_i$ forwards it to its own encryption oracle, and sends back the answer $(\vk,\qct)$ from the encryption oracle to $\qA$.
    \end{description}
    \item When $\qA$ outputs $(\msg_0,\msg_1)\in\cM^2$, $\qB_i$ sends $(\msg_0,\msg_1)$ to its challenger. On receiving the challenge ciphertext $\qct^*$ from the challenger, $\qB_i$ measures its classical part $\cla$. This does not affect $\qct^*$. $\qB_i$ then forwards $\qct^*$ to $\qA$.
    \item Hereafter, $\qA$ can get access to the following oracle.
    \begin{description}
        \item[$\Oracle{\Vrfy}(\cert_j)$:] For the $j$-th query $\cert_j$, if $j<i$, $\qB_i$ returns $\bot$ to $\qA$. if $j=i$, $\qB_i$ queries $\cert_j$ to its verification oracle. If the response is $\bot$, $\qB_i$ aborts. Otherwise if the answer is $\sk$, $\qB_i$ forwards $\sk$ to $\qA$. $\qB_i$ also computes $\theta$ from $\cla$ and $\sk$, and sets $\vk^\prime=(\theta,\cert_i)$. $\qB_i$ checks whether $\Vrfy(\vk^\prime,\cert_j)=\bot$ holds for every $j<i$ (that is, whether the $i$-th query is the first query resulting in the answer other than $\bot$). If not, $\qB_i$ aborts. Otherwise, $\qB_i$ responds to the subsequent verification queries using $\vk^\prime$.
       
    \end{description}
    \item When $\qA$ outputs $\coin^\prime\in \bit$, $\qB_i$ outputs $\coin^\prime$.
\end{enumerate}

Let $\vk^*=(\theta,x)$ be the verification key corresponding to $\qct^*$. For any string $\cert$, if $\Vrfy(\vk^*,\cert)=\top$, $\Vrfy(\vk^*,\cdot)$ and $\Vrfy(\vk^\prime,\cdot)$ for $\vk^\prime=(\theta,\cert)$ are functionally equivalent, and $\vk^\prime$ can be used as an alternative verification key. This is because $\Vrfy(\vk^*,\cert^\prime)$ and $\Vrfy(\vk^\prime,\cert^\prime)$ respectively checks whether $\cert^\prime[i]=x[i]$ and $\cert^\prime[i]=\cert[i]$ holds or not for every $i\in[n]$ such that $\theta[i]=1$, and for such $i$, we have $x[i]=\cert[i]$ from the fact that $\Vrfy(\vk^*,\cert)=\top$.

From the above, after the $i$-th verification query from $\qA$ is responded, $\qB_i$ can check whether its simulation of $\qA$ so far has been successful or not. Moreover, if the simulation has failed, $\qB_i$ aborts, and otherwise, $\qB_i$ can successfully simulate the remaining steps for $\qA$ using the alternative verification key $\vk^\prime=(\theta,\cert_i)$.
Then, we have $\Pr[\expb{\CDSKE,\qB_i}{ind}{cd}(1^\secp, \coin)=1]=\Pr[\expc{\CDSKE,\qA}{ind}{cva}{cd}(1^\secp, \coin)=1 \land \FV_i]$.
Since, $\SKECD$ satisfies IND-CD security, it holds that 
\[\abs{\Pr[\expb{\CDSKE,\qB_i}{ind}{cd}(1^\secp, 0)=1]-\Pr[\expb{\CDSKE,\qB_i}{ind}{cd}(1^\secp, 1)=1]}=\negl(\secp)\] for every $i\in[q]$, which shows each term of \cref{eqn-SKECD-BB84} is negligible.
This completes the proof.
\end{proof}

\section{Construction of SKFE-CR-SKL with Key Testability}\label{sec:SKFESKL-KT}

\subsection{Construction}\label{sec:SKFECRSKL-KT-construction}
We construct an SKFE-CR-SKL scheme for the functionality $F:\cX\times\cY\ra\cZ$ with key testability $\SKFESKL=
\SKFESKL.(\Setup,\qKG,\Enc,\qDec,\allowbreak\qVrfy)$ having the
additional algorithms $\SKFESKL.(\CDec, \KeyTest)$, using the
following building blocks.

\begin{itemize}
\item BB84-based SKE-CD scheme (Definition \ref{def:bb84}) $\SKECD =
\SKECD.(\KG,\qEnc,\qDec,\allowbreak \qDel,\Vrfy)$ having the classical
decryption algorithm $\SKECD.\CDec$.

\item Classical SKFE scheme $\SKFE=\SKFE.(\Setup,\KG,\Enc,\Dec)$ for the functionality $F:\cX\times\cY\ra\cZ$.

\item OWF $f:\bit^\secp\ra\bit^{p(\secp)}$ for some polynomial $p$.
\end{itemize}

The construction is as follows:

\begin{description}

\item[$\SKECRSKL.\Setup(1^\secp)$:] $ $
\begin{enumerate}
    \item Generate $\skecd.\sk\gets\SKECD.\KG(1^\secp)$.
    \item Generate $\skfe.\msk\gets\SKFE.\Setup(1^\secp)$.
    \item Output $\msk\seteq(\skecd.\sk, \skfe.\msk)$.
\end{enumerate}

\item[$\SKFESKL.\qKG(\msk,y)$:] $ $
\begin{enumerate}
    \item Parse $\msk=(\skecd.\sk, \skfe.\msk)$.
    \item Generate $\skfe.\sk_y \gets \SKFE.\KG(\skfe.\msk,y)$.
    \item Generate
        $(\skecd.\qct,\skecd.\vk)\gets\SKECD.\qEnc(\skecd.\sk,\skfe.\sk_y)$.
        Recall that $\skecd.\vk$ is of the form
        $(x,\theta)\in\bit^{\ctlen}\times\bit^{\ctlen}$, and
        $\skecd.\qct$ is of the form
        $\ket{\psi_1}_{\qreg{SKECD.CT_1}}\tensor\cdots\tensor\ket{\psi_{\ctlen}}_{\qreg{SKECD.CT_{\ctlen}}}$.

    \item Generate $s_{i,b}\la\bit^\secp$ and compute $t_{i,b}\la f(s_{i,b})$ for every $i\in[\ctlen]$ and $b\in\bit$. 
    Set $T\seteq
    t_{1,0}\|t_{1,1}\|\cdots\|t_{\ctlen,0}\|t_{\ctlen,1}$ and $S =
    \{s_{i,0} \xor s_{i, 1}\}_{i \in [\ctlen] \; : \;\theta[i] =
    1}$.
    \item Prepare a register $\qreg{S_i}$ that is initialized to
    $\ket{0^\secp}_{\qreg{S_i}}$ for every $i\in[\ctlen]$. 
    \item For every $i\in[\ctlen]$, apply the map
    \begin{align}
    \ket{u_i}_{\qreg{SKECD.CT_i}}\tensor\ket{v_i}_{\qreg{S_i}}
    \ra
    \ket{u_i}_{\qreg{SKECD.CT_i}}\tensor\ket{v_i\oplus s_{i,u_i}}_{\qreg{S_i}}
    \end{align}
    to the registers $\qreg{SKECD.CT_i}$ and $\qreg{S_i}$ and obtain the resulting state $\rho_i$.
    \item Output $\qsk_y = (\rho_i)_{i\in{[\ctlen]}}$,
    $\vk=(x,\theta,S)$, and $\tk=T$.
\end{enumerate}

\item[$\SKFESKL.\Enc(\msk, x)$:] $ $
\begin{enumerate}
    \item Parse $\msk = (\skecd.\sk, \skfe.\msk)$.
    \item Generate $\skfe.\ct\gets\SKFE.\Enc(\skfe.\msk,x)$.
    \item Output $\ct\seteq(\skecd.\sk, \skfe.\ct)$.
    
\end{enumerate}

\item[$\SKFESKL.\CDec(\sk, \ct)$:] $ $
\begin{enumerate}
\item Parse $\ct = (\skecd.\sk, \skfe.\ct)$. Parse $\sk$ as a string over
the registers $\qreg{SKECD.CT} = \qreg{SKECD.CT_1} \otimes \cdots
\otimes \qreg{SKECD.CT_{\ctlen}}$ and $\qreg{S} =
\qreg{S_1} \otimes \cdots \otimes \qreg{S_{\ctlen}}$. Let
$\widetilde{\sk}$ be the sub-string of $\sk$ on register
$\qreg{SKECD.CT}$.
\item Compute $\skfe.\sk \gets \SKECD.\CDec(\skecd.\sk, \widetilde{\sk})$.
\item Output $z\gets\SKFE.\Dec(\skfe.\sk,\skfe.\ct)$.
\end{enumerate}

\item[$\SKFESKL.\qDec(\qsk, \ct)$:] $ $
\begin{enumerate}
\item Parse $(\rho_i)_{i \in [\ctlen]}$. We denote the register
holding $\rho_i$ as $\qreg{SKECD.CT_i}\tensor\qreg{S_i}$ for
every $i\in[\ctlen]$.

\item Prepare a register $\qreg{MSG}$ of $\msglen$ qubits that is
initialized to $\ket{0\cdots0}_{\qreg{MSG}}$.

\item Apply the map
\begin{align}
\ket{u}_{\bigotimes_{i\in[\ctlen]}\qreg{SKECD.
CT_i}} \tensor\ket{w}_{\qreg{MSG}} \ra
\ket{u}_{\bigotimes_{i\in[\ctlen]}\qreg{SKECD.CT_i}}\tensor\ket{w\oplus
\SKFESKL.\CDec(u, \ct)}_{\qreg{MSG}}
\end{align}

to the registers
$\bigotimes_{i\in[\ctlen]}\qreg{SKECD.CT_i}$ and
$\qreg{MSG}$.
\item Measure $\qreg{MSG}$ in the computational basis and output the result $\msg^\prime$.
\end{enumerate}

\item[$\SKFESKL.\qVrfy(\vk,\qsk)$:] $ $
\begin{enumerate}
\item Parse $\vk = (x,\theta,S=\{s_{i,0} \xor
    s_{i,1}\}_{i\in[\ctlen]\; : \; \theta[i]=1})$ and
    $\qsk = (\rho_i)_{i\in[\ctlen]}$ where $\rho_i$ is a state on
    the registers $\qreg{SKECD.CT_i}$ and $\qreg{S_i}$.
\item For every $i \in [\ctlen]$, measure $\rho_i$ in the Hadamard
basis to get outcomes $c_i, d_i$ corresponding to 
the registers $\qreg{SKECD.CT_i}$ and $\qreg{S_i}$ respectively.

\item Output $\top$ if $x[i]=c_i \oplus d_i\cdot(s_{i,0}\oplus
        s_{i,1})$ holds for every $i\in[\ctlen]$ such that $\theta[i]=1$.
    Otherwise, output $\bot$.
\end{enumerate}

\item[$\SKFESKL.\KeyTest(\tk,\sk)$:] $ $
\begin{enumerate}
\item Parse $\sk$ as a string over the registers
$\qreg{SKECD.CT} = \qreg{SKECD.CT_1} \otimes \cdots \otimes
\qreg{SKECD.CT_{\ctlen}}$ and $\qreg{S} = \qreg{S_1} \otimes
\cdots \otimes \qreg{S_{\ctlen}}$.
Let $u_i$ denote the value on
$\qreg{SKECD.CT_i}$ and $v_i$ the value on $\qreg{S_i}$. Parse $\tk$
as $T=t_{1,0}\|t_{1,1}\|\cdots \|t_{\ctlen,0}\|t_{\ctlen,1}$.

\item Let $\Check[t_{i,0},t_{i_1}](u_i,v_i)$ be the deterministic
algorithm that outputs $1$ if $f(v_i)=t_{i,u_i}$ holds and $0$
otherwise.

\item Output $\Check[t_{1,0},t_{1,1}](u_1,v_1) \land
\Check[t_{2,0},t_{2,1}](u_2,v_2) \land \cdots \land
\Check[t_{\ctlen,0},t_{\ctlen,1}](u_{\ctlen},v_{\ctlen})$.
\end{enumerate}

\end{description}

\subsection{Proof of Selective
Single-Ciphertext KLA Security}\label{proof:sel-1ct-kla}
Let $\qA$ be an adversary for the selective
single-ciphertext KLA security of the
construction $\SKFESKL$ that makes use of a BB84-based
SKE-CD scheme $\SKECD$. Consider the hybrid $\Hyb_j^\coin$ defined
as  follows:

\begin{description}
\item[$\hybi{j}^\coin$:] $ $
\begin{enumerate}
\item Initialized with $1^\secp$, $\qA$ outputs $(x_0^*, x_1^*)$.
Sample $\msk \gets \SKFESKL.\Setup(1^\secp)$.

\item $\qA$ can get access to the following (stateful) oracles,
where the list $\List{\qKG}$ used by the oracles is initialized
to an empty list:

\begin{description}
\item[$\Oracle{\qKG}(y)$:] Given $y$, it finds an entry of the form
$(y,\vk,V)$ from $\List{\qKG}$. If there is such an entry, it
returns $\bot$. Otherwise it proceeds as follows:
\begin{enumerate}[(i)]
\item 
If this is the $k$-th query for $k \le j$
and $F(x_0^*, y) \neq F(x_1^*, y)$, then compute $(\qsk,
\vk, \tk) \gets \qKGt(\msk, y)$ where $\qKGt$ is defined below.
Otherwise, compute $(\qsk, \vk, \tk) \gets \qKG(\msk, y)$.

\item It sends $\qsk$ and $\tk$ to $\qA$ and adds $(y, \vk, \bot)$
to $L_{\qKG}$.
\end{enumerate}

\item[$\Oracle{\qVrfy}(y,\widetilde{\qsk})$:] Given
$(y,\widetilde{\qsk})$, it finds an entry $(y,\vk,V)$ from
$\List{\qKG}$. (If there is no such entry, it returns $\bot$.) It
then runs $\decision \gets \qVrfy(\vk,\widetilde{\qsk})$ and returns
$\decision$ to $\qA$. If $V=\top$, it updates the entry into
$(y,\vk,\decision)$. 
\end{description}

\item[$\qKGt(\msk)$:] Differences from $\qKG$ are colored in red:
\begin{enumerate}
    \item Parse $\msk=(\skecd.\sk, \skfe.\msk)$.
    \item \textcolor{red}{Generate $r \gets \bit^\secp$.}
    \item Generate
        $(\skecd.\qct,\skecd.\vk)\gets\SKECD.\qEnc(\skecd.\sk,\textcolor{red}{r})$.
        $\skecd.\vk$ is of the form
        $(x,\theta)\in\bit^{\ctlen}\times\bit^{\ctlen}$, and
        $\skecd.\qct$ is of the form
        $\ket{\psi_1}_{\qreg{SKECD.CT_1}}\tensor\cdots\tensor\ket{\psi_{\ctlen}}_{\qreg{SKECD.CT_{\ctlen}}}$.

    \item Generate $s_{i,b}\la\bit^\secp$ and compute $t_{i,b}\la f(s_{i,b})$ for every $i\in[\ctlen]$ and $b\in\bit$. 
    Set $T\seteq
    t_{1,0}\|t_{1,1}\|\cdots\|t_{\ctlen,0}\|t_{\ctlen,1}$ and $S =
    \{s_{i,0} \xor s_{i, 1}\}_{i \in [\ctlen] \; : \;\theta[i] =
    1}$.
    \item Prepare a register $\qreg{S_i}$ that is initialized to
    $\ket{0^\secp}_{\qreg{S_i}}$ for every $i\in[\ctlen]$. 
    \item For every $i\in[\ctlen]$, apply the map
    \begin{align}
    \ket{u_i}_{\qreg{SKECD.CT_i}}\tensor\ket{v_i}_{\qreg{S_i}}
    \ra
    \ket{u_i}_{\qreg{SKECD.CT_i}}\tensor\ket{v_i\oplus s_{i,u_i}}_{\qreg{S_i}}
    \end{align}
    to the registers $\qreg{SKECD.CT_i}$ and $\qreg{S_i}$ and obtain the resulting state $\rho_i$.
    \item Output $\qsk_y = (\rho_i)_{i\in{[\ctlen]}}$,
    $\vk=(x,\theta,S)$, and $\tk=T$.
\end{enumerate}

\item If there exists
an entry $(y,\vk,V)$ in $\List{\qKG}$ such that $F(x_0^*,y)\ne
F(x_1^*,y)$ and $V=\bot$, output $0$. Otherwise, generate
$\ct^*\la\Enc(\msk,x_\coin^*)$ and send $\ct^*$ to $\qA$.

\item $\qA$ continues to make queries to $\Oracle{\qKG}$. However, $\qA$ is not allowed to send $y$ such that $F(x_0^*,y)\ne F(x_1^*,y)$ to $\Oracle{\qKG}$.

\item $\qA$ outputs a guess $\coin^\prime$ for $\coin$. Output
$\coin'$.
\end{enumerate}
\end{description}

Let $\qA$ make $q = \poly(\secp)$ many queries to $\Oracle{\qKG}$
before the challenge phase. We will now prove the following lemma:

\begin{lemma}
$\forall j \in \{0, \ldots, q-1\}$ and $\coin \in \bit: \Hyb_j^\coin
\approx_c \Hyb_{j+1}^\coin$.
\end{lemma}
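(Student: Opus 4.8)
The plan is to reduce to the IND-CVA-CD security of the BB84-based SKE-CD scheme $\SKECD$, following the template of the OT-IND-KLA proof in Section~\ref{proof:ot-ind}. If the $(j+1)$-th query that $\qA$ issues to $\Oracle{\qKG}$ is \emph{non-differing} (i.e.\ $F(x_0^*,y)=F(x_1^*,y)$ for the queried $y$), then $\Hyb_j^\coin$ and $\Hyb_{j+1}^\coin$ are syntactically the same experiment, so I may assume that query is differing; the general case follows by an averaging argument over this event, with the reduction discarding its SKE-CD challenge in the non-differing branch. So suppose a QPT $\qD$ distinguishes $\Hyb_j^\coin$ from $\Hyb_{j+1}^\coin$. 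I would build $\qR^{\qD}$ against $\expc{\SKECD,\qR}{ind}{cva}{cd}(1^\secp,b)$: $\qR$ samples $\skfe.\msk\gets\SKFE.\Setup(1^\secp)$ on its own (it does not hold $\skecd.\sk$) and runs $\qD$. For the first $j$ queries $y_k$ to $\Oracle{\qKG}$, $\qR$ computes $\skfe.\sk_{y_k}\gets\SKFE.\KG(\skfe.\msk,y_k)$ and obtains $(\skecd.\qct,\skecd.\vk)$ from its own encryption oracle $\Oracle{\qEnc}$ --- queried on a fresh uniform $r_k$ when $y_k$ is differing (emulating $\qKGt$) and on $\skfe.\sk_{y_k}$ otherwise (emulating $\qKG$) --- and then runs the remaining steps of $\qKG$ (choosing all the $s_{i,b}$ itself) to build $(\qsk_{y_k},\vk_{y_k},\tk_{y_k})$. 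For the $(j+1)$-th query $y^*$, $\qR$ computes $\skfe.\sk_{y^*}\gets\SKFE.\KG(\skfe.\msk,y^*)$, samples a uniform $r$, sends $(r,\skfe.\sk_{y^*})$ to the $\SKECD$ challenger, receives $\skecd.\qct^*$, and builds $\qsk_{y^*}$ from it via the remaining steps of $\qKG$. For every later query ($k>j+1$), $\qR$ emulates $\qKG$ exactly as in the non-differing case above.

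Verification queries $\Oracle{\qVrfy}(y_k,\widetilde{\qsk})$ are answered by running $\SKFESKL.\qVrfy$ with the locally known $\vk_{y_k}$ for every key except $y^*$; for $y^*$, $\qR$ performs the Hadamard-basis measurements of $\SKFESKL.\qVrfy$, assembles the candidate certificate $\cert$, and forwards $\cert$ to the $\SKECD$ verification oracle $\Oracle{\Vrfy}$ --- if it returns $\skecd.\sk$, $\qR$ answers $\top$ and records $\skecd.\sk$, otherwise it answers $\bot$. (The timing works out: a query on $y^*$ can only be made after $y^*$ is queried, which is after $\qR$ has received $\skecd.\qct^*$, i.e.\ when $\Oracle{\Vrfy}$ is available.) When $\qD$ requests the challenge, $\qR$ checks whether every differing key has been verifiably returned; if not it outputs $0$, and otherwise --- in which case $y^*$ in particular was verified, so $\qR$ holds $\skecd.\sk$ --- it returns $\ct^*\seteq(\skecd.\sk,\SKFE.\Enc(\skfe.\msk,x_\coin^*))$. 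Finally $\qR$ forwards $\qD$'s guess. I would then argue that $b=1$ produces exactly the view of $\Hyb_j^\coin$ and $b=0$ exactly that of $\Hyb_{j+1}^\coin$: the sole distributional difference between the hybrids is whether the SKE-CD ciphertext inside the $(j+1)$-th decryption key encrypts $\skfe.\sk_{y^*}$ or a fresh random string, which is precisely what $\qR$ plants in the $\SKECD$ challenge; every other key is emulated verbatim through $\Oracle{\qEnc}$, with identical output distribution; and the simulated verification oracle for $y^*$ is functionally identical to $\SKFESKL.\qVrfy$, because the latter's final test --- $\cert[i]=x^*[i]$ on the Hadamard positions --- coincides with $\SKECD.\Vrfy(\skecd.\vk^*,\cert)$ (positions $i>n$ are irrelevant there since $\theta^*[i]=0$). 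The abort bookkeeping also lines up: both experiments output $0$ unless all differing keys are verifiably returned, and $\qR$'s own $V$-flag is raised exactly when $y^*$ is verified; hence $\advc{\SKECD,\qR}{ind}{cva}{cd}(1^\secp)\ge\abs{\Pr[\Hyb_j^\coin=1]-\Pr[\Hyb_{j+1}^\coin=1]}$, contradicting IND-CVA-CD security.

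The step I expect to be the main obstacle is the treatment of verification queries on the $(j+1)$-th key, where $\qR$ does not hold the relevant SKE-CD verification key: these must be routed through $\Oracle{\Vrfy}$, and I must check both that this reproduces $\SKFESKL.\qVrfy$ faithfully and that the moment $\skecd.\sk$ becomes available (it is needed to form $\ct^*=(\skecd.\sk,\skfe.\ct)$) is consistent with the abort condition of the selective single-ciphertext KLA experiment. This is exactly why the reduction needs IND-CVA-CD rather than plain IND-CD. The remaining work --- verifying that the emulated key and oracle distributions match the hybrids, and the averaging argument for the non-differing branch --- is routine.
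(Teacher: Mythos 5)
Your proposal is correct and takes essentially the same route as the paper's proof: a reduction to the IND-CVA-CD security of the BB84-based SKE-CD scheme that embeds the challenge ciphertext in the $(j+1)$-th key (when it is a differing query), simulates all other keys via the encryption oracle, routes verification of the $(j+1)$-th key through the CVA verification oracle, and uses the returned $\skecd.\sk$ to assemble the challenge ciphertext $(\skecd.\sk,\SKFE.\Enc(\skfe.\msk,x_\coin^*))$. The paper likewise dismisses the non-differing $(j+1)$-th query by noting the hybrids coincide in that case; your explicit conditioning/averaging over that event is just a slightly more careful phrasing of the same step.
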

\begin{proof}
Suppose $\Hyb_j^\coin \not \approx \Hyb_{j+1}^\coin$. Let $\qD$ be
a corresponding distinguisher. We will
construct a reduction $\qR$ that breaks the IND-CVA-CD security of
the BB84-based SKE-CD scheme $\SKECD$. The execution of $\qR^\qD$ in
the experiment $\expc{\SKECD,\qR}{ind}{cva}{cd}(1^\secp,b)$ proceeds
as follows:

\begin{description}
\item Execution of $\qR^\qD$ in
$\expc{\SKECD,\qR}{ind}{cva}{cd}(1^\secp,b)$:

\begin{enumerate}
\item The challenger $\qCh$ computes $\skecd.\sk \leftarrow
\SKECD.\KG(1^\lambda)$.
\item $\qR$ initializes $\qD$ with $1^\secp$ which outputs $(x_0^*,
x_1^*)$.
\item $\qR$ samples $\skfe.\msk \gets \SKFE.\Setup(1^\secp)$. It initializes a list
$L_{\qKG}$ to an empty list.
\item $\qR$ simulates the oracle $\Oracle{\qKG}$ for $\qD$ as follows:
\begin{description}
\item $\Oracle{\qKG}(y):$ Given $y$, it finds an entry of the form
$(y, \vk, V)$ from $L_{\qKG}$. If there is such an entry, it returns
$\bot$. Otherwise, it proceeds as follows for the $k$-th query:
\begin{enumerate}
\item If $k \le j$ and $F(x_0^*, y)
\neq F(x_1^*, y)$, then compute $(\qsk, \vk, \tk)$ as in
$\widetilde{\qKG}$ except that the pair $(\skecd.\qct, \skecd.\vk)$ is
obtained as the output of $\Oracle{\qEnc}(r)$ instead. Send $\qsk$ and
$\tk$ to $\qA$ and add $(y, \vk, \bot)$ to $L_\qKG$.

\item If $k=j+1$ and $F(x_0^*, y) \neq F(x_1^*, y)$,  compute
$\skfe.\sk_y \gets \SKFE.\KG(\skfe.\msk, y)$. Then, send $(r^*,
\skfe.\sk_y)$ to $\qCh$ where $r^*$ is a random value. On receiving
$\skecd.\ct^*$ from $\qCh$, compute $\qsk$ and $\tk$ as in $\qKG$,
except that $\skecd.\ct^*$ is used in place of $\skecd.\ct$.
Send $\qsk$ and $\tk$ to $\qA$ and add $(y, 0^\secp, \bot)$ to
$L_{\qKG}$.

\item If $k > j+1$ or $F(x_0^*, y) = F(x_1^*, y)$, then compute
$(\qsk, \vk, \tk)$ as in $\qKG$ except that the pair
$(\skecd.\qct, \skecd.\vk)$ is replaced with the output of
$\Oracle{\Enc}(\skfe.\sk_y)$ instead. Send $\qsk$ and $\tk$ to $\qA$
and add $(y, \vk, \bot)$ to $L_{\qKG}$.
\end{enumerate}
\end{description}

\item $\qR$ simulates the oracle $\Oracle{\qVrfy}$ for $\qD$ as
follows:

\begin{description}
\item $\Oracle{\qVrfy}(y, \widetilde{\qsk}):$ Given $(y,
\widetilde{\qsk})$, it finds an entry $(y, \vk, V)$ from $L_\qKG$
(If there is no such entry, it returns $\bot$.) It then proceeds as
follows, if $y$ corresponds to the $k$-th query made to
$\Oracle{\qKG}$ and $k \neq j+1$:

\begin{enumerate}
\item Parse $\vk = (x,\theta,S=\{s_{i,0} \xor
s_{i,1}\}_{i\in[\ctlen]\;:\;\theta[i]=1})$ and $\widetilde{\qsk} =
(\rho_i)_{i\in[\ctlen]}$.

\item For every $i \in [\ctlen]$, measure $\rho_i$ in the Hadamard
basis to get outcomes $c_i, d_i$ corresponding to the registers
$\qreg{SKECD.CT_i}$ and $\qreg{S_i}$ respectively.

\item Compute $\cert[i] = c_i \xor d_i \cdot (s_{i,0} \xor
s_{i,1})$ for every $i \in [\ctlen]$.

\item If $x[i] = \cert[i]$ holds
for every $i \in [\ctlen] : \theta[i] = 1$, then update the entry to
$(y, \vk, \top)$ and send $\top$ to $\qD$. Else, send $\bot$.
\end{enumerate}

If $k = j+1$, then it proceeds as follows:
\begin{enumerate}
\item Compute $\cert = \cert[1] \| \ldots \| \cert[\ctlen]$ where
each $\cert[i]$ is computed as in the previous case.
\item Send $\cert$ to $\qCh$. If $\qCh$ returns $\skecd.\sk$, send
$\top$ to $\qD$ and update the corresponding entry to $(y, \vk,
\top)$. Else, output $\bot$.
\end{enumerate}
\end{description}

\item $\qD$ requests the challenge ciphertext. If there exists
an entry $(y,\vk,V)$ in $\List{\qKG}$ such that $F(x_0^*,y)\ne
F(x_1^*,y)$ and $V=\bot$, output $0$. Otherwise, compute and send $\ct^\star =
(\skecd.\sk, \SKFE.\Enc(\skfe.\msk, x^*_\coin))$ to $\qD$.

\item $\qD$ continues to make queries to $\Oracle{\qKG}$. However, $\qA$ is not allowed to send $y$ such that $F(x_0^*,y)\ne F(x_1^*,y)$ to $\Oracle{\qKG}$.
Consequently, $\qR$ simulates these queries as per Step 4. (c) above.

\item $\qD$ outputs a guess $\coin^\prime$ for $\coin$ which $\qR$
forwards to $\qCh$. $\qCh$ outputs $\coin^\prime$ as the output of the
experiment.
\end{enumerate}
\end{description}

We will first argue that when $b=1$, the view of $\qD$ is exactly the
same as its view in the hybrid $\Hyb_j^\coin$. Notice that the for the
first $j$ queries, if $F(x_0^*, y) \neq F(x_1^*, y)$ for a key-query
corresponding to $y$, then the decryption key is computed by querying
the encryption oracle on a random plaintext. If this condition does
not hold, then the key is computed by querying the encryption oracle
on the corresponding SKFE key $\skfe.\sk_y$. The hybrid $\Hyb_j$ on
the other hand, directly computes these values, but there is no
difference in the distribution of the output ciphertexts. A similar
argument holds for the keys $\qdk_{j+2}, \ldots, \qdk_q$, which
contain encryptions of the corresponding keys $\skfe.\sk_y$. Note
that if $F(x_0^*, y^*) = F(x_1^*, y^*)$, where $y^*$ corresponds to
the $j+1$-th query, then the hybrids $\Hyb_j^\coin$ and
$\Hyb_{j+1}^\coin$ are identical. Hence, consider the case when
$F(x_0^*, y^*) \neq F(x_1^*, y^*)$. In this case, if $b=1$,
the value encrypted as part of the key $\qsk_{j+1}$ is
$\skfe.\sk_{y^*}$. This is the same as in $\Hyb_j^\coin$. As for the
verification oracle queries, notice that they are answered similarly
by the reduction and $\Hyb_j^\coin$ for all but the $j+1$-th key. For
the $j+1$-th key, the reduction works differently in that it forwards
the certificate $\cert$ to the verification oracle.  However, the
verification procedure of the BB84-based SKE-CD scheme checks the
validity of the value $\cert$ in the same way as the reduction, so
there is no difference.
Finally, notice that when $b=0$, the encrypted value is an independent
and random value, similar to the hybrid $\Hyb_{j+1}^\coin$.
Consequently, $\qR$ breaks the IND-CVA-CD security of $\SKECD$ with
non-negligible probability, a contradiction.
\end{proof}

Notice now that the hybrid $\Hyb_0^\coin$ is the same as the
experiment $\expc{\SKECRSKL,\qA}{sel}{1ct}{kla}\allowbreak(1^\secp,\coin)$. From
the previous lemma, we have that $\Hyb_0^\coin \approx_c
\Hyb_q^\coin$. However, we have that $\Hyb_q^0 \approx_c \Hyb_q^1$
holds from the selective single-ciphertext security of the underlying
SKFE scheme $\SKFE$. This is because any key-query corresponding to
$y$ after the $q$-th query is such that $F(x_0^*, y) = F(x_1^*, y)$.
For the first $q$ queries, wherever this condition doesn't hold, the
SKFE keys have been replaced with random values. Consequently, we have
that $\Hyb_0^0 \approx_c \Hyb_0^1$, which completes the proof. \qed

\subsection{Proof of Key-Testability}\label{proof:kt_SKFE}
First, we will argue the correctness requirement. Recall that
$\SKFESKL.\qKG$ applies the following map to a BB84 state
$\ket{x}_\theta$, where $(x, \theta) \in
\bit^{\ctlen}\times\bit^{\ctlen}$, for every $i \in [\ctlen]$:

\begin{align}
\ket{u_i}_{\qreg{SKECD.CT_i}}\tensor\ket{v_i}_{\qreg{S_i}}
\ra
\ket{u_i}_{\qreg{SKECD.CT_i}}\tensor\ket{v_i\oplus
s_{i,u_i}}_{\qreg{S_i}}
\end{align}
where $\qreg{SKECD.CT_i}$ denotes the register holding the $i$-th
qubit of $\ket{x}_\theta$ and $\qreg{S_i}$ is a register initialized
to $\ket{0\ldots0}_{\qreg{S_i}}$.

Consider applying the algorithm $\SKFESKL.\KeyTest$ in
superposition to the resulting state, i.e., performing the following
map, where $\qreg{\SKECD.CT} = \qreg{\SKECD.CT_1} \otimes \cdots
\otimes \qreg{\SKECD.CT_{\ctlen}}$ and $\qreg{S} =
\qreg{S_1} \otimes \cdots \otimes \qreg{S_{\ctlen}}$, and
$\qreg{KT}$ is initialized to $\ket{0}$:

\begin{align}
\ket{u}_{\qreg{SKECD.
CT}}\tensor\ket{v}_{\qreg{S}}\tensor\ket{\beta}_{\qreg{KT}} \ra
\ket{u}_{\qreg{SKECD.
CT}}\tensor\ket{v}_{\qreg{S}}\tensor\ket{\beta\oplus\SKFESKL.
\KeyTest(\tk, u\|v)}_{\qreg{KT}}
\end{align}

where $\tk = T = t_{1, 0}\|t_{1, 1} \| \cdots \| t_{\ctlen,
0}\|t_{\ctlen, 1}$. Recall that $\SKFESKL.\KeyTest$
outputs 1 if
and only if $\Check[t_{i,0}, t_{i, 1}](u_i,
v_i) = 1$ for every $i
\in [\ctlen]$, where $u_i, v_i$ denote the states of the registers
$\qreg{SKECD.CT_i}$ and $\qreg{S_i}$ respectively.
Recall that $\Check[t_{i,0},t_{i,1}](u_i, v_i)$ computes $f(v_i)$
and checks if it equals $t_{i, u_i}$. Since the construction chooses
$t_{i, u_i}$ such that $f(s_{i, u_i}) = t_{i, u_i}$, this check
always passes. Consequently, measuring register $\qreg{KT}$ always
produces outcome $1$.

We will now argue that the security requirement holds by showing the
following reduction to the security of the OWF $f$. Let $\qA$ be an
adversary that breaks the key-testability of $\SKFESKL$. Consider
a QPT reduction $\qR$ that works as follows in the OWF security
experiment:

\begin{description}
\item Execution of $\qR^\qA$ in
$\expa{f,\qR}{owf}(1^\secp)$:

\begin{enumerate}
\item The challenger chooses $s \leftarrow \bit^\lambda$ and sends
$y^* = f(s)$ to $\qR$.

\item $\qR$ runs $\SKFESKL.\Setup(1^\secp)$ and initializes $\qA$
with input $\msk$.

\item $\qR$ picks a random $k^* \in [q]$.

\item $\qR$ simulates the access of $\qA$ to the oracle
$\Oracle{\qKG}$ as follows, where the list $\List{\qKG}$ is
initialized to an empty list:

\begin{description}
\item[$\Oracle{\qKG}(y)$:] For the $k$-th query, do the following:
\begin{enumerate}
\item 
Given $y$, it finds an entry of the form
$(y,\tk)$ from $\List{\qKG}$. If there is such an entry, it
returns $\bot$.

\item
Otherwise, if $k \neq k^*$, it generates
$(\qsk_y,\vk,\tk)\la\qKG(\msk,y)$, sends $(\qsk_y,\vk,\tk)$ to $\qA$,
and adds $(y,\tk)$ to $\List{\qKG}$.

\item Otherwise, if $k = k^*$, it generates $(\qsk_y, \vk, \tk) \gets
\widetilde{\qKG}(\msk, y)$ (differences from $\qKG$ are colored in
\textcolor{red}{red}). It then sends $(\qsk_y, \vk, \tk)$ to $\qA$ and
adds $(y, \tk)$ to $L_{\qKG}$.
\end{enumerate}
\end{description}

\begin{description}
\item $\widetilde{\qKG}(\msk, y)$
\begin{enumerate}
    \item Parse $\msk=(\skecd.\sk, \skfe.\msk)$.
    \item Generate $\skfe.\sk_y \gets \SKFE.\KG(\skfe.\msk,y)$.
    \item Generate
        $(\skecd.\qct,\skecd.\vk)\gets\SKECD.\qEnc(\skecd.\sk,\skfe.\sk_y)$.
        Here, $\skecd.\vk$ is of the form
        $(x,\theta)\in\bit^{\ctlen}\times\bit^{\ctlen}$, and
        $\skecd.\qct$ is of the form
        $\ket{\psi_1}_{\qreg{SKECD.CT_1}}\tensor\cdots\tensor\ket{\psi_{\ctlen}}_{\qreg{SKECD.CT_{\ctlen}}}$.

\item \textcolor{red}{Choose an index $i^\star \in [\ctlen]$ such
that $\theta[i^\star] = 0$. For every $i \in [\ctlen]$ such
that $i \neq i^\star$, generate $s_{i,b}\la\bit^\secp$ and compute
$t_{i,b}\la f(s_{i,b})$ for every $b\in\bit$. For $i = i^\star$,
set $t_{i^\star, 1 - x[i^\star]} = y^*$. Then, generate $s_{i^\star,
x[i^\star]} \leftarrow \bit^\lambda$ and compute $t_{i^\star,
x[i^\star]} = f(s_{i^\star, x[i^\star]})$.}
   Set $T\seteq
    t_{1,0}\|t_{1,1}\|\cdots\|t_{\ctlen,0}\|t_{\ctlen,1}$ and $S =
    \{s_{i,0} \xor s_{i, 1}\}_{i \in [\ctlen] \; : \;\theta[i] =
    1}$.
    \item Prepare a register $\qreg{S_i}$ that is initialized to
    $\ket{0^\secp}_{\qreg{S_i}}$ for every $i\in[\ctlen]$. 
    \item For every $i\in[\ctlen]$, apply the map
    \begin{align}
    \ket{u_i}_{\qreg{SKECD.CT_i}}\tensor\ket{v_i}_{\qreg{S_i}}
    \ra
    \ket{u_i}_{\qreg{SKECD.CT_i}}\tensor\ket{v_i\oplus s_{i,u_i}}_{\qreg{S_i}}
    \end{align}
    to the registers $\qreg{SKECD.CT_i}$ and $\qreg{S_i}$ and obtain the resulting state $\rho_i$.
    \item Output $\qsk_y = (\rho_i)_{i\in{[\ctlen]}}$,
    $\vk=(x,\theta,S)$, and $\tk=T$.
\end{enumerate}
\end{description}

\item $\qA$ sends a tuple of classical strings $(y, \sk, x^*)$ to $\qR$.
$\qR$ outputs $\bot$ if there is no entry of the form $(y,\tk)$ in
$\List{\qKG}$ for some $\tk$. Also, if $k \neq k^*$, $\qR$ outputs $\bot$.
Otherwise, $\qR$ parses $\sk$ 
as a string over the registers $\qreg{SKECD.CT} = \qreg{SKECD.CT_1}
\otimes \cdots \otimes
\qreg{SKECD.CT_{\ctlen}}$ and $\qreg{S} = \qreg{S_1} \otimes
\cdots \otimes \qreg{S_{\ctlen}}$ and measures the register
$\qreg{S_{i^\star}}$ to obtain an outcome $s_{i^\star}$. $\qR$
then sends $s_{i^\star}$ to the challenger.
\end{enumerate}
\end{description}

Notice that the view of $\qA$ is the same as its view in the
key-testability experiment, as only the value $t_{i^\star,
1-x[i^\star]}$ is generated differently by forwarding the value $y$,
but this value is distributed identically to the original value.
Note that in both cases, $\qA$ receives no information about a
pre-image of $t_{i^\star, 1-x[i^\star]}$.
Now, $\qR$ guesses the index $k$ that $\qA$ targets with probability
$\frac1q$. By assumption, we have that $\CDec(\sk, \ct)
\neq F(x^*, y)$ where $\ct = \Enc(\msk, x^*)$. The value $\sk$ can be
parsed as a string over the registers $\qreg{SKECD.CT}$ and
$\qreg{S}$. Let $\widetilde{\sk}$ be the sub-string of $\sk$ on the
register $\qreg{SKECD.CT}$. Recall that $\CDec$ invokes the
algorithm $\SKECD.\CDec$ on input $\widetilde{\sk}$. We will
now recall a property of $\SKECD.\CDec$ that was specified in
Definition \ref{def:bb84}:

Let $(\qct, \vk = (x, \theta)) \gets \SKECD.\Enc(\skecd.\sk,
\skfe.\sk_y)$
where $\skecd.\sk \gets \SKECD.\KG(1^\secp)$. Now, let $u$ be any
arbitrary value such that $u[i] = x[i]$ for all $i : \theta[i] = 0$.
Then, the following holds:

$$\Pr\Big[\SKECD.\CDec(\skecd.\sk, u) = \skfe.\sk_y\Big] \ge 1 -
\negl(\secp)$$

Consequently, if $\widetilde{\sk}$ is such that $\widetilde{\sk}[i]
= x[i]$ for all $i: \theta[i] = 0$, where $(x, \theta)$ are
specified by $\vk_{k^\star}$, then $\SKECD.\CDec(\skecd.\sk,
\widetilde{\sk})$ outputs the value $\skfe.\sk_y$ with high
probability. Since $\CDec(\sk, \ct = (\skecd.\sk, \skfe.\ct =
\SKFE.\Enc\allowbreak(\skfe.\msk, x^*)))$ outputs
$\SKFE.\Dec(\SKECD.\CDec(\skecd.\sk, \widetilde{\sk}), \skfe.\ct)$, we
have that it outputs $x^*$ with high probability from the decryption
correctness of $\SKFE$. Therefore, it must be the case that there
exists some index $i$ for which $\widetilde{\sk} \neq x[i]$. With
probability $\frac{1}{\ctlen}$, this happens to be the guessed value
$i^\star$.  In this case, $\qA$ must output $s_{i^\star}$ on register
$\qreg{S_i}$ such that $f(s_{i^\star}) = t_{i^\star, 1 - x[i^\star]} =
y^*$. This concludes the proof. \qed

Since we have proved selective single-ciphertext security and
key-testability, we can now state the following theorem:

\begin{theorem}
Assuming the existence of a BB84-based SKE-CD scheme and the existence
of OWFs, there exists a selective single-ciphertext KLA secure
SKFE-CR-SKL scheme satisfying the key-testability property.
\end{theorem}

\else
	\newpage
	 	\appendix
	 	\setcounter{page}{1}
 	{
	\noindent
 	\begin{center}
	{\Large SUPPLEMENTAL MATERIALS}
	\end{center}
 	}
	\setcounter{tocdepth}{2}
	\ifnum\noaux=1
 	\else
	\fi


\section{Omitted Preliminaries}\label{sec:omitted_prelim}
\paragraph{Notations and conventions.}
In this paper, standard math or sans serif font stands for classical algorithms (e.g., $C$ or $\algo{Gen}$) and classical variables (e.g., $x$ or $\keys{pk}$).
Calligraphic font stands for quantum algorithms (e.g., $\qalgo{Gen}$) and calligraphic font and/or the bracket notation for (mixed) quantum states (e.g., $\qstate{q}$ or $\ket{\psi}$).

Let $[\ell]$ denote the set of integers $\{1, \cdots, \ell \}$, $\secp$ denote a security parameter, and $y \seteq z$ denote that $y$ is set, defined, or substituted by $z$.
For a finite set $X$ and a distribution $D$, $x \chosen X$ denotes selecting an element from $X$ uniformly at random, and $x \chosen D$ denotes sampling an element $x$ according to $D$. Let $y \gets \algo{A}(x)$ and $y \gets \qalgo{A}(\qstate{x})$ denote assigning to $y$ the output of a probabilistic or deterministic algorithm $\algo{A}$ and a quantum algorithm $\qalgo{A}$ on an input $x$ and $\qstate{x}$, respectively.
PPT and QPT algorithms stand for probabilistic polynomial-time algorithms and polynomial-time quantum algorithms, respectively.
Let $\negl$ denote a negligible function.
For strings $x,y\in \bit^n$, $x\cdot y$ denotes $\bigoplus_{i\in[n]} x_i y_i$ where $x_i$ and $y_i$ denote the $i$th bit of $x$ and $y$, respectively. \fuyuki{Need to check if all the notations here are really used in this paper or not.}
\nikhil{I think so, except we didn't use ; for randomness. Also need to check if everything is covered here.} For random variables $X$ and $Y$, we use the notation $X \approx Y$ to denote that these are computationally indistinguishable. On the other hand, $X \approx_s Y$ denotes statistically indistinguishability between the random variables.

\paragraph{Compute-and-Compare Obfuscation.}
We define a class of circuits called compute-and-compare circuits as
follows:

\begin{definition}[Compute-and-Compare Circuits]\label{def:cc_circuits_searchability}
A compute-and-compare circuit $\cnc{P}{\lock,\msg}$ is of the form
\[
\cnc{P}{\lock,\msg}(x)\left\{
\begin{array}{ll}
    \msg&\textrm{if}\; P(x)=\lock\\
\bot&\text{otherwise}~
\end{array}
\right.
\]
where $P$ is a circuit, $\lock$ is a string called the lock value,
and $\msg$ is a message.
\end{definition}

We now introduce the definition of compute-and-compare obfuscation.
We assume that a program $P$ has an associated set of parameters $\pp_P$ (input size, output size, circuit size) which we do not need to hide.
\begin{definition}[Compute-and-Compare Obfuscation]\label{def:CCObf}
A PPT algorithm $\CCObf$ is an obfuscator for the family of distributions $D=\{D_\secp\}$ if the following holds:
\begin{description}
\item[Functionality Preserving:] There exists a negligible function
$\negl$ such that for all programs $P$, all lock values $\lock$, and
all messages $\msg$, it holds that

\begin{align}
\Pr[\forall x, \tlP(x)=\cnc{P}{\lock,\msg}(x) :
\tlP\la\CCObf(1^\secp,P,\lock,\msg)] \ge 1-\negl(\secp).
\end{align}
\item[Distributional Indistinguishability:] There exists an
efficient simulator $\Sim$ such that for all messages $\msg$, we have
\begin{align}
(\CCObf(1^\secp,P,\lock,\msg),\qaux)\approx(\CCSim(1^\secp,\pp_P,\abs{\msg}),\qaux),
\end{align}
where $(P,\lock,\qaux)\la D_\secp$.
\end{description}
\end{definition}

\begin{theorem}[\cite{FOCS:GoyKopWat17,FOCS:WicZir17}]
If the LWE assumption holds, there exists compute-and-compare obfuscation for all families of distributions $D=\{D_\secp\}$, where each $D_\secp$ outputs uniformly random lock value $\lock$ independent of $P$ and $\qaux$.
\end{theorem}

	\setcounter{tocdepth}{2}
	\tableofcontents

\fi
\else
\fi

\end{document}